\documentclass[reqno,11pt]{amsart}
\usepackage[utf8]{inputenc}
\usepackage{graphicx}
\usepackage{slashed}
\usepackage{amscd}
\usepackage{amssymb}
\usepackage{esint}
\usepackage{enumitem}
\usepackage{comment}
\usepackage{amsmath}
\usepackage{dsfont}
\usepackage[mathscr]{eucal}
\ifx\makepics\undefined
\usepackage[linktocpage]{hyperref}
\else
\usepackage[pspdf=-dALLOWPSTRANSPARENCY]{auto-pst-pdf} 
\fi
\usepackage{epsfig}
\usepackage{pstricks}
\usepackage{pst-all}
\usepackage{pst-all}

\numberwithin{equation}{section}
\allowdisplaybreaks[1]
\definecolor{labelkey}{gray}{.65}

\title[The $\L$-Calculus for Causal Variational Principles]{The $\L$-Calculus for Causal Variational Principles: An Exterior Differential Calculus on Non-Smooth Spaces}

\author[F.\ Finster]{Felix Finster}
\address{Fakult\"at f\"ur Mathematik \\ Universit\"at Regensburg \\ D-93040 Regensburg \\ Germany}
\email{finster@ur.de, fvandertop@gmail.com}

\author[N.\ Kamran]{Niky Kamran}
\address{Department of Mathematics and Statistics \\ McGill University \\ Montr{\'e}al \\ Canada}
\email{niky.kamran@mcgill.ca}

\author[F.\ van der Top]{Frank van der Top \\ \\ August 2026}

\newtheorem{Def}{Definition}[section]
\newtheorem{Thm}[Def]{Theorem}
\newtheorem{Prp}[Def]{Proposition}
\newtheorem{Lemma}[Def]{Lemma}
\newtheorem{Remark}[Def]{Remark}
\newtheorem{Corollary}[Def]{Corollary}
\newtheorem{Example}[Def]{Example}

\newcommand{\Thanks}{\vspace*{.5em} \noindent \thanks}
\newcommand{\beq}{\begin{equation}}
\newcommand{\eeq}{\end{equation}}
\newcommand{\Proof}{\begin{proof}}
	\newcommand{\QED}{\end{proof} \noindent}
\newcommand{\QEDrem}{\ \hfill $\Diamond$}

\newcommand{\la}{\langle}
\newcommand{\ra}{\rangle}

\newcommand{\C}{\mathbb{C}}
\newcommand{\R}{\mathbb{R}}
\newcommand{\1}{\mbox{\rm 1 \hspace{-1.05 em} 1}}
\newcommand{\Z}{\mathbb{Z}}
\newcommand{\N}{\mathbb{N}}

\renewcommand{\L}{{\mathcal{L}}}
\newcommand{\Sact}{{\mathcal{S}}}

\newcommand\B{{\mathscr{B}}}

\renewcommand{\H}{\mathscr{H}}

\DeclareMathOperator{\dist}{dist}

\newcommand{\F}{{\mathscr{F}}}

\newcommand{\K}{{\mathscr{K}}}

\newcommand{\D}{\mathscr{D}}

\DeclareMathOperator{\sign}{sign}
\DeclareMathOperator{\supp}{supp}

\newcommand{\s}{\mathfrak{s}}

\newcommand{\bitem}{\begin{itemize}[leftmargin=2.5em]}
\newcommand{\eitem}{\end{itemize}}
\renewcommand{\div}{{\rm{div}}\,}
\newcommand{\loc}{\text{\rm{loc}}}

\newcommand{\G}{{\mathscr{G}}}
\newcommand{\x}{\mathbf{x}}
\newcommand{\y}{\mathbf{y}}

\newcommand{\h}{\mathfrak{h}}

\newcommand{\0}{{\mathbf{0}}}

\newcommand{\sshape}{\text{\rm{star}}}

\usepackage{tikz-cd}
\usepackage{float}

\DeclareFontFamily{OT1}{rsfso}{}
\DeclareFontShape{OT1}{rsfso}{m}{n}{ <-7> rsfso5 <7-10> rsfso7 <10-> rsfso10}{}
\DeclareMathAlphabet{\mycal}{OT1}{rsfso}{m}{n}

\newcommand\Felix[1]{}

\begin{document}
\maketitle
\begin{abstract}
A differential calculus for causal variational principles is developed, which generalizes the exterior calculus of differential forms and some of the associated differential topological structures to non-smooth spaces. Our calculus includes the exterior derivative, de Rham cohomology, glueing constructions (restrictions and extensions of differential forms, Mayer-Vietoris sequence), a Künneth formula and Poincar{\'e}'s lemma. Moreover, we prove versions of Stokes' theorem and the Gauß divergence theorem. The constructions and results are illustrated by several examples.
\end{abstract}

\tableofcontents

\section{Introduction} \label{secintro}
The theory of {\em{causal fermion systems}} is a recent approach to fundamental physics
(for an introduction to the physical background and applications as well as
the mathematical context, we refer the interested reader to the
review~\cite{review}, the textbooks~\cite{cfs, intro} or the 
website~\cite{cfsweblink}).
In this approach, spacetime and all structures therein are encoded in a
measure~$\rho$ on a set of operators on a Hilbert space.
The physical equations are formulated via a variational
principle for the measure~$\rho$, the so-called causal action principle.
{\em{Causal variational principles}} evolved as a mathematical generalization
of the causal action principle~\cite{continuum, jet, noncompact}
(an introduction to the causal action principle and causal variational principles
can be found for example in~\cite[Chapters~5 and~6]{intro}).
From the perspective of geometry and topology, causal variational principles
provide a general framework for describing and analyzing non-smooth
spaces and spacetimes. In this context, geometric structures such as connections and curvature have been introduced and 
studied in~\cite{lqg}. Moreover, spinors on singular spaces and topological obstructions for the existence of generalized spin structures were analyzed
in~\cite{topology}. In the present paper, we proceed in a somewhat different
direction and develop an {\em{exterior differential calculus}} on non-smooth spaces
based on the structures of causal variational principles by obtaining a generalization of differential forms and the exterior derivative
to our non-smooth setting. Our results include notions of de Rham cohomology,
generalized glueing constructions (restrictions and extensions of
differential forms, the Mayer-Vietoris sequence), a Künneth formula
and a version of Poincar{\'e}'s lemma.
We also generalize surface integrals to our setting (so-called surface layer
integrals) and prove corresponding versions of Stokes' theorem and the Gau{\ss}
divergence theorem. Our constructions are illustrated by various examples.

We now explain a few ingredients and ideas behind our constructions
(an introduction to causal variational principles will be provided in Section~\ref{secprelim}).
In the setting of causal variational principles, the system under consideration
(which can be thought of as a physical system in space or spacetime) is described
by a Radon measure~$\tilde{\rho}$ on a smooth manifold~$\F$.
The underlying {\em{space}} (or {\em{spacetime}}) denoted by~$\tilde{M}$
is defined as the support of this measure,
\[ 
\tilde{M} := \supp \tilde \rho \subset \F \:. \]
It is by definition a closed subset of~$\F$, but it does not need to be a
submanifold. Indeed, in typical and interesting situations, $\tilde M$ will {\em{not}}
be smooth; instead, it can be a singular or even discrete space
(for simple examples see~\cite[Section~6]{intro}). In order to keep the setting
in the introduction as simple as possible, we consider in the following explanations the example of a discrete space; see Figure~\ref{figdiscrete}.
\begin{figure}[tb]
\psset{xunit=.35pt,yunit=.35pt,runit=.35pt}
\begin{pspicture}(598.5941537,127.59481115)
{
\newrgbcolor{curcolor}{0 0 0}
\pscustom[linestyle=none,fillstyle=solid,fillcolor=curcolor]
{
\newpath
\moveto(50.59844906,59.89968657)
\curveto(51.32864497,57.94419428)(50.31631516,55.82072147)(48.33734716,55.15677732)
\curveto(46.35837915,54.49283317)(44.16216787,55.53983981)(43.43197196,57.4953321)
\curveto(42.70177606,59.4508244)(43.71410587,61.5742972)(45.69307387,62.23824136)
\curveto(47.67204188,62.90218551)(49.86825316,61.85517887)(50.59844906,59.89968657)
\closepath
}
}
{
\newrgbcolor{curcolor}{0 0 0}
\pscustom[linewidth=1.72947494,linecolor=curcolor]
{
\newpath
\moveto(50.59844906,59.89968657)
\curveto(51.32864497,57.94419428)(50.31631516,55.82072147)(48.33734716,55.15677732)
\curveto(46.35837915,54.49283317)(44.16216787,55.53983981)(43.43197196,57.4953321)
\curveto(42.70177606,59.4508244)(43.71410587,61.5742972)(45.69307387,62.23824136)
\curveto(47.67204188,62.90218551)(49.86825316,61.85517887)(50.59844906,59.89968657)
\closepath
}
}
{
\newrgbcolor{curcolor}{0 0 0}
\pscustom[linestyle=none,fillstyle=solid,fillcolor=curcolor]
{
\newpath
\moveto(77.04301771,80.81195297)
\curveto(77.77321362,78.85646067)(76.76088381,76.73298787)(74.7819158,76.06904371)
\curveto(72.8029478,75.40509956)(70.60673651,76.4521062)(69.87654061,78.4075985)
\curveto(69.1463447,80.36309079)(70.15867451,82.4865636)(72.13764252,83.15050775)
\curveto(74.11661052,83.8144519)(76.31282181,82.76744526)(77.04301771,80.81195297)
\closepath
}
}
{
\newrgbcolor{curcolor}{0 0 0}
\pscustom[linewidth=1.72947494,linecolor=curcolor]
{
\newpath
\moveto(77.04301771,80.81195297)
\curveto(77.77321362,78.85646067)(76.76088381,76.73298787)(74.7819158,76.06904371)
\curveto(72.8029478,75.40509956)(70.60673651,76.4521062)(69.87654061,78.4075985)
\curveto(69.1463447,80.36309079)(70.15867451,82.4865636)(72.13764252,83.15050775)
\curveto(74.11661052,83.8144519)(76.31282181,82.76744526)(77.04301771,80.81195297)
\closepath
}
}
{
\newrgbcolor{curcolor}{0 0 0}
\pscustom[linestyle=none,fillstyle=solid,fillcolor=curcolor]
{
\newpath
\moveto(118.2726809,86.70813173)
\curveto(119.00287681,84.75263943)(117.990547,82.62916663)(116.011579,81.96522247)
\curveto(114.03261099,81.30127832)(111.83639971,82.34828496)(111.1062038,84.30377726)
\curveto(110.3760079,86.25926955)(111.3883377,88.38274236)(113.36730571,89.04668651)
\curveto(115.34627372,89.71063066)(117.542485,88.66362402)(118.2726809,86.70813173)
\closepath
}
}
{
\newrgbcolor{curcolor}{0 0 0}
\pscustom[linewidth=1.72947494,linecolor=curcolor]
{
\newpath
\moveto(118.2726809,86.70813173)
\curveto(119.00287681,84.75263943)(117.990547,82.62916663)(116.011579,81.96522247)
\curveto(114.03261099,81.30127832)(111.83639971,82.34828496)(111.1062038,84.30377726)
\curveto(110.3760079,86.25926955)(111.3883377,88.38274236)(113.36730571,89.04668651)
\curveto(115.34627372,89.71063066)(117.542485,88.66362402)(118.2726809,86.70813173)
\closepath
}
}
{
\newrgbcolor{curcolor}{0 0 0}
\pscustom[linestyle=none,fillstyle=solid,fillcolor=curcolor]
{
\newpath
\moveto(153.43325555,103.49030862)
\curveto(154.16345145,101.53481632)(153.15112165,99.41134352)(151.17215364,98.74739936)
\curveto(149.19318563,98.08345521)(146.99697435,99.13046185)(146.26677845,101.08595415)
\curveto(145.53658254,103.04144644)(146.54891235,105.16491925)(148.52788036,105.8288634)
\curveto(150.50684836,106.49280755)(152.70305964,105.44580091)(153.43325555,103.49030862)
\closepath
}
}
{
\newrgbcolor{curcolor}{0 0 0}
\pscustom[linewidth=1.72947494,linecolor=curcolor]
{
\newpath
\moveto(153.43325555,103.49030862)
\curveto(154.16345145,101.53481632)(153.15112165,99.41134352)(151.17215364,98.74739936)
\curveto(149.19318563,98.08345521)(146.99697435,99.13046185)(146.26677845,101.08595415)
\curveto(145.53658254,103.04144644)(146.54891235,105.16491925)(148.52788036,105.8288634)
\curveto(150.50684836,106.49280755)(152.70305964,105.44580091)(153.43325555,103.49030862)
\closepath
}
}
{
\newrgbcolor{curcolor}{0 0 0}
\pscustom[linestyle=none,fillstyle=solid,fillcolor=curcolor]
{
\newpath
\moveto(190.64193345,98.00163609)
\curveto(191.37212936,96.04614379)(190.35979955,93.92267099)(188.38083155,93.25872684)
\curveto(186.40186354,92.59478268)(184.20565226,93.64178932)(183.47545635,95.59728162)
\curveto(182.74526045,97.55277392)(183.75759025,99.67624672)(185.73655826,100.34019088)
\curveto(187.71552627,101.00413503)(189.91173755,99.95712839)(190.64193345,98.00163609)
\closepath
}
}
{
\newrgbcolor{curcolor}{0 0 0}
\pscustom[linewidth=1.72947494,linecolor=curcolor]
{
\newpath
\moveto(190.64193345,98.00163609)
\curveto(191.37212936,96.04614379)(190.35979955,93.92267099)(188.38083155,93.25872684)
\curveto(186.40186354,92.59478268)(184.20565226,93.64178932)(183.47545635,95.59728162)
\curveto(182.74526045,97.55277392)(183.75759025,99.67624672)(185.73655826,100.34019088)
\curveto(187.71552627,101.00413503)(189.91173755,99.95712839)(190.64193345,98.00163609)
\closepath
}
}
{
\newrgbcolor{curcolor}{0 0 0}
\pscustom[linestyle=none,fillstyle=solid,fillcolor=curcolor]
{
\newpath
\moveto(257.41376743,99.14321022)
\curveto(258.14396334,97.18771793)(257.13163353,95.06424512)(255.15266552,94.40030097)
\curveto(253.17369752,93.73635682)(250.97748623,94.78336346)(250.24729033,96.73885575)
\curveto(249.51709442,98.69434805)(250.52942423,100.81782085)(252.50839224,101.48176501)
\curveto(254.48736024,102.14570916)(256.68357153,101.09870252)(257.41376743,99.14321022)
\closepath
}
}
{
\newrgbcolor{curcolor}{0 0 0}
\pscustom[linewidth=1.72947494,linecolor=curcolor]
{
\newpath
\moveto(257.41376743,99.14321022)
\curveto(258.14396334,97.18771793)(257.13163353,95.06424512)(255.15266552,94.40030097)
\curveto(253.17369752,93.73635682)(250.97748623,94.78336346)(250.24729033,96.73885575)
\curveto(249.51709442,98.69434805)(250.52942423,100.81782085)(252.50839224,101.48176501)
\curveto(254.48736024,102.14570916)(256.68357153,101.09870252)(257.41376743,99.14321022)
\closepath
}
}
{
\newrgbcolor{curcolor}{0 0 0}
\pscustom[linestyle=none,fillstyle=solid,fillcolor=curcolor]
{
\newpath
\moveto(226.6732125,98.95567244)
\curveto(227.40340841,97.00018015)(226.3910786,94.87670734)(224.41211059,94.21276319)
\curveto(222.43314259,93.54881904)(220.2369313,94.59582568)(219.5067354,96.55131797)
\curveto(218.77653949,98.50681027)(219.7888693,100.63028307)(221.76783731,101.29422723)
\curveto(223.74680531,101.95817138)(225.9430166,100.91116474)(226.6732125,98.95567244)
\closepath
}
}
{
\newrgbcolor{curcolor}{0 0 0}
\pscustom[linewidth=1.72947494,linecolor=curcolor]
{
\newpath
\moveto(226.6732125,98.95567244)
\curveto(227.40340841,97.00018015)(226.3910786,94.87670734)(224.41211059,94.21276319)
\curveto(222.43314259,93.54881904)(220.2369313,94.59582568)(219.5067354,96.55131797)
\curveto(218.77653949,98.50681027)(219.7888693,100.63028307)(221.76783731,101.29422723)
\curveto(223.74680531,101.95817138)(225.9430166,100.91116474)(226.6732125,98.95567244)
\closepath
}
}
{
\newrgbcolor{curcolor}{0 0 0}
\pscustom[linestyle=none,fillstyle=solid,fillcolor=curcolor]
{
\newpath
\moveto(290.99856429,85.26668274)
\curveto(291.72876019,83.31119044)(290.71643038,81.18771764)(288.73746238,80.52377348)
\curveto(286.75849437,79.85982933)(284.56228309,80.90683597)(283.83208718,82.86232827)
\curveto(283.10189128,84.81782056)(284.11422109,86.94129337)(286.09318909,87.60523752)
\curveto(288.0721571,88.26918167)(290.26836838,87.22217503)(290.99856429,85.26668274)
\closepath
}
}
{
\newrgbcolor{curcolor}{0 0 0}
\pscustom[linewidth=1.72947494,linecolor=curcolor]
{
\newpath
\moveto(290.99856429,85.26668274)
\curveto(291.72876019,83.31119044)(290.71643038,81.18771764)(288.73746238,80.52377348)
\curveto(286.75849437,79.85982933)(284.56228309,80.90683597)(283.83208718,82.86232827)
\curveto(283.10189128,84.81782056)(284.11422109,86.94129337)(286.09318909,87.60523752)
\curveto(288.0721571,88.26918167)(290.26836838,87.22217503)(290.99856429,85.26668274)
\closepath
}
}
{
\newrgbcolor{curcolor}{0 0 0}
\pscustom[linestyle=none,fillstyle=solid,fillcolor=curcolor]
{
\newpath
\moveto(326.10410526,64.11890662)
\curveto(326.83430116,62.16341432)(325.82197135,60.03994152)(323.84300335,59.37599737)
\curveto(321.86403534,58.71205321)(319.66782406,59.75905985)(318.93762815,61.71455215)
\curveto(318.20743225,63.67004445)(319.21976206,65.79351725)(321.19873006,66.4574614)
\curveto(323.17769807,67.12140556)(325.37390935,66.07439892)(326.10410526,64.11890662)
\closepath
}
}
{
\newrgbcolor{curcolor}{0 0 0}
\pscustom[linewidth=1.72947494,linecolor=curcolor]
{
\newpath
\moveto(326.10410526,64.11890662)
\curveto(326.83430116,62.16341432)(325.82197135,60.03994152)(323.84300335,59.37599737)
\curveto(321.86403534,58.71205321)(319.66782406,59.75905985)(318.93762815,61.71455215)
\curveto(318.20743225,63.67004445)(319.21976206,65.79351725)(321.19873006,66.4574614)
\curveto(323.17769807,67.12140556)(325.37390935,66.07439892)(326.10410526,64.11890662)
\closepath
}
}
{
\newrgbcolor{curcolor}{0 0 0}
\pscustom[linestyle=none,fillstyle=solid,fillcolor=curcolor]
{
\newpath
\moveto(367.85028065,33.79506196)
\curveto(368.58047655,31.83956966)(367.56814675,29.71609686)(365.58917874,29.0521527)
\curveto(363.61021074,28.38820855)(361.41399945,29.43521519)(360.68380355,31.39070749)
\curveto(359.95360764,33.34619978)(360.96593745,35.46967259)(362.94490546,36.13361674)
\curveto(364.92387346,36.79756089)(367.12008475,35.75055425)(367.85028065,33.79506196)
\closepath
}
}
{
\newrgbcolor{curcolor}{0 0 0}
\pscustom[linewidth=1.72947494,linecolor=curcolor]
{
\newpath
\moveto(367.85028065,33.79506196)
\curveto(368.58047655,31.83956966)(367.56814675,29.71609686)(365.58917874,29.0521527)
\curveto(363.61021074,28.38820855)(361.41399945,29.43521519)(360.68380355,31.39070749)
\curveto(359.95360764,33.34619978)(360.96593745,35.46967259)(362.94490546,36.13361674)
\curveto(364.92387346,36.79756089)(367.12008475,35.75055425)(367.85028065,33.79506196)
\closepath
}
}
{
\newrgbcolor{curcolor}{0 0 0}
\pscustom[linestyle=none,fillstyle=solid,fillcolor=curcolor]
{
\newpath
\moveto(420.55815087,9.19215182)
\curveto(421.28834677,7.23665952)(420.27601696,5.11318672)(418.29704896,4.44924257)
\curveto(416.31808095,3.78529841)(414.12186967,4.83230505)(413.39167376,6.78779735)
\curveto(412.66147786,8.74328965)(413.67380767,10.86676245)(415.65277567,11.5307066)
\curveto(417.63174368,12.19465076)(419.82795496,11.14764412)(420.55815087,9.19215182)
\closepath
}
}
{
\newrgbcolor{curcolor}{0 0 0}
\pscustom[linewidth=1.72947494,linecolor=curcolor]
{
\newpath
\moveto(420.55815087,9.19215182)
\curveto(421.28834677,7.23665952)(420.27601696,5.11318672)(418.29704896,4.44924257)
\curveto(416.31808095,3.78529841)(414.12186967,4.83230505)(413.39167376,6.78779735)
\curveto(412.66147786,8.74328965)(413.67380767,10.86676245)(415.65277567,11.5307066)
\curveto(417.63174368,12.19465076)(419.82795496,11.14764412)(420.55815087,9.19215182)
\closepath
}
}
{
\newrgbcolor{curcolor}{0 0 0}
\pscustom[linestyle=none,fillstyle=solid,fillcolor=curcolor]
{
\newpath
\moveto(526.03158984,30.21421232)
\curveto(526.76178574,28.25872003)(525.74945593,26.13524722)(523.77048793,25.47130307)
\curveto(521.79151992,24.80735892)(519.59530864,25.85436556)(518.86511273,27.80985785)
\curveto(518.13491683,29.76535015)(519.14724664,31.88882295)(521.12621464,32.55276711)
\curveto(523.10518265,33.21671126)(525.30139393,32.16970462)(526.03158984,30.21421232)
\closepath
}
}
{
\newrgbcolor{curcolor}{0 0 0}
\pscustom[linewidth=1.72947494,linecolor=curcolor]
{
\newpath
\moveto(526.03158984,30.21421232)
\curveto(526.76178574,28.25872003)(525.74945593,26.13524722)(523.77048793,25.47130307)
\curveto(521.79151992,24.80735892)(519.59530864,25.85436556)(518.86511273,27.80985785)
\curveto(518.13491683,29.76535015)(519.14724664,31.88882295)(521.12621464,32.55276711)
\curveto(523.10518265,33.21671126)(525.30139393,32.16970462)(526.03158984,30.21421232)
\closepath
}
}
{
\newrgbcolor{curcolor}{0 0 0}
\pscustom[linestyle=none,fillstyle=solid,fillcolor=curcolor]
{
\newpath
\moveto(478.44235143,21.42853669)
\curveto(479.17254733,19.4730444)(478.16021752,17.34957159)(476.18124952,16.68562744)
\curveto(474.20228151,16.02168329)(472.00607023,17.06868993)(471.27587432,19.02418222)
\curveto(470.54567842,20.97967452)(471.55800823,23.10314732)(473.53697623,23.76709148)
\curveto(475.51594424,24.43103563)(477.71215552,23.38402899)(478.44235143,21.42853669)
\closepath
}
}
{
\newrgbcolor{curcolor}{0 0 0}
\pscustom[linewidth=1.72947494,linecolor=curcolor]
{
\newpath
\moveto(478.44235143,21.42853669)
\curveto(479.17254733,19.4730444)(478.16021752,17.34957159)(476.18124952,16.68562744)
\curveto(474.20228151,16.02168329)(472.00607023,17.06868993)(471.27587432,19.02418222)
\curveto(470.54567842,20.97967452)(471.55800823,23.10314732)(473.53697623,23.76709148)
\curveto(475.51594424,24.43103563)(477.71215552,23.38402899)(478.44235143,21.42853669)
\closepath
}
}
{
\newrgbcolor{curcolor}{0 0 0}
\pscustom[linestyle=none,fillstyle=solid,fillcolor=curcolor]
{
\newpath
\moveto(556.34953854,62.52885091)
\curveto(557.07973444,60.57335862)(556.06740464,58.44988581)(554.08843663,57.78594166)
\curveto(552.10946862,57.12199751)(549.91325734,58.16900415)(549.18306144,60.12449645)
\curveto(548.45286553,62.07998874)(549.46519534,64.20346155)(551.44416334,64.8674057)
\curveto(553.42313135,65.53134985)(555.61934263,64.48434321)(556.34953854,62.52885091)
\closepath
}
}
{
\newrgbcolor{curcolor}{0 0 0}
\pscustom[linewidth=1.72947494,linecolor=curcolor]
{
\newpath
\moveto(556.34953854,62.52885091)
\curveto(557.07973444,60.57335862)(556.06740464,58.44988581)(554.08843663,57.78594166)
\curveto(552.10946862,57.12199751)(549.91325734,58.16900415)(549.18306144,60.12449645)
\curveto(548.45286553,62.07998874)(549.46519534,64.20346155)(551.44416334,64.8674057)
\curveto(553.42313135,65.53134985)(555.61934263,64.48434321)(556.34953854,62.52885091)
\closepath
}
}
{
\newrgbcolor{curcolor}{0 0 0}
\pscustom[linestyle=none,fillstyle=solid,fillcolor=curcolor]
{
\newpath
\moveto(563.82957234,102.58803672)
\curveto(564.55976825,100.63254442)(563.54743844,98.50907162)(561.56847044,97.84512747)
\curveto(559.58950243,97.18118331)(557.39329115,98.22818995)(556.66309524,100.18368225)
\curveto(555.93289934,102.13917455)(556.94522914,104.26264735)(558.92419715,104.9265915)
\curveto(560.90316516,105.59053566)(563.09937644,104.54352902)(563.82957234,102.58803672)
\closepath
}
}
{
\newrgbcolor{curcolor}{0 0 0}
\pscustom[linewidth=1.72947494,linecolor=curcolor]
{
\newpath
\moveto(563.82957234,102.58803672)
\curveto(564.55976825,100.63254442)(563.54743844,98.50907162)(561.56847044,97.84512747)
\curveto(559.58950243,97.18118331)(557.39329115,98.22818995)(556.66309524,100.18368225)
\curveto(555.93289934,102.13917455)(556.94522914,104.26264735)(558.92419715,104.9265915)
\curveto(560.90316516,105.59053566)(563.09937644,104.54352902)(563.82957234,102.58803672)
\closepath
}
}
{
\rput[bl](600,130){$\F$}
\rput[bl](0,25){$\tilde{M}$}
}
\end{pspicture}
\caption{A discrete space~$\tilde{M}$ embedded in~$\F$.}
\label{figdiscrete}
\end{figure}%
The space~$\tilde{M}$ carries the topology induced by~$\F$.
Studying~$\tilde{M}$ intrinsically as a topological space is not of interest, because
the discrete topology is trivial (every subset of~$\tilde{M}$ is open).
Therefore, the interesting structures come from the embedding of~$\tilde{M}$
in~$\F$, as we now explain step-by-step.
First, on~$\F$ we are given the {\em{Lagrangian}} 
\[ \L \::\: \F \times \F \rightarrow \R^+_0 \:. \]
In the present paper, we always assume that~$\L$ is smooth in both arguments
(the {\em{smooth setting}}; for details see Section~\ref{seccvp}).
Similar to a convolution kernel, the Lagrangian can be used as a {\em{smoothing kernel}}
for mollifying functions on~$\tilde{M}$. For example, given a compactly supported
function~$f : \tilde{M} \rightarrow \R$, we obtain a smooth function~$\phi$ on~$\F$ by
\beq \label{intintro}
\phi(x) := \int_{\tilde{M}} \L(x,y)\: f(y)\: d\tilde{\rho}(y) \;\in\; C^\infty(\F, \R) \:.
\eeq
We note for clarity that, in our example of a discrete space, the integral
can be written as a sum over~$\tilde{M}$ with suitable weights~$c(y) \geq 0$,
\[ \phi(x) = \sum_{y \in \tilde{M}} \L(x,y)\: f(y)\: c(y) \:; \]
working with integrals as in~\eqref{intintro} has the advantage that our formulas
immediately extend to continuous measures
(where by ``continuous'' we mean non-atomic measures; see~\cite{johnson}).

Mollifying by taking the Lagrangian as a smoothing kernel is one of our basic tools.
It is very useful that the resulting smooth functions on~$\F$ can be differentiated.
However, the structures discussed so far are not yet sufficient for developing a
differential calculus on~$\tilde{M}$. 
One difficulty is that standard notions of differential calculus
(like tangent vectors to equivalence classes of curves or derivations acting on functions on~$\tilde{M}$)
cannot be used in an obvious way.
What is missing is a generalization of the concept
of a tangent space at a point~$p \in \tilde{M}$.
As a solution to this problem, we introduce the notion of an
{\em{osculating vacuum}}~$M_p$ at~$p \in \tilde{M}$. The idea is to ``attach'' to~$p$
a measure~$\rho_p$ which ``describes the vacuum'' and ``approximates~$\tilde{M}$
as well as possible''; see Figure~\ref{figosculate-discrete}.
\begin{figure}[tb]
\psset{xunit=.45pt,yunit=.45pt,runit=.45pt}
\begin{pspicture}(598.5941537,127.59481115)
{
\newrgbcolor{curcolor}{0 0 0}
\pscustom[linestyle=none,fillstyle=solid,fillcolor=curcolor]
{
\newpath
\moveto(50.59844906,59.89968657)
\curveto(51.32864497,57.94419428)(50.31631516,55.82072147)(48.33734716,55.15677732)
\curveto(46.35837915,54.49283317)(44.16216787,55.53983981)(43.43197196,57.4953321)
\curveto(42.70177606,59.4508244)(43.71410587,61.5742972)(45.69307387,62.23824136)
\curveto(47.67204188,62.90218551)(49.86825316,61.85517887)(50.59844906,59.89968657)
\closepath
}
}
{
\newrgbcolor{curcolor}{0 0 0}
\pscustom[linewidth=1.72947494,linecolor=curcolor]
{
\newpath
\moveto(50.59844906,59.89968657)
\curveto(51.32864497,57.94419428)(50.31631516,55.82072147)(48.33734716,55.15677732)
\curveto(46.35837915,54.49283317)(44.16216787,55.53983981)(43.43197196,57.4953321)
\curveto(42.70177606,59.4508244)(43.71410587,61.5742972)(45.69307387,62.23824136)
\curveto(47.67204188,62.90218551)(49.86825316,61.85517887)(50.59844906,59.89968657)
\closepath
}
}
{
\newrgbcolor{curcolor}{0 0 0}
\pscustom[linestyle=none,fillstyle=solid,fillcolor=curcolor]
{
\newpath
\moveto(77.04301771,80.81195297)
\curveto(77.77321362,78.85646067)(76.76088381,76.73298787)(74.7819158,76.06904371)
\curveto(72.8029478,75.40509956)(70.60673651,76.4521062)(69.87654061,78.4075985)
\curveto(69.1463447,80.36309079)(70.15867451,82.4865636)(72.13764252,83.15050775)
\curveto(74.11661052,83.8144519)(76.31282181,82.76744526)(77.04301771,80.81195297)
\closepath
}
}
{
\newrgbcolor{curcolor}{0 0 0}
\pscustom[linewidth=1.72947494,linecolor=curcolor]
{
\newpath
\moveto(77.04301771,80.81195297)
\curveto(77.77321362,78.85646067)(76.76088381,76.73298787)(74.7819158,76.06904371)
\curveto(72.8029478,75.40509956)(70.60673651,76.4521062)(69.87654061,78.4075985)
\curveto(69.1463447,80.36309079)(70.15867451,82.4865636)(72.13764252,83.15050775)
\curveto(74.11661052,83.8144519)(76.31282181,82.76744526)(77.04301771,80.81195297)
\closepath
}
}
{
\newrgbcolor{curcolor}{0 0 0}
\pscustom[linestyle=none,fillstyle=solid,fillcolor=curcolor]
{
\newpath
\moveto(118.2726809,86.70813173)
\curveto(119.00287681,84.75263943)(117.990547,82.62916663)(116.011579,81.96522247)
\curveto(114.03261099,81.30127832)(111.83639971,82.34828496)(111.1062038,84.30377726)
\curveto(110.3760079,86.25926955)(111.3883377,88.38274236)(113.36730571,89.04668651)
\curveto(115.34627372,89.71063066)(117.542485,88.66362402)(118.2726809,86.70813173)
\closepath
}
}
{
\newrgbcolor{curcolor}{0 0 0}
\pscustom[linewidth=1.72947494,linecolor=curcolor]
{
\newpath
\moveto(118.2726809,86.70813173)
\curveto(119.00287681,84.75263943)(117.990547,82.62916663)(116.011579,81.96522247)
\curveto(114.03261099,81.30127832)(111.83639971,82.34828496)(111.1062038,84.30377726)
\curveto(110.3760079,86.25926955)(111.3883377,88.38274236)(113.36730571,89.04668651)
\curveto(115.34627372,89.71063066)(117.542485,88.66362402)(118.2726809,86.70813173)
\closepath
}
}
{
\newrgbcolor{curcolor}{0 0 0}
\pscustom[linestyle=none,fillstyle=solid,fillcolor=curcolor]
{
\newpath
\moveto(153.43325555,103.49030862)
\curveto(154.16345145,101.53481632)(153.15112165,99.41134352)(151.17215364,98.74739936)
\curveto(149.19318563,98.08345521)(146.99697435,99.13046185)(146.26677845,101.08595415)
\curveto(145.53658254,103.04144644)(146.54891235,105.16491925)(148.52788036,105.8288634)
\curveto(150.50684836,106.49280755)(152.70305964,105.44580091)(153.43325555,103.49030862)
\closepath
}
}
{
\newrgbcolor{curcolor}{0 0 0}
\pscustom[linewidth=1.72947494,linecolor=curcolor]
{
\newpath
\moveto(153.43325555,103.49030862)
\curveto(154.16345145,101.53481632)(153.15112165,99.41134352)(151.17215364,98.74739936)
\curveto(149.19318563,98.08345521)(146.99697435,99.13046185)(146.26677845,101.08595415)
\curveto(145.53658254,103.04144644)(146.54891235,105.16491925)(148.52788036,105.8288634)
\curveto(150.50684836,106.49280755)(152.70305964,105.44580091)(153.43325555,103.49030862)
\closepath
}
}
{
\newrgbcolor{curcolor}{0 0 0}
\pscustom[linestyle=none,fillstyle=solid,fillcolor=curcolor]
{
\newpath
\moveto(190.64193345,98.00163609)
\curveto(191.37212936,96.04614379)(190.35979955,93.92267099)(188.38083155,93.25872684)
\curveto(186.40186354,92.59478268)(184.20565226,93.64178932)(183.47545635,95.59728162)
\curveto(182.74526045,97.55277392)(183.75759025,99.67624672)(185.73655826,100.34019088)
\curveto(187.71552627,101.00413503)(189.91173755,99.95712839)(190.64193345,98.00163609)
\closepath
}
}
{
\newrgbcolor{curcolor}{0 0 0}
\pscustom[linewidth=1.72947494,linecolor=curcolor]
{
\newpath
\moveto(190.64193345,98.00163609)
\curveto(191.37212936,96.04614379)(190.35979955,93.92267099)(188.38083155,93.25872684)
\curveto(186.40186354,92.59478268)(184.20565226,93.64178932)(183.47545635,95.59728162)
\curveto(182.74526045,97.55277392)(183.75759025,99.67624672)(185.73655826,100.34019088)
\curveto(187.71552627,101.00413503)(189.91173755,99.95712839)(190.64193345,98.00163609)
\closepath
}
}
{
\newrgbcolor{curcolor}{0 0 0}
\pscustom[linestyle=none,fillstyle=solid,fillcolor=curcolor]
{
\newpath
\moveto(257.41376743,99.14321022)
\curveto(258.14396334,97.18771793)(257.13163353,95.06424512)(255.15266552,94.40030097)
\curveto(253.17369752,93.73635682)(250.97748623,94.78336346)(250.24729033,96.73885575)
\curveto(249.51709442,98.69434805)(250.52942423,100.81782085)(252.50839224,101.48176501)
\curveto(254.48736024,102.14570916)(256.68357153,101.09870252)(257.41376743,99.14321022)
\closepath
}
}
{
\newrgbcolor{curcolor}{0 0 0}
\pscustom[linewidth=1.72947494,linecolor=curcolor]
{
\newpath
\moveto(257.41376743,99.14321022)
\curveto(258.14396334,97.18771793)(257.13163353,95.06424512)(255.15266552,94.40030097)
\curveto(253.17369752,93.73635682)(250.97748623,94.78336346)(250.24729033,96.73885575)
\curveto(249.51709442,98.69434805)(250.52942423,100.81782085)(252.50839224,101.48176501)
\curveto(254.48736024,102.14570916)(256.68357153,101.09870252)(257.41376743,99.14321022)
\closepath
}
}
{
\newrgbcolor{curcolor}{0 0 0}
\pscustom[linestyle=none,fillstyle=solid,fillcolor=curcolor]
{
\newpath
\moveto(226.6732125,98.95567244)
\curveto(227.40340841,97.00018015)(226.3910786,94.87670734)(224.41211059,94.21276319)
\curveto(222.43314259,93.54881904)(220.2369313,94.59582568)(219.5067354,96.55131797)
\curveto(218.77653949,98.50681027)(219.7888693,100.63028307)(221.76783731,101.29422723)
\curveto(223.74680531,101.95817138)(225.9430166,100.91116474)(226.6732125,98.95567244)
\closepath
}
}
{
\newrgbcolor{curcolor}{0 0 0}
\pscustom[linewidth=1.72947494,linecolor=curcolor]
{
\newpath
\moveto(226.6732125,98.95567244)
\curveto(227.40340841,97.00018015)(226.3910786,94.87670734)(224.41211059,94.21276319)
\curveto(222.43314259,93.54881904)(220.2369313,94.59582568)(219.5067354,96.55131797)
\curveto(218.77653949,98.50681027)(219.7888693,100.63028307)(221.76783731,101.29422723)
\curveto(223.74680531,101.95817138)(225.9430166,100.91116474)(226.6732125,98.95567244)
\closepath
}
}
{
\newrgbcolor{curcolor}{0 0 0}
\pscustom[linestyle=none,fillstyle=solid,fillcolor=curcolor]
{
\newpath
\moveto(290.99856429,85.26668274)
\curveto(291.72876019,83.31119044)(290.71643038,81.18771764)(288.73746238,80.52377348)
\curveto(286.75849437,79.85982933)(284.56228309,80.90683597)(283.83208718,82.86232827)
\curveto(283.10189128,84.81782056)(284.11422109,86.94129337)(286.09318909,87.60523752)
\curveto(288.0721571,88.26918167)(290.26836838,87.22217503)(290.99856429,85.26668274)
\closepath
}
}
{
\newrgbcolor{curcolor}{0 0 0}
\pscustom[linewidth=1.72947494,linecolor=curcolor]
{
\newpath
\moveto(290.99856429,85.26668274)
\curveto(291.72876019,83.31119044)(290.71643038,81.18771764)(288.73746238,80.52377348)
\curveto(286.75849437,79.85982933)(284.56228309,80.90683597)(283.83208718,82.86232827)
\curveto(283.10189128,84.81782056)(284.11422109,86.94129337)(286.09318909,87.60523752)
\curveto(288.0721571,88.26918167)(290.26836838,87.22217503)(290.99856429,85.26668274)
\closepath
}
}
{
\newrgbcolor{curcolor}{0 0 0}
\pscustom[linestyle=none,fillstyle=solid,fillcolor=curcolor]
{
\newpath
\moveto(326.10410526,64.11890662)
\curveto(326.83430116,62.16341432)(325.82197135,60.03994152)(323.84300335,59.37599737)
\curveto(321.86403534,58.71205321)(319.66782406,59.75905985)(318.93762815,61.71455215)
\curveto(318.20743225,63.67004445)(319.21976206,65.79351725)(321.19873006,66.4574614)
\curveto(323.17769807,67.12140556)(325.37390935,66.07439892)(326.10410526,64.11890662)
\closepath
}
}
{
\newrgbcolor{curcolor}{0 0 0}
\pscustom[linewidth=1.72947494,linecolor=curcolor]
{
\newpath
\moveto(326.10410526,64.11890662)
\curveto(326.83430116,62.16341432)(325.82197135,60.03994152)(323.84300335,59.37599737)
\curveto(321.86403534,58.71205321)(319.66782406,59.75905985)(318.93762815,61.71455215)
\curveto(318.20743225,63.67004445)(319.21976206,65.79351725)(321.19873006,66.4574614)
\curveto(323.17769807,67.12140556)(325.37390935,66.07439892)(326.10410526,64.11890662)
\closepath
}
}
{
\newrgbcolor{curcolor}{0 0 0}
\pscustom[linestyle=none,fillstyle=solid,fillcolor=curcolor]
{
\newpath
\moveto(367.85028065,33.79506196)
\curveto(368.58047655,31.83956966)(367.56814675,29.71609686)(365.58917874,29.0521527)
\curveto(363.61021074,28.38820855)(361.41399945,29.43521519)(360.68380355,31.39070749)
\curveto(359.95360764,33.34619978)(360.96593745,35.46967259)(362.94490546,36.13361674)
\curveto(364.92387346,36.79756089)(367.12008475,35.75055425)(367.85028065,33.79506196)
\closepath
}
}
{
\newrgbcolor{curcolor}{0 0 0}
\pscustom[linewidth=1.72947494,linecolor=curcolor]
{
\newpath
\moveto(367.85028065,33.79506196)
\curveto(368.58047655,31.83956966)(367.56814675,29.71609686)(365.58917874,29.0521527)
\curveto(363.61021074,28.38820855)(361.41399945,29.43521519)(360.68380355,31.39070749)
\curveto(359.95360764,33.34619978)(360.96593745,35.46967259)(362.94490546,36.13361674)
\curveto(364.92387346,36.79756089)(367.12008475,35.75055425)(367.85028065,33.79506196)
\closepath
}
}
{
\newrgbcolor{curcolor}{0 0 0}
\pscustom[linestyle=none,fillstyle=solid,fillcolor=curcolor]
{
\newpath
\moveto(420.55815087,9.19215182)
\curveto(421.28834677,7.23665952)(420.27601696,5.11318672)(418.29704896,4.44924257)
\curveto(416.31808095,3.78529841)(414.12186967,4.83230505)(413.39167376,6.78779735)
\curveto(412.66147786,8.74328965)(413.67380767,10.86676245)(415.65277567,11.5307066)
\curveto(417.63174368,12.19465076)(419.82795496,11.14764412)(420.55815087,9.19215182)
\closepath
}
}
{
\newrgbcolor{curcolor}{0 0 0}
\pscustom[linewidth=1.72947494,linecolor=curcolor]
{
\newpath
\moveto(420.55815087,9.19215182)
\curveto(421.28834677,7.23665952)(420.27601696,5.11318672)(418.29704896,4.44924257)
\curveto(416.31808095,3.78529841)(414.12186967,4.83230505)(413.39167376,6.78779735)
\curveto(412.66147786,8.74328965)(413.67380767,10.86676245)(415.65277567,11.5307066)
\curveto(417.63174368,12.19465076)(419.82795496,11.14764412)(420.55815087,9.19215182)
\closepath
}
}
{
\newrgbcolor{curcolor}{0 0 0}
\pscustom[linestyle=none,fillstyle=solid,fillcolor=curcolor]
{
\newpath
\moveto(526.03158984,30.21421232)
\curveto(526.76178574,28.25872003)(525.74945593,26.13524722)(523.77048793,25.47130307)
\curveto(521.79151992,24.80735892)(519.59530864,25.85436556)(518.86511273,27.80985785)
\curveto(518.13491683,29.76535015)(519.14724664,31.88882295)(521.12621464,32.55276711)
\curveto(523.10518265,33.21671126)(525.30139393,32.16970462)(526.03158984,30.21421232)
\closepath
}
}
{
\newrgbcolor{curcolor}{0 0 0}
\pscustom[linewidth=1.72947494,linecolor=curcolor]
{
\newpath
\moveto(526.03158984,30.21421232)
\curveto(526.76178574,28.25872003)(525.74945593,26.13524722)(523.77048793,25.47130307)
\curveto(521.79151992,24.80735892)(519.59530864,25.85436556)(518.86511273,27.80985785)
\curveto(518.13491683,29.76535015)(519.14724664,31.88882295)(521.12621464,32.55276711)
\curveto(523.10518265,33.21671126)(525.30139393,32.16970462)(526.03158984,30.21421232)
\closepath
}
}
{
\newrgbcolor{curcolor}{0 0 0}
\pscustom[linestyle=none,fillstyle=solid,fillcolor=curcolor]
{
\newpath
\moveto(478.44235143,21.42853669)
\curveto(479.17254733,19.4730444)(478.16021752,17.34957159)(476.18124952,16.68562744)
\curveto(474.20228151,16.02168329)(472.00607023,17.06868993)(471.27587432,19.02418222)
\curveto(470.54567842,20.97967452)(471.55800823,23.10314732)(473.53697623,23.76709148)
\curveto(475.51594424,24.43103563)(477.71215552,23.38402899)(478.44235143,21.42853669)
\closepath
}
}
{
\newrgbcolor{curcolor}{0 0 0}
\pscustom[linewidth=1.72947494,linecolor=curcolor]
{
\newpath
\moveto(478.44235143,21.42853669)
\curveto(479.17254733,19.4730444)(478.16021752,17.34957159)(476.18124952,16.68562744)
\curveto(474.20228151,16.02168329)(472.00607023,17.06868993)(471.27587432,19.02418222)
\curveto(470.54567842,20.97967452)(471.55800823,23.10314732)(473.53697623,23.76709148)
\curveto(475.51594424,24.43103563)(477.71215552,23.38402899)(478.44235143,21.42853669)
\closepath
}
}
{
\newrgbcolor{curcolor}{0 0 0}
\pscustom[linestyle=none,fillstyle=solid,fillcolor=curcolor]
{
\newpath
\moveto(556.34953854,62.52885091)
\curveto(557.07973444,60.57335862)(556.06740464,58.44988581)(554.08843663,57.78594166)
\curveto(552.10946862,57.12199751)(549.91325734,58.16900415)(549.18306144,60.12449645)
\curveto(548.45286553,62.07998874)(549.46519534,64.20346155)(551.44416334,64.8674057)
\curveto(553.42313135,65.53134985)(555.61934263,64.48434321)(556.34953854,62.52885091)
\closepath
}
}
{
\newrgbcolor{curcolor}{0 0 0}
\pscustom[linewidth=1.72947494,linecolor=curcolor]
{
\newpath
\moveto(556.34953854,62.52885091)
\curveto(557.07973444,60.57335862)(556.06740464,58.44988581)(554.08843663,57.78594166)
\curveto(552.10946862,57.12199751)(549.91325734,58.16900415)(549.18306144,60.12449645)
\curveto(548.45286553,62.07998874)(549.46519534,64.20346155)(551.44416334,64.8674057)
\curveto(553.42313135,65.53134985)(555.61934263,64.48434321)(556.34953854,62.52885091)
\closepath
}
}
{
\newrgbcolor{curcolor}{0 0 0}
\pscustom[linestyle=none,fillstyle=solid,fillcolor=curcolor]
{
\newpath
\moveto(563.82957234,102.58803672)
\curveto(564.55976825,100.63254442)(563.54743844,98.50907162)(561.56847044,97.84512747)
\curveto(559.58950243,97.18118331)(557.39329115,98.22818995)(556.66309524,100.18368225)
\curveto(555.93289934,102.13917455)(556.94522914,104.26264735)(558.92419715,104.9265915)
\curveto(560.90316516,105.59053566)(563.09937644,104.54352902)(563.82957234,102.58803672)
\closepath
}
}
{
\newrgbcolor{curcolor}{0 0 0}
\pscustom[linewidth=1.72947494,linecolor=curcolor]
{
\newpath
\moveto(563.82957234,102.58803672)
\curveto(564.55976825,100.63254442)(563.54743844,98.50907162)(561.56847044,97.84512747)
\curveto(559.58950243,97.18118331)(557.39329115,98.22818995)(556.66309524,100.18368225)
\curveto(555.93289934,102.13917455)(556.94522914,104.26264735)(558.92419715,104.9265915)
\curveto(560.90316516,105.59053566)(563.09937644,104.54352902)(563.82957234,102.58803672)
\closepath
}
}
{
\newrgbcolor{curcolor}{0 0 0}
\pscustom[linewidth=3.02362209,linecolor=curcolor]
{
\newpath
\moveto(0.17175307,85.3538776)
\lineto(356.44963654,126.09279288)
}
}
{
\newrgbcolor{curcolor}{0 0 0}
\pscustom[linewidth=3.02362209,linecolor=curcolor]
{
\newpath
\moveto(250.12209638,13.82094784)
\lineto(598.54081134,1.5108754)
}
\rput[bl](600,130){$\F$}
\rput[bl](5,25){$\tilde{M}$}
\rput[bl](145,77){$p$}
\rput[bl](412,19){$q$}
\rput[bl](365,112){$M_p$}
\rput[bl](210,5){$M_q$}
}
\end{pspicture}
\caption{Osculating vacua of a discrete space.}
\label{figosculate-discrete}
\end{figure}%
The words in quotation marks in the last sentence have not yet been defined;
it will be the main task of Section~\ref{secosculate} to give these words a precise 
mathematical meaning.
For the purpose of the introduction, it suffices to note that~$M_p$ is
a smooth submanifold of~$\F$, which also carries a vector space structure
(corresponding to the usual vector space structure of the vacuum space or
spacetime).
Moreover, $M_p$ is the support of a measure~$\rho_p$ which is absolutely continuous with respect to the Lebesgue measure on~$M_p$.

Having chosen an osculating vacuum at each point~$p \in \tilde{M}$,
a differential calculus can be introduced by pursuing the following strategy.
Interpreting the vectors in~$M_p$ as ``tangential directions'', we can
form corresponding directional derivatives as weak derivatives,
using the Lagrangian for testing. To explain the idea, let us consider
directional derivatives of functions on~$\tilde{M}$. To this end, we define a vector field~$v$ as a
mapping which to every point~$y \in \tilde M$ associates a vector~$v(y) \in M_y$
in the corresponding osculating vacuum. Since the Lagrangian is smooth,
we can form the directional derivative~$D_{2,v} \L(x,y)$ (where the index two
means that the derivative acts on the second argument of the Lagrangian).
Considering for simplicity again a compactly supported
function~$f : \tilde{M} \rightarrow \R$, we interpret the expression
\beq \label{weakderivative}
-\int_{M_p} D_{2,v} \L(x,y)\: f(y)\: d\rho_p(y)
\eeq
as the weak derivative of~$f$ in the direction~$v$.
In this way, one can generalize certain aspects of the standard differential
calculus on manifolds to the non-smooth setting.
As we shall explain step by step in Sections~\ref{secdirder}--\ref{secLcalc},
our setup also allows for introducing differential forms and the exterior calculus,
and gives rise to suitable generalizations of cohomological constructions
(de Rham cohomology, Poincar{\'e} Lemma, Mayer-Vietoris sequence and
Künneth formula).

In order to generalize Stokes' theorem to our setting,
we need two more ingredients:
First, one needs to make use of the {\em{Euler-Lagrange equations}} (EL equations) which hold
for critical measures. This means that Stokes' theorem does not hold for
any measure~$\tilde{\rho}$, but it applies only if~$\tilde{\rho}$ is a critical point
of the causal variational principle. The necessary analytic setup will be introduced
in Section~\ref{seccvp} of the preliminaries.
The second ingredient is that of {\em{surface layer integrals}}.
In formulating Stokes' theorem one needs the notion of a boundary integral
or, more generally, integrals over subsets of a given co-dimension.
As becomes clear already in the example of a discrete space,
an integral over a hypersurface cannot be introduced in an obvious way.
The way out is to work instead of integrals over hypersurfaces
with double integrals over suitably chosen subsets of~$\tilde{M}$.
This will be explained further in Section~\ref{secOSI} of the preliminaries.
Based on the EL equations and working with surface layer integrals,
we succeed in proving versions of the {\em{Gau{\ss} divergence theorem}}
(Theorem~\ref{thmgauss}) and of {\em{Stokes' theorem}} (Theorem~\ref{thmstokes}
for top forms and Theorem~\ref{thmstokesgen} for integrals of higher co-dimension).

We point out that the exterior calculus on non-smooth 
spaces that is developed in this paper is conceptually different from the ones found in other approaches existing in the literature (see for example~\cite{hydon-mansfield, white-hydon, arnold-hu, arnold-dn, ferri-gagliardi, casali, gurau-ryan, desbrun})
in that it is anchored around causal variational principles, their critical points, osculating vacua and mollification by the Lagrangian taken as a smoothing kernel.

The paper is organized as follows. Section~\ref{secprelim} makes the paper
self-contained by providing the necessary background on causal variational principles.
In Section~\ref{secosculate} we introduce and discuss the notion of 
osculating vacua.
In Section~\ref{secdirder} vector fields and directional derivatives are defined.
We proceed in Section~\ref{secgauss} with a
generalization of the Gau{\ss} divergence theorem.
 In Section~\ref{secconnection} it is explained how the Lagrangian induces
specific coordinate systems which can be used to describe~$\tilde{M}$ locally.
We also construct a canonical mapping~$\nabla^\L_{q,p} : M_p \rightarrow M_q$
which can be interpreted as a parallel transport or connection between the points.
In Section~\ref{secLcalc} it is explained step-by-step how the idea of weak directional
derivatives~\eqref{weakderivative} can be used to develop an exterior
differential calculus leading to versions for non-smooth spaces of the de Rham cohomology, the Mayer-Vietoris sequence, the Künneth formula, the Stokes theorem and the Poincar{\'e} lemma. Section~\ref{secsoft} is devoted to surface layer integrals
of higher co-dimension and a corresponding Stokes theorem.
In Section~\ref{sectensor} a general tensor calculus associated to the
osculating vacua is introduced. In Section~\ref{secex},
the abstract constructions and results are illustrated by various examples:
Spaces consisting of one point (Section~\ref{sec:onepoint}) and two points (Section~\ref{sec:Two-points}),
the discrete line~$\Z$ (Section~\ref{sec:discline}), the discrete circle~$\Z / 4 \Z$
(Section~\ref{sec:disc-circ}), a $k$-dimensional lattice (Section~\ref{seclattice})
and an $\L$-star-shaped subset thereof (Section~\ref{sec-lattice-subset}).

We conclude the introduction with two remarks. We first point out that,
throughout the paper, we do {\em{not}} assume a metric
on the osculating vacua. Correspondingly, we cannot introduce a Hodge star
or a Hodge Laplacian. Also, we do not discuss the properties of the
connection~$\nabla^\L$ and the corresponding notions of
curvature. All the geometric aspects will be worked out separately
(see~\cite{gauss, nonsmooth}).
Here we shall focus on developing the $\L$-calculus as an exterior calculus
in connection with differential topology.

We finally remark that, in order to keep the setting as simple as possible,
we do not endow the spaces of differential forms with a topology or a norm.
Also, in the main part of the paper we do not
impose that the differential forms should have compact support or 
satisfy decay conditions at infinity.
As we shall see in the examples, doing so may well affect the cohomology
and properties of the differential calculus (see in particular the lattice examples
in Sections~\ref{sec:discline} and~\ref{seclattice}).
One reason why we do not enter the details is that the resulting extensions of our
methods and results should be tailored to the specific applications in mind.

\section{Preliminaries} \label{secprelim}
\subsection{Causal Variational Principles} \label{seccvp}
We consider causal variational principles in the smooth and non-compact setting
(see for example~\cite[Chapter~6]{intro}). Thus we let~$\F$ be a smooth
(possibly non-compact) manifold. The {\em{Lagrangian}}~$\L$ is a given smooth function
\beq \label{Lsmooth}
\L \in C^\infty(\F \times \F, \R^+_0) \:.
\eeq
Moreover, we assume that~$\L$ has the following properties:
\bitem
\item[{\rm{(i)}}] $\L$ is symmetric: $\L(x,y) = \L(y,x)$ for all~$x,y \in \F$.\label{Cond1}
\item[{\rm{(ii)}}] $\L$ is strictly positive on the diagonal: $\L(x,x)>0$ for all~$x \in \F$. \label{Cond2}
\eitem
Finally, we need to assume that the Lagrangian decays sufficiently fast
if its arguments~$x$ and~$y$ are far apart. One way of doing so is to
use the following notion first introduced in~\cite[Definition~3.3]{noncompact}
(see also~\cite[Definition~8.1.1]{intro}):
\bitem
\item[{\rm{(iii)}}] $\L$ has {\em{compact range}}: For every compact set~$K \subset \F$ there is a compact set~$K' \subset \F$ such that
\[ \L(x,y) = 0 \qquad \text{for all~$x \in K$ and~$y \not \in K'$}\:. \]
\eitem
This condition could be relaxed by demanding that~$\L$ and all its derivatives
have rapid decay. We will implicitly use this weaker assumption in some of
the examples.

The {\em{causal variational principle}} is to minimize the causal action defined by
\[ 
\Sact = \int_\F d\rho(x) \int_\F d\rho(y)\: \L(x,y) \]
under variations of the measure~$\rho$ in the class of all regular Borel measures on~$\F$, keeping the total volume fixed.
The existence of minimizers has been established in~\cite{noncompact}
(see also~\cite[Chapter~12]{intro}). It is also shown that minimizing measures
are {\em{locally finite}} in the sense that~$\rho(K)<\infty$ for any compact~$K \subset \F$.
A minimizing measure satisfies the {\em{Euler-Lagrange (EL) equations}}, which state that the function~$\ell$ defined by
\beq \label{elldef}
\ell(x) = \int_\F \L(x,y)\: d\rho(y) - \s \::\: \F \rightarrow \R
\eeq
satisfies for a suitable parameter~$\s>0$ the equation
\beq \label{EL1}
\ell|_M \equiv \inf_\F \ell = 0 \:.
\eeq
The derivation can be found in~\cite[Section~4]{noncompact} or~\cite[Chapter~7]{intro}.

We remark that causal variational principles can be regarded as a
generalization of the causal action principle, being at the heart of the physical
theory of causal fermion systems. For the purposes of the present paper, we do not need
to enter the details of the connection to physics or to spacetime geometry.
We refer the reader interested in the physical background to the text book~\cite{intro};
in particular, the connection between the causal action principle and causal variational
principles is explained in detail in~\cite[Chapter~6]{intro}).
For what follows, it suffices to note that the support of the measure~$\rho$ denoted by
\[ M := \supp \rho \]
is considered as the underlying {\em{space}} or {\em{spacetime}} (we will use the terms ``space'' and ``spacetime'' interchangeably). We consider on~$M$ the topology induced by~$\F$. We point out that we do {\em{not}} assume~$M$ to be a smooth
submanifold of~$\F$; it merely is a closed topological subspace.
Thus~$M$ corresponds to the ``non-smooth space'' mentioned in the title.
But it is important to keep in mind that~$M$ is always embedded in the smooth ambient manifold~$\F$.

We close this section with a few clarifying remarks.
We first comment on how the manifold~$\F$ arises and can be understood.
In the physical setting of causal fermion systems, the space~$\F$ is
introduced abstractly as a manifold formed of linear operators on a Hilbert space.
In applications, one often begins with the space~$M$ (being the 
physical spacetime or the topological space of interest).
Then~$\F$ can be constructed by
introducing spaces of functions on~$M$ or sections of vector bundles on~$M$
and considering suitable linear operators thereon. This is explained in detail 
and illustrated by examples in~\cite{topology}.

We next point out that our assumption that
the Lagrangian is smooth~\eqref{Lsmooth} is a mathematical idealization
which does not quite hold in the physical applications. Indeed, as is 
worked out in~\cite[Section~5.1]{banach}, the causal Lagrangian merely is
locally H\"older continuous.
In order to treat this more general case, one could combine our constructions with the so-called {\em{expedient differential calculus}} as developed in~\cite[Section~4]{banach}.
The general idea is to make use of the fact that, although the Lagrangian is
only H\"older continuous, derivatives of integrals of the Lagrangian
still exist in certain directions, making it possible to develop a differential calculus
by carefully keeping track of these differentiable directions.
As a simple alternative, one could mollify the
causal Lagrangian to obtain a smooth Lagrangian~\eqref{Lsmooth}.
However, when doing so, one should be careful because the Gau{\ss} and Stokes theorems
may no longer hold. This is because these theorems rely on the EL equations,
and it is not clear in general how to mollify the Lagrangian while
preserving the EL equations.

We finally note that all the constructions and results in this paper are
{\em{functorial}}, if for a morphism between causal variational principles we take
a smooth and closed mapping from~$\F$ to~$\F'$ which preserves the Lagrangian.
A measure~$\rho$ on~$\F$ is then mapped to a measure~$\rho'$ on~$\F'$
using the push-forward.

\subsection{Surface Layer Integrals} \label{secOSI}
In the classical theorems of Gau{\ss} and Stokes, one integrates the fluxes
or differential forms over the boundary of a subregion~$\Omega$ of a manifold~$M$.
Such boundary integrals cannot be introduced in the setting of causal variational
principles. Instead, using a concept first introduced in~\cite{noether},
one works with so-called {\em{surface layer integrals}}, being
defined as double integrals of the general form
\beq \label{intdouble}
\int_\Omega \bigg( \int_{M \setminus \Omega} (\cdots)\: \L(x,y)\: d \rho(y) \bigg)\, d \rho(x) \:,
\eeq
where~$(\cdots)$ stands for a certain differential operator acting on the Lagrangian.
The main point is that one variable is integrated over a subset~$\Omega \subset M$, 
whereas the other variable is integrated over the complement of~$\Omega$
(see Figure~\ref{figosi}).
\begin{figure}
\psset{xunit=.45pt,yunit=.45pt,runit=.45pt}
\begin{pspicture}(437.88862574,254.86055149)
{
\newrgbcolor{curcolor}{0.89803922 0.89803922 0.89803922}
\pscustom[linestyle=none,fillstyle=solid,fillcolor=curcolor]
{
\newpath
\moveto(142.47850961,217.19929039)
\lineto(179.1897411,223.43327338)
\lineto(224.90561008,226.20393165)
\lineto(260.23150866,224.47226928)
\lineto(289.66976882,220.66261511)
\lineto(326.03464441,212.69696976)
\lineto(348.19991055,205.77032031)
\lineto(371.75052472,195.03402047)
\lineto(390.45245858,182.21972125)
\lineto(402.57408378,164.21044251)
\lineto(406.73009008,152.43513921)
\lineto(406.73009008,135.8111896)
\lineto(403.6131137,121.95790204)
\lineto(394.95478299,104.98761165)
\lineto(384.56482394,91.48065354)
\lineto(372.09684283,81.09068314)
\lineto(358.58986961,73.47137102)
\lineto(340.58060976,68.62272094)
\lineto(314.60569323,69.31538361)
\lineto(292.44042709,74.16403369)
\lineto(262.65582992,80.74435369)
\lineto(235.98823937,81.0906718)
\lineto(214.51563969,75.54935149)
\lineto(193.04302866,67.23738047)
\lineto(172.95576189,55.46208094)
\lineto(152.1758211,44.03311086)
\lineto(134.51287181,35.0284696)
\lineto(119.96691024,30.52615275)
\lineto(106.11361134,28.79449039)
\lineto(92.60665323,30.52615275)
\lineto(81.52402016,34.68214015)
\lineto(66.97806236,43.34044062)
\lineto(54.16376315,54.42308125)
\lineto(40.65680126,74.85668881)
\lineto(31.99849323,95.98295905)
\lineto(28.53517228,119.53356188)
\lineto(32.69116346,144.81581952)
\lineto(43.77379276,165.24943086)
\lineto(57.28075087,179.44905921)
\lineto(77.36803276,192.95601732)
\lineto(100.57230236,204.73132062)
\lineto(122.73757228,213.04328787)
\closepath
}
}
{
\newrgbcolor{curcolor}{0.89803922 0.89803922 0.89803922}
\pscustom[linewidth=0.99999871,linecolor=curcolor]
{
\newpath
\moveto(142.47850961,217.19929039)
\lineto(179.1897411,223.43327338)
\lineto(224.90561008,226.20393165)
\lineto(260.23150866,224.47226928)
\lineto(289.66976882,220.66261511)
\lineto(326.03464441,212.69696976)
\lineto(348.19991055,205.77032031)
\lineto(371.75052472,195.03402047)
\lineto(390.45245858,182.21972125)
\lineto(402.57408378,164.21044251)
\lineto(406.73009008,152.43513921)
\lineto(406.73009008,135.8111896)
\lineto(403.6131137,121.95790204)
\lineto(394.95478299,104.98761165)
\lineto(384.56482394,91.48065354)
\lineto(372.09684283,81.09068314)
\lineto(358.58986961,73.47137102)
\lineto(340.58060976,68.62272094)
\lineto(314.60569323,69.31538361)
\lineto(292.44042709,74.16403369)
\lineto(262.65582992,80.74435369)
\lineto(235.98823937,81.0906718)
\lineto(214.51563969,75.54935149)
\lineto(193.04302866,67.23738047)
\lineto(172.95576189,55.46208094)
\lineto(152.1758211,44.03311086)
\lineto(134.51287181,35.0284696)
\lineto(119.96691024,30.52615275)
\lineto(106.11361134,28.79449039)
\lineto(92.60665323,30.52615275)
\lineto(81.52402016,34.68214015)
\lineto(66.97806236,43.34044062)
\lineto(54.16376315,54.42308125)
\lineto(40.65680126,74.85668881)
\lineto(31.99849323,95.98295905)
\lineto(28.53517228,119.53356188)
\lineto(32.69116346,144.81581952)
\lineto(43.77379276,165.24943086)
\lineto(57.28075087,179.44905921)
\lineto(77.36803276,192.95601732)
\lineto(100.57230236,204.73132062)
\lineto(122.73757228,213.04328787)
\closepath
}
}
{
\newrgbcolor{curcolor}{0.60000002 0.60000002 0.60000002}
\pscustom[linewidth=30.23622047,linecolor=curcolor]
{
\newpath
\moveto(220.45024059,210.53077133)
\curveto(231.2892113,210.71679968)(242.13934295,210.68066739)(252.95921193,210.01109385)
\curveto(266.82594705,209.15295968)(280.619605,207.25480913)(294.25398988,204.58530991)
\curveto(308.7042198,201.75607653)(322.97684973,198.06168283)(337.07758673,193.82105291)
\curveto(352.03234579,189.32358519)(367.3690639,183.86562613)(378.38173791,172.79344724)
\curveto(385.58880563,165.54742771)(390.60911429,155.94284314)(391.49050012,145.76100755)
\curveto(392.13014736,138.37175905)(390.60473004,130.89620125)(387.80054736,124.02984661)
\curveto(384.99640248,117.16349196)(380.94410642,110.86709385)(376.4410639,104.97363873)
\curveto(370.165045,96.75977291)(362.6778387,89.04365952)(353.13936941,85.05951212)
\curveto(346.57391429,82.31716629)(339.3298828,81.48400724)(332.22044028,81.76974708)
\curveto(325.11099776,82.05547936)(318.09181036,83.42482487)(311.14148595,84.94760031)
\curveto(299.17523713,87.56933007)(287.31164028,90.6581981)(275.29935681,93.06004629)
\curveto(260.52799933,96.01356566)(245.25889319,97.91854582)(230.43328437,95.25064913)
\curveto(214.80527429,92.43835936)(200.56029917,84.72451369)(186.57620973,77.20195747)
\curveto(173.09841067,69.95175401)(159.57654988,62.77571999)(145.82255051,56.06423527)
\curveto(138.35584059,52.42073669)(130.77942988,48.89877637)(122.77896752,46.65836314)
\curveto(114.77850516,44.41795369)(106.28098957,43.49512535)(98.13193886,45.11370047)
\curveto(90.65937445,46.5979096)(83.70098831,50.18222834)(77.68158043,54.85222015)
\curveto(71.66216878,59.52221196)(66.53918169,65.25937228)(61.96317917,71.35053826)
\curveto(54.0834198,81.83936598)(47.65391807,93.69634109)(45.23031114,106.58945102)
\curveto(42.01564154,123.69086456)(46.32648752,142.0718485)(56.81776059,155.95441228)
\curveto(68.75548532,171.75098928)(87.34435461,180.93927023)(105.66863429,188.44006015)
\curveto(118.17636847,193.55992913)(130.91608626,198.16614141)(143.98878043,201.59559779)
\curveto(156.79757288,204.95582456)(169.89265697,207.17639527)(183.06542925,208.53075495)
\curveto(195.48322579,209.8074907)(207.96881949,210.31657417)(220.45024059,210.53077133)
\closepath
}
}
{
\newrgbcolor{curcolor}{0.74901962 0.74901962 0.74901962}
\pscustom[linewidth=30.23622047,linecolor=curcolor]
{
\newpath
\moveto(226.63445295,239.70032295)
\curveto(238.51812287,239.47453398)(250.36227405,238.30288421)(262.17893673,237.02262217)
\curveto(275.61064067,235.56737933)(289.03071878,233.9700565)(302.37842271,231.88034815)
\curveto(318.39330146,229.37306626)(334.34619594,226.14454469)(349.70714083,220.96723823)
\curveto(369.66792193,214.23957713)(388.81172035,204.00446909)(403.37643216,188.78726343)
\curveto(409.99344004,181.87383476)(415.64908728,173.90422469)(419.01989673,164.94778374)
\curveto(423.49655815,153.05298689)(423.74766996,139.79005398)(420.84627783,127.41634563)
\curveto(417.94488571,115.04264106)(412.00271248,103.51713051)(404.54188728,93.22815665)
\curveto(394.17940539,78.93760043)(380.59200382,66.71615413)(364.44726051,59.57840295)
\curveto(350.24519437,53.29953429)(334.33988413,51.10888988)(318.8993424,52.75603823)
\curveto(296.81726366,55.11168547)(275.95339846,65.05908075)(253.78098901,66.30479035)
\curveto(239.15677232,67.12642185)(224.48220098,64.08617366)(210.81953768,58.80655161)
\curveto(198.46221358,54.03134532)(186.88421295,47.46355445)(175.51318303,40.66607413)
\curveto(160.42207752,31.6447765)(145.29438618,22.04019949)(128.26178271,17.67949398)
\curveto(112.14929768,13.55435886)(94.76033012,14.43407035)(79.2227339,20.36803823)
\curveto(57.81910303,28.54230595)(40.74277799,45.85811697)(29.53703059,65.84224421)
\curveto(19.78006681,83.24263728)(14.02611405,103.21458311)(15.29473894,123.12344106)
\curveto(16.58022429,143.29686217)(25.12804539,162.77137681)(38.33628098,178.0737239)
\curveto(54.47678366,196.77323886)(76.84971972,209.00581209)(99.73542051,218.30272043)
\curveto(116.82149295,225.24363382)(134.52711311,230.78831035)(152.70481138,233.89979886)
\curveto(162.39113201,235.55781335)(172.18164665,236.52118847)(181.96794712,237.41680043)
\curveto(196.82203846,238.77620201)(211.72097768,239.98367413)(226.63445295,239.70032295)
\closepath
}
}
{
\newrgbcolor{curcolor}{0 0 0}
\pscustom[linewidth=1.88976378,linecolor=curcolor]
{
\newpath
\moveto(57.20184945,177.61895149)
\curveto(25.20856819,149.05352031)(23.8374274,111.11863275)(38.46323528,79.58214519)
\curveto(53.08904315,48.04566141)(83.71057512,22.90858676)(118.67504882,29.53615086)
\curveto(153.6395263,36.16371495)(192.94399748,74.55412251)(233.85050079,80.26732062)
\curveto(274.75700409,85.98051873)(317.26152945,59.01528661)(354.51115465,71.35606393)
\curveto(391.76077984,83.69684125)(423.75342236,135.34211243)(399.5291452,170.76343653)
\curveto(375.30486803,206.18476062)(294.86782866,225.3799644)(223.33856504,225.3799644)
\curveto(151.80930142,225.3799644)(89.19513071,206.18437889)(57.20184945,177.61895149)
\closepath
}
}
{
\newrgbcolor{curcolor}{0 0 0}
\pscustom[linestyle=none,fillstyle=solid,fillcolor=curcolor]
{
\newpath
\moveto(97.52353588,12.71748329)
\curveto(97.52353588,12.25466003)(97.14834349,11.87946765)(96.68552023,11.87946765)
\curveto(96.22269697,11.87946765)(95.84750459,12.25466003)(95.84750459,12.71748329)
\curveto(95.84750459,13.18030655)(96.22269697,13.55549893)(96.68552023,13.55549893)
\curveto(97.14834349,13.55549893)(97.52353588,13.18030655)(97.52353588,12.71748329)
\closepath
}
}
{
\newrgbcolor{curcolor}{0 0 0}
\pscustom[linewidth=2.46803149,linecolor=curcolor]
{
\newpath
\moveto(97.52353588,12.71748329)
\curveto(97.52353588,12.25466003)(97.14834349,11.87946765)(96.68552023,11.87946765)
\curveto(96.22269697,11.87946765)(95.84750459,12.25466003)(95.84750459,12.71748329)
\curveto(95.84750459,13.18030655)(96.22269697,13.55549893)(96.68552023,13.55549893)
\curveto(97.14834349,13.55549893)(97.52353588,13.18030655)(97.52353588,12.71748329)
\closepath
}
}
{
\newrgbcolor{curcolor}{0 0 0}
\pscustom[linestyle=none,fillstyle=solid,fillcolor=curcolor]
{
\newpath
\moveto(96.58561776,45.58393452)
\curveto(96.58561776,45.12111126)(96.21042538,44.74591888)(95.74760212,44.74591888)
\curveto(95.28477886,44.74591888)(94.90958648,45.12111126)(94.90958648,45.58393452)
\curveto(94.90958648,46.04675778)(95.28477886,46.42195016)(95.74760212,46.42195016)
\curveto(96.21042538,46.42195016)(96.58561776,46.04675778)(96.58561776,45.58393452)
\closepath
}
}
{
\newrgbcolor{curcolor}{0 0 0}
\pscustom[linewidth=2.46803149,linecolor=curcolor]
{
\newpath
\moveto(96.58561776,45.58393452)
\curveto(96.58561776,45.12111126)(96.21042538,44.74591888)(95.74760212,44.74591888)
\curveto(95.28477886,44.74591888)(94.90958648,45.12111126)(94.90958648,45.58393452)
\curveto(94.90958648,46.04675778)(95.28477886,46.42195016)(95.74760212,46.42195016)
\curveto(96.21042538,46.42195016)(96.58561776,46.04675778)(96.58561776,45.58393452)
\closepath
}
\rput[bl](200,130){$\Omega$}
\rput[bl](103,40){$x$}
\rput[bl](105,5){$y$}
}
\end{pspicture}
\caption{A surface layer integral.}
\label{figosi}
\end{figure}%
We are interested in the situation where the Lagrangian and its derivatives vanish unless~$x$
and~$y$ are close together. Then the integrand in~\eqref{intdouble} vanishes
unless both~$x$ and~$y$ are ``close to the boundary of~$\Omega$''
(as shown again in Figure~\ref{figosi}).
Here, on purpose, the notion in quotation mark is not given in a mathematically
precise form. This could be done by quantifying the support properties of the
Lagrangian in terms of a distance function on~$M$ (as is done in~\cite[Section~9.1]{intro}).
Here we do not want to do so, but instead consider the surface layer 
integral~\eqref{intdouble}
as a generalization of surface integrals to the setting of causal variational principles.

The concept of a surface layer integral can also be used to introduce
integrals over subsets of higher co-dimension. This was first done in~\cite{jacobson}
in order to derive a simple connection between area change and matter flux
in a four-dimensional spacetime. In this setting,
a two-dimensional surface~$S \subset M$ was described as the intersection
of the boundaries of two open sets~$\Omega, V \subset M$,
\[ S = \partial \Omega \cap \partial V \:. \]
The corresponding surface layer integral was defined by
\[ 
\int_{\Omega \cap V} \bigg( \int_{M \setminus (\Omega \cup V)} (\cdots)\: \L(x,y)\: d\rho(y) \bigg)\, d\rho(x) \:. \]
Again assuming that the Lagrangian vanishes unless~$x$ and~$y$ are close together,
we only get contributions to this surface layer integral if both~$x$
and~$y$ are close to the two-dimensional surface~$S$ (see Figure~\ref{figarea}).
\begin{figure}
\psset{xunit=.5pt,yunit=.5pt,runit=.5pt}
\begin{pspicture}(271.1957508,195.81662156)
{
\newrgbcolor{curcolor}{0.80000001 0.80000001 0.80000001}
\pscustom[linestyle=none,fillstyle=solid,fillcolor=curcolor]
{
\newpath
\moveto(135.05108787,98.74163321)
\lineto(135.05108787,99.64047289)
\lineto(130.78158992,129.30227699)
\lineto(125.38852913,167.50308455)
\lineto(120.44489953,196.49075321)
\lineto(270.55159559,195.81662156)
\lineto(270.7762885,84.80957274)
\lineto(240.66505701,87.05668329)
\lineto(191.22871937,93.34857242)
\lineto(153.2526274,97.16865809)
\closepath
}
}
{
\newrgbcolor{curcolor}{0.80000001 0.80000001 0.80000001}
\pscustom[linewidth=0.37795276,linecolor=curcolor]
{
\newpath
\moveto(135.05108787,98.74163321)
\lineto(135.05108787,99.64047289)
\lineto(130.78158992,129.30227699)
\lineto(125.38852913,167.50308455)
\lineto(120.44489953,196.49075321)
\lineto(270.55159559,195.81662156)
\lineto(270.7762885,84.80957274)
\lineto(240.66505701,87.05668329)
\lineto(191.22871937,93.34857242)
\lineto(153.2526274,97.16865809)
\closepath
}
}
{
\newrgbcolor{curcolor}{0.80000001 0.80000001 0.80000001}
\pscustom[linestyle=none,fillstyle=solid,fillcolor=curcolor]
{
\newpath
\moveto(0.00000378,93.57328423)
\lineto(24.4934589,93.79801494)
\lineto(51.23402079,95.82040612)
\lineto(79.32284976,98.31778187)
\lineto(109.20935811,97.98015667)
\lineto(134.60166803,98.29219447)
\lineto(136.84876724,74.24815919)
\lineto(137.07346016,51.32767919)
\lineto(134.15221795,22.78943037)
\lineto(130.55685165,0.09365463)
\lineto(-0.22473071,0.31834754)
\closepath
}
}
{
\newrgbcolor{curcolor}{0.80000001 0.80000001 0.80000001}
\pscustom[linewidth=0,linecolor=curcolor]
{
\newpath
\moveto(0.00000378,93.57328423)
\lineto(24.4934589,93.79801494)
\lineto(51.23402079,95.82040612)
\lineto(79.32284976,98.31778187)
\lineto(109.20935811,97.98015667)
\lineto(134.60166803,98.29219447)
\lineto(136.84876724,74.24815919)
\lineto(137.07346016,51.32767919)
\lineto(134.15221795,22.78943037)
\lineto(130.55685165,0.09365463)
\lineto(-0.22473071,0.31834754)
\closepath
}
}
{
\newrgbcolor{curcolor}{0 0 0}
\pscustom[linewidth=1.99999997,linecolor=curcolor]
{
\newpath
\moveto(0.02221126,93.13902689)
\curveto(50.07131063,92.0273269)(60.5218105,101.32432678)(132.56010959,97.90832683)
\curveto(204.59840868,94.49232687)(202.72040871,87.97452695)(271.14280784,84.347727)
}
}
{
\newrgbcolor{curcolor}{0 0 0}
\pscustom[linewidth=1.99999997,linecolor=curcolor]
{
\newpath
\moveto(120.77424667,195.8166191)
\curveto(126.77414659,154.49331962)(131.43884654,137.99751983)(136.28644647,87.69692046)
\curveto(139.14604644,58.02412083)(137.84984645,42.66522103)(130.94594654,0.16032156)
}
}
{
\newrgbcolor{curcolor}{0 0 0}
\pscustom[linewidth=1.00157475,linecolor=curcolor]
{
\newpath
\moveto(106.51283906,115.37021652)
\curveto(117.50448756,115.98224675)(124.48166551,111.9394373)(131.00629417,102.78642313)
}
}
{
\newrgbcolor{curcolor}{0 0 0}
\pscustom[linestyle=none,fillstyle=solid,fillcolor=curcolor]
{
\newpath
\moveto(132.68355278,100.43349661)
\lineto(128.13018718,102.50741712)
\lineto(132.20804942,105.41427779)
\closepath
}
}
{
\newrgbcolor{curcolor}{0 0 0}
\pscustom[linewidth=0.66771652,linecolor=curcolor]
{
\newpath
\moveto(132.68355278,100.43349661)
\lineto(128.13018718,102.50741712)
\lineto(132.20804942,105.41427779)
\closepath
}
\rput[bl](40,40){$\Omega \cap V$}
\rput[bl](140,130){$M \setminus (\Omega \cup V)$}
\rput[bl](90,110){$S$}
}
\end{pspicture}
\caption{A surface layer integral of co-dimension two.}
\label{figarea}
\end{figure}%

In Section~\ref{secsoft} we shall elaborate on this idea, making it possible
to introduce surface layer integrals of general co-dimension.
As we shall see, these surface layer integrals harmonize with our exterior calculus,
giving rise to a general formulation of Stokes' theorem (Theorem~\ref{thmstokesgen}).

\section{Osculating Vacua} \label{secosculate}
In what follows, we assume that we are given two minimizing measures: The measure~$\rho$
describing the vacuum, and a measure~$\tilde{\rho}$ describing the interacting system.
We assume that the {\em{vacuum space}}~$M:= \supp \rho$
has the dimension of a real vector space of dimension~$k \in \N$.
This means in particular that
one point of~$M$ is distinguished as the origin; we denote it by~$\0 \in M$.
In what follows, we always identify~$M$ with its tangent space~$T_\0M$.
Next, we assume that the {\em{Lagrangian}} is {\em{translation invariant}}
on~$M$, i.e.\
\[ \L(x,y) = L[y-x] \qquad \text{for all~$x,y \in M$} \]
with a function~$L : M \rightarrow \R^+_0$. The fact that the Lagrangian is symmetric
in its two arguments implies that the function~$L$ is {\em{reflection
symmetric}}, i.e.\
\[ L[\xi] = L[-\xi] \qquad \text{for all~$\xi \in M$} \:. \]
Moreover, we we assume that, using the above vector space structure,
the measure~$\rho|_M$ is a multiple of the Lebesgue measure.
For the applications, one can think of~$M$ as Minkowski space. However, in the constructions
of this paper, we do not want to make use of the Minkowski metric. Instead, we only want to make use of
the measure~$\rho$ as specified above.

The structure of the {\em{interacting space}}~$\tilde{M} \subset \F$, however, can be arbitrary.
In particular, the interacting space does not need to have a manifold structure. Our constructions
will even cover the situation that~$\tilde{M}$ is singular or discrete.

Following~\cite[Definition~6.1]{matter} we introduce symmetry transformations of the Lagrangian.
\begin{Def} \label{defiso}
A diffeomorphism~$\Phi \in C^\infty(\F, \F)$ describes a {\bf{symmetry of the Lagrangian}} if
\[ \L\big(\Phi(\x), \Phi(\y) \big) = \L(\x,\y) \qquad \text{for all~$\x,\y \in \F$}\:. \]
\end{Def} \noindent
Such diffeomorphisms form a group, denoted by~${\mathcal{G}}$, the {\em{group of symmetries of the Lagrangian}}.
Before going on, we remark that in the physical
setting of causal fermion systems, the unitary
transformations on the underlying Hilbert space give rise to symmetries
of the Lagrangian (see~\cite[Section~8.2]{intro}). In this case, 
the symmetry group~$\G$ consists of the group of all unitary transformations
on the Hilbert space~\cite{paganini+yadav}.
With this in mind, the Lagrangian typically has many
symmetries as described by a high-dimensional compact Lie group.
For the constructions in this paper, we do not need to specify~$\G$.

For introducing our concepts, it is best to begin with the simplest setting
where we assume that the group of symmetries~$\G$ of the Lagrangian
{\em{acts transitively on~$\tilde{M}$}} in the sense that for every~$p \in \tilde{M}$
there is a symmetry transformation~$\Phi \in \G$ with~$\phi(\0)=p$.
In this case, given a point~$p \in \tilde{M}$, we consider the set of all symmetry transformations which map~$\0 \in M$ to~$p$,
\beq \label{Gpdef}
\mathcal{G}_p := \{ \Phi_p \in \G \:|\: \Phi_p(\0) = p \} \:.
\eeq
Given~$\Phi_p \in \mathcal{G}_p$, we set
\[ \rho_p := (\Phi_p)_* \rho \qquad \text{and} \qquad M_p := \supp \rho_p \:. \]
We refer to~$\Phi_p$ as the {\em{osculation}} and~$\rho_p$ as the {\em{osculating vacuum}} at~$p \in \tilde{M}$.
These notions are illustrated in Figure~\ref{figosculate-discrete} in the simple example
where~$\tilde{M}$ is discrete.
For the following constructions, we assume that for every~$p \in \tilde{M}$ we have chosen an osculation~$\Phi_p \in {\mathcal{G}}_p$.
We also assume that the osculation depends continuously on the base point,
meaning that the mapping
\beq \label{loctop}
\tilde{M} \rightarrow C^\infty(\F, \F)\:,\qquad p \mapsto \Phi_p
\eeq
is continuous, where on~$C^\infty(\F, \F)$ we take the topology
of locally uniform convergence of all derivatives.
We remark that, in the setting of causal fermion systems, this simply
corresponds to norm convergence of the unitary operators describing
the symmetry transformations.

Our constructions still apply in the case that~$\G$ does not act
transitively on~$\tilde{M}$. Clearly, in this case, the set~$\G_p$ defined by~\eqref{Gpdef}
may be empty. Instead, we choose the mapping~$\Phi_p \in \G$
in such a way that~$\Phi_p(\0)$ is approximately equal to~$p$.
For the topological questions to be studied here, we do not need
to specify what we mean by ``approximately equal'', nor is it important
how precisely the osculation~$\Phi_p$
is chosen (indeed, in Section~\ref{secindep} we will specify conditions
under which topological invariants are independent of the choice
of the osculation).
Therefore, here we only sketch how~$\Phi_p$ can be constructed,
referring for the details to~\cite[Section~3]{gauss}.
Intuitively speaking, the idea is to choose~$\Phi_p$ in such a way that the spaces~$M$ and~$\tilde{M}_p$
``agree as much as possible near~$p$''. In order to make this statement mathematically precise, one
can set up a variational principle at~$p$. In the case that~$\G$
acts transitively on~$\tilde{M}$, one chooses a basis~$(e_i)_{i=, \ldots, k}$ of~$M$.
Then~$e_i(p) := D (\Phi_p)|_{\mathbf{0}}\, e_i \in T_p \F$ is a basis of~$M_p$. We set
\beq \label{Spdef}
\Sact_p(\Phi_p) = \sum_{i=1}^k D^2 \tilde{\ell}|_p \big( e_i, e_i \big) \:,
\eeq
where~$\tilde{\ell}$ is defined by~\eqref{elldef} with~$\rho$ replaced by~$\tilde{\rho}$. Note that, in view of the EL equations~\eqref{EL1} for~$\tilde{\rho}$, the first derivative of~$\tilde{\ell}$
vanishes at~$p$. This is why the second derivatives in~\eqref{Spdef} is well-defined without the need to specify
a connection on~$\F$. Moreover, it follows from~\eqref{EL1} that the second derivatives in~\eqref{Spdef} are non-negative.
Therefore, the action~$\Sact_p$ is non-negative.
Now one chooses~$\Phi_p$ as a minimizer of~$\Sact_p$ under variations~$\Phi_p \in {\mathcal{G}}_p$. In the general case that~$\G$ does not
act transitively on~$\tilde{M}$, one replaces the constraint~$\Phi_p(\0)=p$
by a weaker conditions.
More details on this method can be found in~\cite[Section~3.2]{gauss}.

We finally comment on the name ``osculating vacua''.
The notion ``osculating'' (literally ``kissing'') can be found in the older literature
(see for example~\cite{lichnerowiczintro, laugwitz}) for a
Euclidean metric which approximates a Riemannian metric in a neighborhood of a point. In this setting, the Euclidean and Riemannian metrics are osculating
if they coincide in a first order Taylor expansion about a base point.
This is as good as possible, because the second derivatives of the Riemannian
metric involve curvature, which clearly cannot be compensated by a coordinate
transformation.
Our notion of ``osculating vacua'' is similar in the sense
that~$M_p$ should approximate~$\tilde{M}$ near~$p$ as good as possible
(as is made precise by the variational principle~\eqref{Spdef}).
In this way, one can indeed approximate the geometry, 
as is made precise in the recent paper~\cite{gauss},
where it is shown that the osculating vacuum at~$p \in \tilde{M}$
gives rise to a chart in which the Christoffel symbols vanish at~$p$ 
(see~\cite[Lemma~4.5]{gauss}).
One should keep in mind that, in contrast to the setting of Riemannian geometry,
the symmetry transformation~$\Phi_p$ not only aims at adjusting the geometry,
but also tries to adapt all the other structures
encoded in~$\tilde{\rho}$ through the variational principle.

\section{Directional Derivatives and Vector Fields} \label{secdirder}
We let~$\h := L^2(\tilde{M}, d\tilde{\rho})$ be the
Hilbert space of square-integrable real-valued functions on~$\tilde{M}$ endowed with the scalar product
\[ \la f, g \ra_\h := \int_{\tilde{M}} f\,g\: d\tilde{\rho} \:. \]
Given a function~$f$ in this Hilbert space, we want to form its derivative. Before doing so, however, we must
mollify the function with the Lagrangian; more precisely, we set
\beq \label{Lrhodef}
(\L_{\tilde{\rho}} \, f)(x) := \fint_{\tilde{M}} \L(x,y)\: f(y)\: d\tilde{\rho}(y) \:,
\eeq
where the integral sign with bar means that the integral is rescaled with
a factor~$1/\s$, i.e.\
\beq \label{fintdef}
\fint \cdots := \frac{1}{\s} \int \cdots \:.
\eeq
We note that, by rescaling the
Lagrangian, one can arrange that the parameter~$\s$ is equal to one.
Doing so would simplify many of the subsequent equations. Nevertheless, 
in order to clarify scaling dimensions and for the sake
of consistency with the literature, we decided against setting~$\s$ to one
(but we will use 
Instead, we will carry the parameter~$\s$ along in all our formulas,
using the shorthand notation~\eqref{fintdef}.

Noting that~$M_p$ is a submanifold of~$\F$, a vector~$u \in M_p$
can be identified with a tangent vector~$u \in T_\0 M_p \subset T_p \F$.
This allows us to define the {\em{directional derivative}} by
\[ D_u f := D_u (\L_{\tilde{\rho}} \, f) = \fint_{\tilde{M}} D_{1,u} \L(x,y)\:
f(y)\: d\tilde{\rho}(y) \Big|_{x=\0_p} \:\in\: \R\:, \]
where for clarity the origin in~$M_p$ (considered as a vector space) is denoted by~$\0_p$.
Viewing~$M_p$ as embedded in~$\F$, we can identify~$\0_p$ with~$p \in \tilde{M} \subset \F$.

A {\em{vector field}}~$v$ on~$\tilde{M}$ is a continuous
mapping which to every~$p \in \tilde{M}$ associates a vector
in the corresponding osculating vacuum,
\[ v : \tilde{M} \rightarrow T\F \qquad \text{with} \qquad v(p) \in T_\0M_p \simeq M_p 
\quad \text{for all~$p \in \tilde{M}$}\:. \]
We denote the space of vector fields by~$\Gamma(\tilde{M}, TM)$.
Consequently, taking the directional derivative at each point of~$\tilde{M}$ gives the mapping
\beq \label{dirdiff}
D_u \::\: C^0(\tilde{M}) \rightarrow C^0(\tilde{M}) \:,\qquad D_u f := D_u (\L_{\tilde{\rho}} \, f) = \fint_{\tilde{M}} D_{1,u} \L(x,y)\: f(y)\: d\tilde{\rho}(y) \:.
\eeq
For two compactly supported functions~$f,g \in C^0_0(\tilde{M}) \subset \h$, we can compute
the formal adjoint of the directional derivative by
\begin{align*}
\la D_u f, g \ra_\h &= \fint_{\tilde{M}} 
\bigg( \int_{\tilde{M}} D_{1,u} \L(x,y)\: f(y)\: d\tilde{\rho}(y) \bigg)\: g(y)\: d\tilde{\rho}(x) \\
&= \fint_{\tilde{M}} d\tilde{\rho}(y) \int_{\tilde{M}}  d\tilde{\rho}(x)\: f(y)\: D_{2,u} \L(y,x)\: g(x) 
= \la f, D_u^* g \ra_\h \:. \end{align*}
By extending this formal adjoint to all continuous functions, we obtain the mapping
\[ D_u^* \::\: C^0(\tilde{M}) \rightarrow C^0(\tilde{M}) \:,\qquad D_u^* f(p) := \fint_{\tilde{M}} D_{2,u} \L(x,y)\:
f(y)\: d\tilde{\rho}(y) \:. \]

\section{The Gau{\ss} Divergence Theorem} \label{secgauss}
In this section we introduce the divergence of a vector field and prove
a corresponding Gau{\ss} theorem. This generalizes concepts
introduced for smooth manifolds in~\cite{fockbosonic, noether}
(see also~\cite[Section~8.3 and Proposition~9.3.1]{intro})
to the non-smooth setting.
\begin{Def} The {\bf{divergence}} of a vector field~$v \in \Gamma(\tilde{M}, TM)$ is defined by
\[ \div \::\: \Gamma(\tilde{M}, TM) \rightarrow \R \:,\qquad
\div v(x) := - (D_v^* \tilde{1})(x) = -\fint_{\tilde{M}} D_{2,v} \L(x,y)\: d\tilde{\rho}(y) \Big|_{x=p} \]
(where~$\tilde{1}$ is the constant function one on~$\tilde{M}$).
\end{Def}

\begin{Thm} {\bf{(Gau{\ss} divergence theorem)}} \label{thmgauss}
$\qquad$ \\ For any relatively compact Borel set~$\tilde{U} \subset \tilde{M}$
and any vector field~$v \in \Gamma(\tilde{M}, TM)$,
\[ 
\int_{\tilde{U}} \div v\: d\tilde{\rho}
= \frac{1}{\s} \int_{\tilde{U}} d\tilde{\rho}(x) \int_{\tilde{M} \setminus
\tilde{U}} d\tilde{\rho}(y)\:\big( D_{1,v} - D_{2,v} \big) \L(x,y)\:. \]
\end{Thm}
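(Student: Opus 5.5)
The plan is to start from the definition of the divergence. I write the integrated divergence as a double integral over $\tilde{U}\times\tilde{M}$ and split the inner integration domain into $\tilde{U}$ and $\tilde{M}\setminus\tilde{U}$. The contribution from $\tilde{U}\times\tilde{U}$ will then be rewritten using the symmetry of the Lagrangian together with the EL equations for $\tilde{\rho}$. Concretely,
\[ \int_{\tilde U} \div v\, d\tilde\rho = -\frac{1}{\s}\int_{\tilde U} d\tilde\rho(x)\int_{\tilde U} d\tilde\rho(y)\, D_{2,v}\L(x,y) - \frac{1}{\s}\int_{\tilde U} d\tilde\rho(x)\int_{\tilde M\setminus\tilde U} d\tilde\rho(y)\, D_{2,v}\L(x,y)\:. \]
Here $D_{2,v}\L(x,y)$ means the derivative in the second argument along $v(y)\in T_y\F$. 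All integrals are finite: $\tilde U$ is relatively compact, $\L$ has compact range and is smooth, $v$ is continuous, and $\tilde\rho$ is locally finite. Hence Fubini applies.

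The key step is the EL equations. By \eqref{EL1} for $\tilde\rho$, the function $\tilde\ell(y)=\int_\F \L(y,z)\,d\tilde\rho(z)-\s$ is minimal on $\tilde M$. Since $\tilde\ell$ is smooth (differentiate under the integral using compact range), its first derivative vanishes at each $y\in\tilde M$ in every direction. In particular
\[ \int_{\tilde M} D_{1,v}\L(y,x)\, d\tilde\rho(x) = 0 \qquad \text{for all } y \in \tilde M\:, \]
with the derivative taken along $v(y)$. By the symmetry $\L(x,y)=\L(y,x)$ this equals $\int_{\tilde M} D_{2,v}\L(x,y)\,d\tilde\rho(x)$. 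Integrating over $y\in\tilde U$ and splitting $x$ into $\tilde U$ and $\tilde M\setminus\tilde U$ gives
\[ \int_{\tilde U}d\tilde\rho(y)\int_{\tilde U}d\tilde\rho(x)\, D_{2,v}\L(x,y) = -\int_{\tilde U}d\tilde\rho(y)\int_{\tilde M\setminus \tilde U}d\tilde\rho(x)\, D_{2,v}\L(x,y)\:. \]
Here the derivative acts on the variable in $\tilde U$.

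Next I rename the variables and use symmetry again. This turns the right-hand side into $-\int_{\tilde U}d\tilde\rho(x)\int_{\tilde M\setminus\tilde U}d\tilde\rho(y)\, D_{1,v}\L(x,y)$, with the derivative on the $\tilde U$-variable $x$. The inner-domain term over $\tilde U\times\tilde U$ in the first display is symmetric under exchanging $x$ and $y$, so Fubini lets me identify it with the left-hand side just obtained. Substituting, the first term becomes $+\frac1\s\int_{\tilde U}\int_{\tilde M\setminus\tilde U} D_{1,v}\L$. Combined with the remaining $-D_{2,v}$ term, this gives exactly the claimed surface layer integral.

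The only real obstacle is justifying the vanishing of the first derivative of $\tilde\ell$ on $\tilde M$ and differentiating under the integral sign. Smoothness of $\L$, compact range, and local finiteness of $\tilde\rho$ give a uniform bound on the derivatives on compact sets. Since $\tilde\ell\ge 0=\tilde\ell|_{\tilde M}$ on the manifold $\F$, every point of $\tilde M$ is an interior minimum, so the first derivative vanishes. The rest is bookkeeping of which argument carries the derivative.
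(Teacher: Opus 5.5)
Your proof is correct and is essentially the paper's argument: both split the integration domain into $\tilde U$ and $\tilde M\setminus\tilde U$, use the symmetry of $\L$ to move the derivative onto the other argument, and eliminate the $\tilde U\times\tilde M$ term (derivative on the $\tilde U$-variable) via the EL equations. The only difference is that the paper applies the symmetry swap before splitting, whereas you split first. One small wording point: the integrand $D_{2,v}\L(x,y)$ is not symmetric under $x\leftrightarrow y$, but the two $\tilde U\times\tilde U$ expressions you identify have the same integrand and differ only in the order of integration, so Fubini alone justifies that step.
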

\Proof A direct computation using the symmetry of the Lagrangian gives
\begin{align}
-\s &\int_{\tilde{U}} \div v\: d\tilde{\rho}
= \int_{\tilde{U}} d\tilde{\rho}(x) \int_{\tilde{M}} d\tilde{\rho}(y)\:D_{2,v} \L(x,y) \notag \\
&= \int_{\tilde{M}} d\tilde{\rho}(x) \int_{\tilde{U}} d\tilde{\rho}(y)\:D_{1,v} \L(x,y) \notag \\
&= \int_{\tilde{U}} d\tilde{\rho}(x) \int_{\tilde{U}} d\tilde{\rho}(y)\:D_{1,v} \L(x,y)
+ \int_{\tilde{M} \setminus \tilde{U}} d\tilde{\rho}(x) \int_{\tilde{U}}
d\tilde{\rho}(y)\:D_{1,v} \L(x,y)\notag \\
&= \int_{\tilde{U}} d\tilde{\rho}(x) \int_{\tilde{M}} d\tilde{\rho}(y)\:D_{1,v} \L(x,y) \label{zero} \\
&\quad\: -\int_{\tilde{U}} d\tilde{\rho}(x) \int_{\tilde{M} \setminus
\tilde{U}} d\tilde{\rho}(y)\:D_{1,v} \L(x,y)
+ \int_{\tilde{M} \setminus \tilde{U}} d\tilde{\rho}(x) \int_{\tilde{U}} d\tilde{\rho}(y)\:
D_{1,v} \L(x,y) \:.
\end{align}
The term~\eqref{zero} vanishes in view of the EL equations~\eqref{EL1}
for~$\tilde{\rho}$, giving the result.
\QED

\section{$\L$-Induced Charts and the Induced Connection} \label{secconnection}
Suppose that we have chosen an osculating vacuum~$M_p$ at~$p$.
For any~$x \in \F$, the integral
\[ K_p(x) := \fint_{M_p} \L(x,y)\: y\: d\rho_p(y) \]
gives a vector in~$M_p$ (where we again regard~$M_p \simeq T_\0M_p$ as a vector space).
We thus obtain a mapping
\beq \label{Kpdef}
K_p \::\: \F \rightarrow M_p \:,\qquad
x \mapsto \fint_{M_p} \L(x,y)\: y\: d\rho_p(y) \:.
\eeq
It is useful to restrict this mapping to~$\tilde{M}$,
\beq \label{phipdef}
\phi_p := K_p|_{\tilde{M}} \::\: \tilde{M} \rightarrow M_p \:.
\eeq
In case that~$\tilde{M}$ has a manifold structure and assuming that~$\phi_p$ is a local
homeomorphism, one can regard~$\phi_p$ as a chart of~$\tilde{M}$ around~$p$
(for details see~\cite[Section~4.1]{gauss}).
In the general situation when~$\tilde{M}$ has no manifold structure (for example, describing
a discrete space), we can still use the mapping~$\phi_p$ to describe~$\tilde{M}$
locally in terms of vectors in~$M_p$ (at least if the mapping~$\phi_p$ is locally injective).
With this general picture in mind, we refer to~$\phi_p$
as the {\em{$\L$-induced chart around~$p \in \tilde{M}$}}.

On a Riemannian manifold, the Levi-Civita connection is characterized
by the property
that in a normal coordinate system (like Gaussian coordinates or coordinates induced by the
exponential map), the covariant derivative reduces to ordinary derivatives.
We now use this relation in the reverse order and use the $\L$-induced charts to {\em{define}} a connection.
To this end, given~$q \in \tilde{M}$, we restrict the function~$K_q$ to~$M_p$,
\[ \phi_{q,p} := K_q|_{M_p} \::\: M_p \rightarrow M_q \:,\qquad
\phi_{q,p}(x) = \fint_{M_q} \L(x,y)\: y\: d\rho_q(y) \:. \]
This mapping is in general not linear. But its linearization can be used to
define the connection
\[ \nabla^\L_{q,p} := \big((\phi_{q,p})_*\big)_{\0_p} \::\: M_p \rightarrow M_q \:,\qquad
\nabla^\L_{q,p} u = \fint_{M_q} D_{1,u} \L(x,y)\: y\: d\rho_q(y) \:. \]
In the case~$p=q$, it follows from the EL equations as well as from the
symmetry of the vacuum under translations and reflections that
\[ \phi_{q,q}(x) = \fint_{M_q} \L(x,y)\: y\: d\rho_q(y) = x \:. \]
This implies that~$\nabla^\L_{q,q}$ is the identity,
\beq \label{nablaid}
\nabla^\L_{q,q} = \1  \:.
\eeq
 But we point out that, in general, 
\[ \nabla^\L_{q,p}\: \nabla^\L_{p,q} \neq \1 \:. \]

By continuity, the relation~\eqref{nablaid} implies that every point~$p \in \tilde{M}$
has an open neighborhood~$U \subset \tilde{M}$ such that the mapping
\beq \label{vectorbundle}
\nabla^\L_{q,p} \::\: M_p \rightarrow M_q \text{ is invertible
for all~$p \in U$} \:.
\eeq
These mappings give a natural identification of the osculating vacua~$M_p$
for all~$p \in U$. Using these identifications, the collection of all osculating
vacua~$(M_p)_{p \in \tilde{M}}$ is endowed with the structure of a
{\em{topological vector bundle}}
(for basic definitions see~\cite{milnor+stasheff, steenrod}; cf.\
also~\cite[Section~3.2]{topology}).

In order to clarify the geometric significance of the connection~$\nabla^\L$, the following 
consideration
seems useful. We note that~$\F$ is a manifold, but it is not endowed with a Riemannian
metric. Nevertheless, we can decompose its tangent space~$T_p \F$ canonically
into the direct sum of~$M_p$ and
\[ N_p := \ker D K_p(p) \subset T_p \F \:, \]
referred to as the {\em{normal space}} at~$p$. Regarding the decomposition
\[ T_p \F = M_p \oplus N_p \:, \]
as an orthogonal direct sum, the Lagrangian gives rise to a notion of orthogonality to the tangent space.
Likewise, we can form the {\em{orthogonal projection}}
\[ \Pi : T_p \F \rightarrow M_p \:,\qquad (u,v) \in M_p \oplus N_p \mapsto u \:. \]
In this context, the connection~$\nabla^\L$ can be understood similar
to the classical Gau{\ss} map, being obtained by taking the orthogonal projection 
of an ambient covariant derivative to~$M_p$.
More details on the geometric significance of~$\nabla^\L$ and~$\Pi$
can be found in~\cite[Section~4 and Appendix~A]{gauss}.

\section{The $\L$-Calculus for Differential Forms} \label{secLcalc}
\subsection{Weak Derivatives and Mollified Evaluation}
So far, we only considered first order derivatives. The method was to first
take the convolution with the Lagrangian and then to differentiate
tangentially to~$M_p$ (see~\eqref{dirdiff}).
Higher order derivatives could be defined iteratively. However, this method
has the shortcoming that, in contrast to partial derivatives in a chart,
these directional derivatives do in general not commute.
In other words, the resulting higher derivatives are in general not symmetric,
but violate the Schwarz theorem.

Our strategy to bypass these issues is to work with {\em{weak derivatives}}.
These concepts and our notation are inspired by and have similarities to
familiar concepts in distribution theory and partial differential equations.
However, the adaptation to our setting is rather subtle and not straightforward.
In general terms, we aim for a formula of the form
\beq \label{wevalmotivate}
\fint_{\tilde{M}} \L(x,y)\: D^\kappa f(y) \: d\tilde{\rho}(y)
:= (-1)^{|\kappa|}\: \fint_{\tilde{M}} \big( D_2^\kappa \L(x,y) \big) \,f(y)\: d\tilde{\rho}(y) \:,
\eeq
where~$\kappa$ is a multi-index. Similar to the derivative of a distribution,
the derivative~$D^\kappa f$ is {\em{defined}} by the integral on the right side.
In order to give~$D^\kappa f(p)$ a mathematical meaning independent of
the integrals and of choices of bases, it is convenient to consider it as a section in
the symmetric tensor product of~$M_p$, as we now explain in detail.
Given~$m \in \N$, for any~$p \in \tilde{M}$ we consider the symmetric
tensor product of~$M_p$,
\[ M_p^m := \underbrace{M_p \odot \cdots \odot M_p}_{\text{$m$ factors}} \:, \]
where~$\odot$ denotes the symmetrized tensor product, i.e.\
\[ 
u_1 \odot \cdots \odot u_m  := \frac{1}{m!} \sum_{\sigma \in S_m}
\psi_{\sigma(1)} \otimes \cdots \otimes \psi_{\sigma(m)} \:, \]
where~$S_m$ denotes the group of all permutations.
Next, we let~$T^m\tilde{M}$ be the vector bundle over~$\tilde{M}$ with fibres~$M_p^m$
(constructed as explained after~\eqref{vectorbundle}).
We denote the continuous sections in this bundle by~$\Gamma(\tilde{M}, T^m\tilde{M})$.
Thus, a section~$f$ in this bundle can be written as
\beq \label{fdef}
f(p) = \sum_{i_1,\ldots, i_m} h^{i_1,\ldots, i_m}(p)\; u_{i_1}(p) \odot
\cdots \odot u_{i_m}(p)
\eeq
with~$u_i(p) \in M_p$ and coefficients~$h^{i_1, \ldots, i_m} \in \R$.
In view of the glueing constructions in Sections~\ref{secext} and~\ref{secmv}, it is preferable to work
with sections which are merely in~$L^\infty_\loc$, denoted by~$\Gamma^\infty_\loc(\tilde{M}, T^m \tilde{M})$.

The purpose of these sections in the tensor bundle is to describe derivatives
acting on the Lagrangian. More precisely, we consider
derivatives of~$\L(x,y)$ on~$M_y$, viewed as multilinear mappings, i.e.
\beq \label{D2def}
D_2 \L(x,y) \::\: M_y \rightarrow \R \qquad \text{and} \qquad
D^m_2 \L(x,y) \::\: \underbrace{M_y \times \cdots \times M_y}_{\text{$m$ factors}}
\rightarrow \R \:.
\eeq
We now define
\begin{align}
& \fint_{\tilde{M}} \L(x,y) \cdot f(y) \: d\tilde{\rho}(y) \notag \\
&\quad := (-1)^m \sum_{i_1,\ldots, i_m} 
\fint_{\tilde{M}} \big( D_2^m \L(x,y) \big) \big( u_{i_1}, \ldots, u_{i_m} \big)\: h^{i_1, \ldots, i_m}(y)
\: d\tilde{\rho}(y) \:. \label{weval}
\end{align}
This gives the right side of~\eqref{wevalmotivate} a precise mathematically
meaning. ``Formally integrating by parts'' gives in analogy to the left side of~\eqref{wevalmotivate} the expression
\begin{align}
&\fint_{\tilde{M}} \L(x,y) \cdot f(y) \: d\tilde{\rho}(y) \notag \\
& \quad= \sum_{i_1,\ldots, i_m}  \fint_{\tilde{M}} \L(x,y)\:
D^m \big(h^{i_1, \ldots, i_i} \big)\big|_u \big( u_{i_1}, \cdots, u_{i_m} \big)\: d\tilde{\rho}(y) \:.
\label{formweak}
\end{align}
However, the derivatives on the right side are ill-defined, because
the coefficients~$h_{i_1, \ldots, i_m}$ are defined only on~$\tilde{M}$, where there is 
no differentiable structure. Moreover, one should keep in mind that the
integration is over~$\tilde{M}$ (and not over~$M_y$), which again shows that
it makes no mathematical sense to integrate by parts.
Nevertheless, it could be helpful for understanding the definition~\eqref{weval}
as a way to give derivatives of functions on~$\tilde{M}$ 
(as in~\eqref{formweak}) a rigorous mathematical meaning.
We refer to~\eqref{weval} as the {\em{mollified evaluation of weak
derivatives}}.

The similarity to weak derivatives becomes clearer if we work with
distinguished bases of the osculating vacua. Since~\eqref{weval} involves
the osculating vacua~$M_y$ at different points~$y$, it is important to
canonically identify the basis vectors by constructing suitable local frames.
This can be accomplished using with
the connection~$\nabla^\L$ introduced in Section~\ref{secconnection}, as we
now explain. In preparation, we need to specify the size of the neighborhoods
on which the frames should be defined. In order to ensure that in expressions
like~\eqref{weval} the points~$x$ and~$y$ lie in the same neighborhood,
it is useful to introduce the following notions.
\begin{Def} \label{defBL}
Given~$x \in \tilde{M}$, we define the {\bf{$\L$-ball}}~$B_\L(x)$ by
\[ B_\L(x) := \big\{ y \in \tilde{M} \:\big|\: \text{there is~$m \in \N_0$
with~$D_2^m \L(x,y) \neq 0$} \big\}
\overset{\text{open}}{\subset} \tilde{M} \]
(with the derivative~$D_2^m \L(x,y)$ as defined in~\eqref{D2def}).
Likewise, for a set~$U \subset \tilde{M}$, we define its {\bf{$\L$-neighborhood}}~$B_\L(U)$ by
\[ B_\L(U) := \bigcup_{x \in U} B_\L(x) =
\big\{ y \in \tilde{M} \:\big|\: \text{there is~$x \in U$ and~$m \in \N_0$ with~$D_2^m
\L(x,y) \neq 0$}
\big\} \:. \]
\end{Def} \noindent
Note that, since the Lagrangian is strictly positive on the diagonal
(see property~(ii) of the Lagrangian in the preliminaries), we know that~$U \subset B_\L(U)$.
Hence~$B_\L(U)$ really is an open neighborhood of~$U$.

Given~$x \in \tilde{M}$, we want to construct a canonical frame in~$B_\L(x)$.
To this end, we need the following assumption.
\begin{Def} The point~$x \in \tilde{M}$ is {\bf{$\L$-regular}} if for every~$y \in B_\L(x)$,
the parallel transport~$\nabla_{x,y} : M_y \rightarrow M_x$ is bijective.
\end{Def} \noindent
In what follows, we shall always make the blanket assumption that all points of~$\tilde{M}$
are $\L$-regular.
Under this assumption, we choose a basis~$(e_i)$ of~$M_x$ and
define corresponding bases~$(e_i(y))$ of~$M_y$ with~$y \in B_\L(x)$ by the conditions
\beq \label{parallelframe}
e_i = \nabla^\L_{x,y}\, e_i(y) \:.
\eeq
We refer to this frame as a {\em{parallel frame around~$x$}}.
Likewise, in order to obtain a frame of the dual spaces, we let~$(e^i)$
be the dual basis of~$M_x^*$ (defined by~$(e^i(e_j)=\delta^i_j$)
and define bases~$(e^i(y))$ of~$M_y^*$ by
\beq \label{dualframe}
e^i(y) \big( \nabla^\L_{y,x} e_i \big) = \delta^i_j \:.
\eeq
Note that~$e^i(y)$ is in general {\em{not}} the dual basis of~$e_i(y)$. The reason for
our at first sight counter-intuitive convention is the following: Our 
definitions~\eqref{parallelframe} 
and~\eqref{dualframe} ensure that the basis vectors of both the frame and the
dual frame become large if~$\L(x,y)$
and its derivatives become small, as typically happens near the boundary
of~$B_\L(x)$. Consequently, the coefficient functions will tend to zero near this boundary,
improving the analytic properties of integrals as in~\eqref{weval}.

Expressing the section~$f$ in~\eqref{fdef} in the basis vectors of the parallel frame,
we obtain locally
\beq \label{fdef1}
f(y) = \sum_{i_1,\ldots, i_m} h^{i_1,\ldots, i_m}(y)\;e_{i_1}(y) \odot
\cdots \odot e_{i_m}(y) \:.
\eeq
Using the multi-index notation
\[ D_2^\kappa \L(x,y) := \big( D_2^m \L(x,y) \big) \big( e_{i_1}, \ldots, e_{i_m} \big)
\qquad \text{with} \qquad \kappa := (i_1, \ldots, i_m) \:, \]
we can write~\eqref{weval} in the more compact and familiar form
\[ 
\fint_{\tilde{M}} \L(x,y) \cdot f(y) \: d\tilde{\rho}(y)
= (-1)^{|\kappa|} \fint_{\tilde{M}} D_{2,\kappa} \L(x,y)\: h^\kappa(y)\: d\tilde{\rho}(y) \:, \]
where~$|\kappa|:=m$ denotes the order of the multi-index. Moreover, the sum
was omitted using the Einstein summation convention.
We also abbreviate the formal weak derivative in~\eqref{formweak} by
\[ D_\kappa h^\kappa(y) := \sum_{i_1,\ldots, i_m}
D^m \big(h^{i_1, \ldots, i_i} \big)\big( u_{i_1}, \cdots, u_{i_m} \big) \:. \]
We point out that this merely is a convenient notation. Similar to the notation
used for distributional derivatives, it becomes mathematically
meaningful only after formally integrating by parts, i.e.\
\[ \fint_{\tilde{M}} \L(x,y) \cdot D_\kappa h^\kappa(y) \: d\tilde{\rho}(y)
\overset{\text{def}}{=} (-1)^{|\kappa|}
\fint_{\tilde{M}} D_{2,\kappa} \L(x,y) \: h^\kappa(y) \: d\tilde{\rho}(y) \:. \]
Despite this analogy to distribution theory, there is also the major difference that we
test exclusively with one function: the Lagrangian~$\L$.

Working in the parallel frame~$e_i(y)$, we can also introduce
{\em{weak partial derivatives}} by
\[ \fint_{\tilde{M}} D_{2,\kappa} \L(x,y) \cdot D_j h^\kappa(y) \: d\tilde{\rho}(y) 
:= -\fint_{\tilde{M}} D_{2,\kappa} D_{2,j} \L(x,y) \: h^\kappa(y) \: d\tilde{\rho}(y) \:. \]
In order to obtain an index-free notation, we can regard the partial derivatives
(for a fixed base point~$x \in \tilde{M})$ as an operator
\[ D_j \::\: \Gamma^\infty_\loc(\tilde{M}, T^m\tilde{M}) \rightarrow \Gamma^\infty_\loc(\tilde{M}, T^{m+1}\tilde{M}) \]
defined by
\[ (D_j f)(y) := \sum_{i_1,\ldots, i_m} h^{i_1,\ldots, i_m}(y)\; e_j(y) \odot
e_{i_1}(y) \odot \cdots \odot e_{i_m}(y) \]
(with~$f$ again given by~\eqref{fdef1}).
Since the Lagrangian is smooth, these weak derivatives exist and 
satisfy the Schwarz theorem,
\[ D_i \big( D_j  f)(y) = D_j \big( D_i  f)(y) \qquad \text{for all~$i,j =1,\ldots, m$}\:, \]
and similarly for higher derivatives.
This property will be crucial for the construction of differential forms in the next section.

We now proceed by defining the needed function spaces.
In words, these function spaces are obtained by
taking equivalence classes of tensor sections
which coincide after mollified evaluation to a given order~$\ell$.
\begin{Def} \label{defvanish}
A section~$f \in \Gamma^\infty_\loc \big( \tilde{M}, T^m \tilde{M} \big)$
{\bf{vanishes at~$x \in \tilde{M}$ to the order~$\ell \in \N_0$}} if
\beq \label{vanish}
\fint_{\tilde{M}} \L(x,y) \cdot (D_\lambda f)(y)\: d\tilde{\rho}(y) = 0
\qquad \text{for all multi-indices~$\lambda$ with~$|\lambda| \leq \ell$}\:.
\eeq
Likewise, $f$ vanishes to the order~$\ell$ if it vanishes to the order~$\ell$
at all~$x \in \tilde{M}$.
\end{Def} \noindent

We introduce on~$\Gamma^\infty_\loc \big( \tilde{M}, T^m \tilde{M} \big)$ the equivalence relation
\beq \label{fequiv}
f \simeq^\ell_{\tilde{M}} g \qquad \text{if} \qquad
\text{$f-g$ vanishes to the order~$\ell$}\:.
\eeq
For~$\ell, m \in \N_0$ we define the vector space
\beq \label{Cequi}
\D^{m, \ell}(\tilde{M}) :=
\bigg( \bigoplus_{r=0}^m
\Gamma^\infty_\loc \big( \tilde{M}, T^r \tilde{M} \big) \bigg) \Big/ \simeq^\ell_{\tilde{M}} \:.
\eeq
Note that the vectors in this vector space are equivalence classes of tensor sections.
Our notation is motivated by the fact that these equivalence classes involve
weak derivatives of order at most~$m$.
The parameter~$\ell$, on the other hand, is the maximal order of derivatives taken
into account when forming equivalence classes. We note that the equivalence classes
become smaller if~$\ell$ is increased, and thus there is a canonical quotient map
\[ \D^{m, \ell+1}(\tilde{M}) \rightarrow \D^{m, \ell}(\tilde{M}) \quad \text{surjective} \:. \]

We finally point out that the partial derivatives~$D_j$ (introduced at a base point~$x$)
give rise to a globally defined mapping
\[ D_j\::\: \D^{m, \ell}(M) \rightarrow \D^{m+1, \ell-1}(M) \qquad \text{if~$\ell \geq 1$}\:. \]
It is verified immediately using the implication
\[ f \simeq^\ell_{\tilde{M}} 0 \qquad \Longrightarrow \qquad D_j f \simeq^{\ell-1}_{\tilde{M}} 0 \]
that this mapping is well-defined on the equivalence classes.
This implication also explains why~$\ell$ must be at least one,
and~$D_j f$ is formed of equivalence classes which vanish to the order~$\ell-1$.

\subsection{Differential Forms and Exterior Calculus} \label{secdiffex}
Given~$r \in \N_0$ and~$s \in \{0, \ldots, k\}$,
at each point~$p \in \tilde{M}$ we define an {\em{alternating $s$-form}}~$\omega_p$
on~$M_p$ as a mapping
\[ \omega_p \::\: \underbrace{M_p \times \cdots \times M_p}_{\text{$s$ factors}} \rightarrow
M_p^r \qquad \text{totally anti-symmetric} \:. \]
We now let~$\omega$ be a continuous
mapping which to every~$p \in \tilde{M}$ associates a corresponding alternating
$s$-form on~$M_p$. We denote the vector space of such mappings
by~$\Gamma^\infty_\loc(\tilde{M}, T^{s,r}\tilde{M})$.

Similar as in the previous section (see~\eqref{fdef1}), a section~$\eta \in
\Gamma^\infty_\loc(\tilde{M}, T^{s,r}\tilde{M})$ can be written in a parallel frame 
locally as
\beq \label{eta0}
\begin{split}
\eta(y) &= \eta_{i_1 \cdots i_s}(y)\: \big( e^{i_1}(y) \wedge \cdots \wedge e^{i_s}(y)
\big) \\
\text{with} \quad \eta_{i_1 \cdots i_s}(y) &= h_{i_1 \cdots i_s}^{j_1,\ldots, j_r}(y) \:
\big( e_{i_1}(y) \odot \cdots \odot e_{i_r}(y) \big) \:,
\end{split}
\eeq
where~$e^i(y)$ are the corresponding dual basis vectors in~$M_p^*$
as introduced in~\eqref{dualframe}.
Moreover, the wedge product is the totally anti-symmetrized tensor product,
\[ 
e^{1} \wedge \cdots \wedge e^{m} := \frac{1}{m!} \sum_{\sigma \in S_m}
(-1)^{\sign \sigma}\:
e^{\sigma(1)} \otimes \cdots \otimes e^{\sigma(m)} \:. \]

We now introduce differential forms as suitable equivalence classes of
the above tensor sections: 
We say that~$\eta \in \Gamma^\infty_\loc(\tilde{M}, T^{s,r}\tilde{M})$
{\em{vanishes at~$x$ to the order~$\ell \in \N_0$}}
if in the representation~\eqref{eta0} all its component
functions~$\eta_{i_1 \cdots i_s}(x)$ vanish to the order~$\ell$ at~$x$.
We use the notation
\beq \label{Lnot}
\L_{\tilde{\rho}}(D_\lambda \eta)(x) = 0 \qquad \text{for all~$\lambda$ with~$|\lambda| \leq \ell$} \:.
\eeq
Likewise, the differential form vanishes to the order~$\ell$ if it vanishes
to the order~$\ell$ at every point~$x \in \tilde{M}$.
Similar to~\eqref{fequiv}, we denote the corresponding equivalence
relation by~$\simeq^\ell_{\tilde{M}}$. Given~$m \in \N_0$, we define the space of {\em{differential
forms}}~$\Omega^{s,m,\ell}(\tilde{M})$ in analogy to~\eqref{Cequi} by
\beq \label{omegaequi}
\Omega^{s, m, \ell}(\tilde{M}) := \bigg(
\bigoplus_{r=0}^m
\Big( \Gamma^\infty_\loc \big( \tilde{M}, T^{s,r} \tilde{M} \big) \bigg) \Big/ \simeq^\ell_{\tilde{M}} \:.
\eeq

We want to define the exterior derivative as the totally anti-symmetrized
weak derivative. Thus for the section~$\eta \in
\Gamma^\infty_\loc(\tilde{M}, T^{s,r}\tilde{M})$ in~\eqref{eta0}, we introduce the section~$d \eta \in \Gamma^\infty_\loc(\tilde{M}, T^{s+1,r+1}\tilde{M})$ 
locally by
\beq \label{ddef}
(d \eta)(y) = 
h_{i_1 \cdots i_s}^{j_1,\ldots, j_m}(y) \:
\big( e_j(y) \odot e_{i_1}(y) \odot \cdots \odot e_{i_m}(y) \big)
\: \big( e^j(y) \wedge e^{i_1}(y) \wedge \cdots \wedge e^{i_s}(y) \big) \:.
\eeq
In order for this definition to be independent of the choice of representatives of the
equivalence classes, we need to assume that~$\ell \geq m+1$.
\begin{Def}
The {\bf{exterior derivative}} is defined by the local formula~\eqref{ddef}
as the mapping between differential forms
\beq \label{dmap}
d \::\: \Omega^{s,m,\ell}(\tilde{M}) \rightarrow \Omega^{s+1, m+1, \ell-1}(\tilde{M})
\qquad \text{if~$\ell \geq 1$}\:.
\eeq
\end{Def}

\subsection{De Rham Cohomology}
\label{sec:Cohomology}
By iterating~\eqref{ddef} and again using the symmetry
of second weak derivatives, one immediately sees that
\beq
d^2=0 \:.
\eeq
We thus obtain {\em{co-chain complexes}} of real vector spaces.
For the simplest choice of the parameters~$m$ and~$\ell$, one obtains
\beq \label{simplecomplex}
0 \longrightarrow \Omega^{0, 0, k}(\tilde{M})
\overset{d}{\longrightarrow} \Omega^{1,1,k-1}(\tilde{M}) 
\overset{d}{\longrightarrow} 
\cdots 
\overset{d}{\longrightarrow} \Omega^{k,k,0}(\tilde{M})
\longrightarrow 0
\eeq
(where~$k$ again denotes the dimension of~$M$).
This makes it possible to introduce the {\em{de Rham cohomologies}}~$H^r(\tilde{M})$
as usual by
\begin{align}
H^0(\tilde{M}) &:= \ker d|_{\Omega^{0, 0, k}(\tilde{M})} \label{H0}\\
H^r(\tilde{M}) &:= \ker d|_{\Omega^{r, r, k-r}(\tilde{M})} \big/ d\big( \Omega^{r-1, r-1, k-r+1}(\tilde{M}) \big) \:,\quad r \in \{1,\ldots, k-1\} \label{Hr}\\
H^k(\tilde{M}) &:= \Omega^{k, k, 0}(\tilde{M}) \big/ d\big( \Omega^{k-1, k-1, 1} (\tilde{M}) \big)\:\label{Hk}.
\end{align}
The dimensions of the cohomologies give the {\em{Betti numbers}}
\[ b_r := \dim H^r(\tilde{M}) \:. \]

\subsection{Integration of $k$-Forms and Stokes' Theorem} \label{secintegrate}
We choose an orientation of~$M$.
This also induces an orientation on the osculating vacuum~$M_p$.
We thus have a distinguished $k$-form~$\epsilon_\flat$ characterized by the condition
that the volume form induced on~$M_p$ coincides with the measure~$\rho_p|_{M_p}$.
The symbol~$\flat$ indicates that, choosing a basis of~$M_p$ and representing~$\epsilon_\flat$
in components~$\epsilon_\flat = (\epsilon_\flat)_{i_1 \cdots i_k}$, it has~$k$ lower
(i.e.\ covariant) indices. 
Similar to the differential forms~$\Omega^{s,m,l}(\tilde{M})$, we introduce alternating dual $s$-forms denoted by~$(\Omega_p^{s,m,l})^*(\tilde{M})$.
There is a distinguished dual $k$-vector~$\epsilon^\sharp$ characterized by the condition
\beq \label{epsnorm}
(\epsilon^\sharp)^{i_1 \cdots i_k}\: (\epsilon_\flat)_{i_1 \cdots i_k} = k! \:.
\eeq
Note that the above constructions did not involve a metric on~$M_p$, but merely the
volume measure~$\rho_p|_{M_p}$.

The {\em{integral of $k$-forms}} is defined as follows. Since the vector space of $k$-forms
is one-dimensional, any~$\omega \in \Omega^{k, m, \ell}(\tilde{M})$ can be represented uniquely as
\beq \label{fom}
\omega_p = f\, \epsilon_\flat
\eeq
with~$f \in T_p^m\tilde{M}$. Since it may involve weak derivatives,
we can integrate~$f$ only after evaluating weakly according to~\eqref{weval}.
This leads us to define
\[ \int_{\tilde{U}} \omega := \int_{\tilde{U}} \bigg( \fint_{\tilde{M}}
\L(x,y) \cdot f(y) \: d\tilde{\rho}(y) \bigg) \:d\tilde{\rho}(x) \:, \] 
where~$\tilde{U}$ is any Borel subset of~$\tilde{M}$. Next, we need to
define the boundary integral over~$\partial U$
of a differential form~$\eta \in \Omega^{k-1,m, \ell}(\tilde{M})$.
Similar as in the Gau{\ss} divergence theorem (Theorem~\ref{thmgauss}),
this boundary integral is defined as a suitable {\em{surface layer integral}}.
More precisely, for a representative~$\eta$ of a differential form
in~$\Omega^{k-1,m, \ell}(\tilde{M})$ we set
\beq \label{osik}
\int_{\partial \tilde{U}} \eta := -\frac{1}{\s} \bigg(
\int_{\tilde{U}} d\tilde{\rho}(x) \int_{\tilde{M} \setminus \tilde{U}} d\tilde{\rho}(y)
- \int_{\tilde{M} \setminus \tilde{U}} d\tilde{\rho}(x) \int_{\tilde{U}} d\tilde{\rho}(y) \bigg) \:(\epsilon^\sharp \llcorner \eta)^i \: D_{2,i} \L(x,y) \:,
\eeq
where the symbol~$\llcorner$ denotes the tensor contraction, i.e.\ in a local frame
\[ (\epsilon^\sharp \llcorner \eta)^i := \frac{1}{k!}\: (\epsilon^\sharp)^{i \,i_1 \cdots i_{k-1}}
\:\eta_{i_1 \cdots k_{k-1}} \:. \]

\begin{Thm} {\bf{(Stokes' theorem)}} \label{thmstokes}
For any relatively compact Borel set~$\tilde{U} \subset \tilde{M}$
and any $(k-1)$-form~$\eta \in \Omega^{k-1, 0, \ell}(\tilde{M})$ with~$\ell \geq 1$,
\beq \label{stokes}
\int_{\tilde{U}} d \eta = \int_{\partial U} \eta \:.
\eeq
\end{Thm}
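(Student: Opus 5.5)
The plan is to reduce the statement to the Gau{\ss} divergence theorem (Theorem~\ref{thmgauss}), applied to the vector field obtained from~$\eta$ by contraction with the volume dual vector~$\epsilon^\sharp$. Since~$\eta \in \Omega^{k-1,0,\ell}(\tilde{M})$ involves no weak derivatives ($m=0$), we can pick a representative with scalar-valued components. In a parallel frame we then set
\[ v^i(y) := (\epsilon^\sharp \llcorner \eta)^i(y) \:, \qquad v(y) := v^i(y)\, e_i(y) \in M_y \:. \]
This is a (merely~$L^\infty_\loc$) section of~$T\tilde{M}$. We first check that Theorem~\ref{thmgauss} still holds for such~$v$. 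Its proof only used the symmetry of~$\L$, Fubini on the relatively compact set~$\tilde{U}$ (justified by compact range and local finiteness of~$\tilde{\rho}$), and the EL equations. It did not use continuity of~$v$.

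Next we compute~$d\eta$ from~\eqref{ddef}. With~$m=0$, it is the $k$-form with one weak derivative
\[ d\eta = \eta_{i_1 \cdots i_{k-1}}\: e_j \:\big( e^j \wedge e^{i_1} \wedge \cdots \wedge e^{i_{k-1}} \big) \:. \]
We write this as~$d\eta = f\,\epsilon_\flat$ as in~\eqref{fom}. Contracting with~$\epsilon^\sharp$ and using the normalization~\eqref{epsnorm}, we identify~$f = v^i\, e_i$, a section of~$T^1\tilde{M}$. By the mollified evaluation~\eqref{weval} with~$m=1$, we obtain
\[ \fint_{\tilde{M}} \L(x,y)\cdot f(y)\, d\tilde{\rho}(y) = -\fint_{\tilde{M}} D_{2,i}\L(x,y)\, v^i(y)\, d\tilde{\rho}(y) \:. \]
The left side of~\eqref{stokes} is therefore
\[ \int_{\tilde{U}} d\eta = -\frac{1}{\s}\int_{\tilde{U}} d\tilde{\rho}(x)\int_{\tilde{M}} d\tilde{\rho}(y)\, D_{2,v(y)} \L(x,y) \:. \]
Here~$v$ is attached to the integration variable~$y$, not to~$x$ as in the definition of~$\div$. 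Using the symmetry of~$\L$, this equals~$-\frac{1}{\s}\int_{\tilde{M}} d\tilde{\rho}(y) \int_{\tilde{U}} d\tilde{\rho}(x)\, D_{1,v(y)}\L(y,x)$ after relabeling.

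We then split~$\int_{\tilde{M}} = \int_{\tilde{U}} + \int_{\tilde{M}\setminus\tilde{U}}$, exactly as in the proof of Theorem~\ref{thmgauss}. The~$\tilde{U}\times\tilde{U}$ part is rewritten as the integral over~$\tilde{U}\times\tilde{M}$ minus the integral over~$\tilde{U}\times(\tilde{M}\setminus\tilde{U})$. The~$\tilde{U}\times\tilde{M}$ contribution is~$\int_{\tilde{U}} D_{v(y)}\big(\int_{\tilde{M}}\L(y,x)\,d\tilde{\rho}(x)\big)\,d\tilde{\rho}(y)$. It vanishes because~$\tilde{\ell}$ attains its minimum on~$\tilde{M}$ by~\eqref{EL1}, so~$D\tilde{\ell}|_y=0$ for~$y\in\tilde{M}$. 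The two remaining cross terms are, after relabeling~$x\leftrightarrow y$ and using symmetry once more, precisely the two double integrals in~\eqref{osik}, with the integrand~$(\epsilon^\sharp\llcorner\eta)^i D_{2,i}\L(x,y)$ and the overall factor~$-1/\s$.

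The main obstacle is the bookkeeping in the definition of~$\int_{\partial\tilde{U}}\eta$. We must verify at which point~($x$ or~$y$) the components~$(\epsilon^\sharp\llcorner\eta)^i$ and the frame vector~$e_i$ in~$D_{2,i}$ are evaluated, and that the signs and the combinatorial factor~$1/k!$ in the contraction match the normalization~\eqref{epsnorm}. This requires care because the dual frame~\eqref{dualframe} is not dual to~$e_i(y)$. I would first fix everything at the point~$y$ carrying~$\eta$. Then I would check that the frame-dependence cancels, since~$f$ and~$\epsilon^\sharp\llcorner\eta$ transform identically as vectors in~$M_y$. With that settled, the sign check reduces to comparing with the identity already proven in Theorem~\ref{thmgauss}. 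Finally, independence of the chosen representative follows because~$\ell\geq1$ and both sides only involve mollified evaluation up to first order.
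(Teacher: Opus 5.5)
Your proposal is correct and is essentially the paper's proof. You identify $\int_{\tilde U} d\eta$ with $-\frac{1}{\s}\int_{\tilde U} d\tilde\rho(x)\int_{\tilde M} d\tilde\rho(y)\, g^i(y)\, D_{2,i}\L(x,y)$ for $g^i=(\epsilon^\sharp\llcorner\eta)^i$, split $\tilde M=\tilde U\cup(\tilde M\setminus\tilde U)$, eliminate the $\tilde M\times\tilde U$ term with the EL equations after changing the order of integration, and read off the two cross terms as \eqref{osik}; this is exactly the paper's computation, which mirrors the proof of Theorem~\ref{thmgauss}, so your announced ``reduction to the Gau{\ss} theorem'' is in the end that same direct computation rather than a black-box application.
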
 \noindent
We point out that this theorem applies only if~$\eta$ contains no
weak derivatives (i.e., only if~$\Omega^{k-1, m, \ell}(\tilde{M})$
with~$m=0$). We also note that the formula~\eqref{stokes} implies that
the surface layer integral~\eqref{osik} is well-defined on~$\Omega^{k-1, 0, \ell}(\tilde{M})$
(in the sense that it does not depend on the choice of representatives).

\Proof[Proof of Theorem~\ref{thmstokes}.]
We first note that, using~\eqref{epsnorm}, we can express the
function~$f$ in~\eqref{fom} as
\[ f = \epsilon^\sharp \llcorner \omega = \frac{1}{k!}\: (\epsilon^\sharp)^{i_1 \cdots i_k}\: \omega_{i_1 \cdots i_k}\:. \]
We introduce the abbreviation~$g^i := (\epsilon^\sharp \llcorner \eta)^i$.
Then a computation similar to that in the proof of Theorem~\ref{thmgauss} gives
\begin{align}
\int_{\tilde{U}} d \eta &= -
\frac{1}{\s} \int_{\tilde{U}} d\tilde{\rho}(x) \int_{\tilde{M}} d\tilde{\rho}(y)
D_{2,i_1} \L(x,y)\: g^i(y) \notag \\
&= - \frac{1}{\s} \int_{\tilde{U}} d\tilde{\rho}(x) \int_{\tilde{U}} d\tilde{\rho}(y) \:g^i(y)\:
D_{2,i_1} \L(x,y) \notag \\
&\quad\: - \frac{1}{\s} \int_{\tilde{U}} d\tilde{\rho}(x) \int_{\tilde{M} \setminus \tilde{U}} d\tilde{\rho}(y) \:g^i(y)\:
D_{2,i_1} \L(x,y) \notag \\
&= - \frac{1}{\s} \int_{\tilde{M}} d\tilde{\rho}(x) \int_{\tilde{U}} d\tilde{\rho}(y) \:g^i(y)\:
D_{2,i_1} \L(x,y) \label{stokes1} \\
&\quad\: + \frac{1}{\s} \int_{\tilde{M} \setminus \tilde{U}} d\tilde{\rho}(x) \int_{\tilde{U}} d\tilde{\rho}(y) \:g^i(y)\:
D_{2,i_1} \L(x,y) \label{stokes2} \\
&\quad\: - \frac{1}{\s} \int_{\tilde{U}} d\tilde{\rho}(x) \int_{\tilde{M} \setminus \tilde{U}} d\tilde{\rho}(y) \:g^i(y)\: D_{2,i_1} \L(x,y) \:. \label{stokes3} 
\end{align}
Changing the order of integration in~\eqref{stokes1}, we can apply the
EL equations~\eqref{EL1} for~$\tilde{\rho}$ to obtain zero.
Rewriting the remaining terms~\eqref{stokes2} and~\eqref{stokes3} 
gives the result.
\QED

\subsection{Restrictions of Differential Forms} \label{secrestrict}
So far, all differential forms were defined globally on all of~$\tilde{M}$.
The goal of this section is to ``localize'' differential forms to
measurable subsets of~$\tilde{M}$,
making glueing constructions applicable to our setting.
In view of the fact that the mollified evaluation is non-local, it is not obvious
how our goal can be achieved and what a ``localization'' should be.
The general idea is to take equivalence classes inside subsets, as we will now
explain step-by-step.

Let~$U \subset \tilde{M}$ be a measurable subset.
Similar to the notions introduced in Definition~\ref{defvanish}, we say that a
section~$f \in \Gamma^\infty_\loc \big( \tilde{M}, T^m \tilde{M} \big)$ vanishes in~$U$
if it vanishes to the order~$\ell$ at all~$x \in U$, i.e., using
again the notation~\eqref{Lnot},
\[ 
\L_{\tilde{\rho}}(D_\lambda \eta)(x) = 0 \qquad \text{for all~$x \in U$
and~$\lambda$ with~$|\lambda| \leq \ell$} \:. \]
Similar to~\eqref{fequiv},
we denote the resulting equivalence relation by~$\simeq^\ell_U$.
Clearly, the equivalence relation~$\simeq^\ell_U$ is coarser in the sense that
\[ f \simeq^\ell_{\tilde{M}} g \qquad \Longrightarrow \qquad f \simeq^\ell_U g \:. \]
Therefore, the equivalence relation~$\simeq^\ell_U$ can be applied to
the equivalence classes with respect to~$\simeq^\ell_{\tilde{M}}$.
This makes it possible to define the {\em{restriction of differential forms to~$U$}} by
\beq \label{restrict quotient} 
\Omega^{s, m, \ell} \big( U \subset \tilde{M} \big)
:= \Omega^{s, m, \ell}(\tilde{M}) \big/ \simeq^\ell_{U}
\eeq
We also define the {\em{restriction map}}~$|_U$ by
\beq \label{restrict}
|_U \::\: \Omega^{s, m, \ell}(\tilde{M}) \rightarrow \Omega^{s, m, \ell} \big( U \subset \tilde{M} \big)\:,\qquad \eta \mapsto \la \eta \ra_{\simeq^\ell_{U}} \:.
\eeq
The next lemma explains in which sense the restriction map is a local concept.
\begin{Lemma} {\bf{(Restriction lemma)}} \label{lemmarestrict}
For any Borel subset~$U \subset \tilde{M}$,
\[ \Omega^{s, m, \ell} \big( U \subset \tilde{M} \big)
= \Omega^{s, m, \ell} \big( U \subset B_\L(U) \big) \]
(where~$B_\L(U)$ is the $\L$-neighborhood as introduced in Definition~\ref{defBL}).
\end{Lemma}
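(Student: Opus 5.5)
The right-hand side $\Omega^{s,m,\ell}(U\subset B_\L(U))$ has not been given a separate meaning, so we first fix one. It is built exactly like~\eqref{restrict quotient}, but with the ambient space~$\tilde{M}$ replaced by~$B_\L(U)$. That is, we take tensor sections in $\Gamma^\infty_\loc(B_\L(U),T^{s,r}\tilde{M})$, defined only on~$B_\L(U)$, with mollified evaluations~\eqref{weval} integrated over~$B_\L(U)$. We then quotient by the relation~$\simeq^\ell_U$, which demands vanishing of the mollified evaluations~\eqref{Lnot} at all~$x\in U$.

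The plan is to exhibit a natural map between the two quotients and show it is a bijection. The map we use is restriction of representatives: send a section $\eta$ on~$\tilde{M}$ to $\eta|_{B_\L(U)}$. The whole argument rests on one locality observation. For $x\in U$ and any multi-index~$\lambda$ the integrand $D_2^{|\lambda|+r}\L(x,y)$ vanishes for $y\notin B_\L(x)$, by Definition~\ref{defBL}. Since $B_\L(x)\subset B_\L(U)$, this gives
\[ \fint_{\tilde{M}} \L(x,y)\cdot (D_\lambda\eta)(y)\,d\tilde{\rho}(y)=\fint_{B_\L(U)} \L(x,y)\cdot (D_\lambda\eta)(y)\,d\tilde{\rho}(y)\qquad\text{for all~$x\in U$.} \]
A small point to check is that the parallel frames at~$x$ used in~\eqref{eta0} live on $B_\L(x)$, which is exactly where the integrand is supported. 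The two mollified evaluations therefore even use the same frames, and the restriction map is compatible with the representation~\eqref{eta0}.

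The steps, in order, are as follows.
\begin{enumerate}
\item By the displayed identity, $\eta\simeq^\ell_U 0$ on~$\tilde{M}$ holds if and only if $\eta|_{B_\L(U)}\simeq^\ell_U 0$ on~$B_\L(U)$.
\item Step~1 gives well-definedness: since $\simeq^\ell_{\tilde{M}}$ implies $\simeq^\ell_U$, the map passes first through the $\simeq^\ell_{\tilde{M}}$-classes and then through the $\simeq^\ell_U$-classes.
\item Step~1 also gives injectivity directly.
\item For surjectivity, take a section~$\eta$ on~$B_\L(U)$ and extend it by zero to~$\tilde{M}$, componentwise in each summand~$r$. The extension is again in $L^\infty_\loc$, because being locally bounded and measurable is preserved under extension by zero. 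Its restriction to~$B_\L(U)$ is~$\eta$.
\end{enumerate}

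I expect the main obstacle to be technical rather than conceptual. The blanket definitions ask for continuous sections in some places and $L^\infty_\loc$ sections in others, and extension by zero destroys continuity. This is precisely why the paper passed to $\Gamma^\infty_\loc$, and with that class step~4 is harmless. There is also a measurability question: $B_\L(U)$ should be Borel so that integration over it makes sense. It is, because $B_\L(U)$ is open, being a union of the open sets~$B_\L(x)$. Apart from these points, the proof is just the support property of the Lagrangian derivatives.
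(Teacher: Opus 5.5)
Your proposal is correct and follows the paper's argument. The paper also uses only that $D_2^m\L(x,y)=0$ for all $m$ whenever $x\in U$ and $y\notin B_\L(U)$, so a representative may be modified outside $B_\L(U)$ without changing its $\simeq^\ell_U$-class, and it identifies the two spaces via $\eta\mapsto\eta|_{B_\L(U)}$. Your version makes that identification an explicit bijection (well-definedness, injectivity, and surjectivity by extension by zero in $\Gamma^\infty_\loc$), which the paper leaves implicit.
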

\Proof Recall that in the equivalence relation~$\simeq^\ell_U$, one evaluates
the mollification integral in Definition~\ref{defvanish} only for~$x \in U$.
For every point~$y$ which is outside~$B_\L(U)$, the integrand in~\eqref{vanish}
vanishes, because the derivatives of the Lagrangian are zero.
Therefore, we may modify the differential form outside~$\B_\L(U)$ arbitrarily
without changing~$\L_{\tilde{\rho}}(\eta)|_{U}$. In other words,
\[ \Omega^{s, m, \ell} \big( U \subset \tilde{M} \big) 
\;\ni\; \eta \simeq^\ell_U \eta|_{B_\L(U)} 
\;\in\; \Omega^{s, m, \ell} \big( U \subset B_\L(U) \big) \:. \]
This gives the result.
\QED

The restriction maps are compatible with the exterior derivative. In particular, we conclude that~$d$ is a co-chain map:
\begin{Lemma}{\bf{(Co-chain Map lemma)}}
\label{lemma:d-chain-map}
    $ $\\
    Let~$\tilde{M}\supseteq U\supseteq V$ be a chain of measurable sets. The exterior derivative~$d$ commutes with the restriction map. In other words
    \begin{equation*}
        d\circ |_V=|_V\circ d \::\:
\Omega^{s, m, \ell} \big( U \subset \tilde{M} \big)  \rightarrow
\Omega^{s+1, m+1, \ell-1} \big( V \subset \tilde{M} \big) \:.
    \end{equation*}
\end{Lemma}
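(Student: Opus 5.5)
The plan is to first make sense of both compositions and then reduce the claim to a single implication about representatives. Read literally, $d\circ|_V$ starts from $\Omega^{s,m,\ell}(U\subset\tilde M)$. It first restricts further to $V$ by passing to the coarser relation $\simeq^\ell_V$, which is legitimate because $V\subset U$ gives $f\simeq^\ell_U g\Rightarrow f\simeq^\ell_V g$. It then applies the local formula~\eqref{ddef} to a representative. The other composition $|_V\circ d$ applies~\eqref{ddef} to a representative and then takes the class modulo $\simeq^{\ell-1}_V$. On representatives $\eta\in\Gamma^\infty_\loc(\tilde M,T^{s,r}\tilde M)$ both compositions produce the same section $d\eta$, given by the formula~\eqref{ddef} in a parallel frame, followed by passage to the class modulo $\simeq^{\ell-1}_V$. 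So the identity holds on representatives, and everything reduces to showing that $d$ is well defined on the quotients by restricted equivalence relations. Concretely, I must show
\[ \eta \simeq^\ell_U 0 \quad\Longrightarrow\quad d\eta \simeq^{\ell-1}_V 0 \:. \]

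For this implication I would mimic the argument given after~\eqref{Cequi} for $D_j$ on $\D^{m,\ell}$, but with the base points restricted to $U$. Assume $\eta\simeq^\ell_U 0$, which means $\L_{\tilde\rho}(D_\lambda\eta)(x)=0$ for all $x\in U$ and all $|\lambda|\le\ell$. Fix $x\in V\subset U$ and work in a parallel frame around $x$; this is available by the blanket $\L$-regularity assumption. By~\eqref{ddef}, each component of $d\eta$ is a finite antisymmetrized sum of terms of the form $D_j$ applied to a component of $\eta$, with the index $j$ absorbed into the symmetric tensor factor. Hence for $|\lambda|\le\ell-1$, the quantity $\L_{\tilde\rho}(D_\lambda d\eta)(x)$ is a linear combination of terms $\L_{\tilde\rho}(D_{\lambda'}\eta)(x)$ with $|\lambda'|=|\lambda|+1\le\ell$. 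Here I use the definition of mollified evaluation~\eqref{weval}, which turns each weak derivative into a derivative $D_2$ of the Lagrangian, together with the Schwarz symmetry of these derivatives to reorder them. Every such term vanishes by hypothesis, so $d\eta$ vanishes to order $\ell-1$ at every $x\in V$, as required. The condition $\ell\ge1$ in~\eqref{dmap} is exactly what makes the order bookkeeping work.

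The step I expect to be the main obstacle is the claim that the components of $d\eta$ at a point $y$ near $x$ are expressed in the parallel frame \emph{around $x$}. The vanishing conditions at different base points $x\in U$ use different frames, and~\eqref{ddef} is only given locally. I would handle this by noting two facts. First, for fixed $x$ the mollified evaluation only involves $y\in B_\L(x)$, in the spirit of Lemma~\ref{lemmarestrict}. Second, the vanishing of all $\L_{\tilde\rho}(D_\lambda\eta)(x)$ with $|\lambda|\le\ell$ is a frame-independent statement, since a change of basis of $M_x$ mixes these quantities only linearly within the same order. The verification at each $x\in V$ can therefore be carried out in the frame around that $x$ itself. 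Combining this with the reduction in the first paragraph gives $d\circ|_V=|_V\circ d$. Taking $U=V$ shows in addition that $d$ is a co-chain map on the restricted complexes, since $d^2=0$ holds already on representatives.
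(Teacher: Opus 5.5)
Your proposal is correct and is in substance the paper's argument. The paper's diagram chase likewise shows that on representatives both composites yield the class of $d\eta$ modulo $\simeq^{\ell-1}_V$, using that $\simeq_U$ is finer than $\simeq_V$ and that $\psi'\simeq^\ell_V 0$ implies $d\psi'\simeq^{\ell-1}_V 0$; the paper takes the latter implication as immediate, whereas you verify it pointwise in the parallel frame around each $x$. Strictly, well-definedness of $d\circ|_V$ needs the hypothesis $\eta\simeq^\ell_V 0$ rather than $\eta\simeq^\ell_U 0$, but your verification is pointwise at each $x\in V$ (your case $U=V$), so it covers this as well.
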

\Proof
    The proof consists of a diagram chase via the following diagram:
    \begin{equation*}
        \begin{tikzcd}[column sep = small]
            \bigoplus^m_{r=0}\Gamma\big(\tilde{M},T^{s,r}\tilde{M} \big) \arrow[rd, "q_U"]\arrow[rrr,"d"]\arrow[ddd,"\text{Id}"] & & & \bigoplus^{m+1}_{r=0}\Gamma \big( \tilde{M},T^{s+1,r}\tilde{M} \big) \arrow[ddd,"\text{Id}"]\arrow[ld, "Q_U"]\\
            &\Omega^{s,m,\ell}(U)\arrow[r,"d"]\arrow[d,"|_V"] & 
            \Omega^{s+1,m+1,\ell-1}(U)\arrow[d,"|_V"]\\
            &\Omega^{s,m,\ell}(V)\arrow[r,"d"] & 
            \Omega^{s+1,m+1,\ell-1}(V)\\
            \bigoplus^m_{r=0}\Gamma \big(\tilde{M},T^{s,r}\tilde{M}\big) \arrow[ru,"q_V"]\arrow[rrr,"d"] & & & \bigoplus^{m+1}_{r=0}\Gamma \big(\tilde{M},T^{s+1,r}\tilde{M} \big) \arrow[lu, "Q_V"]
        \end{tikzcd}
    \end{equation*}
    The outer square is trivially commutative. Furthermore, the whole diagram, except the inner square, is commutative by Definitions~\ref{ddef} and~\ref{restrict}. Thus it remains to show that this inner square commutes.
    
    Let~$[f]_U\in\Omega^{s,m,\ell}(U\subseteq\tilde{M})$, with lift~$f+\phi$ via~$q_U$ with~$\phi\simeq^\ell_U0$. By the commutativity of the top quadrant, we obtain~$Q_U\circ d(f+\phi)=d[f]_U$. Once again, we lift~$d[f]_U$ via~$Q_U$ to~$df+\psi$ where~$\psi\simeq^{\ell-1}_U0$. Now by commutativity of the left quadrant we obtain~$(d[f]_U)|_V=Q_V(f+\psi)$. However, as~$V\subseteq U$, we conclude~$\psi\simeq^{\ell-1}_V0$ as~$\psi\simeq^{\ell-1}_U$. As such, $Q_V(df+\psi)=Q_V(df)$.

Next, we apply the commutativity of the left quadrant to~$f+\phi$ to conclude~$q_V(f+\phi)=[f]|_V=[f]_V$. Lifting~$[f]_V$ up via~$q_V$, we find lift~$f+\psi^\prime$ for~$\psi^\prime\simeq^{\ell}_V0$. The commutativity of the bottom quadrant finally provides~$d([f]_V)=Q_V\circ d(f+\psi^\prime)$. But again, $\psi^\prime\simeq^{\ell}_V0$ implies that~$d\psi^\prime\simeq^{\ell-1}_V0$. Therefore, $Q_V\circ d(f+\psi^\prime)=Q_v\circ df$. In total, we conclude that
    \begin{equation*}
        d\big([f]_V \big)=Q_V(df)=\big(d[f]\big)|_V \:,.
    \end{equation*}
completing the proof.
\QED

\subsection{Extensions of Differential Forms} \label{secext}
Our next goal is to introduce an analog of the Mayer-Vietoris sequence.
To this end, we consider a covering of~$\tilde{M}$ by measurable subsets~$U$ and~$V$,
\beq \label{MUV}
\tilde{M} = U \cup V \:,
\eeq
and introduce the mappings
\begin{align*}
\sigma &\::\: \Omega^{s, m, \ell}(\tilde{M}) \rightarrow \Omega^{s, m, \ell}\big( U \subset \tilde{M} \big) \oplus \Omega^{s, m, \ell}\big( V \subset \tilde{M} \big) \\
\Delta &\::\: \Omega^{s, m, \ell}\big( U \subset \tilde{M} \big) \oplus \Omega^{s, m, \ell}\big( V \subset \tilde{M} \big) \rightarrow \Omega^{s, m, \ell}\big( (U \cap V)
\end{align*}
by
\begin{align*}
\sigma(\eta) &:= \big( \eta|_U, \eta|_V \big) \\
\Delta(\eta, \tilde{\eta}) &:= \eta|_{U \cap V} - \tilde{\eta}|_{U \cap V} \:.
\end{align*}
We now consider the short sequence
\beq \label{short}
\begin{split}
0 \rightarrow \Omega^{s, m, \ell}(\tilde{M}) \overset{\sigma}{\longrightarrow} 
\Omega^{s, m, \ell}\big( U \subset &\tilde{M} \big) \oplus 
\Omega^{s, m, \ell}\big( V \subset \tilde{M} \big) \\
&\overset{\Delta}{\longrightarrow} 
\Omega^{s, m, \ell}\big( (U \cap V) \subset \tilde{M} \big) \rightarrow 0 \:.
\end{split}
\eeq
The basic question is whether or under which assumptions this short sequence is
exact. We begin with the parts which are easy to prove.
\begin{Lemma} The mapping~$\sigma$ is injective. Moreover, the mapping~$\Delta$
is surjective.
\end{Lemma}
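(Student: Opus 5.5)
The plan is to argue directly from the definitions of the equivalence relations $\simeq^\ell_{\tilde{M}}$, $\simeq^\ell_U$ and $\simeq^\ell_V$, together with the covering assumption~\eqref{MUV}. No analytic input (EL equations or properties of the Lagrangian) should be needed; both claims are essentially set-theoretic statements about the quotients in~\eqref{omegaequi} and~\eqref{restrict quotient}.

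\emph{Injectivity of~$\sigma$.} Let $\eta \in \Omega^{s,m,\ell}(\tilde{M})$ with $\sigma(\eta)=0$, and pick a representative, again denoted by~$\eta$. The condition $\eta|_U = 0$ means $\eta \simeq^\ell_U 0$: for every $x \in U$ and every multi-index $\lambda$ with $|\lambda|\le \ell$, the mollified evaluation $\L_{\tilde{\rho}}(D_\lambda \eta)(x)$ vanishes (componentwise in the representation~\eqref{eta0}). Likewise $\eta|_V=0$ gives this for every $x\in V$. Since $\tilde{M}=U\cup V$, the condition~\eqref{Lnot} holds at every $x\in\tilde{M}$. Hence $\eta$ vanishes to order~$\ell$ on all of $\tilde{M}$, so $\eta \simeq^\ell_{\tilde{M}} 0$ and $\eta=0$ in $\Omega^{s,m,\ell}(\tilde{M})$.

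One point needs care: the components are taken in parallel frames around the respective base point~$x$. The vanishing condition at~$x$ must therefore be intrinsic to~$x$ and not depend on whether $x$ is regarded as a point of $U$ or of $V$. This holds because the condition is formulated pointwise for each~$x$ with that point's own frame. Beyond this, the step is routine.

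\emph{Surjectivity of~$\Delta$.} Given a class in $\Omega^{s,m,\ell}((U\cap V)\subset\tilde{M})$, choose a representative $\zeta \in \Omega^{s,m,\ell}(\tilde{M})$; this is possible because, by definition~\eqref{restrict quotient}, the restricted space is a quotient of the global one. Set $\eta:=\zeta|_U$ and $\tilde{\eta}:=0\in\Omega^{s,m,\ell}(V\subset\tilde{M})$. Then
\[ \Delta(\eta,\tilde{\eta}) = (\zeta|_U)|_{U\cap V} - 0 = \zeta|_{U\cap V}. \]
The only thing to verify is that restriction is transitive, i.e.\ $(\zeta|_U)|_{U\cap V}=\zeta|_{U\cap V}$. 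This follows from $U\cap V\subset U$, since $\simeq^\ell_U$ is finer than $\simeq^\ell_{U\cap V}$: the same reasoning as in Lemma~\ref{lemma:d-chain-map}, where vanishing on~$U$ implies vanishing on~$V\subset U$. Consequently the iterated quotient map is well defined and coincides with the direct one.

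I expect no genuine obstacle; the hardest part is bookkeeping, namely making sure the maps $\Delta$ and $|_{U\cap V}$ on the restricted spaces are well defined through the nested quotients. Exactness in the middle is not claimed here, and I expect it to be the nontrivial part, because of the non-locality of the mollified evaluation.
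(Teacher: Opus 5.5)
Your proposal is correct and follows essentially the same argument as the paper: injectivity because vanishing to order~$\ell$ at every point of $U$ and of $V$ gives vanishing on $U\cup V=\tilde{M}$, and surjectivity by lifting to a global representative $\hat{\eta}$ and taking $(\hat{\eta}|_U,0)$. Your extra remark on transitivity of restrictions (that $\ker|_U\subset\ker|_{U\cap V}$) makes explicit a well-definedness point that the paper leaves implicit.
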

\Proof In order to show injectivity of~$\sigma$, suppose that~$\sigma(\omega)=0$. Then the
section~$\L_{\tilde{\rho}}(D^\kappa \eta)$ vanishes
in both~$U$ and in~$V$. Hence it vanishes in all of~$\tilde{M}$, implying that~$\omega=0$.

For the proof that~$\Delta$ is surjective, let~$\eta \in \Omega^{s, m, \ell}\big( (U \cap V) \subset \tilde{M} \big)$. We choose a representative~$\hat{\eta} \in \Omega^{s, m, \ell}\big(\tilde{M} \big)$. Then
\[ \Delta( \hat{\eta}|_U, 0 ) = \eta \:, \]
concluding the proof.
\QED

In order to establish exactness of the sequence~\eqref{short}, it remains to show
that every differential form in the kernel of~$\Delta$ lies in the image of~$\sigma$.
This is not true in general, as the following example shows.

\begin{Example} {\em{
We choose the space~$\tilde{M}=\{x_1,x_2\}\subset \F =\R$. Moreover, we choose a
Lagrangian~$\L \in \C^\infty(\R\times \R, \R^+_0)$ with the properties
\[ \L(x,y) = 1 = D_2\L(x,y) \qquad \text{for all~$x,y \in \tilde{M}$} \:, \]
whereas all higher derivatives of~$\L$ vanish on~$\tilde{M} \times \tilde{M}$.
By taking~$U=\{x_1\}$ and~$V=\{x_2\}$ we obtain a covering by measurable sets as in~\eqref{MUV}, and where~$U\cap V=\varnothing$.

It follows immediately from Definition~\ref{restrict quotient} that
\[ \Omega^{0,0,1}(\varnothing\subset\tilde{M})=0 \qquad \text{and}
\qquad \Omega^{0,0,1}(\tilde{M})=\R^2/\simeq^1_{\tilde{M}} \:. \]
The last equivalence relation means that~$\phi\simeq^1_{\tilde{M}}0$ if and only if 
the following two conditions hold,
\begin{align*}
    \int_{\tilde{M}}\L(x,y)\phi(y)dy=\phi(x_1)+\phi(x_2) &=0 \\
    \int_{\tilde{M}}D_2^1\L(x,y)\phi(y)dy=\phi(x_1)+\phi(x_2) &=0 \:.
\end{align*}
The solutions of this equations form the linear subspace of~$\R^2$ spanned by~$(1,-1)^T$.
Therefore,
\begin{equation*}
    \Omega^{0,0,1}(\tilde{M})=\frac{\R^2}{(1,-1)^T\R}\cong\R.
\end{equation*}
Finally, per symmetry we find that~$\Omega^{0,0,1}(U) \simeq \Omega^{0,0,1}(V)$,
being linear subspaces of~$\R=\Omega^{0,0,1}(\tilde{M})$. Filling this data into the sequence~\eqref{short} we obtain
\begin{equation}
\label{Mayer-Two-Points}
    \begin{tikzcd}
        0\arrow[r] & \R\cong\Omega^{0,0,1}(\tilde{M})\arrow[r,"\sigma"] & \Omega^{0,0,1}(U)^2 \arrow[r,"\Delta"] & \Omega^{0,0,1}(\varnothing)=0
    \end{tikzcd}
\end{equation}
However, as~$\Omega^{0,0,1}(U)$ is either isomorphic to~$\R$ or~$0$, we
obtain that the sequence~\eqref{Mayer-Two-Points} must be of one of the following two forms:
\begin{equation}
\label{Mayer-two-points}
    \begin{tikzcd}[row sep = tiny]
        0\arrow[r] & \R\arrow[r,"\sigma"] & \R^2 \arrow[r,"\Delta"] & 0\\
        0\arrow[r] & \R\arrow[r,"\sigma"] & 0 \arrow[r,"\Delta"] & 0
    \end{tikzcd}
\end{equation}
As both of the sequences in~\eqref{Mayer-two-points} fail to be exact, we conclude that general exactness is not obtained.
}} \QEDrem
\end{Example}

In order to rule out such cases, we need to impose a suitable condition.
\begin{Def} \label{defseparate} Two measurable subsets~$A, B \subset \tilde{M}$ are
{\bf{$\L$-separating}} the differential forms in~$\Omega^{s,m.\ell}(\tilde{M})$ if
\beq \label{Lsep}
\Omega^{s,m,\ell}(\tilde{M}) = \text{\rm{span}} \big( \ker |_A, \ker |_B \big)
\eeq
(where~$|_A$ and~$|_B$ are the restriction maps~\eqref{restrict}).
\end{Def} \noindent
This condition is necessary and sufficient for the existence of extensions,
as is made precise in the following lemma.

\begin{Lemma} {\bf{(Extension lemma)}}
Assume that the sets~$U \setminus V$ and~$V \setminus U$
are $\L$-separating the differential forms~$\Omega^{s,m.\ell}(\tilde{M})$.
Let~$\eta_U \in \Omega^{s,m.\ell}(U \subset \tilde{M})$ 
and~$\eta_V \in \Omega^{s,m.\ell}(V \subset \tilde{M})$ be two differential
forms which coincide on their joint domain,
\[ \eta_U |_{U \cap V} = \eta_V |_{U \cap V} \:. \]
Then there is a differential form~$\eta \in \Omega^{s,m.\ell}(\tilde{M})$ (the
so-called {\bf{extension}}) with
\[ \eta|_U = \eta_U \qquad \text{and} \qquad \eta|_V = \eta_V \:. \]

Conversely, if every~$\eta_U$ and~$\eta_V$ as above have an extension~$\eta$,
then the sets~$U \setminus V$ and~$V \setminus U$ are $\L$-separating.
\end{Lemma}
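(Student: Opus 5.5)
Both directions rest on the same observation: the restriction maps are quotient maps, so their kernels can be described directly. For a Borel set $A \subset \tilde{M}$, $\ker|_A$ consists of those classes $\eta \in \Omega^{s,m,\ell}(\tilde{M})$ with $\eta \simeq^\ell_A 0$, i.e.\ $\L_{\tilde{\rho}}(D_\lambda \eta)(x)=0$ for all $x\in A$ and $|\lambda|\le \ell$. This condition is pointwise in $x$ and linear in $\eta$. Two consequences follow immediately:
\[ \ker|_{A\cup B} = \ker|_A \cap \ker|_B, \qquad A\subset B \;\Rightarrow\; \ker|_B \subset \ker|_A . \]
Since $U\setminus V \subset U$, $V\setminus U\subset V$ and $U\cap V$ all sit inside $U$ or $V$, we get in particular $\ker|_U \subset \ker|_{U\setminus V}$ and $\ker|_V \subset \ker|_{V\setminus U}$. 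Moreover $\ker|_{U\cap V}\cap\ker|_{U\setminus V}=\ker|_U$, and similarly for $V$.

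For the existence of extensions, first choose representatives $\hat\eta_U,\hat\eta_V\in\Omega^{s,m,\ell}(\tilde{M})$ of $\eta_U,\eta_V$. The compatibility hypothesis says $\hat\eta_U-\hat\eta_V\in\ker|_{U\cap V}$. By the $\L$-separating assumption~\eqref{Lsep} we can decompose
\[ \hat\eta_U-\hat\eta_V=\alpha+\beta,\qquad \alpha\in\ker|_{U\setminus V},\ \beta\in\ker|_{V\setminus U}. \]
The natural candidate is $\eta:=\hat\eta_U-\alpha=\hat\eta_V+\beta$. On $U\setminus V$ we have $\eta\simeq\hat\eta_U$ because $\alpha$ vanishes there. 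On $V\setminus U$ we have $\eta\simeq\hat\eta_V$ because $\beta$ vanishes there. The difficulty is $U\cap V$: there we need $\alpha$ (equivalently $\beta$) to vanish as well, and the bare decomposition does not give this.

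This is the main obstacle, and I would handle it by refining the decomposition. The idea is to apply~\eqref{Lsep} in a form where the pieces are forced to vanish on $U\cap V$ too. I would first try to show that $\ker|_{U\cap V}=\ker|_U+\ker|_V$ follows from~\eqref{Lsep}. Then one could take $\alpha\in\ker|_U$ and $\beta\in\ker|_V$, so that $\eta|_U=\eta_U$ and $\eta|_V=\eta_V$ hold outright. Should this inclusion fail in general, I would instead read the separating hypothesis as it is evidently intended for this lemma, namely relative to the forms already vanishing on $U\cap V$: every element of $\ker|_{U\cap V}$ should split into pieces lying in $\ker|_{U\cap V}\cap\ker|_{U\setminus V}=\ker|_U$ and $\ker|_{U\cap V}\cap\ker|_{V\setminus U}=\ker|_V$. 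With either version the construction above finishes the proof.

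For the converse, take an arbitrary $\omega\in\Omega^{s,m,\ell}(\tilde{M})$ and set $\eta_U:=\omega|_U$ and $\eta_V:=0$. To satisfy the compatibility condition one restricts to those $\omega$ vanishing on $U\cap V$, which is consistent with the relative reading above. The assumed extension $\eta$ then satisfies $\eta\in\ker|_V\subset\ker|_{V\setminus U}$ and $\omega-\eta\in\ker|_U\subset\ker|_{U\setminus V}$. Hence $\omega=(\omega-\eta)+\eta$ lies in the span in~\eqref{Lsep}, which is exactly the $\L$-separating property.
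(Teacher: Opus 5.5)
Your proposal does not prove the lemma as stated. In the forward direction you replace the hypothesis~\eqref{Lsep} by the relative splitting $\ker|_{U\cap V}=\ker|_U+\ker|_V$. In the converse you derive only that relative splitting (you restrict to $\omega\in\ker|_{U\cap V}$), not~\eqref{Lsep}. What your two arguments do establish, correctly, is this: every compatible pair $(\eta_U,\eta_V)$ has an extension \emph{if and only if} $\ker|_{U\cap V}=\ker|_U+\ker|_V$.

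The missing implications cannot be supplied, because both directions of the printed statement can fail when $U\cap V\neq\varnothing$.

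\textbf{Forward direction.} Take $\tilde M=\{x_1,x_2,x_3\}$, $U=\{x_1,x_2\}$, $V=\{x_2,x_3\}$, $(s,m,\ell)=(0,0,0)$, $\big(\L(x_i,x_j)\big)=\begin{pmatrix}2&1&0\\1&1&1\\0&1&2\end{pmatrix}$, and weights $\tfrac14,\tfrac12,\tfrac14$. With these weights the EL equations hold on $\tilde M$ with $\s=1$. The map $f\mapsto(\L_{\tilde\rho}f)|_{\tilde M}$ identifies $\Omega^{0,0,0}(\tilde M)$ with the plane $W=\{w\in\R^3: w_1-2w_2+w_3=0\}$, and $\ker|_A$ with $\{w\in W: w|_A=0\}$.
\begin{itemize}
\item $\ker|_{U\setminus V}=\R(0,1,2)$ and $\ker|_{V\setminus U}=\R(2,1,0)$ span $W$, so~\eqref{Lsep} holds.
\item However, $\ker|_U=\ker|_V=\{0\}$ while $\ker|_{U\cap V}=\R(1,0,-1)$.
\item Hence $\eta_U:=(1,0,-1)|_U$ and $\eta_V:=0$ agree on $U\cap V$, but an extension would have to be $(1,0,0)\notin W$.
\end{itemize}

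\textbf{Converse.} Take $\L(x_i,x_j)=1$ for all $i,j$ and equal weights. Then $\Omega^{0,0,0}(\tilde M)\cong\R$ and all restriction maps are isomorphisms, so every compatible pair extends trivially. Yet $\ker|_{U\setminus V}=\ker|_{V\setminus U}=\{0\}$, so~\eqref{Lsep} fails.

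\textbf{Where the paper's proof breaks.} It breaks exactly where you located the obstacle. Its block decomposition asserts $\ker|_{U\cap V}=S_{ABC}\oplus S_{AB}\oplus S_{BC}$, i.e.\ $K_B=(K_A\cap K_B)+(K_B\cap K_C)$. This is your relative splitting, and it does not follow from $K=K_A+K_C$. In the first example all pairwise intersections equal the one-dimensional $S_{ABC}$, while $K_B$ is two-dimensional. The paper's converse argument is also unsupported: the operator $X$ is only a sufficient device, so its non-existence does not rule out extensions.

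The correct lemma is therefore the one you actually prove, with $\ker|_{U\cap V}=\ker|_U+\ker|_V$ as the hypothesis. It agrees with~\eqref{Lsep} when $U\cap V=\varnothing$, since then $\ker|_{U\cap V}$ is everything and $\ker|_U=\ker|_{U\setminus V}$. The decomposition claimed in the paper's example on~$\Z$ is of exactly this relative form.

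Note also that the disjointness condition of Lemma~\ref{lemmasufficient} yields only~\eqref{Lsep}, not the relative splitting. A counterexample: four points $x_1<\dots<x_4$ in $\R$ with a singular nearest-neighbour matrix, e.g.\ rows $(1,1,0,0)$, $(1,2,1,0)$, $(0,1,2,1)$, $(0,0,1,1)$, and $U=\{x_1,x_2,x_3\}$, $V=\{x_2,x_3,x_4\}$. So Corollary~\ref{cormayer-viet} also needs your condition.
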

\Proof We choose representatives~$\hat{\eta}_U, \hat{\eta}_V \in \Gamma^\infty_\loc(\tilde{M}, T^{s,r}\tilde{M})$ of~$\eta_U$
and~$\eta_V$, respectively.
Our goal is to construct a representative~$\hat{\eta} \in \Gamma^\infty_\loc(\tilde{M}, T^{s,r}\tilde{M})$ of~$\eta$. To this end, we make the ansatz
\[ \hat{\eta} = X \hat{\eta}_U + (1-X) \hat{\eta}_V \]
with~$X$ a linear operator on~$\Gamma^\infty_\loc(\tilde{M}, T^{s,r}\tilde{M})$.
This operator can be regarded as a ``cutoff operator'' which interpolates between
the identity in~$U \setminus V$ and zero in~$V \setminus U$. More precisely, we demand the following: Using the abbreviation~$K
:= \Gamma^\infty_\loc(\tilde{M}, T^{s,r}\tilde{M})$,
the condition that~$X$ vanishes on~$V \setminus U$
means that
\beq \label{c1}
X : K \rightarrow \ker |_{V \setminus U} \:.
\eeq
The fact that~$X$ is the identity on~$U \setminus V$ can be expressed by
\beq \label{c2}
X = \1 + Y \qquad \text{with} \qquad Y : K \rightarrow \ker |_{U \setminus V} \:.
\eeq
Finally, in order to ensure that the action of~$X$ on~$U \cap V$ is defined independently
of the choice of representatives, we need to impose the condition that
it leaves the kernel of the restriction map~$|_{U \cap V}$ invariant,
\beq \label{c3}
X|_{\ker |_{U \cap V}} \::\:  \ker |_{U \cap V} \rightarrow \ker |_{U \cap V} \:.
\eeq

In order to verify if there is an operator~$X$ having the properties~\eqref{c1},
\eqref{c2} and~\eqref{c3}, we construct a representation of~$X$
in a suitable direct sum decomposition of~$K$.
In preparation, we choose the subspaces
\[ 
K_A := \ker |_{U \setminus V}\:,\qquad
K_B := \ker |_{U \cap V} \:,\qquad
K_C := \ker |_{V \setminus U} \:. \]
By Definition~\ref{defseparate}, $K_A$ and~$K_C$ span the whole
space~$K$.
But clearly, these subspaces do not need to be transverse.
We want to decompose~$K$ into a direct sum of subspaces. To this end, we set
\[ S_{ABC} := K_A \cap K_B \cap K_C \:. \]
Next, we choose~$S_{AB}$ to be a subspace which is transverse to~$S_{ABC}$
such that
\[ K_A \cap K_B = S_{ABC} \oplus S_{AB} \:. \]
Choosing the spaces~$S_{BC}$ and~$S_{AC}$ similarly, we obtain
\beq \label{subspace}
\text{span} \big( K_A \cap K_B, K_B \cap K_C, K_A \cap K_C \big)
= S_{ABC} \oplus S_{AB} \oplus S_{BC} \oplus S_{AC} \:.
\eeq
Finally, we choose subspaces~$S_A \subset K_A$ and~$S_C \subset \K_C$
which are transverse to the space~\eqref{subspace} and span the corresponding
subspaces. We thus get a direct sum
decomposition
\[ K = S_{ABC} \oplus S_{AB} \oplus S_{BC} \oplus S_{AC} \oplus S_A \oplus S_C \:. \]

The conditions~\eqref{c1}, \eqref{c2} and~\eqref{c3} can be satisfied by
choosing~$X$ as the block matrix
\beq \label{Xchoice}
X = \begin{pmatrix} 
* & * & * & * & * & * \\
0 & 0 & 0 & 0 & 0 & 0 \\
0 & 0 & 1 & 0 & 0 & 0 \\
0 & 0 & 0 & * & * & * \\
0 & 0 & 0 & 0 & 0 & 0 \\
0 & 0 & 0 & 0 & 0 & 1 \\
\end{pmatrix} \:,
\eeq
where the stars stand for arbitrary operators. This is verified in detail as
follows: By construction of the direct sum decomposition, the kernel of
the restriction map to~$V \setminus U$ consists of all subspaces carrying
a lower index~$C$, i.e.\
\[ \ker |_{V \setminus U} = 
S_{ABC} \oplus \{0\} \oplus S_{BC} \oplus S_{AC} \oplus \{0\} \oplus S_C \:. \]
Therefore, the condition~\eqref{c1} means that the second and fifth
row of the matrix~$X$ must be zero.
Likewise, the kernel of
the restriction map to~$U \setminus V$ consists of all subspaces carrying
a lower index~$A$, i.e.\
\[ \ker |_{U \setminus V} = 
S_{ABC} \oplus S_{AB} \oplus \{0\} \oplus S_{AC} \oplus S_A \oplus \{0\} \:. \]
Consequently, the condition~\eqref{c2} is implemented by
the fact that the third and last row of the matrix~$X$ coincides with
those of the identity matrix.
Finally, the kernel of
the restriction map to~$U \cap V$ consists of all subspaces carrying
a lower index~$B$, i.e.\
\[ \ker |_{U \cap V}
= S_{ABC} \oplus S_{AB} \oplus S_{BC} \oplus \{0\} \oplus \{0\} \oplus \{0\} \:. \]
Therefore, the condition~\eqref{c3} means that~$X$
leaves the subspace spanned by the first three components invariant,
as is the case for~\eqref{Xchoice}.

We conclude that the above choice of~$X$ has all
the desired properties~\eqref{c1}, \eqref{c2} and~\eqref{c3}.
This shows that the condition~\eqref{Lsep} is sufficient to ensure the
existence of an extension.

In order to show that the condition~\eqref{Lsep} is also necessary,
let us assume conversely that this condition does not hold.
Then, in the above direct sum decomposition of~$K$, we get an additional
direct summand~$K_B$ consisting of vectors which are neither in the kernel of~$|_{U \setminus V}$
nor in the kernel of~$|_{V \setminus U}$.
The corresponding additional row of the block matrix~$X$
must be zero (because of~\eqref{c1}), and it must also be the
corresponding row of the identity operator (because of~\eqref{c2}).
This is a contradiction, completing the proof.
\QED
The question remains how to verify whether two measurable subsets
are $\L$-separating. Here is a simple sufficient condition:
\begin{Lemma} \label{lemmasufficient}
Assume that for two measurable subsets~$A, B \subset \tilde{M}$,
the corresponding $\L$-neighborhoods are disjoint,
\beq \label{Ldisjoint}
B_\L(A) \cap B_\L(B) = \varnothing\:.
\eeq
Then~$A$ and~$B$ are $\L$-separating the differential
forms~$\Omega^{s,m.\ell}(\tilde{M})$ for all values of~$s$, $m$ and~$\ell$.
\end{Lemma}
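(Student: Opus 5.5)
We construct the decomposition explicitly. Given an arbitrary representative $\eta \in \bigoplus_r \Gamma^\infty_\loc(\tilde{M}, T^{s,r}\tilde{M})$ of a class in $\Omega^{s,m,\ell}(\tilde{M})$, we want to write $\eta = \eta_1 + \eta_2$ with $\eta_1 \in \ker|_A$ and $\eta_2 \in \ker|_B$. The natural choice is a cutoff by characteristic functions. Let $\chi$ denote the characteristic function of $B_\L(A)$. Set
\[ \eta_2 := \chi\, \eta \qquad \text{and} \qquad \eta_1 := (1-\chi)\, \eta \:. \]
Multiplying a section by the indicator of a Borel set keeps it in $\Gamma^\infty_\loc$; this is exactly why the paper works with $L^\infty_\loc$ sections rather than continuous ones. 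Clearly $\eta = \eta_1 + \eta_2$, so the sum of the classes is the class of $\eta$.

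\textbf{Step 1: $\eta_1 \in \ker|_A$.} We show that $\eta_1$ vanishes to order $\ell$ at every $x \in A$. In the mollified evaluation~\eqref{vanish}, or its form version~\eqref{Lnot}, the integrand at base point $x$ involves only $D_2^n \L(x,y)$ contracted with the components of $D_\lambda \eta_1(y)$. This holds after rewriting via~\eqref{weval}, both for the weak derivatives $D_\lambda$ and for the components of order $r \le m$. By Definition~\ref{defBL}, all these derivatives vanish for $y \notin B_\L(x) \subset B_\L(A)$, while $\eta_1(y)=0$ for $y\in B_\L(A)$. Hence the integrand vanishes identically, and $\eta_1 \simeq^\ell_A 0$. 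This is precisely the argument of the Restriction Lemma~\ref{lemmarestrict}: modifying a form outside $B_\L(A)$ does not change its class in $\Omega(A\subset\tilde M)$.

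\textbf{Step 2: $\eta_2 \in \ker|_B$.} For $x \in B$, the integrand is supported in $y \in B_\L(x)\subset B_\L(B)$. On that set $\chi = 0$ by the disjointness assumption~\eqref{Ldisjoint}, so $\eta_2(y)=0$ there, and the integrand again vanishes. Hence $\eta_2\simeq^\ell_B 0$. This yields~\eqref{Lsep}. Since the argument never used the specific values of $s$, $m$ or $\ell$, it holds for all of them.

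\textbf{Main obstacle.} The only delicate point is the claim that the mollified evaluation at $x$ depends only on the values of the representative on $B_\L(x)$, uniformly over all the terms in~\eqref{weval}. Two things must be checked. First, the coefficient functions in the parallel frame around $x$ are well defined on $B_\L(x)$; this uses the $\L$-regularity assumption. Second, the contractions with $D_2^{n}\L(x,y)$ for all orders $n\le m+\ell$ are covered by the definition of $B_\L(x)$, which takes all orders $m\in\N_0$. Once these are confirmed, the support argument goes through exactly as in Lemma~\ref{lemmarestrict}.
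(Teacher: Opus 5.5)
Your proof is correct and follows the same approach as the paper. The paper also splits each section by multiplying with the characteristic function of $B_\L(A)$. It observes that the piece supported in $B_\L(A)$ is trivial modulo $\simeq^\ell_B$ by the disjointness assumption, and that the piece supported in $\tilde{M}\setminus B_\L(A)$ is trivial modulo $\simeq^\ell_A$. Your explicit support argument for the mollified evaluation, via the definition of $B_\L(x)$ as in the restriction lemma, spells out what the paper leaves implicit.
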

\Proof Using multiplication by characteristic functions, we obtain a direct sum
decomposition
\[ \Gamma^\infty_\loc \big( \tilde{M}, T^{s,r}\tilde{M} \big)
= \Gamma^\infty_\loc \big( B_\L(A), T^{s,r}\tilde{M} \big) \oplus
\Gamma^\infty_\loc \big( \tilde{M} \setminus B_\L(A), T^{s,r}\tilde{M} \big) \:. \]
Taking equivalence classes, we obtain
\begin{align*}
\Gamma^\infty_\loc \big( B_\L(A), T^{s,r}\tilde{M} \big) \Big/ \simeq^\ell_B \;&=\; \{ 0 \} \\
\Gamma^\infty_\loc \big( \tilde{M} \setminus B_\L(A), T^{s,r}\tilde{M} \big) \Big/ \simeq^\ell_A \;&=\; \{ 0 \} \:.
\end{align*}
This gives the result.
\QED

We point out that the condition~\eqref{Ldisjoint} in Lemma~\ref{lemmasufficient} is sufficient but not necessary for the sequence~\eqref{short} to be exact. This is illustrated by the next example.

\begin{Example} {\em{
We choose~$\tilde{M}=\Z \subset \F=\R$ and denote the spatial points
by~$i,j \in \Z$. We choose a Lagrangian~$\L \in C^\infty_0(\R \times \R, \R^+_0)$ with the properties
    \begin{equation*}
        \L(i,j)=D_2\L(i,j)=
        \begin{cases}
            1 & \text{if~$i=j$ or~$i=j\pm1$} \\
            0 & \text{otherwise,}
        \end{cases}
    \end{equation*}
and assume that all higher order derivatives vanish (thus the Lagrangian implements
a ``nearest-neighbor interaction'').

We restrict attention to the case~$(s,m,\ell)=(0,0,1)$. Furthermore, we choose~$U=\Z_{\geq1}$ and~$V=\Z_{\leq1}$, so that~$U\cup V=\tilde{M}$. Since~$U\cap V=\{1\}$, 
the condition~\eqref{Ldisjoint} is violated.
Let~$(\hat{\alpha},\hat{\beta})\in \text{Ker}(\Delta)$.
Choosing two representatives~$\alpha,\beta$, it follows by definition that
    \begin{equation*}
        \alpha-\beta=:\gamma\simeq^1_{\{1\}}0,
    \end{equation*}
    for some global section~$\gamma$.
\beq
\text{\em{Claim:}} \quad
\text{We can decompose~$\gamma\simeq^1_{\tilde{M}}\gamma_1+\gamma_2$ such that 
$\gamma_1\simeq^1_U0$ and~$\gamma_V^1\simeq^1_V0$} \:. \label{claim: Sum}
\eeq
Before proving the claim, we show that it implies the exactness of the sequence~\eqref{short}.
Indeed, assuming~\eqref{claim: Sum}, we obtain that
    \begin{equation*}
        \alpha-\gamma_1\simeq^1_{\tilde{M}}\beta+\gamma_2.
    \end{equation*}
Thus
    \begin{align*}
        (\alpha-\gamma_1)|_{U}&=\hat{\alpha}-0=\hat{\alpha},\\
        (\alpha-\gamma_1)|_{V}&=(\beta+\gamma_2)|_V=\hat{\beta}+0=\hat{\beta},
    \end{align*}
    which yields~$\sigma(\alpha-\gamma_1)=(\hat{\alpha},\hat{\beta})$. 
It follows that~$\text{Ker}(\Delta)=\text{Im}(\sigma)$, proving exactness of the sequence~\eqref{short}.

\Felix{@Frank: I changed this, because I did not understand your write-up. Please double check!}%
We finally prove the claim~\eqref{claim: Sum}:
Per definition of~$\gamma\simeq^1_{\{1\}}0$, we know that
        \begin{equation*}
            0=\int_{\tilde{M}}\L(0,y)\:\gamma(y)\:dy=\gamma(0)+\gamma(-1)+\gamma(1).
        \end{equation*}
We define a global section~$\theta$ by first setting
\[ \theta(x)= \gamma(x) \qquad \text{if~$x=0, \pm1$} \:. \]
We now extend~$\theta$ inductively for~$x >1$ by the relation
\[ \theta(x)=-\theta(x-1)-\theta(x-2) \]
and likewise for~$x<-1$ by
\[ \theta(x)=-\theta(x+1)-\theta(x+2) \:. \]
By construction, $\theta\simeq^1_{\tilde{M}}0$. We now introduce
\[ 
            \gamma_1(x)=
            \begin{cases}
                \gamma(x) - \theta(x) & \text{if~$x\leq 0$} \\
                0 & \text{if~$x\geq1$}
            \end{cases}
\quad \text{and} \quad
            \gamma_2(x)=
            \begin{cases}
                0 & \text{if~$x\leq0$}\\
                \gamma(x) - \theta(x) & \text{if~$x\geq 1$}\:. \\
            \end{cases} \]
By construction, $\gamma_1$ vanishes on~$U$ and~$\gamma_2$ vanishes on~$V$. 
Moreover, one verifies by direct computation that~$\gamma_1 \simeq^1_{\{1\}}0$
and~$\gamma_2 \simeq^1_{\{1\}}0$. Moreover,
        \begin{align*}
            \gamma_1+\gamma_2=\gamma-\theta\simeq^1_{\tilde{M}}\gamma \:,
        \end{align*}
proving the claim.
        \QEDrem
    }}
\end{Example}

\subsection{The Mayer-Vietoris Sequence} \label{secmv}
Now we are in the position to formulate the adaptation of the Mayer-Vietoris
sequence to our setting. The {\em{exterior derivative}} for restrictions is introduced simply by
(again under the assumption~$\ell \geq m+1$)
\[ d \::\: \Omega^{s,m,\ell}(U \subset \tilde{M}) \rightarrow \Omega^{s+1, m+1, \ell-1}(U \subset \tilde{M}) \:,\qquad d \eta := (d \hat{\eta})|_U \:, \]
where~$\hat{\eta} \in \Omega^{s,m,\ell}(\tilde{M})$ is a representative of~$\eta$.
One verifies immediately that this definition is independent of the choice of
representatives. Moreover, the restriction map is compatible with the exterior derivative
in the sense that
\[ (d \eta)|_U = d \big(\eta|_U \big) \qquad \text{for all~$\eta \in 
\Omega^{s,m,\ell}(\tilde{M})$} \:. \]

\begin{Corollary} {\bf{(Mayer-Vietoris sequence)}} \label{cormayer-viet}
Consider a covering of~$\tilde{M}$ by measurable subsets~$U$ and~$V$~\eqref{MUV}.
Assume that the sets~$U \setminus V$ and~$V \setminus U$ are
{\bf{$\L$-separating}} the differential forms~$\Omega^{s,m.\ell}(\tilde{M})$ for all values of~$s$, $m$ and~$\ell$. Then there is the long exact sequence
\begin{equation*}
    \begin{tikzcd}[column sep = small]
        \dots \arrow[r]
        & H^n(\tilde{M}) \arrow[r]
        & H^n(U\subseteq\tilde{M})\oplus H^n(V\subseteq\tilde{M}) \arrow[r]
        \arrow[d, phantom, ""{coordinate, name=Z}]
        & H^n(V\cap U\subseteq\tilde{M}) \arrow[dll,
            "\partial",
            rounded corners,
            to path={ -- ([xshift=2ex]\tikztostart.east)
            |- (Z) [near end]\tikztonodes
            -| ([xshift=-2ex]\tikztotarget.west)
            -- (\tikztotarget)}] \\
        &
        H^{n+1}(\tilde{M}) \arrow[r]
        & H^{n+1}(U\subseteq\tilde{M})\oplus H^{n+1}(V\subseteq\tilde{M}) \arrow[r]
        & \dots
    \end{tikzcd}
\end{equation*}
\end{Corollary}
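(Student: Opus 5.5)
The plan is the standard homological argument: build a short exact sequence of co-chain complexes and apply the zig-zag lemma. For each degree $n$ we consider the sequence~\eqref{short}, with the parameters $(s,m,\ell)=(n,n,k-n)$ as in the complex~\eqref{simplecomplex}:
\[ 0 \rightarrow \Omega^{n,n,k-n}(\tilde{M}) \overset{\sigma}{\longrightarrow} \Omega^{n,n,k-n}(U\subset\tilde{M}) \oplus \Omega^{n,n,k-n}(V\subset\tilde{M}) \overset{\Delta}{\longrightarrow} \Omega^{n,n,k-n}(U\cap V\subset\tilde{M}) \rightarrow 0 \:. \]
The cohomologies $H^n(U\subseteq\tilde{M})$ etc.\ are understood as the cohomologies of the restricted complexes, with the exterior derivative defined at the beginning of Section~\ref{secmv}.

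\emph{Step 1 (chain maps).} By the Co-chain Map lemma (Lemma~\ref{lemma:d-chain-map}), restriction commutes with $d$. Since $\sigma$ and $\Delta$ are built from restrictions and differences, both are co-chain maps. The same lemma also gives that $d$ is well defined on the restricted spaces and squares to zero there, because $d^2=0$ globally and is compatible with restriction.

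\emph{Step 2 (exactness in each degree).} The lemma preceding Definition~\ref{defseparate} gives injectivity of $\sigma$ and surjectivity of $\Delta$. Also $\Delta\circ\sigma=\eta|_{U\cap V}-\eta|_{U\cap V}=0$. For $\ker\Delta\subset\operatorname{im}\sigma$, take $(\eta_U,\eta_V)$ with $\Delta(\eta_U,\eta_V)=0$, i.e.\ $\eta_U|_{U\cap V}=\eta_V|_{U\cap V}$. Here the restriction to $U\cap V$ of an element of $\Omega(U\subset\tilde{M})$ is well defined, since $\simeq_U$ is finer than $\simeq_{U\cap V}$. The $\L$-separating hypothesis lets us apply the Extension lemma, which produces $\eta\in\Omega^{s,m,\ell}(\tilde{M})$ with $\sigma(\eta)=(\eta_U,\eta_V)$. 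Hence each sequence is exact. The hypothesis is needed for every $(s,m,\ell)$ that occurs in the complex, which is exactly what the corollary assumes.

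\emph{Step 3 (long exact sequence).} The zig-zag lemma now yields the long exact cohomology sequence. Its connecting map $\partial$ is defined as usual. Given a closed $\omega\in\Omega(U\cap V)$, lift it via $\Delta$ to $(\omega,0)$, using a global representative as in the surjectivity proof. Apply $d$; the result lies in $\ker\Delta$ at the next level. Pull it back via $\sigma$, using the Extension lemma at level $(n+1,n+1,k-n-1)$, to get a closed global form, and take its class. Independence of the choices is the standard diagram chase.

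\emph{Main obstacle: bookkeeping of the indices.} The differential lowers $\ell$ and raises $m$, so the complexes live on shifting spaces $\Omega^{n,n,k-n}$. We must check that restriction, $\sigma$, $\Delta$ and the extensions are compatible with the quotient maps between different $\ell$. In particular, $d$ must send $\ker|_{U\cap V}$ at order $\ell$ into $\ker|_{U\cap V}$ at order $\ell-1$; this is the implication $f\simeq^\ell_U 0\Rightarrow D_jf\simeq^{\ell-1}_U 0$, localized. Once this is verified, the rest is purely formal homological algebra. At the ends of the complex (degrees $0$ and $k$), the definitions~\eqref{H0} and~\eqref{Hk} fit the same scheme by padding with zero spaces.
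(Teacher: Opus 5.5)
Your proposal is correct and follows the same route as the paper, which simply invokes the classical argument of Bott--Tu. The sequence~\eqref{short} in the degrees $(n,n,k-n)$ forms a short exact sequence of co-chain complexes: the maps are co-chain maps by Lemma~\ref{lemma:d-chain-map}, and exactness comes from the injectivity/surjectivity lemma together with the Extension lemma under the $\L$-separating hypothesis; the zig-zag lemma then yields the long exact sequence with the usual connecting map. The index bookkeeping you flag is also lighter than you expect. Restriction, $\sigma$, $\Delta$ and the extensions all act at a fixed $(s,m,\ell)$, and $d$ maps $\Omega^{n,n,k-n}$ into precisely the next space $\Omega^{n+1,n+1,k-n-1}$ of the complex, so no quotient maps between different orders $\ell$ are ever needed.
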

\Proof Follows exactly as the classical proof as given for example in~\cite[\S1.1]{bott-tu}.
\QED
We conclude that the main difference to the classical setting is that we need to
verify that the set~$U \setminus V$ and~$V \setminus U$ are $\L$-separating the
differential forms. One way of doing so is to check directly the condition
in Definition~\ref{defseparate}. Alternatively, one can try to get by with the sufficient
condition in Lemma~\ref{lemmasufficient}, making it necessary to choose the
sets~$U$ and~$V$ in such a way that their intersection~$U \cap V$ separates the
sets~$U \setminus V$ and~$V \setminus U$
by at least two times the range of the Lagrangian (see Figure~\ref{figseparate}).
\begin{figure}[tb]
\psset{xunit=.4pt,yunit=.4pt,runit=.4pt}
\begin{pspicture}(541.98711245,269.57101416)
{
\newrgbcolor{curcolor}{0.80000001 0.80000001 0.80000001}
\pscustom[linestyle=none,fillstyle=solid,fillcolor=curcolor,opacity=0.5]
{
\newpath
\moveto(98.43421606,228.69244535)
\curveto(53.47223055,240.82811905)(35.61822614,166.00924912)(39.85003465,114.40848188)
\curveto(44.63109165,56.11044094)(155.81996598,60.70901479)(211.75915465,59.55279306)
\curveto(408.07988031,55.49499401)(371.40701102,55.4697392)(386.41986142,122.58844912)
\curveto(392.32049764,148.96872377)(424.61663622,215.19244535)(291.49574173,214.87312062)
\curveto(215.38300346,214.69053164)(146.98230047,215.58885353)(98.43421606,228.69244535)
\closepath
}
}
{
\newrgbcolor{curcolor}{0 0 0}
\pscustom[linewidth=1.51181105,linecolor=curcolor]
{
\newpath
\moveto(98.43421606,228.69244535)
\curveto(53.47223055,240.82811905)(35.61822614,166.00924912)(39.85003465,114.40848188)
\curveto(44.63109165,56.11044094)(155.81996598,60.70901479)(211.75915465,59.55279306)
\curveto(408.07988031,55.49499401)(371.40701102,55.4697392)(386.41986142,122.58844912)
\curveto(392.32049764,148.96872377)(424.61663622,215.19244535)(291.49574173,214.87312062)
\curveto(215.38300346,214.69053164)(146.98230047,215.58885353)(98.43421606,228.69244535)
\closepath
}
}
{
\newrgbcolor{curcolor}{0.80000001 0.80000001 0.80000001}
\pscustom[linestyle=none,fillstyle=solid,fillcolor=curcolor,opacity=0.5]
{
\newpath
\moveto(318.24164409,215.43427086)
\curveto(160.3176189,218.31538582)(131.96579906,197.3907118)(133.31788724,140.07754771)
\curveto(134.30343685,98.301617)(117.12365858,57.04365542)(263.26047874,58.77199558)
\curveto(385.37406614,60.21621731)(452.01337323,27.05188723)(480.37472126,51.57639495)
\curveto(496.26181417,65.3142444)(512.26508976,167.93701983)(482.25503622,206.87889826)
\curveto(463.66357795,231.00370204)(409.36042205,213.77192881)(318.24164409,215.43427086)
\closepath
}
}
{
\newrgbcolor{curcolor}{0 0 0}
\pscustom[linewidth=1.51181105,linecolor=curcolor]
{
\newpath
\moveto(318.24164409,215.43427086)
\curveto(160.3176189,218.31538582)(131.96579906,197.3907118)(133.31788724,140.07754771)
\curveto(134.30343685,98.301617)(117.12365858,57.04365542)(263.26047874,58.77199558)
\curveto(385.37406614,60.21621731)(452.01337323,27.05188723)(480.37472126,51.57639495)
\curveto(496.26181417,65.3142444)(512.26508976,167.93701983)(482.25503622,206.87889826)
\curveto(463.66357795,231.00370204)(409.36042205,213.77192881)(318.24164409,215.43427086)
\closepath
}
}
{
\newrgbcolor{curcolor}{0 0 0}
\pscustom[linewidth=2.64566925,linecolor=curcolor]
{
\newpath
\moveto(161.35985764,264.89766991)
\curveto(109.63618772,267.18005857)(35.45620535,284.93818771)(10.49506394,207.4305392)
\curveto(-8.23107402,149.28340346)(0.82479118,58.94612975)(46.07673827,41.09303243)
\curveto(129.2309178,8.28649889)(280.07382047,35.04159306)(240.2048315,68.10366046)
\curveto(148.52586331,144.13011968)(184.03896945,172.36675464)(238.12571339,192.59455558)
\curveto(312.7959874,220.52035842)(213.12696945,262.61336503)(161.35985764,264.89766991)
\closepath
}
}
{
\newrgbcolor{curcolor}{0 0 0}
\pscustom[linewidth=2.64566925,linecolor=curcolor]
{
\newpath
\moveto(367.46167559,256.30589668)
\curveto(314.06012598,255.05817637)(237.1223622,233.2567729)(294.54978898,191.96044912)
\curveto(373.97671181,134.84416062)(333.53557795,119.16863432)(281.91225827,62.53338141)
\curveto(248.39506772,25.76213857)(442.63466457,-25.89083151)(500.0224063,18.40158046)
\curveto(549.6312,56.69011464)(546.2167748,167.84090267)(529.61693858,216.9387118)
\curveto(510.37090394,273.86329511)(411.28129134,257.32973668)(367.46167559,256.30589668)
\closepath
}
\rput[bl](55,130){$U \setminus V$}
\rput[bl](410,110){$V \setminus U$}
\rput[bl](225,120){$U \cap V$}
\rput[bl](-125,220){$B_\L(U \setminus V)$}
\rput[bl](540,220){$B_\L(V \setminus U)$}
}
\end{pspicture}
\caption{Two sets which satisfy the sufficient condition of Lemma~\ref{lemmasufficient}.}
\label{figseparate}
\end{figure}%
It seems that, in many topological applications, coverings can be chosen
where this condition is satisfied for any two subsets of the covering.

\subsection{A Künneth Formula}\label{sec:Kunneth}
There are several classical computational methods for the de Rham cohomology, a particularly useful being the Künneth formula. In the classical case, this formula relates the cohomology of a product of spaces to the product of cohomologies of the component spaces. In order to construct a similar notion for causal variational principles, we begin with the notion of products of causal variational principles.

\begin{Def} \label{def:cfs-prod}
Let~$(\mathcal{F}, \L_\F, \rho)$ and~$(\mathcal{G}, \L_{\mathcal{G}}, \lambda)$ be two causal variational principles. The {\bf{product causal variational principle}} is defined by the triple~$(\mathcal{F}\times\mathcal{G},
\L_\F \cdot \L_{\mathcal{G}}, \rho\times\lambda)$.
\end{Def} \noindent
Denoting the component spaces by~$\tilde{M}:= \supp \rho$
and~$\tilde{N} := \supp \mu$, the product space is given by
\[ \supp (\rho\times\lambda) = \tilde{M} \times \tilde{N} \:. \]
We assume that the component measure both satisfy the EL equations.
In particular, for all~$x \in \tilde{M}$ and~$x' \in \tilde{N}$,
\beq \label{compEL}
\fint_{\tilde{M}} \L_\F(x,y)\:d\rho(y) = 1 \qquad \text{and} \qquad
\fint_{\tilde{N}} \L_{\mathcal{G}}(x',y')\: d\lambda(y') = 1 \:.
\eeq

As is the case for products of smooth manifolds, a product of causal variational principles comes with projection maps onto the product components.
These projection maps provide well-defined notions of pullbacks of differential forms:

\begin{Lemma} {\bf{(Pullbacks along projections)}} \label{lemma:pullback-projections}
Let~$\pi:\mathcal{F}\times\mathcal{G}\rightarrow\mathcal{F}$ be the projection map, and let~$\eta\in\Omega^{s,m,\ell}(\tilde{M})$. Then the pullback~$\pi^*(\eta)\in\Omega^{s,m,\ell}(\tilde{M}\times\tilde{N})$ is well-defined. 
\end{Lemma}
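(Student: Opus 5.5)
The plan is to define the pullback on representatives and then check that it respects the equivalence relations $\simeq^\ell$. Let $\hat\eta \in \bigoplus_{r=0}^m \Gamma^\infty_\loc(\tilde M, T^{s,r}\tilde M)$ represent $\eta$. The osculating vacuum of the product at $(p,p')$ is $M_p \times N_{p'}$, and the projection $\pi$ induces the linear projection $M_p\times N_{p'} \to M_p$ and the inclusion $M_p \hookrightarrow M_p\times N_{p'}$. This yields a natural pullback on sections, which I define by
\[ (\pi^*\hat\eta)(p,p') := \pi^*\big(\hat\eta(p)\big) \:. \]
Concretely, the $s$-form part is pulled back by the dual of the linear projection, and the symmetric tensor part in $M_p^r$ is pushed forward by the inclusion into $(M_p\oplus N_{p'})^r$. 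The parallel frame of the product around $(x,x')$, built from $\nabla^{\L_\F\cdot\L_\G}$, should split into the frames $(e_i(y))$ of $M$ and $(f_a(y'))$ of $N$. To confirm this, I will compute $K_{(p,p')}$ for the product Lagrangian and use the EL equations \eqref{compEL}. Since $\int \L_\G(x',y')\,d\lambda(y') = \s_\G$ factors out, the linearized connection should be block diagonal. Hence in product frames $\pi^*\hat\eta$ has the same component functions $h^{j_1\cdots j_r}_{i_1\cdots i_s}(y)$, now regarded as functions of $(y,y')$ that are constant in $y'$, with all indices in the $M$-directions.

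The main step is well-definedness. Suppose $\hat\eta \simeq^\ell_{\tilde M} 0$; I will show $\pi^*\hat\eta \simeq^\ell_{\tilde M\times\tilde N} 0$. For a multi-index $\lambda=(\lambda_M,\lambda_N)$ with $|\lambda|\le\ell$, the mollified evaluation \eqref{weval} at $(x,x')$ factorizes by Fubini:
\[ \fint_{\tilde M\times\tilde N} D_{2}^{\kappa+\lambda}\big(\L_\F\L_\G\big)\, h^\kappa(y)\, d(\rho\times\lambda)
= \Big(\fint_{\tilde M} D_2^{\kappa+\lambda_M}\L_\F(x,y)\,h^\kappa(y)\,d\rho(y)\Big)\Big(\fint_{\tilde N} D_2^{\lambda_N}\L_\G(x',y')\,d\lambda(y')\Big) \:. \]
Here the derivatives split because $\kappa$ carries only $M$-indices and the Lagrangian is a product. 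The first factor vanishes whenever $|\lambda_M|\le \ell$, by the assumption \eqref{vanish} on $\tilde M$. Hence the product vanishes for every $\lambda$ with $|\lambda|\le\ell$, which is what is required. The second factor never needs to be controlled, although by \eqref{compEL} it equals $1$ when $\lambda_N=0$. Linearity of $\pi^*$ then gives independence of the representative.

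I expect the main obstacle to be the frame compatibility in the first paragraph. It requires showing that the product osculation and the connection $\nabla^\L$ are compatible with the splitting, so that the dual frame \eqref{dualframe} of the product restricts to the dual frame of $M$ on $M$-directions. If this fails, I would instead argue frame-independently: \eqref{weval} is defined via multilinear maps $D_2^m\L$, and $D_2^m(\L_\F\L_\G)$ restricted to $M_y$-directions equals $(D_2^m\L_\F)\,\L_\G$. The factorization above then goes through without using parallel frames at all.
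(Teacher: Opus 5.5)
Your core computation is the paper's. You write $\pi^*\hat\eta$ in split product frames with the same component functions, and use Fubini to factor the mollified evaluation on $\tilde M\times\tilde N$ into an $\tilde M$-integral of derivatives of $\L_\F$ against the components, times an $\tilde N$-integral of derivatives of $\L_{\mathcal{G}}$.

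Where you differ is the direction of the implication, and your direction is the one needed. The paper assumes $\pi^*(\eta)\simeq^\ell_{\tilde M\times\tilde N}0$, specializes to $\iota=0$, and uses \eqref{compEL} to conclude $\eta\simeq^\ell_{\tilde M}0$. That is the inclusion $\ker(q\circ\pi^*)\subset\ker Q$, which shows the induced map is injective once it exists. The universal property of the quotient, however, needs the opposite inclusion $\ker Q\subset\ker(q\circ\pi^*)$. You prove exactly that, and you rightly note that \eqref{compEL} is not needed: a vanishing $\tilde M$-factor kills the product for every $\lambda=(\lambda_M,\lambda_N)$ with $|\lambda|\le\ell$. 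Both implications come from the same factorization identity, so together the two arguments give a well-defined and injective pullback.

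The step that would not go through as you describe it is the frame splitting. The paper does not derive it. It simply takes the product parallel frame to be $(e_i(y),0)$, $(0,f_a(y'))$ when it writes $\pi^*(e_i)(p,q)=e_i(p)$ and splits the test indices into $\kappa$ and $\iota$.

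Your proposed derivation from \eqref{compEL} fails in general. The product connection $\nabla_{(q,q'),(p,p')}$ is computed from the vacuum measure $\rho_q\times\lambda_{q'}$ at points near $(p,p')$, and this has two consequences:
\begin{itemize}
\item its $M$-block is $\nabla_{q,p}$ multiplied by $\fint\L_{\mathcal{G}}(p',y')\,d\lambda_{q'}(y')$, which equals one for $p'\in N_{q'}$ but not in general;
\item it has the off-diagonal block $u\mapsto\big(\fint D_{1,u}\L_\F(p,y)\,d\rho_q(y)\big)\,K_{q'}(p')$, which vanishes for $p\in M_q$ but not in general.
\end{itemize}
\eqref{compEL} concerns $\rho$ and $\lambda$ on $\tilde M$ and $\tilde N$, so it controls neither quantity.

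Your frame-free fallback does not help either. The test derivatives $D_\lambda$ in Definition~\ref{defvanish} are taken along the product's parallel frame. If those vectors are not $(e_j(y),0)$ and $(0,f_a(y'))$, the $\tilde M$-integral picks up a $y$-dependent weight, and the hypothesis $\hat\eta\simeq^\ell_{\tilde M}0$ does not control it. Even a $y$-dependent rescaling of the $N$-directions is enough to cause this.

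So state the split product frame as part of the definition of the product structure, as the paper tacitly does. It holds automatically, for instance, when every osculating vacuum is all of $\F$, as in Section~\ref{secex}. With that convention your argument is complete.
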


\begin{proof}
Given~$(s,m,\ell)\in\N_0^3$, we turn to the following commutative diagram,
    \begin{equation}\label{diag:Commutative-kunneth}
        \begin{tikzcd}
            \bigoplus^m_{r=0}\Gamma_{\loc}^\infty \big(\tilde{M},T^{s,r}\tilde{M} \big) \arrow[r,"\pi^*"]\arrow[d,"Q"] & 
            \bigoplus^m_{r=0}\Gamma^\infty_{\loc} \big(\tilde{M}\times\tilde{N},T^{s,r}\tilde{M}\times\tilde{N} \big) \arrow[d,"q"]\\
            \Omega^{s,m,\ell}(\tilde{M}) \arrow[r,dashed] &
            \Omega^{s,m,\ell}(\tilde{M}\times\tilde{N})
        \end{tikzcd}
    \end{equation}
    The goal of this proof is to construct the dashed arrow in this diagram so that the diagram commutes. To this end, we begin by constructing the top arrow~$\pi^*$.
    
    The projection map~$\pi$ induces a pullback operator on the level of tensor products in the following manner. In a parallel frame~$e_i(q)$ around~$p\in\tilde{M}$, with corresponding dual frame~$e^i(q)$ (as introduced in~\eqref{parallelframe} and~\eqref{dualframe}), we obtain the pullbacks of vectors
    \begin{align*}
        \pi^*(e_i)(p,q)&=(e_i\circ\pi)(p,q)=e_i(p),\\
        \pi^*(e^i)(p,q)&=(e^i\circ\pi)(p,q)=e^i(p).
    \end{align*}
    Additionally, given a continuous function on~$\tilde{M}$, the pullback along~$\pi$ is defined by pre-composition. Together, this induces the local description of the pullback of tensor products
    \begin{align}\label{eq:pullback-projection}
        \pi^*& \big( f^{i_1,\dots,i_s}_{j_1,\dots,j_m} \: e_{i_1}\odot\dots\odot e_{i_s} \:e^{j_1}\wedge\dots\wedge e^{j_m} \big)(p,q) \\
        &= f^{i_1,\dots,i_s}_{j_1,\dots,j_m}(p) \:e_{i_1}(p)\odot\dots\odot e_{i_s}(p) \:e^{j_1}(p)\wedge\dots\wedge e^{j_m}(p).
    \end{align}

In order to obtain the dashed arrow, it remains to show that~$\pi^*$ respects mollification with the Lagrangian.
    To this end, let~$\eta\in\Omega^{s,m,\ell}(\tilde{M})$ such that~$\pi^*(\eta)\simeq^\ell_{\tilde{M}\times\tilde{N}}0$. Using Einstein summation and multi-indices, we may write~$\eta$ in a local parallel frame as~$\eta^I_J e_I e^J$. Definition~\ref{defvanish} and~\eqref{eq:pullback-projection} now provide that
    \begin{align*}
        0&=\fint_{\tilde{M}} d\rho_{\tilde{M}}(y_{\tilde{M}}) \fint_{\tilde{N}}d\rho_{\tilde{N}}(y_{\tilde{N}})\;
        D_2^{\kappa}\L_{\F}(x_{\tilde{M}},y_{\tilde{M}}) \;\eta^I_J(y_{\tilde{M}})\:e_I(y_{\tilde{M}}) \:D^{\iota}_2\L_{\mathcal{G}}(x_{\tilde{N}},y_{\tilde{N}})\\
        &=\fint_{\tilde{M}}D^{\kappa+I}_2\L_{\F}(x_{\tilde{M}},y_{\tilde{M}})\:\eta^I_J(y_{\tilde{M}})\:d\rho_{\tilde{M}}\cdot\fint_{\tilde{N}}D^{\iota}_2\L_{\mathcal{G}}(x_{\tilde{N}},y_{\tilde{N}})\:d\rho_{\tilde{N}} \:,
    \end{align*}
where~$\kappa$ and~$\iota$ are arbitrary multi-indices with~$|\kappa|+|\iota|\leq\ell$.
Restricting attention to the case~$\iota=0$ and using~\eqref{compEL},
we obtain
    \begin{equation*}
        0=\fint_{\tilde{M}}D^{\kappa+I}_2\L_{\F}(x_{\tilde{M}},y_{\tilde{M}})\eta^I_J(y_{\tilde{M}})\:d\rho_{\tilde{M}}
    \end{equation*}
    for all~$|\kappa|\leq\ell$. We conclude that~$\eta\simeq^\ell_{\tilde{M}}0$. 

    As such, the universal property of quotients of vector spaces provides the existence of a unique dashed arrow, making Diagram~\ref{diag:Commutative-kunneth} commute. Abusing notation, we shall denote this arrow by~$\pi^*$, providing the required map.
\end{proof}

As in the topological setting, the graded cohomology groups of spaces can be related to the graded cohomology group of their product. This much is achieved by the following cross product.

\begin{Def}\label{def:cross-prod}
For~$\theta\in\Omega^{s,m,\ell}(\tilde{M})$ and~$\psi\in\Omega^{t,n,p}(\tilde{N})$,
the {\bf{cross product}} is defined as
    \begin{equation*}
        \theta\times\psi=\pi_{\tilde{M}}^*(\theta)\wedge\pi_{\tilde{N}}^*(\psi)
        \;\in\;\Omega^{s+t,m+t,\min (\ell, p)}(\tilde{M}\times\tilde{N}) \:,
    \end{equation*}
    where~$\wedge$ denotes the wedge product.
\end{Def} \noindent
Locally, this coincides with
\[ 
\theta^I_J e_I e^J\times\psi^K_L e_K e^L= \big(\theta^I_J \psi^K_L \big)(e_{I} \odot e_L) \:e^J\wedge e^L \:. \]

As the cross product is defined using a wedge product, the standard
properties of the cross product carry over to the wedge product.
\begin{Prp} \label{prop:cross-prod-prop}
For~$\theta\in\Omega^{s,m,\ell}(\tilde{M})$ and~$\psi\in\Omega^{t,m,p}(\tilde{N})$,
the wedge product has the following properties:
\bitem
\item[{\rm{(i)}}] $\times$ is well-defined,
\item[{\rm{(ii)}}] $\times$ is an $\R$-linear operator,
\item[{\rm{(iii)}}] $\times$ is an injective operator,
\item[{\rm{(iv)}}] $d(\theta\times\psi)=d(\theta)\times\psi+(-1)^{s}\:\theta\times d(\psi)$.
\eitem
\end{Prp}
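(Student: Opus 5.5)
We prove the four properties one at a time, using Lemma~\ref{lemma:pullback-projections} and the product structure of the Lagrangian and measure. The key fact is this: the product Lagrangian is $\L_\F(x,y)\,\L_{\mathcal{G}}(x',y')$ and the measure is $\rho\times\lambda$. Hence every mollified evaluation of a product section factorizes by Fubini. For multi-indices $\kappa$ (directions in $M$) and $\iota$ (directions in $N$),
\[ \fint D_2^{\kappa}D_2^{\iota}\big(\L_\F\L_{\mathcal{G}}\big)\,(\theta\psi) = \Big(\fint_{\tilde{M}} D_2^{\kappa}\L_\F\,\theta\,d\rho\Big)\Big(\fint_{\tilde{N}} D_2^{\iota}\L_{\mathcal{G}}\,\psi\,d\lambda\Big). \]
Here the osculating vacuum at $(p,q)$ is taken to be $M_p\oplus N_q$, and the parallel frame is the union of the two parallel frames. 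Mixed derivatives therefore split as a product of derivatives in the separate variables.

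\emph{(i) Well-definedness.} By Lemma~\ref{lemma:pullback-projections}, $\pi_{\tilde{M}}^*$ and $\pi_{\tilde{N}}^*$ are well-defined on equivalence classes. It remains to show that the wedge product of representatives respects the equivalence relation. Suppose $\theta\simeq^\ell_{\tilde{M}}0$, and take $|\kappa|+|\iota|\le\min(\ell,p)$. The factorization above gives a product whose first factor vanishes, because $|\kappa|\le\ell$ and the weak derivatives hidden in $\theta$ raise the order within the allowed range. Hence $\theta\times\psi\simeq^{\min(\ell,p)}0$. The symmetric argument handles $\psi$. Bilinearity then shows that $\times$ descends to the quotient.

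\emph{(ii) Linearity} is immediate, since pullback and wedge product are linear on representatives.

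\emph{(iii) Injectivity.} Suppose $\theta\times\psi\simeq0$ with $\psi\not\simeq0$. Choose $y'$ and $\iota$ such that the second factor of the factorization is nonzero. Varying $\kappa$ then forces every first factor to vanish, so $\theta\simeq0$. This is the step I expect to be the main obstacle, for two reasons. First, the components in a basis $e^J\wedge e^L$ have to be separated, which works because these basis forms are linearly independent in $\Lambda(M_p\oplus N_q)^*$. Second, the constraint $|\kappa|+|\iota|\le\min(\ell,p)$ limits which $\kappa$ are available once $\iota$ has been used. If needed, I would first pick $\iota$ of minimal order, using the case $\iota=0$ together with the EL normalization~\eqref{compEL} exactly as in the proof of Lemma~\ref{lemma:pullback-projections}. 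I would read ``injective'' as injectivity in each argument separately, i.e.\ with the other argument nonzero.

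\emph{(iv) Leibniz rule.} Work locally in the product parallel frame. In the local formula~\eqref{ddef}, $d$ prepends $e_j\odot$ and $e^j\wedge$, where $j$ runs over the basis of $M_p\oplus N_q$. The sum splits into indices $j$ tangent to $M$ and indices tangent to $N$. The first part reproduces $d\theta\times\psi$, since $\pi_{\tilde{M}}^*$ commutes with prepending $M$-directions. For the second part, $e^j$ must be moved past the $s$ covector factors of $\theta$, which produces the sign $(-1)^s$ and gives $(-1)^s\,\theta\times d\psi$. Finally, the factorization shows that derivatives in $N$-directions act trivially on the pulled-back factor $\theta$ and vice versa, so both identities hold after mollified evaluation and not only on representatives. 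Together with (i), this shows the identity holds in the appropriate quotient space.
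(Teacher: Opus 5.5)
Your factorization of mollified evaluations through the product Lagrangian and Fubini is the same mechanism the paper's proof uses. Parts (i), (ii) and (iv) go through as you describe. In (i) you are actually more careful than the paper, which only cites Lemma~\ref{lemma:pullback-projections}. One does need to check that the wedge of pulled-back representatives respects $\simeq^{\min(\ell,p)}$, and your factorization does exactly that.

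The genuine gap is in (iii), at the point you flagged yourself.
\begin{itemize}
\item \emph{What the argument gives.} From $\psi\not\simeq^p 0$ you only get some $\iota_0$ with $|\iota_0|\le p$ and a nonvanishing second factor. The constraint $|\kappa|+|\iota_0|\le\min(\ell,p)$ then forces the first factors to vanish only for $|\kappa|\le\min(\ell,p)-|\iota_0|$, not for all $|\kappa|\le\ell$.
\item \emph{Why your remedy fails.} \eqref{compEL} says that the \emph{constant} function has mollification one. That is why $\iota=0$ suffices in Lemma~\ref{lemma:pullback-projections}, where the $\tilde{N}$-factor is trivial. For a general $\psi$, the order-zero factor $\fint_{\tilde{N}}\L_{\mathcal{G}}\,\psi\,d\lambda$ can vanish identically while a first-order factor does not.
\item \emph{The claim is false in this generality.} Take the discrete line of Section~\ref{sec:discline} with $\alpha>0$ and $\beta\neq0$. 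A nonzero solution $\phi$ of $\phi(i)+\alpha\,(\phi(i-1)+\phi(i+1))=0$ has vanishing order-zero mollification on $\Z$. Its first-order mollification is $\beta\,(\phi(i-1)+\phi(i+1))=-\tfrac{\beta}{\alpha}\,\phi(i)\neq0$, so $\phi\not\simeq^1_{\Z}0$. Nevertheless $\phi\times\phi\simeq^1_{\Z^2}0$, because every product of factors with $|\kappa|+|\iota|\le1$ contains an order-zero factor.
\end{itemize}

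So no argument along these lines closes the gap without extra input. Two ways out:
\begin{itemize}
\item Add a hypothesis such as $\simeq^0\Rightarrow\simeq^\ell$ on both factors, as in Corollary~\ref{corsim}. Then $\iota=0$ really does suffice.
\item Measure the product to order $\ell+p$. Fixing $\iota_0$ then leaves every $|\kappa|\le\ell$ available, so your argument works. However, this is incompatible with the well-definedness in (i), which requires order at most $\min(\ell,p)$.
\end{itemize}
The paper's own proof of (iii) does not resolve this either. It suppresses the multi-indices $A,B$ and treats $\L_{\F,\rho}\theta$ and $\L_{\mathcal{G},\lambda}\psi$ as single functions, so it never confronts the order constraint.
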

\begin{proof}
Parts (i) and (ii) are a direct consequence of Lemma~\ref{lemma:pullback-projections} and the fact that the wedge product is well-defined and $\R$-linear.

For part~(iii) we turn to the following calculation. Suppose that~$\theta\times\psi\cong^\ell_{\tilde{M}\times\tilde{N}}0$ holds.
This is equivalent to the conditions that for all~$x\in\tilde{M}\times\tilde{N}$
and all multi-indices~$A$ and~$B$ with~$|A+ B| \leq \ell$,
    \begin{align*}
        0&=\fint_{\tilde{M}\times\tilde{N}}D^{A + B}\L(x,y)(\theta\times\psi)(y)\:d(\rho \times \lambda)(y) \\
&=\fint_{\tilde{M}\times\tilde{N}} 
D^{A+I}\L_{\F}(x_{\tilde{M}},y_{\tilde{M}})\: D^{B+K}\L_{\mathcal{G}}(x_{\tilde{N}},y_{\tilde{N}})\:\theta^I_J(y_{\tilde{M}})\: \psi^K_L(y_{\tilde{N}})
\:d(\rho \times \lambda)(y_{\tilde{M}}, y_{\tilde{N}}) \\
        &=\fint_{\tilde{M}}D^{A+I}\L_{\F}(x_{\tilde{M}},y_{\tilde{M}})\:\theta^I_J(y_{\tilde{M}})\:d\rho(y_{\tilde{M}})\cdot
        \fint_{\tilde{N}}D^{B+K}\L_{\mathcal{G}}(x_{\tilde{N}},y_{\tilde{N}})\:\psi^K_L(y_{\tilde{N}})\: d\lambda(y_{\tilde{N}}) \\
        &=\L_{\F, \rho}\theta \cdot \L_{\mathcal{G}, \lambda}\psi \:,
    \end{align*}
where in the last line we used a notation similar to~\eqref{Lrhodef}.
This implies that for every~$x\in\tilde{M}\times\tilde{N}$, either~$\L_{\F, \rho}\theta$ or~$\L_{\mathcal{G}, \lambda}\psi$ vanishes. As a consequence, either the function~$\L_{\F, \rho}\theta$ or~$\L_{\mathcal{G}, \lambda}\psi$ is identically zero.
Indeed, if conversely both functions were not identically zero.
Then there were~$x \in \tilde{M}$ and~$y \in \tilde{N}$ with~$\L_{\F, \rho}\theta(x)\neq0$ and~$\L_{\mathcal{G}, \lambda}\psi(y)\neq0$. As a consequence,
$\L_{\F, \rho}\theta(x)\cdot\L_{\mathcal{G}, \lambda}\beta(y)\neq0$, a contradiction.
We conclude that the cross product is injective.

    Part~(iv) is verified locally using Definition~\ref{ddef}. We obtain 
    \begin{align*}
        d \big(\theta^I_J \, e_I \, e^J\times\psi^K_L \, e_K \, e^L \big)&=d \Big( \big( \theta^I_J\, \psi^K_L\, \big)(e_{I}\odot e_L)\: e^J\wedge e^L \Big) \\
        &=\big(\theta^I_J \:\psi^K_L\big)(e_{I}\odot e_L\odot e_i)e^i\wedge e^J\wedge e^L\\
        &=\sum_{e^i\in M^\times}\big(\theta^I_J \:\psi^K_L\big)(e_{I}\odot e_L\odot e_i)e^i\wedge e^J\wedge e^L \\ 
        &+\sum_{e^i\in N^\times} \big( \theta^I_J\: \psi^K_L\big)(e_{I}\odot e_L\odot e_i)e^i\wedge e^J\wedge e^L \\
        &=d(\theta)\times\psi+(-1)^{|J|}\: \theta\times d(\psi) \:. 
    \end{align*}
    As~$|J|=s$, the result follows.
\end{proof}
\begin{Remark}\label{Remark:Kunneth}
    It should be noted that parts~{\rm{(iii)}} and~{\rm{(iv)}} of Proposition~\ref{prop:cross-prod-prop} provide that~$\times$ is an injective co-chain map
\[ 
        \times: \Omega^{*}(\tilde{M})\otimes\Omega^{*}(\tilde{N})\rightarrow \Omega^*(\tilde{M}\times\tilde{N}) \:. \]
\end{Remark}

In the standard smooth setting, the cross product induces a connection between the differential forms on the product and those on the component spaces. This culminates in the Eilenberg-Zilber theorem. Here, if we restrict analytical and geometric structures, a similar result is obtained. 
\begin{Lemma} \label{lemma:Eilenberg-Zilber}
    Let~$\Omega^*_c(\tilde{M}_i)$ denote the compactly supported differential forms on countable and discrete spaces~$\tilde{M}$ and~$\tilde{N}$. Then we obtain the following isomorphism,
    \begin{align*}
        \Omega^*_c(\tilde{M})\otimes\Omega^*_c(\tilde{N})&\cong\Omega^*_c(\tilde{M}\times\tilde{N}) \:, \\
        \theta\otimes\psi&\mapsto\theta\times\psi \:. 
    \end{align*}
\end{Lemma}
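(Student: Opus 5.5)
The cross product map $\theta\otimes\psi\mapsto\theta\times\psi$ is already known to be well-defined, linear, injective and a co-chain map (Proposition~\ref{prop:cross-prod-prop} and Remark~\ref{Remark:Kunneth}). So the plan is to prove only surjectivity onto $\Omega^*_c(\tilde{M}\times\tilde{N})$, and to do it at the level of representatives before passing to equivalence classes.

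\textbf{Step 1: the representative level.} Since $\tilde{M}$ and $\tilde{N}$ are countable and discrete, a compactly supported section of $T^{s,r}$ on either space, or on the product, is a finite collection of data at finitely many points. At $(p,q)\in\tilde{M}\times\tilde{N}$ the osculating vacuum of the product is $M_p\oplus N_q$. This is because the product Lagrangian $\L_\F\cdot\L_{\mathcal{G}}$ and the product measure $\rho\times\lambda$ give the product osculation $\Phi_p\times\Phi_q$.

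Consequently the fibres decompose canonically,
\[ \textstyle\bigwedge^\bullet (M_p\oplus N_q)^*\;\cong\;\bigwedge^\bullet M_p^*\otimes\bigwedge^\bullet N_q^* \qquad\text{and}\qquad \mathrm{Sym}(M_p\oplus N_q)\;\cong\;\mathrm{Sym}(M_p)\otimes \mathrm{Sym}(N_q). \]
A compactly supported representative on the product is therefore a finite sum of terms
\[ \delta_{(p,q)}\, e_I\odot e_K\; e^J\wedge e^L . \]
Each such term equals $\pi^*_{\tilde{M}}(\delta_p\,e_Ie^J)\wedge\pi^*_{\tilde{N}}(\delta_q\,e_Ke^L)$, using the local formula for $\times$ and~\eqref{eq:pullback-projection}. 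Finiteness of the support makes all sums finite, so every compactly supported representative is a finite sum of cross products of compactly supported representatives on the factors.

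\textbf{Step 2: descending to equivalence classes.} Surjectivity on representatives gives surjectivity on classes. This works because the quotient maps in diagram~\eqref{diag:Commutative-kunneth} are surjective and $\times$ commutes with them by Lemma~\ref{lemma:pullback-projections}. For the filtration indices I would keep track of $\min(\ell,p)$ as in Definition~\ref{def:cross-prod}. I would interpret the statement as a graded isomorphism: the total degree is $s+t$, the derivative order is the sum, and the vanishing order is the minimum.

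\textbf{Step 3: injectivity of the tensor map.} Proposition~\ref{prop:cross-prod-prop}(iii) shows only that a single product $\theta\times\psi\simeq0$ forces a factor to vanish. Injectivity on the full tensor product, meaning on sums $\sum_a\theta_a\otimes\psi_a$, needs more. I would prove it by choosing $\theta_a$ linearly independent modulo $\simeq_{\tilde{M}}$ and then using the factorization of the mollified evaluation established in the proof of Proposition~\ref{prop:cross-prod-prop}(iii):
\[ \L_{\rho\times\lambda}\Big(D^{A+B}\sum_a\theta_a\times\psi_a\Big)(x,x')\;=\;\sum_a (\L_{\F,\rho}D^A\theta_a)(x)\,(\L_{\mathcal{G},\lambda}D^B\psi_a)(x'). \]
If this vanishes for all $x,x'$ and all $A,B$, then linear independence of the functions $x\mapsto\L_{\F,\rho}D^A\theta_a(x)$, where $(A,x)$ ranges jointly, forces each $\psi_a\simeq0$.

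\textbf{Main obstacle.} I expect the hardest point to be the mismatch between the vanishing order $|A+B|\le\min(\ell,p)$ on the product and the separate orders on each factor. Linear independence of $\theta_a$ modulo $\simeq^\ell$ gives independence of the evaluation functionals only when all $|A|\le\ell$ are allowed, whereas the product condition restricts $|A|+|B|$. My first attempt would be to use $B=0$ together with the normalization~\eqref{compEL} to isolate the full set $|A|\le\ell$, and then argue symmetrically for $\psi$. If that fails, I would weaken the claim so that the vanishing index on the product is the appropriate combined filtration.
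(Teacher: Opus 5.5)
Your Step~1 is the paper's surjectivity argument. You are also right that the paper's injectivity claim, taken from Remark~\ref{Remark:Kunneth} and Proposition~\ref{prop:cross-prod-prop}(iii), only concerns pure tensors. But Step~3 does not close this gap. With the vanishing order $\min(\ell,p)$ of Definition~\ref{def:cross-prod} it cannot: injectivity is false, already on pure tensors.

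Take two copies of the discrete circle of Section~\ref{sec:disc-circ} with $c=\frac12$, $a=-2b$ (compatible with \eqref{compEL} for $\s=2$), and $f=(1,-1,1,-1)$. With the matrices of \eqref{eq:lagrangian-circle}, $\L f=0$ but $D_2\L f=-4b\,f\neq 0$, so $[f]\neq 0$ in $\Omega^{0,0,1}_c$. By the factorization you display, vanishing of $f\times f$ to order $\min(1,1)=1$ on the product amounts to
\[ (\L_{\F,\rho}f)(x)\,(\L_{\mathcal G,\lambda}f)(x')=(\L_{\F,\rho}D_1f)(x)\,(\L_{\mathcal G,\lambda}f)(x')=(\L_{\F,\rho}f)(x)\,(\L_{\mathcal G,\lambda}D_1f)(x')=0\,, \]
which holds because $\L_{\F,\rho}f\equiv 0$. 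Hence $[f]\times[f]=0$ while $[f]\otimes[f]\neq 0$. The only test that would detect this element is the mixed one with $|A|=|B|=1$, of total order $2>\min(\ell,p)$.

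This is also why your proposed fix fails. Setting $B=0$ only gives $|A|\le\min(\ell,p)$. Moreover, \eqref{compEL} normalizes $\L_{\mathcal G}$ only against the constant function. That is what makes Lemma~\ref{lemma:pullback-projections} work, since there the $\tilde N$-factor of $\pi^*\eta$ is $1$; here the $\tilde N$-factor $\L_{\mathcal G,\lambda}\psi_a$ may vanish identically. The same example shows that the paper's proof of Proposition~\ref{prop:cross-prod-prop}(iii), which suppresses the multi-indices $A,B$, is not correct either, so the paper's citation does not settle injectivity.

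The obstruction is structural. Let $\Gamma,\Gamma'$ be the compactly supported representatives on $\tilde M,\tilde N$, and $K_j\subset\Gamma$, $K'_j\subset\Gamma'$ those vanishing to order $j$. Your independence argument needs the product's test set to contain all pairs with $|A|\le\ell$ and $|B|\le p$. In that case the tensor product of the two families of mollified evaluations has kernel exactly $K_\ell\otimes\Gamma'+\Gamma\otimes K'_p$. A relation $\simeq^L$ on the product contains these pairs only if $L\ge \ell+p$. But then $\times$ is in general well defined only if $K_\ell=K_{\ell+p}$ and $K'_p=K'_{\ell+p}$. So the lemma needs an extra hypothesis.

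A clean sufficient one is that on both factors $\simeq^0$ already implies $\simeq^j$ for all $j$. This holds in the setting of Corollary~\ref{corsim}. It also holds for the discrete line with compact supports (Proposition~\ref{prpdisccmp}), where $\simeq^0$ is trivial on functions and the $1$-forms carry $\ell=0$. Under this hypothesis the zeroth-order test alone puts $\tau$ in $K_0\otimes\Gamma'+\Gamma\otimes K'_0=K_\ell\otimes\Gamma'+\Gamma\otimes K'_p$, which gives injectivity without any independence argument. Alternatively, you can define the product relation by the bi-filtered test set. Then your Step~3 goes through, but the target is no longer the paper's $\Omega^*_c(\tilde M\times\tilde N)$.

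Relatedly, with the grading you fix in Step~2 (derivative order $m+n$), Step~1 only gives surjectivity after summing over all splittings of the symmetric degree. For example, a term $\delta_{(y,y')}\,e_1\odot e_1\,e^J\wedge e^L$ with $e_1$ an $M$-direction lies in $\Omega^{s+t,2,\cdot}$ but is not a sum of cross products from $\Omega^{s,1}\otimes\Omega^{t,1}$.
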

\begin{proof}
    First, note that when restricting to compactly supported differential forms, the morphism~$\times$ remains well-defined. Indeed, if~$\theta$ and~$\psi$ are compactly supported, then so is~$\theta\times\psi$.

    As noted in Remark~\ref{Remark:Kunneth}, the map~$\times$ is an injective chain map. As such, it remains to prove that it is also surjective. Let~$\omega\in\Omega^*(\tilde{M}\times\tilde{N})$. Using a parallel frame~$(e_i)_{i=1}^N$, we may write 
    \begin{align*}
        \omega=\alpha^i\mathds{1}_i\times\beta^j\mathds{1}_j\cdot e_{I_1}\odot e_{I_2}\cdot e^{J_1}\wedge e^{J_2} \:,
    \end{align*}
and the sums are finite because~$\omega$ is compactly supported (here~$I_1,J_1$ and~$I_2,J_2$ are multi-indices in coordinates of~$\tilde{M}$ and~$\tilde{N}$, respectively). In total, we may split~$\omega$ into a finite sum of products of two components
    \begin{equation*}
        \omega=\big(\alpha^i\mathds{1}_i\cdot e_{I_1}\cdot e^{J_1}\big)\times \big(\beta^j\mathds{1}_j\cdot e_{I_2}\cdot e^{J_2} \big).
    \end{equation*}
    We conclude that~$\omega\in\text{Im}(\times)$, and so~$\times$ is surjective.
\end{proof}

In general, an injective chain map need not induce an injective map on the level of cohomology. However, under the restrictions stated in the last lemma, we obtain the following result.

\begin{Thm} {\bf{(Künneth Formula)}} \label{lemma:kunneth-injection}
Let~$\tilde{M}$ and~$\tilde{N}$ be discrete and countable spaces, and denote by~$H_c^*(\tilde{M})$ the compactly supported differential forms. We obtain the following isomorphism of graded vector spaces
\begin{equation*}
    \Phi:H^*_c(\tilde{M})\otimes H^*_c(\tilde{N})\rightarrow H^*_c(\tilde{M}\times\tilde{N}),
\end{equation*}
defined by~$[\theta]\otimes[\psi]\mapsto[\theta\times\psi]$.
\end{Thm}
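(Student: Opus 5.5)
The plan is to derive the Künneth formula from Lemma~\ref{lemma:Eilenberg-Zilber}. We combine it with the algebraic Künneth theorem for tensor products of co-chain complexes over the field~$\R$. Write~$A^* := \Omega^*_c(\tilde{M})$ and~$B^* := \Omega^*_c(\tilde{N})$, each graded by the form degree~$s$ with differential~$d$. Give~$A^* \otimes B^*$ the total grading and the differential
\[ d(\theta \otimes \psi) := d\theta \otimes \psi + (-1)^s\, \theta \otimes d\psi \qquad \text{for~$\theta$ of degree~$s$} \:. \]
By Proposition~\ref{prop:cross-prod-prop}~(iv), the cross product~$\theta \otimes \psi \mapsto \theta \times \psi$ intertwines this differential with~$d$ on~$\Omega^*_c(\tilde{M}\times\tilde{N})$. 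Lemma~\ref{lemma:Eilenberg-Zilber} shows that this map is bijective. It is therefore an isomorphism of co-chain complexes, and hence induces an isomorphism
\[ H^*\big(A^* \otimes B^*\big) \;\cong\; H^*_c(\tilde{M}\times\tilde{N}) \:. \]

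The second step is the purely algebraic statement that the map~$H^*(A)\otimes H^*(B) \rightarrow H^*(A \otimes B)$, $[\theta]\otimes[\psi] \mapsto [\theta\otimes\psi]$, is an isomorphism. Over a field, every complex splits. In each degree choose complements
\[ A^s = Z^s \oplus C^s \:, \qquad Z^s = B^s_{\rm ex} \oplus \mathcal{H}^s \:, \]
where~$Z^s$ denotes the cocycles, $B^s_{\rm ex}$ the coboundaries and~$\mathcal{H}^s \cong H^s(A)$ a chosen complement. Then~$A^*$ is the direct sum of the complex~$\mathcal{H}^*$ with zero differential and of an acyclic complex. The acyclic part is a direct sum of two-term complexes~$C^{s-1} \xrightarrow{d,\ \cong} B^s_{\rm ex}$. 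Do the same for~$B^*$ and expand the tensor product distributively. Every summand that involves at least one acyclic two-term factor is acyclic, because tensoring a contractible complex with any complex gives a contractible one. The only summand that survives in cohomology is~$\mathcal{H}_A^* \otimes \mathcal{H}_B^*$, which has zero differential. This gives the claimed isomorphism, and by construction it is the map~$[\theta]\otimes[\psi]\mapsto[\theta\otimes\psi]$. Composing with the first step, we obtain~$\Phi$ and see that it preserves the grading.

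I expect the main obstacle to be the first step rather than the homological algebra. Three points need checking. First, the differential on~$\Omega^*_c$ shifts the parameters~$(m,\ell)$, as in the complex~\eqref{simplecomplex}, so one has to fix consistent parameters on the product. The cross product produces~$\Omega^{s+t,m+t,\min(\ell,p)}$, and these indices must match the~$(r,r,k_{\tilde{M}}+k_{\tilde{N}}-r)$ normalization of the product complex. I would handle this by defining~$A^*$ and~$B^*$ with the parameters of~\eqref{simplecomplex} and checking that the sum~$\bigoplus_{s+t=r}$ lands in the correct~$\Omega^{r,r,k-r}$, possibly after passing to the quotient~$\D^{m,\ell+1}\to\D^{m,\ell}$. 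Second, Lemma~\ref{lemma:Eilenberg-Zilber} must hold compatibly in each total degree, so that the bijection is degreewise. Third, the compactly supported cohomology must be defined by the same truncated complex on both sides; this includes the top-degree convention~\eqref{Hk}, where all top forms count as cocycles. Once these bookkeeping points are settled, the remaining argument is standard.
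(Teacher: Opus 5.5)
Your proposal follows the paper's proof. Lemma~\ref{lemma:Eilenberg-Zilber}, made a chain map by Proposition~\ref{prop:cross-prod-prop}~(iv), gives $\Omega^*_c(\tilde{M})\otimes\Omega^*_c(\tilde{N})\cong\Omega^*_c(\tilde{M}\times\tilde{N})$, and the algebraic K\"unneth theorem over~$\R$ finishes the argument. The only difference is that the paper cites that theorem from Hilton--Stammbach, whereas you prove it by splitting each complex into its cohomology plus contractible two-term pieces.

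The paper does not address the parameter bookkeeping you flag, so your argument is as complete as its own. The mismatch is real, though. Definition~\ref{def:cross-prod} places $\theta\times\psi$ only in the quotient with third index $\min(\ell,p)$. This is generally smaller than the index $k-r$ used in degree~$r$ of the product complex~\eqref{simplecomplex}, and the canonical surjection $\D^{m,\ell+1}\to\D^{m,\ell}$ points the wrong way to repair it.
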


Before coming to the proof, we remark that the restrictive nature of Theorem~\ref{lemma:kunneth-injection}
comes from the restrictions in Lemma~\ref{lemma:Eilenberg-Zilber}. Indeed, the classical  Eilenberg-Zilber theorem encodes the analytic relation between the differential forms on the product, and those on the component spaces. This is the analytic part of the classical proof of Künneth's formula. However, as we have not endowed the space of differential forms with any functional analytic structure, we need to restrict to a situation where the algebraic tensor product coincides with the analytic tensor product. 
 It is likely that, by endowing the vector spaces of differential forms with
 additional structures (like a suitable topology or metric structure),
more general Künneth formulas could be proven. However, this is beyond
the scope of the present paper.

Here we merely remark that the introduction of additional functional analytic structures in order to prove a Künneth formula in the non-smooth setting is in line with other~$L^\infty$ cohomology theories for non-smooth spaces. For example, in bounded cohomology several functional analytical issues occur, such as~$\text{Im}(d)$ not being closed, which trouble a well-defined homological algebra required for a Künneth formula. A more extensive overview of several analytical issues can be found in~\cite{parkheeesook}.
Moreover, in~$L^\infty$ cohomology for pseudo-manifolds
as introduced in~\cite{cheeger},
 a Künneth formula depends strongly on additional structures
 such as assumptions on perversion functions.
 Under these additional assumptions, the~$L^\infty$ cohomology coincides with the intersection cohomology~\cite{valette}, and the perversion allows for a Künneth formula~\cite{cohen-goresky}.

In contrast to the analytic part, the algebraic part of the classical proof is still valid. This is so, as Lemma~\ref{lemma:Künneth-theorem} is a general statement holding for any co-chain complex of real vector spaces. Therefore, the failure of Lemma~\ref{lemma:Künneth-theorem} is fully encoded within the restrictive nature of Lemma~\ref{lemma:kunneth-injection}.

Before coming to the proof of Theorem~\ref{lemma:kunneth-injection},
we recall the following general result.
    \begin{Lemma} {\bf{(Künneth Theorem)}}
    \label{lemma:Künneth-theorem}
        Given two complexes of vector spaces~$C^*,D^*$ over~$\R$, there is the following isomorphism,
        \begin{align*}
            H^*(C^*)\otimes H^*(D^*)&\cong H^*(C^*\otimes D^*)\\
            [\theta]\otimes[\psi]&\mapsto[\theta\otimes\psi],
        \end{align*}
        where~$\otimes$ denotes the graded tensor product.
    \end{Lemma}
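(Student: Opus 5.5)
This is the algebraic Künneth theorem over a field. I will prove it by splitting each complex into cycles, boundaries and complements. The key fact is that every short exact sequence of real vector spaces splits. For a cochain complex $C^*$ I write $Z^n := \ker d|_{C^n}$ and $B^n := d(C^{n-1}) \subset Z^n$. I then choose complements
\[ Z^n = B^n \oplus \mathcal{H}^n_C \:,\qquad C^n = Z^n \oplus W^n_C \:. \]
With these choices $d$ maps $W^n_C$ isomorphically onto $B^{n+1}$, and $\mathcal{H}^n_C \cong H^n(C^*)$. Thus $C^*$ decomposes as a direct sum of complexes
\[ C^* = \mathcal{H}_C^* \oplus E_C^* \:, \]
where $\mathcal{H}_C^*$ carries the zero differential. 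The complex $E_C^*$ is a direct sum of two-term complexes $W^n_C \overset{d}{\longrightarrow} B^{n+1}$ with $d$ an isomorphism, so $E_C^*$ is acyclic. Moreover, it is contractible: the homotopy $h := (d|_{W})^{-1}$ on $B$, extended by zero elsewhere, satisfies $dh + hd = \1$ on $E_C^*$. I make the same decomposition $D^* = \mathcal{H}_D^* \oplus E_D^*$.

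Next I expand the graded tensor product, with differential $d(\theta \otimes \psi) = d\theta \otimes \psi + (-1)^{|\theta|}\, \theta \otimes d\psi$, by distributivity:
\[ C^* \otimes D^* = (\mathcal{H}_C \otimes \mathcal{H}_D) \oplus (\mathcal{H}_C \otimes E_D) \oplus (E_C \otimes \mathcal{H}_D) \oplus (E_C \otimes E_D) \:. \]
Each summand is a subcomplex, because the differential respects the splitting factorwise. The first summand has zero differential, so its cohomology is $\mathcal{H}_C \otimes \mathcal{H}_D \cong H^*(C^*) \otimes H^*(D^*)$. Each of the remaining three summands has a factor that is contractible. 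If $h$ is a contracting homotopy of $E_C$, then $h \otimes \1$ contracts $E_C \otimes F$ for any complex $F$. The only computation needed is the sign check
\[ d(h \otimes \1) + (h \otimes \1)\, d = (dh + hd) \otimes \1 = \1 \:, \]
where the cross terms carrying the Koszul signs $(-1)^{|\theta|}$ cancel because $h$ has degree $-1$. The same argument with $\1 \otimes h$, including the analogous Koszul sign, handles $\mathcal{H}_C \otimes E_D$. Hence these three summands are acyclic, and $H^*(C^* \otimes D^*) \cong H^*(C^*) \otimes H^*(D^*)$.

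Finally I identify the isomorphism with the map $[\theta] \otimes [\psi] \mapsto [\theta \otimes \psi]$. This map is well defined: $\theta \otimes \psi$ is closed whenever $\theta$ and $\psi$ are, and changing $\theta$ by $d\alpha$ changes $\theta \otimes \psi$ by $d(\alpha \otimes \psi)$ since $d\psi = 0$; the same holds for $\psi$. On representatives taken from $\mathcal{H}_C$ and $\mathcal{H}_D$, the map is exactly the inclusion of the first summand, so it coincides with the isomorphism above. The isomorphism therefore does not depend on the choice of complements.

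I expect the main obstacle to be purely bookkeeping: checking the signs in the contracting homotopies and the fact that infinite direct sums of complexes commute with taking cohomology and with the algebraic tensor product. Both hold because everything is done degreewise with algebraic direct sums over the field $\R$, so no finiteness assumptions are needed.
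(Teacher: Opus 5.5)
Your proof is correct, and it takes a different route from the paper. The paper gives no argument of its own: it refers to the general algebraic Künneth theorem in Hilton--Stammbach (Theorem~2.1 there). In its general form (complexes of modules over a ring) that theorem carries a $\mathrm{Tor}$ correction term, which vanishes over the field $\R$ and leaves the stated isomorphism.

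You instead use a feature special to fields: every cochain complex splits as $\mathcal{H}^*\oplus E^*$, with zero differential on $\mathcal{H}^*\cong H^*$ and $E^*$ contractible. After that the theorem is nearly tautological. Your sign checks for $h\otimes\1$, and for $\1\otimes h$ with the Koszul sign $(-1)^{|\theta|}$, are right.

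What your route buys:
\begin{itemize}
\item a self-contained, elementary proof;
\item a slightly stronger conclusion, namely that $C^*\otimes D^*$ is chain homotopy equivalent to $H^*(C^*)\otimes H^*(D^*)$ with zero differential, not merely that the cohomologies agree.
\end{itemize}

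Its cost is the non-canonical choice of complements (in infinite dimensions this rests on Zorn's lemma). You correctly neutralize this by checking that the resulting isomorphism coincides with the canonical map $[\theta]\otimes[\psi]\mapsto[\theta\otimes\psi]$. That is exactly the form in which the lemma enters the proof of Theorem~\ref{lemma:kunneth-injection}. For completeness, add the one-line remark that $([\theta],[\psi])\mapsto[\theta\otimes\psi]$ is bilinear, so it descends to $H^*(C^*)\otimes H^*(D^*)$.

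The cited theorem, in turn, covers coefficient rings over which complexes need not split, at the price of the $\mathrm{Tor}$ term; the splitting argument cannot reach that case. Over $\R$ nothing is lost, since identifying your isomorphism with the canonical map already makes it natural.
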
 \noindent
A more detailed statement and proof of this lemma can be found in~\cite[Theorem~2.1]{hilton-stammbach}.

\begin{proof}[Proof of Theorem~\ref{lemma:kunneth-injection}.]
    This proof mirrors the standard proof of the Künneth formula, consisting of two steps.
    First, Lemma~\ref{lemma:Eilenberg-Zilber} provides that
    \begin{equation*}
        \Omega^*_c(\tilde{M})\otimes\Omega^*_c(\tilde{N})\cong\Omega^*_c(\tilde{M}\times\tilde{N}).
    \end{equation*} 
Subsequently, we apply the classical Künneth theorem
as stated in Lemma~\ref{lemma:Künneth-theorem}.
    Combining these two steps, we obtain
    \begin{align*}
        H^*_c(\tilde{M}\times\tilde{N})&=H^*(\Omega^*_c(\tilde{M}\times\tilde{N}))\\
        &\cong H^*(\Omega^*_c(\tilde{M})\otimes\Omega^*_c(\tilde{N}))\\
        &\cong H^*(\Omega^*_c(\tilde{M}))\otimes H^*(\Omega^*_c(\tilde{N}))
        =H^*_c(\tilde{M})\otimes H^*_c(\tilde{N}) \:,
    \end{align*}
concluding the proof.
\end{proof} 

The need for the aforementioned additional structures on the set of differential forms already becomes clear for discrete and countable spaces,
as is illustrated by the following counter example to the Künneth formula.

\begin{Example}\label{ex:kunneth-fail} {\em{
Let~$\tilde{M} = \tilde{N} = \Z \subset \F = {\mathcal{G}} := \R$.
We choose the Lagrangian~$\L \in C^\infty(\R \times \R, \R+_0)$ such that
\[ \L(x,y) = \delta_{x,y} \quad \text{for all~$x,y \in \Z$} \:, \]
and all derivatives of~$\L$ vanish on~$\Z \times \Z$.
We also take the product
\[ \tilde{M} \times \tilde{N} = \Z^2 \subset \R^2 \:. \]
We denote the Lagrangian on the product by~$\L_{\F \times \F}$.

As~$\L_{\F \times \F}$ only relates a point~$x \in \Z^2$ with itself, we conclude that~$\cong^\ell_{\Z^2}$ is a trivial equivalence relation. Therefore, $\Omega^{0,0,\ell}(\Z^2)=C(\Z^2,\R)$. Applying the same argument to~$\tilde{M}=\Z$, provides that~$\Omega^{0,0,\ell}(\Z)=C(\Z,\R)$.

If we now assume that~$\ell=1$, we obtain the following differential complex,
\begin{equation*}
    \begin{tikzcd}
        0\arrow[r] &
        \Omega^{0,0,1}(\Z^2) \arrow[r,"d"] &
        \Omega^{1,1,0}(\Z^2) \arrow[r] & 0 \:.
    \end{tikzcd}
\end{equation*}
As~$D^\kappa\L=0$ for all multi-indices~$\kappa$, we find~$d=0$. We conclude that
\begin{equation*}
    0\not\cong^\ell_{\Z^2}\delta^n_m\in \Omega^{0,0,\ell}(\Z^2)= \text{Ker}(d)=H^0(\Z^2).
\end{equation*}
Supposing that Lemma~\ref{lemma:kunneth-injection} is true for non compactly supported differential forms, we would obtain that
\[ 
    C(\Z^2,\R)=\Omega^{0,0,1}(\Z^2)\cong\Omega^{0,0,1}(\Z)\otimes\Omega^{0,0,1}(\Z)=C(\Z,\R)\otimes C(\Z,\R) \:. \]
However, this formula is not true because
\[ \delta^n_m\notin C(\Z,\R)\otimes C(\Z,\R) \qquad \text{while} \qquad \delta^n_m\in C(\Z^2,\R) \:. \]
This can be seen in detail as follows. For a finite sum of tensor products, we may write
\begin{equation*}
    F(n,m)=\sum_{j=1}^k f_j(n)\cdot g_j(m).
\end{equation*}
When projecting this function onto the rows of~$\Z^2$, we obtain for the~$n^\text{th}$ row a function
\begin{equation}\label{eq:fin-span}
    F_n(-)=F(n,-)=\sum_{j=1}^k f_j(n)\cdot g_j(-).
\end{equation}
In turn, we conclude that~$F_n\in \text{Span}(g_1,\dots,g_k)=\Lambda$ for all~$n\in\Z$. Since~$\Lambda\subset C(\Z,\R)$ is a finitely generated linear subspace, it has a finite dimension bounded above by~$k$.

Applying the same projection procedure to~$\delta^n_m$, we obtain for the~$p^\text{th}$ row
\begin{equation}\label{eq:infinite-span}
    (\delta^n_m)_p=\delta^p_m,
\end{equation}
the Kronecker delta with~$p$ fixed. Since the Kronecker delta functions are linearly independent, these projections span an infinite dimensional subspace of~$C(\Z,\R)$.

Assuming that~$\delta^n_m\in C(\Z,\R)\otimes C(\Z,\R)$, we may write
\begin{equation*}
    \delta^n_m=\sum_{j=1}^k f_j(n)\cdot g_j(m).
\end{equation*}
Per~\eqref{eq:fin-span} the row projections of~$\delta^n_m$ span a finite dimensional subspace of~$C(\Z,\R)$. However, the relation~\eqref{eq:infinite-span} allows us to conclude that the row projections of~$\delta^n_m$ span an infinite dimensional subspace of~$C(\Z,\R)$. This is a contradiction. 

We conclude that~$\delta^n_m\notin C(\Z,\R)\otimes C(\Z,\R)$.
}} \QEDrem
\end{Example}

\subsection{A Poincar{\'e}-Type Lemma} \label{secpoincare}
We now analyze the question how and to what extent the Poincar\'e Lemma
applies in our setting.
Let~$U \subset \tilde{M}$ be an open subset. Assume that~$\omega \in 
\Omega^{s,m,\ell}(U \subset B_\L(U))$ with~$s,m > 0$ be a differential form which is
{\em{closed}} in the usual sense that~$d\omega=0$. We want to analyze
whether~$\omega$ is {\em{exact}}, meaning that there is a differential
form~$\nu \in \Omega^{s-1,m-1,\ell+1}(U \subset B_\L(U))$ with~$\omega = d\nu$.

Given~$p \in \tilde{M}$, we consider the $\L$-induced chart~$\phi_p$ defined 
in~\eqref{phipdef}. We assume that~$\phi_p|_{B_\L(U)}$ is injective. This makes it
possible to identify~$B_\L(U)$ via~$\phi_p$ with a subset of~$M_p$. For notational
clarity, we denote the resulting objects with an additional hat, i.e.\
\[ \hat{U} := \phi_p|_{B_\L(U)}^{-1}(U)\:,\qquad
B_\L(\hat{U}) := \phi_p|_{B_\L(U)}^{-1} \big( B_\L(U) \big) \:. \]
Likewise, we denote the push-forward of the measure~$\tilde{\rho}$ by
\[ \hat{\rho} := (\phi_p)^* \tilde{\rho} \]
and set~$\hat{M} := \supp \hat{\rho} \subset M_p$.

Next, we need to lift the Lagrangian as well as its derivatives to~$M_p$.
In preparation, we map the osculating vacuum at~$y \in U$ to~$M_p$.
To this end, we restrict the mapping~$K_p$ in~\eqref{Kpdef} to~$M_q$ and
take its linearization
\begin{align*}
K_p|_{M_q} &: M_q \rightarrow M_p \\
\Phi_{p,q} := D\big( K_p|_{M_q} \big) \big|_0 &: M_q \rightarrow M_p \quad \text{linear} \:.
\end{align*}
We then set
\[ \hat{\L} \::\: B_\L(\hat{U}) \times B_\L(\hat{U}) \rightarrow \R^+_0 \:,\qquad
\hat{\L} \big(\phi_p(x), \phi_p(y) \big) := \L(x,y) \]
and similarly (see~\eqref{D2def}),
\begin{align*}
&D_2^m \hat{\L} \big(\phi_p(x), \phi_p(y) \big) \::\:
\underbrace{M_p \times \cdots \times M_p}_{\text{$m$ factors}} \rightarrow \R \\
&\big( D_2^m \hat{\L} \big(\phi_p(x), \phi_p(y) \big) \big)
\big( \Phi_{p,y} u_1, \ldots \Phi_{p,y} u_m \big)
:= \big( D_2^m \L (x,y) \big) \big( u_1, \ldots, u_m \big) \:.
\end{align*}
In this way, we can lift the differential operators and differential forms to~$M_p$.
We denote the resulting identification operators by
\beq \label{phiident}
\begin{split}
\Phi &: \Gamma^\infty_\loc \big( B_\L(U), T^{s,r} \tilde{M} \big)
\rightarrow \Gamma^\infty_\loc \big( B_\L(\hat{U}), T^{s,r} \hat{M} \big) \\
\Phi &: \Omega^{s, m, \ell} \big( B_\L(U) \big) \rightarrow
\Omega^{s, m, \ell}\big( B_\L(\hat{U}) \big)\:.
\end{split}
\eeq
By construction, these mappings~$\Phi$ are invertible, making it possible
to identify differential forms on~$U \subset B_\L(U)$ with their lifts
on~$\hat{U} \subset B_\L(\hat{U})$.
The differential forms on~$\hat{U} \subset B_\L(\hat{U})$ can be written as usual
in components, i.e.\ a differential form~$\omega \in \Omega^{s,m,\ell} \big( \hat{U} \subset B_\L(\hat{U}) \big)$ can be written as
\beq \label{compU}
\omega = \sum_{|\kappa| \leq m}
D^\kappa \omega^\kappa_{i_1, \dots, i^s}(x)\: dx^{i_1} \wedge \cdots \wedge dx^{i_s} \:,
\eeq
defined for~$x \in \hat{U}$, where the component functions are equivalence classes with respect
to~$\simeq^\ell_{\hat{U}}$.
The corresponding lift of the exterior derivative is denoted by
\[ \hat{d} := \Phi \circ d \circ \Phi^{-1} \::\: \Omega^{s,m,\ell} \big( \hat{U} \subset B_\L(\hat{U}) \big)
\rightarrow \Omega^{s+1, m+1, \ell-1}\big( \hat{U} \subset B_\L(\hat{U}) \big) \:. \]
Clearly, it satisfies the relation~$\hat{d}^2=0$.

Following the standard notions, 
we say that a differential form~$\omega \in \Omega^{s,m,\ell} \big( \hat{U} \subset B_\L(\hat{U}) \big)$ is {\em{closed}} if~$\hat{d} \omega = 0$.
We want to analyze the question whether it is {\em{exact}}, meaning that
there is~$\nu \in \Omega^{s-1,m-1,\ell+1} \big( \hat{U} \subset B_\L(\hat{U}) \big)$ with~$\omega = d\nu$.
We want to adapt the proof of the Poincar{\'e} lemma to our setting.
Keeping in mind that the component functions in~\eqref{compU}
are defined only on~$\hat{U}$ (which might be a discrete subset of~$M_p$),
we first need to mollify~$\omega$. To this end, we again work with the Lagrangian,
using the notation~\eqref{Lnot},
\beq \label{mollify}
\Big(\L_{\tilde{\rho}} \big( D^\kappa \omega^\kappa_{i_1, \dots, i^s} \big)\Big)(x)
:= \int_{B_\L(\hat{U})} \big( D_x^\kappa \L(x,y) \big)\: \omega^\kappa_{i_1,
 \dots, i^s}(y)\: d\hat{\rho}(y) \:.
\eeq

Note that the mollification~\eqref{mollify} is defined for any~$x \in M_p$.
Therefore, following the standard proof of the Poincar{\'e} lemma, we can
integrate the mollified component functions along rays,
\begin{align}
&P(\omega)(x) := \sum_{|\kappa| \leq m}
\sum_{\alpha = 1}^s (-1)^{\alpha - 1} \notag \\
&\quad\; \times 
 \Big( \int_0^1 t^{s-1} 
\big( \L_{\tilde{\rho}} ( D^\kappa \omega^\kappa_{i_1 \cdots i_s}) \big)(tx) \:dt \Big) \:x^{i_\alpha}\;
dx^{i_1} \wedge \cdots \wedge \widehat{dx^{i_\alpha}} \wedge \cdots \wedge dx^{i_s} \:.
\label{Pomega}
\end{align}
In the case of the standard Poincaré lemma, the corresponding operator~$P$
has the property that~$dP(\omega) = \omega$.
In our setting, this identity will hold only approximately, i.e.\
\[ \hat{d} P(\omega) = \omega + \Delta \omega \]
with an error~$\Delta \omega \in \Omega^{s,m,\ell} \big( \hat{U} \subset B_\L(\hat{U}) \big)$. In order to quantify the error, we need to introduce a suitable norm.
In view of the mollification with~$\L$, all the 
derivatives of the component functions are bounded in terms of the
sup-norm of the functions. Therefore, all norms are equivalent to the $C^0$-norm
denoted by
\[ \|\check{\omega}\|_{C^0(\hat{U})} := \sup_{x \in B_\L(\hat{U})} \sup_{i_1, \dots, i^s}
\big| \big( \L_{\tilde{\rho}}( D^\kappa \omega^\kappa_{i_1, \dots, i^s}) \big)(x) \big| \:. \]
Nevertheless, in the applications it may be easier to work with other norms.
For this reason, it is preferable to work with an unspecified norm,
denoted by~$\|.\|_{\hat{U}}$.
Before we can formulate and prove our Poincaré-type lemma,
we need to treat the fact that differential forms are equivalence
classes of sections. So far, we worked with representatives.
This procedure is also most convenient for the analysis, because it
suffices to satisfy the necessary condition for a specific representative.
This leads to the following definition.
\begin{Def} \label{Lstarshaped}
The set~$\hat{U} \subset M_p$ is {\bf{$\L$-star-shaped}} if
for all closed differential forms~$\omega \in \Omega^{s,m,\ell} \big( \hat{U} \subset B_\L(\hat{U}) \big)$, the inequality
\beq \label{iterin}
\inf_{\check{\omega}}
\frac{\big\| \check{\omega} - \hat{d} P(\check{\omega}) \big\|_{\hat{U}}}{\|\check{\omega} \|_{\hat{U}}} < 1
\eeq
holds, where the infimum is taken over all representatives, i.e.\
\[ \check{\omega} \in \bigoplus_{r=0}^m
\Big( \Gamma^\infty_\loc \big( \tilde{M}, T^{s,r} \tilde{M} \big) \Big)
\qquad \text{and} \qquad \omega = \check{\omega} / \simeq^\ell_{\hat{U}}\:. \]
\end{Def} \noindent
We point out that this definition depends on the choice of
the norm~$\|.\|_{\hat{U}}$. This seems unavoidable because a Poincar{\'e}-type
lemma will hold only if this choice fits to the microscopic structure of~$\tilde{M}$.
This will be illustrated in Section~\ref{seclattice} in the example of a lattice system.
Here we proceed with a general abstract result.

\begin{Lemma} {\bf{(Poincaré-type lemma)}} \label{lemmapoincare}
in a $\L$-star-shaped region~$\hat{U} \subset M_p$, every closed
differential form~$\omega \in \Omega^{s,m,\ell} \big( \hat{U} \subset B_\L(\hat{U}) \big)$
is exact.
\end{Lemma}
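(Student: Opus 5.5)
The plan is to run a Neumann-series (iterative correction) argument built on the approximate homotopy operator~$P$ from~\eqref{Pomega}. Let~$\omega$ be closed. By Definition~\ref{Lstarshaped}, there is a representative~$\check{\omega}_0$ of~$\omega$ and a constant~$q<1$ with
\[ \big\|\check{\omega}_0 - \hat{d} P(\check{\omega}_0)\big\|_{\hat{U}} \leq q\, \|\check{\omega}_0\|_{\hat{U}} \:. \]
Set~$\nu_0 := P(\check{\omega}_0)$ and~$\omega_1 := \omega - \hat{d}\nu_0$. The first point to check is that~$\omega_1$ is again closed: this follows from~$\hat{d}^2=0$, which holds because~$\hat d$ is conjugate to~$d$ via the identification~$\Phi$ in~\eqref{phiident}. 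We also need~$P$ to map representatives into representatives of~$\Omega^{s-1,m-1,\ell+1}$, so that~$\hat d \nu_0$ makes sense in the class of~$\omega$. Since~$P(\check\omega)$ is built from mollified, hence smooth, component functions, it can be regarded as a section, and the degree bookkeeping should be routine.

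Closedness of~$\omega_1$ lets us apply the $\L$-star-shaped condition again. We obtain a representative~$\check{\omega}_1$ with~$\|\check\omega_1 - \hat d P(\check\omega_1)\| \leq q_1 \|\check\omega_1\|$, where we choose~$\check\omega_1$ to be the class representative~$\check\omega_0 - \hat d P(\check\omega_0)$ (or a better one). Iterating gives~$\nu_n := P(\check\omega_n)$ and~$\omega_{n+1} = \omega_n - \hat d \nu_n$, with
\[ \|\check\omega_{n+1}\| \leq q_n \|\check\omega_n\| \:. \]
Then~$\nu := \sum_n \nu_n$ is the candidate primitive, and formally
\[ \omega - \hat d\Big(\sum_{n\le N}\nu_n\Big) = \omega_{N+1} \to 0 \:. \]
Pulling back by~$\Phi^{-1}$ gives the statement on~$U\subset B_\L(U)$.

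The main obstacle is the uniformity of the contraction and the convergence. The definition only gives~$q_n<1$ for each closed form separately, so~$\prod q_n$ need not tend to zero. I would handle this first by reading~\eqref{iterin} as a uniform bound: the supremum over closed~$\omega$ of the infimum is at most some~$q<1$. I would argue that the ratio is scale invariant and that, since all norms are equivalent to the~$C^0$-norm after mollification, this is the intended meaning. If only pointwise~$q(\omega)<1$ is available, I would instead note that the set of closed forms is a linear space and that the operator~$\1-\hat dP$ is linear. A pointwise strict inequality of the infimum then has to be promoted to an operator-norm bound below one, possibly by restricting to the relevant finite-dimensional or compact setting.

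Next, convergence of~$\sum\nu_n$ needs~$P$ to be bounded,~$\|P\check\omega\|\le C\|\check\omega\|$. This follows from~\eqref{Pomega}: on the relatively compact region~$B_\L(\hat U)$ the factors~$x^{i_\alpha}$ and~$t^{s-1}$ are bounded, and the mollified components are controlled by the~$C^0$-norm. Finally, the limit must be taken in the quotient. Because~$\hat d$ involves mollification against derivatives of the smooth, compactly ranged Lagrangian, it is continuous in the~$C^0$-type norm. Hence~$\hat d\nu=\lim\hat d\sum_{n\le N}\nu_n=\omega$, and the limit~$\|\omega_{N+1}\|\to0$ shows that the difference~$\omega-\hat d\nu$ vanishes to order~$\ell$ on~$\hat U$, i.e.~$\omega=\hat d\nu$ in~$\Omega^{s,m,\ell}(\hat U\subset B_\L(\hat U))$.
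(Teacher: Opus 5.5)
Your proposal is correct in the same sense as the paper's proof, and it follows exactly the paper's route. The paper iterates $\omega^{(k+1)} := \omega^{(k)} - \hat{d} P \omega^{(k)}$, uses $\hat{d}^2=0$ for closedness, asserts $\|\omega^{(k)}\|_{\hat{U}} \leq c^k \|\omega\|_{\hat{U}}$, and sums the telescoping series $\nu = \sum_k P\omega^{(k)}$.

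In doing so, the paper tacitly reads \eqref{iterin} as a contraction with one fixed constant $c<1$ that holds for the iterated representatives themselves. A ``better'' representative of each class would not do, since its norm is uncontrolled. This is precisely the reading you adopt, and your remarks on the boundedness of $P$ and the continuity of $\hat{d}$ only make explicit what the paper leaves implicit.
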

\Proof In the proof, we shall always work with suitable representatives of
the differential forms; for ease in notation, we omit the superscripts~$\check{\:\,}$.
Starting from~$\omega^{(0)} := \omega$, we proceed inductively by setting
\[ \omega^{k+1} := \omega^{(k)} - \hat{d} P \omega^{(k)} \:. \]
Using that~$\hat{d}^2=0$, all these differential forms are closed.
Iterating the inequality~\eqref{iterin}, we find that
\[ \big\| \omega^{(k)} \big\|_{\hat{U}} \leq c^k\: \| \omega \|_{\hat{U}} \:. \]
Consequently, the series
\[ \nu := \sum_{k=0}^\infty P \omega^{(k)} \]
converges and, applying~$\hat{d}$ gives a telescopic sum
\[ \hat{d} \nu = \sum_{k=0}^\infty \big( \omega^{(k)} - \omega^{(k+1)} \big) 
= \omega^{(0)} = \omega \:. \]
Hence~$\omega$ is exact, concluding the proof.
\QED

\subsection{Computing the Cohomology, Independence of the Osculation} \label{secindep}
By combining the above constructions, one can follow the usual path for
computing the de Rham cohomology. But of course, the fact that we are in the
non-smooth setting leads to subtle modifications and requires additional assumptions,
as we now explain step by step.
We begin with a (finite or countable) covering of~$\tilde{M}$ by open sets~$(U_i)$.
For each set, we consider the open neighborhoods~$B_\L(U_i)$
(as shown on the top of Figure~\ref{figglue} in the example of two sets~$U_1$ and~$U_2$).
\begin{figure}[tb]
\psset{xunit=.3pt,yunit=.3pt,runit=.3pt}
\begin{pspicture}(717.57158895,600.34660279)
{
\newrgbcolor{curcolor}{0 0 0}
\pscustom[linewidth=2.64566925,linecolor=curcolor]
{
\newpath
\moveto(276.81236409,594.59680374)
\curveto(225.08869417,596.8791924)(139.47894047,618.31908405)(114.51779906,540.81143555)
\curveto(95.7916611,482.6642998)(116.90752252,411.15535492)(157.80874583,384.81979586)
\curveto(224.87739969,341.63550358)(338.32793953,333.36246925)(400.69464189,370.31596846)
\curveto(463.52887181,407.54646925)(471.87505134,473.95081445)(453.41999622,519.51926295)
\curveto(423.49418835,593.41072752)(328.57947591,592.31249886)(276.81236409,594.59680374)
\closepath
}
}
{
\newrgbcolor{curcolor}{0 0 0}
\pscustom[linewidth=2.64566925,linecolor=curcolor]
{
\newpath
\moveto(479.80087181,595.34501413)
\curveto(438.6273411,597.67318421)(342.0313852,600.14649571)(297.40361953,532.03734169)
\curveto(270.7422274,491.34776437)(307.63461165,410.57426767)(346.5137537,381.33586893)
\curveto(423.6337285,323.33910169)(504.99142299,330.74092846)(562.37916472,375.03334043)
\curveto(611.98795843,413.3218746)(608.57353323,481.38046232)(585.81767811,525.86124972)
\curveto(558.4497411,579.35717413)(523.56250961,592.87049697)(479.80087181,595.34501413)
\closepath
}
}
{
\newrgbcolor{curcolor}{0.80000001 0.80000001 0.80000001}
\pscustom[linestyle=none,fillstyle=solid,fillcolor=curcolor,opacity=0.5]
{
\newpath
\moveto(280.00079244,553.10757287)
\curveto(228.27712252,555.38995775)(180.07174299,566.69713634)(158.65224945,519.56166547)
\curveto(138.52668472,475.27359618)(169.38298205,423.07817319)(213.5752063,402.74301098)
\curveto(270.61316031,376.49684909)(329.99767181,374.51870484)(372.66925606,408.70099775)
\curveto(406.31192693,435.65059397)(429.01997102,470.12275775)(400.72158992,510.32565287)
\curveto(369.96269858,554.02408185)(331.76790425,550.82326799)(280.00079244,553.10757287)
\closepath
}
}
{
\newrgbcolor{curcolor}{0 0 0}
\pscustom[linewidth=1.51181105,linecolor=curcolor]
{
\newpath
\moveto(280.00079244,553.10757287)
\curveto(228.27712252,555.38995775)(180.07174299,566.69713634)(158.65224945,519.56166547)
\curveto(138.52668472,475.27359618)(169.38298205,423.07817319)(213.5752063,402.74301098)
\curveto(270.61316031,376.49684909)(329.99767181,374.51870484)(372.66925606,408.70099775)
\curveto(406.31192693,435.65059397)(429.01997102,470.12275775)(400.72158992,510.32565287)
\curveto(369.96269858,554.02408185)(331.76790425,550.82326799)(280.00079244,553.10757287)
\closepath
}
}
{
\newrgbcolor{curcolor}{0.80000001 0.80000001 0.80000001}
\pscustom[linestyle=none,fillstyle=solid,fillcolor=curcolor,opacity=0.5]
{
\newpath
\moveto(481.43185134,560.70004279)
\curveto(429.92429858,565.94576626)(353.21194583,566.20650453)(331.79246362,519.07103366)
\curveto(311.6668989,474.78296437)(351.39384189,409.39939397)(394.79850331,387.43335429)
\curveto(456.06861354,356.42604216)(517.4481411,378.29712122)(545.80948913,402.82162893)
\curveto(561.69658205,416.55947838)(582.31689071,477.62906201)(552.30683717,516.57094043)
\curveto(533.7153789,540.69574421)(519.44452535,556.82868783)(481.43185134,560.70004279)
\closepath
}
}
{
\newrgbcolor{curcolor}{0 0 0}
\pscustom[linewidth=1.51181105,linecolor=curcolor]
{
\newpath
\moveto(481.43185134,560.70004279)
\curveto(429.92429858,565.94576626)(353.21194583,566.20650453)(331.79246362,519.07103366)
\curveto(311.6668989,474.78296437)(351.39384189,409.39939397)(394.79850331,387.43335429)
\curveto(456.06861354,356.42604216)(517.4481411,378.29712122)(545.80948913,402.82162893)
\curveto(561.69658205,416.55947838)(582.31689071,477.62906201)(552.30683717,516.57094043)
\curveto(533.7153789,540.69574421)(519.44452535,556.82868783)(481.43185134,560.70004279)
\closepath
}
}
{
\newrgbcolor{curcolor}{0 0 0}
\pscustom[linewidth=2.64566925,linecolor=curcolor]
{
\newpath
\moveto(170.50825701,256.05908909)
\curveto(118.78458709,258.34150799)(33.17483339,279.78136941)(8.21369197,202.27372846)
\curveto(-10.51244712,144.12660405)(10.60341543,72.61764027)(51.50463874,46.28208122)
\curveto(118.5732926,3.09780405)(232.02383244,-5.17524162)(294.3905348,31.7782576)
\curveto(357.22478362,69.00875838)(365.57096315,135.41309224)(347.11590803,180.98157098)
\curveto(317.19008126,254.87303555)(222.27536882,253.77478043)(170.50825701,256.05908909)
\closepath
}
}
{
\newrgbcolor{curcolor}{0.80000001 0.80000001 0.80000001}
\pscustom[linestyle=none,fillstyle=solid,fillcolor=curcolor,opacity=0.5]
{
\newpath
\moveto(173.69668535,214.56985445)
\curveto(121.97301543,216.85223555)(73.76763591,228.15941035)(52.34814236,181.02393949)
\curveto(32.22257764,136.73588909)(63.07887496,84.54046232)(107.27109921,64.20531901)
\curveto(164.30905323,37.95914578)(223.69356472,35.98097886)(266.36514898,70.16329067)
\curveto(300.00781984,97.11287177)(322.71587528,131.58503555)(294.41748283,171.78794578)
\curveto(263.6585915,215.48638988)(225.46379717,212.28554578)(173.69668535,214.56985445)
\closepath
}
}
{
\newrgbcolor{curcolor}{0 0 0}
\pscustom[linewidth=1.51181105,linecolor=curcolor]
{
\newpath
\moveto(173.69668535,214.56985445)
\curveto(121.97301543,216.85223555)(73.76763591,228.15941035)(52.34814236,181.02393949)
\curveto(32.22257764,136.73588909)(63.07887496,84.54046232)(107.27109921,64.20531901)
\curveto(164.30905323,37.95914578)(223.69356472,35.98097886)(266.36514898,70.16329067)
\curveto(300.00781984,97.11287177)(322.71587528,131.58503555)(294.41748283,171.78794578)
\curveto(263.6585915,215.48638988)(225.46379717,212.28554578)(173.69668535,214.56985445)
\closepath
}
}
{
\newrgbcolor{curcolor}{0 0 0}
\pscustom[linewidth=2.64566925,linecolor=curcolor]
{
\newpath
\moveto(594.28223244,256.80732216)
\curveto(553.10870173,259.13551114)(456.51274583,261.60879618)(411.8849915,193.49966862)
\curveto(385.22359937,152.81006862)(422.11598362,72.03657571)(460.99511433,42.7981883)
\curveto(538.11508913,-15.1985865)(619.47278362,-7.79675973)(676.86052535,36.49567492)
\curveto(726.46931906,74.78417886)(723.05489386,142.84277413)(700.29903874,187.32358043)
\curveto(672.93110173,240.81950484)(638.04387024,254.33282767)(594.28223244,256.80732216)
\closepath
}
}
{
\newrgbcolor{curcolor}{0.80000001 0.80000001 0.80000001}
\pscustom[linestyle=none,fillstyle=solid,fillcolor=curcolor,opacity=0.5]
{
\newpath
\moveto(595.91321197,222.16235838)
\curveto(544.40565921,227.40807807)(467.69330646,227.66882767)(446.27381669,180.53335681)
\curveto(426.14824819,136.24526862)(465.87520252,70.86170956)(509.27986394,48.89566232)
\curveto(570.54997417,17.88834264)(631.92950173,39.75944815)(660.29084976,64.28393319)
\curveto(676.17794268,78.02179775)(696.79825134,139.09136626)(666.7881978,178.03327492)
\curveto(648.19673953,202.15807492)(633.92588598,218.2909883)(595.91321197,222.16235838)
\closepath
}
}
{
\newrgbcolor{curcolor}{0 0 0}
\pscustom[linewidth=1.51181105,linecolor=curcolor]
{
\newpath
\moveto(595.91321197,222.16235838)
\curveto(544.40565921,227.40807807)(467.69330646,227.66882767)(446.27381669,180.53335681)
\curveto(426.14824819,136.24526862)(465.87520252,70.86170956)(509.27986394,48.89566232)
\curveto(570.54997417,17.88834264)(631.92950173,39.75944815)(660.29084976,64.28393319)
\curveto(676.17794268,78.02179775)(696.79825134,139.09136626)(666.7881978,178.03327492)
\curveto(648.19673953,202.15807492)(633.92588598,218.2909883)(595.91321197,222.16235838)
\closepath
}
}
{
\newrgbcolor{curcolor}{0 0 0}
\pscustom[linewidth=3.77952756,linecolor=curcolor]
{
\newpath
\moveto(134.61470362,378.91345256)
\curveto(134.61470362,378.91345256)(110.09906268,359.48039933)(102.43516724,341.05516846)
\curveto(92.75620535,317.78538453)(97.70288504,292.78586201)(97.70288504,292.78586201)
}
}
{
\newrgbcolor{curcolor}{0 0 0}
\pscustom[linestyle=none,fillstyle=solid,fillcolor=curcolor]
{
\newpath
\moveto(97.70288504,292.78586201)
\lineto(86.59969347,300.22088011)
\lineto(102.28810187,269.61309992)
\lineto(105.13790314,303.88905357)
\closepath
}
}
{
\newrgbcolor{curcolor}{0 0 0}
\pscustom[linewidth=3.77952756,linecolor=curcolor]
{
\newpath
\moveto(596.4857348,383.64573854)
\curveto(596.4857348,383.64573854)(611.0313978,363.66486169)(610.68262299,334.42997035)
\curveto(610.47361512,316.90826201)(595.53930331,294.67877791)(595.53930331,294.67877791)
}
}
{
\newrgbcolor{curcolor}{0 0 0}
\pscustom[linestyle=none,fillstyle=solid,fillcolor=curcolor]
{
\newpath
\moveto(595.53930331,294.67877791)
\lineto(592.96536642,307.79118448)
\lineto(582.3662162,275.0708486)
\lineto(608.65170988,297.25271479)
\closepath
}
\rput[bl](620,550){$\F$}
\rput[bl](210,460){$U_1$}
\rput[bl](490,440){$U_2$}
\rput[bl](-30,460){$B_\L(U_1)$}
\rput[bl](610,460){$B_\L(U_2)$}
\rput[bl](160,120){$U_1$}
\rput[bl](550,110){$U_2$}
\rput[bl](-135,120){$B_\L(U_1)$}
\rput[bl](730,100){$B_\L(U_2)$}
}
\end{pspicture}
\caption{Glueing Construction.}
\label{figglue}
\end{figure}%
Then Corollary~\ref{cormayer-viet} gives us the corresponding Mayer-Vietoris
sequence. Moreover, by the restriction lemma (Lemma~\ref{lemmarestrict}), 
the restriction of a differential form to~$U_i$ can be regarded separately
on~$B_\L(U_i)$, independent of the geometry of the other sets~$U_j$ with~$j \neq i$
and the topology of~$\tilde{M}$ (as indicated by the arrows in Figure~\ref{figglue}).
In order to apply the standard arguments (following for example in~\cite[Chapter~5]{bott-tu}),
we need to make sure that the cover is ``nice'' in the sense that the Poincar{\'e} lemma
applies. This will not be true in general because, as we saw in Section~\ref{secpoincare},
the Poincar{\'e} lemma holds only under additional assumptions
(intuitively speaking, if the region is star-shaped on a scale which is much larger
than the scale on which space or spacetime is non-regular).
Instead, we need to {\em{assume}} that there is a covering~$(U_i)$ such that all the sets
\beq \label{setintersect}
B_\L(U_i) \qquad \text{and} \qquad B_\L(U_i \cap U_j)
\eeq
satisfy the assumptions of Lemma~\ref{lemmapoincare}.
If this is the case, the usual Mayer-Vietoris argument goes through and allows us to
express the de Rham cohomology in terms of combinatorial properties of the covering.
In particular, we obtain the following result.
\begin{Thm} {\bf{(Independence of choice of osculations)}}
Suppose that two different choices of osculations can be joined by a
homotopy~$(M_p(\tau))_{p \in \tilde{M}}$ with~$\tau \in [0,1]$, i.e.\ a
continuous mapping
\[ \tau \in [0,1] \times \tilde{M} \rightarrow C^\infty(\F, \F)  \:,\qquad
(\tau, p) \mapsto M_p(\tau) \]
(where, as in~\eqref{loctop}, on~$C^\infty(\F, \F)$ we consider the
topology of locally uniform convergence of all derivatives).
Assume that for all~$\tau \in [0,1]$, all the sets~\eqref{setintersect}
satisfy the assumptions of Lemma~\ref{lemmapoincare}.
Then the cohomologies are independent of~$\tau$.
\end{Thm}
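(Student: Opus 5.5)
The strategy is the usual Čech--de Rham argument: show that, for each fixed $\tau$, the cohomology is isomorphic to a purely combinatorial object attached to the covering $(U_i)$. We then check that this combinatorial object does not change when $\tau$ is varied continuously.

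First, fix $\tau$ and the covering $(U_i)$ for which all sets~\eqref{setintersect} satisfy the assumptions of Lemma~\ref{lemmapoincare}. By the restriction lemma (Lemma~\ref{lemmarestrict}), each restricted complex $\Omega^{\bullet}(U_{i_0}\cap\cdots\cap U_{i_q}\subset\tilde M)$ can be computed inside $B_\L(U_{i_0}\cap\cdots\cap U_{i_q})$. The Poincar\'e-type lemma then shows that these complexes are acyclic in positive degree. In degree zero, the kernel of $d$ should be the locally constant classes; since $D_2^m\L$ annihilates constants exactly as in the EL equations, this kernel is $\R$ on each $\L$-connected piece. We then form the double complex $K^{p,q}=\prod \Omega^{q}(U_{i_0}\cap\cdots\cap U_{i_p}\subset\tilde M)$ with differentials $d$ and the Čech difference operator $\delta$ (the generalization of $\Delta$ in~\eqref{short}).

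Next we need the generalized Mayer--Vietoris principle, namely exactness of the rows
\[ 0\to\Omega^q(\tilde M)\to\prod_i\Omega^q(U_i\subset\tilde M)\to\prod_{i<j}\Omega^q(U_i\cap U_j\subset\tilde M)\to\cdots. \]
We obtain it from the $\L$-separating hypothesis, via the extension lemma and Corollary~\ref{cormayer-viet}, applied inductively. The operator $X$ from the extension lemma plays the role of a partition of unity. With this in hand, the standard tic-tac-toe argument of~\cite{bott-tu} gives isomorphisms
\[ H^r(\tilde M)\cong H^r\big(\text{Čech complex of }(U_i)\text{ with coefficients in }\R\big). \]
The only input from the osculation here is that $d$ and the Poincar\'e homotopy $P$ are built from $M_p(\tau)$.

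Then we compare different values of $\tau$. The Čech complex depends only on the nerve of the covering, i.e.\ on which of the sets $U_{i_0}\cap\cdots\cap U_{i_q}$ are nonempty, and on the number of $\L$-components of each. These data involve only $\tilde M$, $\tilde\rho$ and $\L$, not the osculation. The osculation enters only through the frames~\eqref{parallelframe}, the operators $D_j$ and the lifts $\Phi$ in~\eqref{phiident}. Hence the combinatorial complexes, and therefore the cohomologies, are isomorphic for all $\tau$. To make the isomorphism canonical, we would use continuity of $\tau\mapsto M_p(\tau)$ in the topology of~\eqref{loctop}. Since $\nabla^\L$ depends continuously on $\tau$, the frames for $\tau$ and $\tau'$ are related by invertible transformations when $|\tau-\tau'|$ is small, and compactness of $[0,1]$ reduces everything to finitely many such steps.

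The main obstacle is the degree-zero identification and the exactness of the Čech rows for intersections of more than two sets. The extension lemma is stated only for two sets, and the $\L$-separation hypothesis would have to be assumed for all iterated unions. I would first try to derive it from the sufficient condition of Lemma~\ref{lemmasufficient} by refining the covering. A second subtle point is the degree mismatch $\ell\mapsto\ell\pm1$ in Lemma~\ref{lemmapoincare}. I would handle it by working in the fixed complex~\eqref{simplecomplex} and checking that the primitive $\nu$ constructed there lands in the right space up to $\simeq^{\ell}$.
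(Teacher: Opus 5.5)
Your overall route is the one the paper indicates: Mayer--Vietoris plus the Poincar\'e-type lemma on a cover satisfying \eqref{setintersect}, reducing $H^*(\tilde M)$ to combinatorial data of the cover. The paper only asserts this reduction, without further detail. The gap is in the step meant to make that data independent of~$\tau$, namely your degree-zero identification. The EL equations do not say that $D_2^m\L$ annihilates constants. They give $\int_{\tilde M} D_{1,u}\L(x,y)\,d\tilde\rho(y)=0$ for $x\in\tilde M$. By symmetry this kills $\int_{\tilde M} D_{2,u}\L(x,y)\,d\tilde\rho(x)$, where one integrates over the \emph{undifferentiated} point; this is how they enter the proofs of Theorems~\ref{thmgauss} and~\ref{thmstokes}, after changing the order of integration. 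For second derivatives they yield only an inequality, and for higher ones nothing. By contrast, $d1\simeq^{k-1}_V 0$ requires $\int_{\tilde M} D_{2,\lambda}D_{2,j}\L(x,y)\,d\tilde\rho(y)=0$ for every $x\in V$ and $|\lambda|\le k-1$, with all derivatives taken at the integration variable along $e_j(y)\in M_y$. For $\lambda=0$ this is, up to the factor $-\s$, the divergence (Section~\ref{secgauss}) of the frame field $e_j$ at $x$, and nothing in the setting forces it to vanish. The paper's own examples show the degree-zero kernel is not ``$\R$ per $\L$-component'':
one point with $k=1$ has $H^0=0$;
the discrete line with $\beta\neq0$ has $H^0(\Z)\cong\R^2$, and constants are not closed there;
the discrete circle has $H^0$ equal to $0$, $\R$ or $\R^2$ depending on the derivative data (Table~\ref{tab:1-forms-square}).

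Consequently, the tic-tac-toe argument identifies $H^*(\tilde M)$ with \v{C}ech cohomology of the nerve with coefficients in the presheaf $V\mapsto\ker\big(d|_{\Omega^{0,0,k}(V\subset\tilde M)}\big)$. This presheaf is defined through weak derivatives along $M_y(\tau)$ and through the $\tau$-dependent relations $\simeq^\ell_V$, so your claim that the combinatorial data involve only $\tilde M$, $\tilde\rho$ and $\L$ is unfounded. The theorem's hypothesis gives no control here, because Lemma~\ref{lemmapoincare} concerns only closed forms with $s,m>0$. Your continuity fallback does not close the gap either. For $\tau\neq\tau'$ the spaces $M_y(\tau)$ and $M_y(\tau')$ are different subspaces of $T_y\F$, so no change of frame intertwines the two differentials or the two equivalence relations. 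Moreover, kernel dimensions are only semicontinuous in the data; the discrete-circle table shows such jumps as $a$ varies. What is missing is an argument, or an extra hypothesis, that for every $\tau$ the degree-zero kernels on all sets in \eqref{setintersect}, together with their restriction maps, are the same (for instance, each spanned by the class of the constant function). With that, and with $\L$-separation for the iterated covers, your nerve argument goes through. You are right to flag that separation condition: Corollary~\ref{cormayer-viet} needs it and the statement does not assume it. Your worry about the index shift, by contrast, is unnecessary: Lemma~\ref{lemmapoincare} sends closed forms in $\Omega^{s,m,\ell}$ to primitives in $\Omega^{s-1,m-1,\ell+1}$, which is exactly the shape of \eqref{simplecomplex}.
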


\section{Softened Surface Layer Integrals of Higher Co-Dimension} \label{secsoft}
\subsection{General Concepts and Basic Definitions}
The boundary integral in our Stokes' theorem (Theorem~\ref{thmstokes})
was formulated as a surface layer integral~\eqref{osik}.
Such surface layer integrals generalize surface integrals to
the setting of causal variational principles.
It is a natural question whether and how surface integrals of higher
co-dimension can be formulated as surface layer integrals.
In this section, we give a systematic procedure for doing so and
prove a corresponding Stokes' theorem (Theorem~\ref{thmstokesgen}).

Given a co-dimension~$q \in \{1,\ldots, k-1\}$, we want to formulate a surface layer integral
of dimension~$k-q$ (choosing~$q=1$ should give us back the surface layer
integral~\eqref{osik}, whereas~$q=k-1$ gives a one-dimensional surface layer
integral). In order to describe the region of integration in precise terms,
we follow the procedure explained in~\cite[Section~9.7]{intro} and choose~$q$
Borel sets
\[ \Omega_1, \ldots, \Omega_q \subset \tilde{M} \:. \]
In the case that~$\tilde{M}$ is a smooth manifold, we could assume that
these sets have smooth boundaries~$\partial \Omega_a$ which are all transversal.
In this case, we could integrate over the intersection of these boundaries,
\beq \label{smoothsurface}
\partial \Omega_1 \cap \cdots \cap \partial \Omega_q \:.
\eeq
In the general case, we cannot work with the boundaries of these sets
(for example, if~$\tilde{M}$ is a discrete space, $\partial \Omega_a=\Omega_a$
would be the whole set). Instead, we use the idea of surface layer integrals
by integrating over the Lagrangian~$\L(x,y)$ with one of the points~$x$ or~$y$
being inside the set~$\Omega_a$ and the other outside.
Moreover, in order to ensure that all weak derivatives will be well-defined,
we need to work with so-called {\em{softened surface layer integrals}}
as first introduced in~\cite[Section~3.2]{linhyp}. In our setting, this is implemented
most easily by replacing the characteristic functions of the sets~$\Omega_1, \ldots,
\Omega_q$ by mollification with the Lagrangian. Thus for a Borel subset~$\Omega
\subset \tilde{M}$ we define the function
\[ \chi_\Omega^\L \in C^\infty(\F, \R)\:,\qquad
\chi_\Omega^\L(x) := \fint_{\tilde{M}} \L(x,y)\: \chi_\Omega(y)\: d\rho(y) 
= \fint_{\Omega} \L(x,y)\: d\rho(y)\:. \]
Moreover, we introduce the abbreviation
\[ A_\Omega^\L(x,y) := \chi_\Omega^\L(x) - \chi_\Omega^\L(y)
\;=\; \chi_\Omega^\L(x)\:\big(1 - \chi_\Omega^\L(y) \big)
- \big( 1 - \chi_\Omega^\L(x) \big)\:\chi_\Omega^\L(y) \:. \]
Obviously, its~$y$-derivative is independent of~$x$,
\beq \label{indx}
D_{2,i} A_\Omega^\L(x,y) = - D_{i} \chi_\Omega^\L(y) \:;
\eeq
this simple property will be crucial for the subsequent construction.
In order to postpone the question of whether the following integrals are well-defined
on equivalence classes, we begin with a general section involving weak derivatives
up to order~$m$,
\beq \label{etasec}
\eta \in \bigoplus_{r=0}^m
\Gamma^\infty_\loc \big( \tilde{M}, T^{k-q,r} \tilde{M} \big) \:.
\eeq
In a parallel frame, this section has the form
\beq \label{etasecform}
\eta = \eta_{i_1 \cdots i_{k-q}}(y)\: e^{i_1}(y) \wedge \cdots \wedge  e^{i_{k-q}}(y)
\eeq
with coefficient functions which may contain weak derivatives up to the order~$r$.

\begin{Def} \label{defosi}
Given Borel subsets~$\Omega_1, \ldots, \Omega_q \subset \tilde{M}$ as well as a
section~$\eta$ as in~\eqref{etasec} and~\eqref{etasecform}, we
introduce the {\bf{surface layer integral of co-dimension~$q$}} by
\begin{align*}
&\int_{\partial \Omega_1 \cap (\partial \Omega_2 \wedge \cdots \wedge \partial
\Omega_q)} \eta 
:= (-1)^q \int_{\tilde{M}} d\tilde{\rho}(x) \int_{\tilde{M}} d\tilde{\rho}(y) \:
\big( \epsilon^\sharp \llcorner \eta)^{i_1 \cdots i_q}(y) \:
A^\L_{\Omega_1}(x,y) \notag \\
&\quad\: \times
\big( D_{2,i_q} A^\L_{\Omega_2}(x,y) \big) \big( D_{2,i_{q-1}} A^\L_{\Omega_3}(x,y) \big) \cdots
\big( D_{2,i_2} A^\L_{\Omega_q}(x,y) \big)\: \big( D_{2,i_1} \L(x,y) \big) \:.
\end{align*}
Moreover, given an additional Borel subset~$\tilde{U} \subset \tilde{M}$, we
introduce
\begin{align*}
&\int_{\tilde{U} \;\cap\; \partial \Omega_1 \cap (\partial \Omega_2 \wedge \cdots \wedge \partial \Omega_q)} \eta
:= (-1)^q \int_{\tilde{M}} \chi^\L_{\tilde{U}}(x)\:d\tilde{\rho}(x) \int_{\tilde{M}} d\tilde{\rho}(y) \:
\big( \epsilon^\sharp \llcorner \eta)^{i_1 \cdots i_q}(y) \: \\
&\quad\: \times\:A^\L_{\Omega_1}(x,y)\:
\big( D_{2,i_q} A^\L_{\Omega_2}(x,y) \big) \cdots
\big( D_{2,i_2} A^\L_{\Omega_q}(x,y) \big)\: \big( D_{2,i_1} \L(x,y) \big) \:.
\end{align*}
\end{Def} \noindent
These notions require a brief explanation. First of all, weak derivatives
acting on~$\eta$ are to be understood as before by integrating formally by parts
in~$M_y$. More precisely, using the product rules, we let these derivatives act on the arguments~$y$ of the factors~$A^\L_{\Omega_\ell}$ and~$\L(x,y)$.
This is unproblematic because all these factors are smooth.
Next, one should note that for each
set~$\Omega_a$, the integrand contains a function~$A^\L_{\Omega_a}(x,y)$
or a derivative thereof. This has the effect that the integrand vanishes unless
one of the points~$x$ or~$y$ lies inside the $\L$-neighborhood of~$\Omega_a$,
and the other point lies in the $\L$-neighborhood of~$\tilde{M} \setminus \Omega_a$.
In this sense, the above integrals are indeed surface layer integrals of co-dimension~$q$.
In view of the correspondence to the smooth case~\eqref{smoothsurface}, we
write the integration range on the left as the intersection of the boundaries of~$\Omega_1,
\ldots \Omega_q$. Next, we point out that the ordering of the
sets~$\Omega_1, \ldots, \Omega_q$ is
important. The set~$\Omega_1$ is distinguished in that the
function~$A^\L_{\Omega_1}(x,y)$ is not differentiated. All the other
factors~$A^\L_{\Omega_a}(x,y)$ are differentiated exactly once.
In view of the total anti-symmetry in the indices~$i_1, \ldots, i_1$, this means that
the surface layer integral is totally antisymmetric when interchanging~$\Omega_2,
\ldots \Omega_q$. This fact is emphasized by the wedges in the notation
for the integration range~$\partial \Omega_1 \cap (\partial \Omega_2 \wedge \cdots \wedge \partial \Omega_q)$.

\subsection{A Stokes Theorem in Higher Co-Dimension}
\begin{Thm} {\bf{(Stokes' theorem in higher co-dimension)}} \label{thmstokesgen}
For any relatively compact Borel set~$\tilde{U} \subset \tilde{M}$
and any section
\[ \eta \in \Gamma^\infty_\loc \big( \tilde{M}, T^{k-q-1,0} \tilde{M} \big) \:, \]
the following identity holds,
\beq \label{stokesgen}
\int_{\tilde{U} \;\cap\; \partial \Omega_1 \cap (\partial \Omega_2 \wedge \cdots \wedge \partial \Omega_q)} d \eta
= \int_{\partial {\tilde{U}} \cap (\partial \Omega_1 \wedge \partial \Omega_2 \wedge \cdots \wedge \partial \Omega_q) } \eta \:.
\eeq
\end{Thm}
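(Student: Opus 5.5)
The plan is to reproduce, in co-dimension~$q$, the argument used in the proofs of Theorem~\ref{thmgauss} and Theorem~\ref{thmstokes}. There, the defect between the bulk and the boundary terms was a term of the form $\int_{\tilde{M}} d\tilde{\rho}(x) \int d\tilde{\rho}(y)\,(\cdots)\,D_{2,i}\L(x,y)$, which vanishes by the EL equations~\eqref{EL1}. The one new ingredient is identity~\eqref{indx}: the $y$-derivatives of the softened characteristic factors $A^\L_{\Omega_a}(x,y)$ do not depend on~$x$.

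\textbf{Step 1 (expand the left side).} Write the left side of~\eqref{stokesgen} using Definition~\ref{defosi}, applied to the section $d\eta \in \Gamma^\infty_\loc(\tilde{M}, T^{k-q,1}\tilde{M})$. Since $\eta$ contains no weak derivatives, $\epsilon^\sharp \llcorner d\eta$ has components of the form $D_{i_0} g^{i_0 i_1 \cdots i_q}$, with $g := \epsilon^\sharp \llcorner \eta$ totally antisymmetric in its $q+1$ indices (up to a combinatorial factor from~\eqref{epsnorm}, which I will track). Interpreting the weak derivative $D_{i_0}$ by formal integration by parts in $M_y$, as explained after Definition~\ref{defosi}, moves $-D_{2,i_0}$ onto the smooth product
\[ A^\L_{\Omega_1}(x,y)\, D_{2,i_q}A^\L_{\Omega_2}(x,y) \cdots D_{2,i_2}A^\L_{\Omega_q}(x,y)\, D_{2,i_1}\L(x,y). \]
By the product rule, this produces three kinds of terms.

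\textbf{Step 2 (sort the terms).}
\begin{enumerate}[leftmargin=2.5em]
\item[(a)] \emph{Second derivatives of the $A$'s and of~$\L$.} By~\eqref{indx}, $D_{2,i_0}D_{2,i_a}A^\L_{\Omega}(x,y) = -D_{i_0}D_{i_a}\chi^\L_\Omega(y)$. Together with $D_{2,i_0}D_{2,i_1}\L$, these terms are symmetric in a pair of indices contracted against the antisymmetric~$g$, so they vanish.
\item[(b)] \emph{The term $D_{2,i_0}A^\L_{\Omega_1}(x,y)$.} It equals $-D_{i_0}\chi^\L_{\Omega_1}(y)$ and is independent of~$x$. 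After renaming indices, it has exactly the structure of the right side of~\eqref{stokesgen}: the factor $A^\L_{\tilde{U}}$ is undifferentiated, and $\Omega_1,\ldots,\Omega_q$ each appear differentiated once. The weight $\chi^\L_{\tilde{U}}(x)$ must still be traded for $A^\L_{\tilde{U}}(x,y) = \chi^\L_{\tilde{U}}(x) - \chi^\L_{\tilde{U}}(y)$.
\item[(c)] \emph{The trade itself.} This is the key use of~\eqref{EL1}. Replacing $\chi^\L_{\tilde{U}}(x)$ by $\chi^\L_{\tilde{U}}(y)$ leaves an integrand whose only $x$-dependence is in the single factor $D_{2,i_1}\L(x,y)$; every other factor depends on~$y$ alone, using~\eqref{indx} again. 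Since $\tilde{U}$ is relatively compact, Fubini applies. Integrating over $x$ first gives $D_{i_1}\int \L(x,y)\,d\tilde{\rho}(x) = D_{i_1}\tilde{\ell}(y) = 0$, because $\tilde{\ell}$ attains its minimum on $\tilde{M}$ by~\eqref{EL1}.
\end{enumerate}
The same $x$-integration kills every leftover term in which $x$ appears only through one derivative of~$\L$. Collecting the signs $(-1)^q$ from Definition~\ref{defosi} and $(-1)^{q+1}$ for the right side then gives~\eqref{stokesgen}.

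\textbf{Main obstacle.} I expect the difficulty to be bookkeeping, not analysis. First, the signs and factorial factors from $\epsilon^\sharp\llcorner d\eta$ versus $\epsilon^\sharp\llcorner\eta$ must match the ordering convention $\partial\tilde{U}\cap(\partial\Omega_1\wedge\cdots)$. Second, the term where $D_{2,i_0}$ hits $\chi^\L_{\tilde{U}}$ must appear with the correct sign. I would check the signs first in the case $q=1$, where the computation should reduce to the proof of Theorem~\ref{thmstokes} with characteristic functions softened. Then I would use the antisymmetry of $g$ to fix the general sign by induction on the index positions.
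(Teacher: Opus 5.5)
Your proposal is correct and is essentially the paper's proof. Both arguments use formal integration by parts of the weak derivative in~$d\eta$, then total antisymmetry to discard every term except the one where $D_2$ falls on $A^\L_{\Omega_1}$, then the splitting $\chi^\L_{\tilde U}(x) = A^\L_{\tilde U}(x,y) + \chi^\L_{\tilde U}(y)$, and finally the EL equations, with~\eqref{indx} leaving $x$ only in $D_2\L(x,y)$, to kill the remainder. On the sign bookkeeping you flagged: in the paper's convention the derivative index comes last, $(\epsilon^\sharp \llcorner d\eta)^{i_1 \cdots i_q} = D_{i_{q+1}} (\epsilon^\sharp \llcorner \eta)^{i_1 \cdots i_q\, i_{q+1}}$, not first as in your $D_{i_0} g^{i_0 i_1 \cdots i_q}$, which differs by $(-1)^q$. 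With that convention, the minus sign from integrating by parts turns $(-1)^q$ directly into the $(-1)^{q+1}$ of the right side, and all index positions already match Definition~\ref{defosi}.
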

\Proof First,
\[ \big( \epsilon^\sharp \llcorner (d \eta)\big)^{i_1 \cdots i_q}
= \big(\epsilon^\sharp\big)^{i_1 \cdots i_q\, i_{q+1}\, i_{q+2} \cdots i_k} \:
D_{i_{q+1}} \eta_{i_{q+2} \cdots i_k} \:.\]
Therefore, integrating by parts using the definition of the weak derivative,
\begin{align*}
&(-1)^k \int_{\tilde{U} \;\cap\; \partial \Omega_1 \cap (\partial \Omega_2 \wedge \cdots \wedge \partial \Omega_q)} d \eta \\
&= -\int_{\tilde{M}} \chi^\L_{\tilde{U}}(x)\:d\tilde{\rho}(x) \int_{\tilde{M}} d\tilde{\rho}(y) \:
\big(\epsilon^\sharp\big)^{i_1 \cdots i_q\, i_{q+1}\, i_{q+2} \cdots i_k}(y) \:
\eta_{i_{q+2} \cdots i_k}(y) \\
&\qquad\times\: D_{2,i_{q+1}} \Big( A^\L_{\Omega_1}(x,y)\:\big( D_{2,i_q} A^\L_{\Omega_2}(x,y) \big) \cdots
\big( D_{2,i_2} A^\L_{\Omega_q}(x,y) \big)\: \big( D_{2,i_1} \L(x,y) \big) \Big) \:.
\end{align*}
In view of the total anti-symmetry in the indices~$i_1, \ldots, i_{q+1}$, we only
get a contribution if the factor~$A^\L_{\Omega_1}(x,y)$ is differentiated, i.e.\
\begin{align*}
&(-1)^k \int_{\tilde{U} \;\cap\; \partial \Omega_1 \cap (\partial \Omega_2 \wedge \cdots \wedge \partial \Omega_q)} d \eta = -\int_{\tilde{M}} \chi^\L_{\tilde{U}}(x)\:d\tilde{\rho}(x) \int_{\tilde{M}} d\tilde{\rho}(y) \:
\big(\epsilon^\sharp \llcorner \eta \big)^{i_1 \cdots i_{q+1}}(y) \\
&\qquad \times\: \big( D_{2,i_{q+1}} A^\L_{\Omega_1}(x,y)\big) \:\big( D_{2,i_q} A^\L_{\Omega_2}(x,y) \big) \cdots
\big( D_{2,i_2} A^\L_{\Omega_q}(x,y) \big)\: \big( D_{2,i_1} \L(x,y) \big) \:.
\end{align*}
Next, we rewrite the factor~$\chi^\L_{\tilde{U}}(x)$ according to
\[ \chi^\L_{\tilde{U}}(x) = A^\L_{\tilde{U}}(x,y) + \chi^\L_{\tilde{U}}(y) \:. \]
Multiplying out and using Definition~\ref{defosi} we obtain
\begin{align*}
&\int_{\tilde{U} \;\cap\; \partial \Omega_1 \cap (\partial \Omega_2 \wedge \cdots \wedge \partial \Omega_q)} d \eta
- \int_{\partial {\tilde{U}} \cap (\partial \Omega_1 \wedge \partial \Omega_2 \wedge \cdots \wedge \partial \Omega_q) } \eta \\
&= - (-1)^k
\int_{\tilde{M}} d\tilde{\rho}(x) \int_{\tilde{M}} d\tilde{\rho}(y) \:\chi^\L_{\tilde{U}}(y)\:
\big(\epsilon^\sharp\big)^{i_1 \cdots i_q\, i_{q+1}\, i_{q+2} \cdots i_k}(y) \:
\eta_{i_{q+2} \cdots i_k}(y) \\
&\qquad \times\: \big( D_{2,i_{q+1}} A^\L_{\Omega_1}(x,y)\big) \:\big( D_{2,i_q} A^\L_{\Omega_2}(x,y) \big) \cdots
\big( D_{2,i_2} A^\L_{\Omega_q}(x,y) \big)\: \big( D_{2,i_1} \L(x,y) \big) \:.
\end{align*}
Using~\eqref{indx}, we can write the last expression in the short form
\begin{align}
&\int_{\tilde{U} \;\cap\; \partial \Omega_1 \cap (\partial \Omega_2 \wedge \cdots \wedge \partial \Omega_q)} d \eta
- \int_{\partial {\tilde{U}} \cap (\partial \Omega_1 \wedge \partial \Omega_2 \wedge \cdots \wedge \partial \Omega_q) } \eta \notag \\
&=\int_{\tilde{M}} d\tilde{\rho}(x) \int_{\tilde{M}} d\tilde{\rho}(y)\: f^k(y)\: \big( D_{2,k} \L(x,y) \big) \label{stokgen}
\end{align}
with the abbreviation
\begin{align*}
f^k(y) &= (-1)^{q+1} \chi^\L_{\tilde{U}}(y)\:
\big(\epsilon^\sharp\big)^{k\,i_2 \cdots i_{q+1}\, i_{q+2} \cdots i_k}(y) \:
\eta_{i_{q+2} \cdots i_k}(y) \\
&\qquad \times\: \big( D_{i_{q+1}} \chi^\L_{\Omega_1}(y)\big) \:\big( D_{i_q} \chi^\L_{\Omega_2}(y) \big) \cdots\big( D_{i_2} \chi^\L_{\Omega_q}(y) \big) \:.
\end{align*}
As in the proof of Theorem~\ref{thmstokes}, the right side of~\eqref{stokgen}
vanishes in view of the EL equations. This concludes the proof.
\QED

\subsection{Differential Forms in Softened Surface Layers} \label{secOosi}
The above analysis was carried out for a global section~$\eta$ of the form~\eqref{etasec}
and~\eqref{etasecform}. This is not quite satisfying, because it should suffice to
define the section in a neighborhood of the surface layer.
Moreover, we want to disregard properties of the sections which do not enter the
surface layer integrals. Following the strategy in Sections~\ref{secdiffex}
and~\ref{secrestrict}, this can be accomplished by forming suitable equivalence
classes of sections. Similar to~\eqref{etasec}, we now consider a more general section 
\[ 
\eta \in \bigoplus_{r=0}^m
\Gamma^\infty_\loc \big( \tilde{M}, T^{s,r} \tilde{M} \big) \:. \]
In a parallel frame, we consider the integral
\[ I(x) := \int_{\tilde{M}} \eta_{i_1 \cdots i_s}(y)\: A^\L_{\Omega_1}(x,y)\: D_{j_2} \chi_{\Omega_2}(y) \cdots  D_{j_q} \chi_{\Omega_q}(y)\:d\tilde{\rho}(y) \:. \]

\begin{Def} Given a Borel subset~$\tilde{U} \subset \tilde{M}$,
the section~$\eta$ vanishes in~$\tilde{U} \cap \partial \Omega_1 \cap
(\partial \Omega_2 \wedge \cdots \wedge \partial \Omega_q)$ to the order~$\ell$ if
the integral~$I(x)$ vanishes to the order~$\ell$ for any~$x \in \tilde{U}$ and for any choice of the indices~${i_1 \cdots i_s}$ and~$j_2, \ldots j_q$. The corresponding equivalence relation is
denoted by~$\simeq^\ell_{\tilde{U} \cap \partial \Omega_1 \cap
(\partial \Omega_2 \wedge \cdots \wedge \partial \Omega_q)}$.

The {\bf{differential forms}} in the surface layer~$\tilde{U} \cap \partial \Omega_1 \cap
(\partial \Omega_2 \wedge \cdots \wedge \partial \Omega_q)$
of class~$s,m,\ell$ are denoted and defined by
\[ \Omega^{s,m,\ell}\big( \tilde{U} \cap \partial \Omega_1 \cap
(\partial \Omega_2 \wedge \cdots \wedge \partial \Omega_q)\big)
:= 
\Big( \bigoplus_{r=0}^m
\Gamma^\infty_\loc \big( \tilde{M}, T^{s,r} \tilde{M} \big) \Big)
\big/ \simeq^\ell_{\tilde{U} \cap \partial \Omega_1 \cap
(\partial \Omega_2 \wedge \cdots \wedge \partial \Omega_q)} \:. \]
\end{Def} \noindent
As is immediately verified, these differential forms depend only on
their behavior in the surface layer, meaning that they may be modified
arbitrarily on a set~$\Omega$ which does not intersect the surface layer in the sense that
\[ \overline{\Omega} \;\cap\; \supp \big(
A^\L_{\Omega_1}(x,.)\: \chi_{\Omega_2}(.) \cdots \chi_{\Omega_q}(.) \big)
= \varnothing
\qquad \text{for all~$x \in \tilde{U}$}\:.
\]

In analogy to~\eqref{dmap}, the exterior derivative is a mapping
\[ d \::\: \Omega^{s,m,\ell}\big( \partial \Omega_1 \cap
(\partial \Omega_2 \wedge \cdots \wedge \partial \Omega_q)\big)
\rightarrow
\Omega^{s+1,m+1,\ell-1}\big( \partial \Omega_1 \cap
(\partial \Omega_2 \wedge \cdots \wedge \partial \Omega_q)\big) \]
for any~$\ell \geq 1$. Localizing to~$\tilde{U}$, we have accordingly
\[ d \::\: \Omega^{s,m,\ell}\big( \tilde{U} \cap \partial \Omega_1 \cap
(\partial \Omega_2 \wedge \cdots \wedge \partial \Omega_q)\big)
\rightarrow
\Omega^{s+1,m+1,\ell-1}\big( \tilde{U} \cap \partial \Omega_1 \cap
(\partial \Omega_2 \wedge \cdots \wedge \partial \Omega_q)\big) \:. \]
In particular, Theorem~\ref{thmstokesgen} applies to the 
exterior derivative acting on the differential forms
\[ d \::\: \Omega^{k-q-1,0,\ell}\big( \tilde{U} \cap \partial \Omega_1 \cap
(\partial \Omega_2 \wedge \cdots \wedge \partial \Omega_q)\big)
\rightarrow
\Omega^{k-q,1,\ell-1}\big( \tilde{U} \cap \partial \Omega_1 \cap
(\partial \Omega_2 \wedge \cdots \wedge \partial \Omega_q)\big) \:. \]
The statement of this theorem shows in particular that the
integral on the right of~\eqref{stokesgen} is well-defined on~$\Omega^{k-q-1, 0, \ell}(\tilde{U} \cap \partial \Omega_1 \cap
(\partial \Omega_2 \wedge \cdots \wedge \partial \Omega_q))$ with~$\ell \geq 1$
(in the sense that this integral does not depend on the choice of representatives).

\section{General Tensor Calculus} \label{sectensor}
In the previous sections we considered vector fields
(Section~\ref{secdirder}) and differential forms (Section~\ref{secLcalc}).
But clearly, one can also introduce a general tensor field~$t$
of degree~$(r,s)$ as a multilinear mapping
\[ t_p \::\: 
\underbrace{M^*_p \times \cdots \times M^*_p}_{\text{$r$ factors}} \times
\underbrace{M_p \times \cdots \times M_p}_{\text{$s$ factors}} 
\rightarrow \R \:. \]
In a local parallel frame, it can be represented as usual in components
as
\[  t_p(x) = t^{i_1 \cdots i_r}_{j_1 \cdots j_s}\: 
e_{i_1}(x) \otimes \cdots \otimes e_{i_r}(x) \otimes
e_{j_1}(x) \otimes \cdots \otimes e_{j_s}(x) \:. \]
Note that no symmetry is required in the indices.
With the tensor product, the tensors form an algebra.
Depending on the application in mind, one can quotient out
equivalence relations, as was done for differential forms
in Section~\ref{secdiffex}. Moreover, one can apply the
operations~$\epsilon_\sharp$ and~$\epsilon^\flat$
(introduced in Section~\ref{secintegrate}) to raise and lower indices.

We remark that, in order to get to the setting of differential
tensor calculus, one needs additional structures.
These structure were introduced in the smooth setting in~\cite{gauss}.
We now remark how they can be obtained in the non-smooth setting,
referring for details to the upcoming paper~\cite{nonsmooth}.
In order to get into the position to arbitrarily raise and lower indices,
one needs a {\em{metric}}. A canonical tensor of degree~$(0,2)$
(which generalizes the inverse of the Riemannian metric to our setting)
can be defined by (for details see~\cite[Section~4.3]{gauss})
\[ g^*_p := \int_{M_p} \L(p,y)\: y \otimes y\: d\rho_p(y) \::\:
M_p^* \times M_p^* \rightarrow \R \:. \]
Using this tensor, one can raise and lower indices.
Moreover, this metric makes it possible to
define the Hodge star as an operator which maps $s$-forms to $(k-s)$-forms.
Moreover, one can define the Hodge Laplacian.
We finally note that the $\L$-induced connection~$\nabla^\L$
introduced in Section~\ref{secconnection} also gives rise to
notions of covariant derivative and curvature. This is worked out
in the smooth setting in~\cite{gauss}; the non-smooth generalizations
will be explored in~\cite{nonsmooth}.

\section{Illustrating Examples} \label{secex}
In this section we study various examples. The goal is not to be exhaustive,
but to illustrate the previous constructions and results in a simple and
concrete way.

\subsection{One Point} \label{sec:onepoint}
Applying the constructions of Section~\ref{sec:Cohomology}, we are now able to compute the cohomologies of various spaces. We start with the example of a single point.
Choose~$\tilde{M}=\{*\}\subset \F = \R$ and a Lagrangian~$\L \in C^\infty(\R \times \R, \R^+_0)$ with the properties
\begin{equation*}
    \L(*,*)=1\:, \qquad D^p_y\L(*,*)=c_p \quad \text{with~$p \in \N$}\:,
\end{equation*}
where only finitely many~$c_p$ are non-zero. Furthermore, we let~$m \in \N$ be the maximal index for which~$c_m\neq0$.

Next, we need to specify the dimension of~$M$. The simplest and natural choice
is to choose~$\dim M=0$. With this choice, only zero forms exist.
Thus we only need to consider the spaces~$\Omega^{s,m,\ell}$ for~$s=0$.
Next, no derivatives can be taken, implying that also~$m=0$ and~$\ell=0$.
The space~$\Omega^{0,0,0}(\tilde{M})$ consists of all real-valued functions
on~$\tilde{M}$, so that
\[ \Omega^{0,0,0}(\tilde{M}) = \R \:. \]
Thus, just as in the algebraic topological situation, the
only non-trivial cohomology group is
\[ H^{0}(\tilde{M}) = \R \:. \]

In contrast to the situation in algebraic topology, now we can also choose
the osculating vacuum as~$M_\ast=M=\F$ and thus~$k= \dim M = 1$. In this case, the cohomology has no classical counterparts. It can be determined in detail as follows.
We denote a basis vector of~$M$ by~$e_1$. In a trivial way, this basis vector gives rise to a corresponding parallel frame, again denoted by~$e_1$.

We begin by calculating the objects in the differential complex
\[ 
    \begin{tikzcd}
        0\arrow[r] &
        \Omega^{0,0,\ell}(\tilde{M}) \arrow[r,"d"] &
        \Omega^{1,1,\ell-1}(\tilde{M}) \arrow[r] & 0\:.
    \end{tikzcd} \]
Combining the definitions~\eqref{Cequi} and~\eqref{omegaequi} for~$m=0$, we obtain that~$\Omega^{0,0,\ell}(\tilde{M})=\mathcal{D}^{0,\ell}(\tilde{M})$. Furthermore, using that~$T^{0,0}\tilde{M}\cong\R$, we conclude that
\begin{equation*}
    \Gamma^\infty_{\loc}(\tilde{M},T^{0,0}\tilde{M})\cong L^\infty(\{*\},\R)\cong\R.
\end{equation*}

The next step in computing the $(0,0,\ell)$-forms
is to study the equivalence relation~$\simeq^\ell_{\tilde{M}}$.
For simplicity we restrict attention to the case~$\ell>m>0$
(the case~$\ell \leq m$ and~$m=0$ are worked out in~\cite{vandertop}).
Definition~\ref{defvanish} states that~$f\simeq^\ell_*0$ for a section
$f\in\Omega^{0,0,\ell}(*)$ if and only if 
\begin{align*}
    \int_M\L(x,y)f(y)dy&=\L(x,*)f(*)=0,\\[-1em]
    &\vdots\\[-1em]
    \int_MD^\ell_y\L(x,y)f(y)dy&= D^\ell_y \L(x,*)f(*)=0,
\end{align*}
for all~$x\in\tilde{M}$. Evaluating at~$x=*$ gives the conditions
\begin{align}\label{eq:matrix-form1}
    c_p\cdot f(*)&=0 \qquad \text{for all~$p=0,\ldots, \ell$} \:.
\end{align}
As~$c_0\neq0$, we conclude that~$\R\ni f\simeq^\ell_{\tilde{M}}0$ if and only if~$f=0$. Thus~$\simeq^\ell_{\tilde{M}}$ is a trivial equivalence relation. Combining
the above results, we find
\begin{equation*}
    \Omega^{0,0,\ell}(\tilde{M})=\mathcal{D}^{0,\ell}(\tilde{M})=\frac{\Gamma^\infty_{\loc}(\tilde{M},T^{0,0}\tilde{M})}{\simeq^\ell_{\tilde{M}}}\cong\frac{\R}{\simeq^\ell_{\tilde{M}}}\cong\R.
\end{equation*}
Now we turn to~$\Omega^{1,1,\ell-1}$. Using again~\eqref{omegaequi}, we find
\begin{equation*}
    \Omega^{1,1,\ell-1}(\tilde{M})
    =
    \frac{\Gamma(\tilde{M},T^{1,0}\tilde{M})\;\oplus\;\Gamma(\tilde{M},T^{1,1}\tilde{M})}
         {\simeq^{\ell-1}_{\tilde{M}}}
    =
    \frac{e_1\mathbb{R}\oplus e_1 \otimes e^1\mathbb{R}}
         {\simeq^{\ell-1}_{\tilde{M}}}.
\end{equation*}
Again using Definition~\ref{defvanish}, we obtain that a section~$f=(f^0+f^1e_1)e^1$ vanishes everywhere to the order~$\ell-1$
if and only if for every~$p=0,\ldots, \ell-1$,
\begin{gather}\label{eq:point-vanish-1-form}
    \int_MD_y^p\L(x,y)(f^0+f^1e_1)(y)dy=D_y^p\L(x,*)f^0+D_y^{p+1}\L(x,*)f^1=0\\
    \Longleftrightarrow \qquad c_pf^0=-c_{p+1}f^1,\label{eq:point-vanish-1-form2}
\end{gather}
where in the last step we evaluated~\eqref{eq:point-vanish-1-form} at~$x=*$.
Collecting the equations~\eqref{eq:point-vanish-1-form2} for~$p=0, \ldots, \ell-1$
and using that~$c_{m+1} =\cdots =c_{\ell-1}=0$, the conditions can be
rewritten as the matrix equation
\begin{equation}\label{eq:point-vanish-1-form-matrix}
\begin{pmatrix}
    c_0 & c_1 \\
    c_1 & c_2 \\
    \vdots & \vdots \\
    c_{m-1} & c_m\\
    c_m & 0
\end{pmatrix}
\begin{pmatrix}
    f^0\\
    f^1
\end{pmatrix}
=0 \:.
\end{equation}
It follows that~$c_mf^0=0$,
which, as~$c_m\neq0$ per definition, implies~$f^0=0$. 
Using that~$m>0$, it follows that also~$f^1=0$.
We conclude that~$f\simeq^{\ell-1}_{\tilde{M}}0$ if and only if~$f=0$.
Hence
\[ \Omega^{1,1,\ell-1} \simeq \R^2 \:. \]

Since~$e^1\wedge e^1=0$, we obtain the following differential complex,
\begin{equation} \label{exact1}
    \begin{tikzcd}
        0\arrow[r] & 
        \Omega^{0,0,\ell}(\tilde{M}) \simeq \R \arrow[r,"d"] & 
        \Omega^{1,1,\ell-1}(\tilde{M}) \simeq \R^2 \arrow[r] & 
        0
    \end{tikzcd}
\end{equation}
It remains to determine the behavior of the operator~$d$. Following~\eqref{ddef}, we find for~$f\in\Omega^{0,0,\ell}(\tilde{M})$
\[ 
    df=f\;e_1\, e^1 \:. \]
Since the equivalence relation~$\simeq^{\ell-1}_{\tilde{M}}$ is trivial,
$df$ vanishes if and only if~$f=0$. Thus~$d$ is an injective operator.
Using this in~\eqref{exact1} we obtain the cohomologies
\begin{equation*}
    H^p(*)=
    \begin{cases}
        0& \text{if~$p\neq1$} \\
        \R & \text{if~$p=1$}\:.
    \end{cases}
\end{equation*}

\subsection{Two Points} \label{sec:Two-points}
We consider the space~$\tilde{M}=\{x_1,x_2\}\subset\F = \R$. 
Choosing~$k=0$ brings us back to the standard setting of algebraic topology.
It is more interesting to consider the case~$k=1$.
Thus we choose the osculations
\[ M_{x_1}=M_{x_2} = \F \:. \]
As in~\eqref{eq:matrix-form1} and~\eqref{eq:point-vanish-1-form-matrix} we will define the Lagrangian on~$\tilde{M}$ in matrix form as
\begin{equation}\label{eq:two-point-largangian}
    \L=
    \begin{pmatrix}
        1 & a_0 \\
        a_0 & 1
    \end{pmatrix},
    \;\;
    D^p_2\L=
    \begin{pmatrix}
        b_p & a_p \\
        a_[ & b_p
    \end{pmatrix} \quad \text{with~$p \in \N$}\:,
\end{equation}
where we assume that for only finitely many~$p$, the matrix~$D^p_2\L$ is non-zero. Furthermore, as in Section~\ref{sec:onepoint}, we will assume that~$m$ is the maximal value for which~$D^m_2\L\neq0$.

Furthermore, we let~$U=\{x_1\}$ and~$V=\{x_2\}$ so that~$U\cup V=\tilde{M}$. There are two methods of computation at play here. First, in case that we can find an $\L$-separating cover, we may apply Corollary~\ref{cormayer-viet} to compute the cohomology. Otherwise, a direct computation can be applied.  

We begin by applying the Mayer-Vietoris sequence in Corollary~\ref{cormayer-viet}.
In order to ensure that~$U$ and~$V$ form an $\L$-separating cover, we assume
for simplicity that
\[ a_1 = \cdots = a_m = 0 \:. \]
This implies 
\begin{equation*}
    B_\L(U)\cap B_\L(V)=U\cap V=\varnothing,
\end{equation*}
which by Lemma~\ref{lemmasufficient} provides that~$U$ and~$V$
are $\L$-separating the differential forms.

We now apply the Mayer-Vietoris long exact sequence in combination with the calculations in Section~\ref{sec:onepoint} to obtain
\begin{equation*}
    \begin{tikzcd}[column sep = small]
        0 \arrow[r]
        & H^0(\tilde{M}) \arrow[r]
        & H^0(U\subseteq\tilde{M})\oplus H^0(V\subseteq\tilde{M}) \arrow[r]
        \arrow[d, phantom, ""{coordinate, name=Z}]
        & H^0(\varnothing)=0 \arrow[dll,
            "\partial",
            rounded corners,
            to path={ -- ([xshift=2ex]\tikztostart.east)
            |- (Z) [near end]\tikztonodes
            -| ([xshift=-2ex]\tikztotarget.west)
            -- (\tikztotarget)}] \\
        &
        H^{1}(\tilde{M}) \arrow[r]
        & H^{1}(U\subseteq\tilde{M})\oplus H^{1}(V\subseteq \tilde{M}) \arrow[r]
        & H^1(\varnothing)=0
    \end{tikzcd}
\end{equation*}
We conclude that
\begin{equation*}
    H^p(\tilde{M})=H^p(U\subseteq\tilde{M})\oplus H^p(V\subseteq\tilde{M}) \:.
\end{equation*}

In the case that some of the off-diagonal matrix elements in~\eqref{eq:two-point-largangian},
we cannot use the Mayer-Vietoris sequence.
But the cohomology groups can still be computed directly as follows. Using~\eqref{omegaequi}, we obtain the following differential complex
\[ 
    \begin{tikzcd}
        0 \arrow[r] &
        \Omega^{0,0,\ell}(\tilde{M}) \simeq \displaystyle \frac{\R^2}{\simeq^\ell_{\tilde{M}}} \arrow[r, "d"] &
        \Omega^{1,1,\ell-1}(\tilde{M}) \simeq \displaystyle \frac{\R^4}{\simeq^{\ell-1}_{\tilde{M}}} \arrow[r] &
        0 \:.
    \end{tikzcd} \]

Having written the Lagrangian and its derivatives in matrix form, we can give a linear algebraic description of~$\simeq^a_{\tilde{M}}$ for all~$a\in\N$. In particular we may write
\[ 
    \text{Ker} \big( \simeq^{\ell}_{\tilde{M}} \big) =\bigcap_{p=0}^\ell\text{Ker}(D^p\L)\:,\qquad
    \text{Ker} \big( \simeq^{\ell-1}_{\tilde{M}} \big) =\bigcap_{p=0}^{\ell-1} \text{Ker}(D^p\L\boxplus D^{p+1}\L) \:, \]
where~$D^j\L\boxplus D^{j+1}\L$ is the $(2\times4)$-matrix with~$D^j\L$ and~$D^{j+1}\L$ as block matrices. This linear algebraic formulation is analogous to~\eqref{eq:point-vanish-1-form-matrix}. Now these kernels can be worked out in a case-by-case analysis
depending on the values of the matrix entries in~\eqref{eq:two-point-largangian}.
We here omit the details, which can be found in~\cite{vandertop}.

\subsection{The Discrete Line} \label{sec:discline}
In this section we calculate the cohomology groups of the discrete line embedded in~$\R$. Unlike in Sections~\ref{sec:onepoint} and~\ref{sec:Two-points}, the differential complex will not be computed. Instead, we focus directly on calculating the cohomology groups associated to this complex.

We consider the space~$\tilde{M} = \Z \subset \F = \R$ and again
choose~$k=1$ as well as the osculations
\[ M_j = \F \qquad \text{for all~$j \in \Z$}\:. \]
Furthermore, we set~$\ell=1$ and choose a Lagrangian~$\L \in C^\infty(\F \times \F, \R^+_0)$ with the properties
\begin{equation*}
    \L(i,j)=
    \begin{cases}
        1 & \text{if~$j=i$} \\
        \alpha & \text{if~$j=i\pm1$} \\
        0 & \text{otherwise}
    \end{cases}
\qquad \text{and} \qquad
    D_2\L(i,j)=
    \begin{cases}
        \beta & \text{if~$j=i\pm1$} \\
        0 & \text{otherwise} \:,
    \end{cases}
\end{equation*}
where~$\alpha, \beta$ are real parameter. Moreover, we assume that
all higher order derivatives of~$\L$ vanish on~$\Z \times \Z$.
For simplicity, we restrict attention to the differential complex
\[ 
    \begin{tikzcd}
        0\arrow[r] &
        \Omega^{0,0,1}(\tilde{M}) \arrow[r,"d"] &
        \Omega^{1,1,0}(\tilde{M}) \arrow[r] & 0 \:.
    \end{tikzcd} \]

Using~\eqref{H0}, we find that
\begin{equation*}
    H^0(\Z)=\frac{\text{Ker}(D\L)}{\text{Ker}(\L)\cap \text{Ker}(D\L)},
\end{equation*}
where~$\phi\in \text{Ker}(\L)\cap \text{Ker}(D\L)$ if and only if for all~$i\in\Z$
the following conditions hold,
\begin{align}
\phi(i)+\alpha\: \big(\phi(i-1)+\phi(i+1) \big) &=0 \label{eq:0-vanish1}\\
\beta\: \big( \phi(i-1)+\phi(i+1) \big) &=0 \:. \label{eq:0-vanish2}
\end{align}
Additionally, any~$(1,1,0)$-form, $\phi^0e^1+\phi^1e_1e^1$, is trivial if and only if~$\phi^0e^1+\phi^1e_1e^1\in \text{Ker}(\L\boxplus D\L)$. In algebraic terms this coincides can be written as
\begin{equation}
\label{eq:1-vanish}
    \phi^0(i)+\alpha \big(\phi^0(i-1)+\phi^0(i+1)\big)+\beta \big( \phi^1(i-1)+\phi^1(i+1) \big)=0 \:,
\end{equation}
to be satisfied for all~$i\in\Z$.

Analyzing the conditions~\eqref{eq:0-vanish1}--{eq:1-vanish} systematically
gives rise to the following cohomology.
\begin{Prp} Depending on the values of the parameters~$\alpha$ and~$\beta$,
the cohomology groups are given by
\begin{align}
H^p(\Z) &\cong \begin{cases}
\R^2 & \text{if~$p=0$} \\ 0& \text{otherwise}.\end{cases} && \text{if~$\beta \neq 0$}
\label{Hp1} \\
H^0(\Z) &=\Omega^{0,0,1}(\Z) \text{ and }
H^1(\Z) =\Omega^{1,1,0}(\Z)
 && \text{if~$\beta=0$\:.} \label{eq:beta=0-cohom}
\end{align}
\end{Prp}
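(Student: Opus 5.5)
The plan is to compute everything explicitly from the linear conditions \eqref{eq:0-vanish1}--\eqref{eq:1-vanish}, using the explicit form of $d$ from \eqref{ddef}. Since $k=1$ and the only frame vector is $e_1$, a $0$-form is represented by a function $\phi:\Z\to\R$. Its exterior derivative is $d\phi=\phi\,e_1\,e^1$, i.e.\ the $(1,1,0)$-form with components $(\phi^0,\phi^1)=(0,\phi)$. Two ingredients must be determined in each case. The first is the kernel of $d$ modulo the relation $\simeq^1_{\Z}$, which gives $H^0$ by \eqref{H0}. The second is the image of $d$ inside $\Omega^{1,1,0}(\Z)$, which gives $H^1$ by \eqref{Hk}. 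No support or decay conditions are imposed, so arbitrary functions on $\Z$ are admissible representatives. This is what makes the recursions below solvable.

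\emph{Case $\beta\neq0$, degree zero.} By \eqref{eq:1-vanish} applied to $(0,\phi)$, $d\phi\simeq^0_{\Z}0$ holds if and only if $\beta(\phi(i-1)+\phi(i+1))=0$ for all $i$. Since $\beta\neq0$, this means $\phi(i+1)=-\phi(i-1)$ for all $i$. Hence the space of closed representatives is two-dimensional, parametrized freely by $(\phi(0),\phi(1))$. The relation $\simeq^1_{\Z}$ is trivial here: \eqref{eq:0-vanish2} forces $\phi(i-1)+\phi(i+1)=0$, and then \eqref{eq:0-vanish1} forces $\phi(i)=0$ for all $i$. Thus $H^0(\Z)\cong\R^2$. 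One should also check that the relation used to form $\Omega^{0,0,1}$ does not identify distinct closed functions. This follows because the triviality of the relation holds on all functions, not only on closed ones.

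\emph{Case $\beta\neq0$, degree one.} I will show that $d$ is surjective onto $\Omega^{1,1,0}(\Z)$, so that $H^1=0$. Given a representative $(\phi^0,\phi^1)$, I look for $\psi$ such that $(\phi^0,\phi^1-\psi)\simeq^0_{\Z}0$. By \eqref{eq:1-vanish}, this condition is the inhomogeneous recursion
\[ \psi(i-1)+\psi(i+1)=g(i) \qquad \text{for all~$i\in\Z$}\:, \]
where
\[ g(i):=\phi^1(i-1)+\phi^1(i+1)+\beta^{-1}\big(\phi^0(i)+\alpha(\phi^0(i-1)+\phi^0(i+1))\big)\:. \]
I would prescribe $\psi(0)$ and $\psi(1)$ arbitrarily and solve forward by $\psi(i+1)=g(i)-\psi(i-1)$ and backward by $\psi(i-1)=g(i)-\psi(i+1)$. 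Then $(\phi^0,\phi^1)\simeq^0_{\Z} d\psi$, which proves surjectivity. The higher cohomologies vanish trivially because the complex has length two, and this establishes \eqref{Hp1}.

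\emph{Case $\beta=0$.} Condition \eqref{eq:1-vanish} then only involves $\phi^0$. The form $d\phi=(0,\phi)$ satisfies it identically, so $d\phi\simeq^0_{\Z}0$ for every $\phi$, i.e.\ $d=0$. Consequently $\ker d=\Omega^{0,0,1}(\Z)$ and $\operatorname{im} d=0$, which gives \eqref{eq:beta=0-cohom}. The result is stated without computing these spaces further, so nothing more is needed here.

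The main point requiring care is bookkeeping rather than analysis. One must correctly identify that closedness is tested with the order-zero relation $\simeq^0_{\Z}$ on $\Omega^{1,1,0}$, while the quotient in degree zero uses $\simeq^1_{\Z}$, which combines both conditions \eqref{eq:0-vanish1} and \eqref{eq:0-vanish2}. One must also use the sign and placement of the $\phi^1$-terms in \eqref{eq:1-vanish} consistently with \eqref{ddef}. The surjectivity recursion is the step where the absence of support conditions is genuinely used. With compact support one would instead get a nontrivial $H^1$, consistent with the remarks in the introduction.
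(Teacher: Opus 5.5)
Your proposal is correct and follows essentially the same route as the paper. In both, the triviality of $\simeq^1_{\Z}$ together with the two-dimensional solution space of $\phi(i+1)=-\phi(i-1)$ gives $H^0\cong\R^2$; surjectivity of $d$ comes from solving the recursion $\psi(i-1)+\psi(i+1)=g(i)$ from prescribed initial values; and for $\beta=0$ one has $d=0$. The only cosmetic difference is that the paper first splits off the component $\phi^1 e_1 e^1 = d\phi^1$ and solves the recursion only for the $\phi^0 e^1$ part, whereas you absorb $\phi^1$ directly into the inhomogeneity $g$.
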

\Proof
We begin with the case~$\beta\neq0$. Multiplying~\eqref{eq:0-vanish2}
by~$\alpha/\beta$ and subtracting~\eqref{eq:0-vanish1}, one finds
that the equations~\eqref{eq:0-vanish1} and~~\eqref{eq:0-vanish2}
only admit trivial solutions, and thus~$\text{Ker}(\L)\cap \text{Ker}(D\L)=0$. Furthermore, as~$\beta\neq0$, we conclude that~$\phi\in \text{Ker}(D\L)$ if and only if
\[ 
    \phi(i-1)=-\phi(i+1) \qquad \text{for~$i \in Z$}\:. \]
These equations have a two-dimensional solution space, which can be
parametrized by~$\phi(0)$ and~$\phi(1)$. We thus obtain
an isomorphism~$H^0(\Z)\cong\R^2$ given by
\[ 
    \phi\mapsto
    \begin{pmatrix}
        \phi(0)\\
        \phi(1)
    \end{pmatrix} . \]

Next we show that~$d$ is a surjective operator. By definition of~$d$, its image coincides with the $(1,0,0)$-forms. As any $(1,1,0)$-form is a sum of $(1,0,0)$ and $(0,1,0)$-forms, it suffices to show that all $(1,0,0)$-forms are in the image of~$d$. 
To this end, let~$\phi^0e^1$ be a $(1,0,0)$-form. We will construct a form~$\psi$ such that~$d(\psi)=\phi^0e_1$. Define~$\psi(-1),\psi(0),\psi(1)$ and~$\psi(2)$ by any values making the following relations hold,
\begin{align*}
    \phi^0(0)+\alpha\:\big( \phi^0(-1)+\phi^0(1) \big)
    +\beta\: \big(\psi(1)+\psi(-1)\big ) &=0 \\
    \phi^0(1)+\alpha\: \big(\phi^0(0)+\phi^0(2)\big)+\beta\:\big(\psi(2)+\psi(0) \big) &=0.
\end{align*}
Comparing this with~\eqref{eq:1-vanish}, we see that~$d(\psi)(i)=\phi^0e_1(i)$ for~$i=0,1$.

The remaining task is to extend~$\psi$ to be a global $(0,0,1)$-form in such
a way that the relation~$d(\psi)(i)=\phi^0e_1(i)$ holds for all~$i \in \Z$.
To this end, we construct~$\psi(l)$ for~$l>1$ inductively by the relation
\[ 
    \psi(l)=-\psi(l-2)-\frac{1}{\beta}\:\phi^0(l-1)-\frac{\alpha}{\beta}\: \big(\phi^0(l)+\phi^0(l-2)\big). \]
Similarly, we construct~$\phi(l)$ for~$l<-1$ inductively by
\[ 
    \psi(l)=-\psi(l+2)-\frac{1}{\beta}\:\phi^0(l+1)-\frac{\alpha}{\beta}\:\big(\phi^0(l+2)+\phi^0(l)\big).
\]
In this way, using the equivalence relation~\eqref{eq:1-vanish} and by construction of~$\psi$, we obtain
\begin{equation*}
    d(-\psi)=-\psi \:e_1e^1\simeq^0_\Z\phi^0e^1 \:.
\end{equation*}
Hence~$d$ is indeed surjective.

The surjectivity of~$d$ implies that~$H^1(\Z)=0$, proving~\eqref{Hp1}.

Next we assume~$\beta=0$. In this case, the operator~$D\L$ vanishes. However, from~\eqref{ddef} that~$d$ itself is the zero operator. Using the~\eqref{H0} and~\eqref{Hk}, we obtain~\eqref{eq:beta=0-cohom}.
\QED

So far, we considered global sections without any decay assumptions
at infinity. Imposing such conditions may have an influence on the
cohomology, as we now illustrate by considering compactly supported
sections.
\begin{Prp} \label{prpdisccmp}
Restricting attention to compactly supported differential forms,
the cohomology groups are given by
\begin{align}
H^0_c(\Z) &=
    \begin{cases}
        0 & \text{if~$\beta\neq0$} \\
       C_c(\Z,\R) & \text{if~$\beta=0$}
    \end{cases} \label{H0c} \\
    H^1_c(\Z) &\cong \begin{cases}
        \Omega^{1,1,0}(\Z) & \text{if~$\beta\neq0$} \\
        C_c(\Z,\R) & \text{if~$\beta=0$\:.}
    \end{cases} \label{eq:1st-Cohom-compact}
\end{align}
\end{Prp}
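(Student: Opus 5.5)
The plan is to work throughout in the linear-algebraic picture of Section~\ref{sec:discline}, where every compactly supported $(0,0,1)$-form and $(1,1,0)$-form is represented by finitely supported sequences on~$\Z$. The equivalence relations are then read off from the identities \eqref{eq:0-vanish1}, \eqref{eq:0-vanish2} and \eqref{eq:1-vanish}. I would treat the cases $\beta=0$ and $\beta\neq0$ separately, and in each case compute first $H^0_c$ and then $H^1_c$. The complex is the one displayed before the previous proposition, with every space replaced by the subspace of classes admitting a compactly supported representative.

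\emph{The case $\beta=0$.} All first derivatives of~$\L$ vanish on $\Z\times\Z$. Hence, by \eqref{ddef}, the exterior derivative $d$ is the zero map, exactly as in the proof of \eqref{eq:beta=0-cohom}. It follows that $H^0_c=\Omega^{0,0,1}_c(\Z)$ and $H^1_c=\Omega^{1,1,0}_c(\Z)$. It remains to identify both spaces with $C_c(\Z,\R)$.
\bitem
\item[{\rm{(a)}}] For $0$-forms, the only remaining condition is \eqref{eq:0-vanish1}, namely $\phi(i)+\alpha(\phi(i-1)+\phi(i+1))=0$. For finitely supported~$\phi$, I would look at the largest index in the support: it forces $\phi(i_{\max})=0$, so the relation is trivial on~$C_c$. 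Thus $\Omega^{0,0,1}_c\cong C_c(\Z,\R)$.
\item[{\rm{(b)}}] For $1$-forms, \eqref{eq:1-vanish} with $\beta=0$ involves only~$\phi^0$. The same argument shows that the class is determined by the compactly supported function~$\phi^0$, while the $e_1e^1$-component is invisible. Hence $\Omega^{1,1,0}_c\cong C_c(\Z,\R)$.
\eitem
This proves the $\beta=0$ lines of \eqref{H0c} and \eqref{eq:1st-Cohom-compact}.

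\emph{The case $\beta\neq0$.} By the proof of \eqref{Hp1}, the kernel of~$d$ on $0$-forms consists of the $\phi$ satisfying $\phi(i-1)=-\phi(i+1)$ for all~$i$. Such a sequence is determined by $\phi(0)$ and $\phi(1)$ and is two-periodic up to sign, so $|\phi|$ is constant on the even and on the odd integers. Compact support therefore forces $\phi(0)=\phi(1)=0$, and since the equivalence relation is trivial, $H^0_c=0$. For $H^1_c=\Omega^{1,1,0}_c/d\,\Omega^{0,0,1}_c$, I would use the evaluation map
\[ E(\phi^0,\phi^1)(i)=\phi^0(i)+\alpha\big(\phi^0(i-1)+\phi^0(i+1)\big)+\beta\big(\phi^1(i-1)+\phi^1(i+1)\big). \]
By \eqref{eq:1-vanish}, $E$ identifies $\Omega^{1,1,0}_c$ with a subspace of $C_c(\Z,\R)$. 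By the proof of \eqref{Hp1}, the image of~$d$ corresponds to $\beta\,S\,C_c$, where $(S\psi)(i)=\psi(i-1)+\psi(i+1)$. The heart of the argument is then to describe the quotient $E(\Omega^{1,1,0}_c)/S\,C_c$. I would do this with the two linear functionals $\lambda_{\rm even}(g)=\sum_k(-1)^kg(2k)$ and $\lambda_{\rm odd}(g)=\sum_k(-1)^kg(2k+1)$. These vanish exactly on $S\,C_c$: one solves $S\psi=g$ inductively from the left, and the resulting $\psi$ has compact support if and only if $\lambda_{\rm even}(g)=\lambda_{\rm odd}(g)=0$. The final step is to compare the resulting quotient with $\Omega^{1,1,0}(\Z)$ to obtain the asserted isomorphism.

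\emph{Main obstacle.} I expect the last comparison to be the real difficulty. Indeed, the functional computation suggests that $S\,C_c$ has codimension at most two in $C_c$, which would make $H^1_c$ at most two-dimensional. This would not match an isomorphism with the (infinite-dimensional) space $\Omega^{1,1,0}(\Z)$. I would therefore first re-check carefully whether the construction in the proof of \eqref{Hp1} really yields $\operatorname{Im} d\cong S\,C_c$ on compactly supported forms, and whether compactness is imposed on representatives or on the mollified classes. If the codimension-two computation stands, the honest outcome of this plan is a correction of the $\beta\neq0$ line of \eqref{eq:1st-Cohom-compact} rather than a proof of it as stated.
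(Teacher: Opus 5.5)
Your suspicion is right, and the codimension-two computation stands. For $\beta\neq0$ one gets $H^1_c(\Z)\cong\R^2$, so the $\beta\neq0$ line of \eqref{eq:1st-Cohom-compact} is false, and no argument can prove it as stated.

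The rest of your plan (the two $\beta=0$ lines, and $H^0_c=0$ for $\beta\neq0$) is correct and is essentially the paper's argument: an extremal point of the support kills the relevant relations on $C_c(\Z,\R)$, and $d=0$ when $\beta=0$. In (a), treat the case $\alpha=0$ by applying \eqref{eq:0-vanish1} at $i=i_{\max}$ instead of $i_{\max}+1$.

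The paper's proof actually lists the two cases of $H^1_c$ in the opposite order to the statement. This makes no difference, since $\Omega^{1,1,0}_c(\Z)\cong C_c(\Z,\R)$ for every $\beta$; both versions claim an infinite-dimensional $H^1_c$ when $\beta\neq0$.

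The paper's error sits exactly where you looked. It correctly identifies $\operatorname{Im}(d)$ with the classes $[\psi\,e_1e^1]$, $\psi\in C_c(\Z,\R)$. It then computes the relation on the $e^1$-component from \eqref{eq:alpha-compact-vanish}, i.e.\ with the $e_1e^1$-component set to zero. In the quotient by $\operatorname{Im}(d)$ that component is free, so, writing $I$ for the identity, the induced relation is $(I+\alpha S)\phi^0\in S\,C_c$, equivalently $\phi^0\in S\,C_c$. For instance, $\phi^0=\mathds{1}_0+\mathds{1}_2=S\mathds{1}_1$ is nonzero in the paper's $C_c/\!\simeq^0_\Z$. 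Yet by \eqref{eq:1-vanish}, $\phi^0e^1\simeq^0_\Z d\psi$ with the compactly supported $\psi=\beta^{-1}(\mathds{1}_1+\alpha\mathds{1}_0+\alpha\mathds{1}_2)$.

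Two small points are missing to turn your plan into a complete corrected statement.

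\emph{First}, $E$ maps onto all of $C_c(\Z,\R)$, not just into a subspace: $E(g,-\tfrac{\alpha}{\beta}g)=g$. Flip the sign of the second argument if you use the sign $(-1)^m$ from \eqref{weval}. Hence $H^1_c\cong C_c/S\,C_c$.

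\emph{Second}, $\lambda_{\mathrm{even}}$ and $\lambda_{\mathrm{odd}}$ are linearly independent (test them on $\mathds{1}_0$ and $\mathds{1}_1$), so the codimension is exactly two, not at most two. Conceptually, $S$ is multiplication by $z+z^{-1}=z^{-1}(z^2+1)$ on $\R[z,z^{-1}]$. Hence $C_c/S\,C_c\cong\R[z]/(z^2+1)\cong\C$, and your functionals are the real and imaginary parts of evaluation at $z=i$.

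Since $E(\phi^0,\phi^1)\equiv\phi^0$ modulo $S\,C_c$, the isomorphism is explicitly
\[ H^1_c(\Z)\rightarrow\R^2\:,\qquad \big[\phi^0e^1+\phi^1e_1e^1\big]\mapsto\Big(\sum_k(-1)^k\phi^0(2k)\,,\;\sum_k(-1)^k\phi^0(2k+1)\Big)\:, \]
with basis $[\mathds{1}_0e^1]$, $[\mathds{1}_1e^1]$. This is the compactly supported counterpart of $H^0(\Z)\cong\R^2$, $H^1(\Z)=0$ in \eqref{Hp1}. Because $S$ only couples sites of equal parity, the even and odd sublattices behave like two copies of a line.

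Your two re-checks also come out favorably:
\begin{itemize}
\item Under $E$, the image of $d$ on compactly supported $0$-forms is exactly $\beta\,S\,C_c$.
\item It does not matter whether compactness is imposed on representatives or on mollified evaluations. For $0$-forms with $\beta\neq0$, $\psi=(I+\alpha S)\psi-\alpha S\psi$ is compactly supported whenever both evaluations are. For $1$-forms, the formula $E(g,-\tfrac{\alpha}{\beta}g)=g$ shows that every class with compactly supported evaluation has a compactly supported representative.
\end{itemize}
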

\Proof We begin by calculating the $(0,0,1)$-forms. To this end, we note that
a compactly supported form~$\omega\neq0$ can be represented as a finite sum
\begin{equation*}
    \omega=\sum_{i=1}^N\alpha^i \:\mathds{1}_{x_i} \:,
\end{equation*}
for~$x_i\in\Z$ and~$\alpha^i\in\R$ (and~$\mathds{1}_{x_i}$ is the
function which is one at~$x_i$ and zero otherwise).
Assume that~$\omega \in \Omega^{0,0,1}(\tilde{M})$ is non-zero
but~$\omega \simeq^1_\Z$. Then we may choose~$j \in \Z$
such that~$\omega(j)\neq0$ but~$\omega(j+1)=0=\omega(j+2)$.
Next, according to the relations~\eqref{eq:0-vanish1} and~\eqref{eq:0-vanish2},
\begin{align}
0 &= \omega(j+1)+\alpha\: \big(\omega(j)+\omega(j+2) \big)=\alpha\cdot\omega(j)\label{eq:alpha-compact-vanish}\\
0 &= \beta\: \big( \omega(j)+\omega(j+2) \big)=\beta\cdot\omega(j) \:.\label{eq:beta-compact-vanish} 
\end{align}
In the case~$\alpha\neq0$ or~$\beta\neq0$, it follows that~$\omega(j)=0$, in contradiction to our choice~$j$. As such, if~$\alpha\neq0$ or~$\beta\neq0$, then~$\omega\simeq^1_\Z0$ if and only if~$\omega=0$.
On the other hand, in the case~$\alpha=0$ and~$\beta=0$, \eqref{eq:0-vanish1} again 
shows that~$\omega(j)$ vanishes, which is again a contradiction.
We conclude that~$\omega\simeq^1_\Z0$ if and only if~$\omega=0$.

The zeroth cohomology group is again calculated by regarding~$\text{Ker}(d)$. If~$\beta=0$, again we find~$d=D\L=0$. In the remaining case~$\beta \neq 0$.
we let~$\omega \in \Omega^{0,0,1}(\tilde{M})$ be non-zero.
We again choose~$j$ such that~$\omega(j) \neq 0$. Assume that~$d \omega =0$.
Then relation~\eqref{eq:1-vanish} yields
\begin{equation*}
    \beta(\omega(j)+\omega(j+2))=\beta\cdot\omega(j)=0.
\end{equation*}
Note that this coincides with~\eqref{eq:beta-compact-vanish}. However, 
by assumption~$\beta\neq0$ and~$\omega(j)\neq0$, a contradiction.
Therefore,  $d\omega$ must be non-zero, proving that~$d$ is an injective operator.
This concludes the proof of~\eqref{H0c}.

We next compute the first cohomology groups. Using~\eqref{Hk}, we may write
\begin{equation}\label{eq:1st-cohom-compact}
    H^1(\Z)
    =
    \frac{\text{Ker}(d)}{\text{Im}(d)}
    =
    \frac{\Omega^{1,1,0}(\Z)}{\text{Im}(d)}.
\end{equation}
In the case~$\beta=0$, we already showed that~$d$ vanishes,
showing that~$H^1(\Z)=\Omega^{1,1,0}(\Z)$.
In the remaining case~$\beta \neq 0$, we know that~$d$ is an injective operator.
Hence~\eqref{eq:1st-cohom-compact} reduces to
\begin{equation*}
    \frac{\Omega^{1,1,0}(\Z)}{\text{Im}(d)}
    =
    \frac{\displaystyle \frac{C_c(\Z,\R)\;e^1\;\oplus\;C_c(\Z,\R)\;e_1e^1}{\simeq^0_\Z}}{\displaystyle \frac{C_c(\Z,\R)\;e_1e^1}{\simeq^0_\Z}}\cong\frac{C_c(\Z,\R)}{\simeq^0_\Z}.
\end{equation*}
It remains to determine the equivalence relation~$\simeq^0_\Z$.
The condition~$\omega \simeq^0_\Z 0$ implies that~\eqref{eq:alpha-compact-vanish}
holds for all~$j$. Arguing similar as for the equivalence relation~$\simeq^1_\Z$ above,
it follows that~$\omega=0$. Hence the equivalence relation~$\simeq^1_\Z$ is trivial. This concludes the proof.
\QED

\subsection{The Discrete Circle}\label{sec:disc-circ}
In this section, the cohomology groups of a discrete circle are computed. Here our space consists of four distinct points. As the discrete circle is to model the geometry of a circle, we again choose~$k=1$. Thus we choose the osculating vacuum one dimensional, reminiscent of the standard tangent bundle of a circle. We will denote the parallel frame of the osculating vacuum by~$e_1$. 

In the same vein as~\eqref{eq:point-vanish-1-form} we will define a Lagrangian on~$\tilde{M}$ in a matrix form. For~$a,c\in\R$ and~$b\in\R \setminus \{0\}$, we define the Lagrangian and its derivatives by
\begin{equation}\label{eq:lagrangian-circle}
    \L=
    \begin{pmatrix}
        1 & c & 0 & c \\
        c & 1 & c & 0 \\
        0 & c & 1 & c \\
        c & 0 & c & 1
    \end{pmatrix},\;\;
    D_2\L=
    \begin{pmatrix}
        a & b & 0 & b \\
        b & a & b & 0 \\
        0 & b & a & b \\
        b & 0 & b & a
    \end{pmatrix}.
\end{equation}
Once again we assume all higher derivatives of~$\L$ vanish. 
We note that~\eqref{eq:lagrangian-circle} takes the form of a Toeplitz matrix encoding the geometry of the discrete circle.

Our first object of study is~$\Omega^{0,0,1}(\tilde{M})$. Once again we turn to~\eqref{omegaequi} to obtain
\[ \Omega^{0,0,1}(\tilde{M})=\frac{\R^4}{\simeq^1_{\tilde{M}}}=\frac{\R^4}{\text{Ker}(\L)\cap\text{Ker}(D\L)} \:. \]
As in Section~\ref{sec:discline} we turn to the matrix form of~$\L$ to determine the equivalence relation~$\simeq^1,_{\tilde{M}}$. In particular, we analyze the characteristic polynomials
\begin{align}
    \det(xI-\L)&=(x-1)^2(x-(1-2c))(x-(1+2c)),\label{eq:char-pol-L-S}\\
    \det(xI-D_2\L)&=(x-a)^2(x-(a-2b))(x-(a+2b)) \:.\label{eq:char-pol-DL-s}
\end{align}

This provides several cases to be considered. 
Firstly, if~$c\neq\pm\frac{1}{2}$ or~$a\neq0,\pm2b$, all eigenvalues are non-zero. As such, $\simeq^1_{\tilde{M}}$ is trivial.

This leaves us two different cases to be considered:
In the case~$c=\pm\frac{1}{2}$ and~$a=\pm2b$, we see from~\eqref{eq:lagrangian-circle} that~$D\L=b\cdot\L$. Therefore, $\text{Ker}(\L)=\text{Ker}(D\L)$, implying that
\begin{equation}\label{eq:c-whole}
    \text{Ker}(\L)\cap\text{Ker}(D\L)=\text{Ker}(\L)=(\pm1,1,\pm1,1)^T\cdot\R\cong\R.
\end{equation}
Finally, in the case~$c=\pm\frac{1}{2}$ and~$a=\mp2b$, a direct computation provides
\begin{equation*}
    \text{Ker}(\L)\cap\text{Ker}(D\L)=(\pm1,1,\pm1,1)^T\cdot\R\cap(\mp1,1,\mp1,1)^T\cdot\R=0.
\end{equation*}
Having computed~$\simeq^1_{\tilde{M}}$, the definition~\eqref{omegaequi} provides the~$(0,0,1)-$forms. The results of these computations are summarized in the first column of Table~\ref{tab:1-forms-square}.
\begin{table}[tb]
    \begin{tabular}{c|c||c|c|c|c}
    $c$ & $a$ & $\Omega^{0,0,1}(\tilde{M})$ & $\Omega^{1,1,0}(\tilde{M})$ & $H^0(\tilde{M})$ & $H^1(\tilde{M})$ \\
    \hline
    $\neq\pm 1/2$ & $\neq0,\pm2b$ & $\R^4$ & $\R^4$ & $0$ & $0$\\
    $\pm 1/2$ & $0$ & $\R^4$ & $\R^4$ & $\R^2$ & $\R^2$\\
    $\pm 1/2$ & $\pm2b$ & $\R^3$ & $\R^3$ & $0$ & $0$ \\
    $\pm 1/2$ & $\mp2b$ & $\R^4$ & $\R^4$ & $\R$ & $\R$
    \end{tabular}
\vspace*{0.7em}
    \caption{The differential forms and cohomology of the discrete circle.}
    \label{tab:1-forms-square}
\end{table}

We continue by calculating the~$(1,1,0)$-forms. Just as in Section~\ref{sec:discline}, we write~$\phi\in\Omega^{1,1,0}(\tilde{M})$ in a parallel frame as
\begin{equation*}
    \phi=\phi^0e^1+\phi^1e_1e^1,
\end{equation*}
where~$\phi^i\in C(\tilde{M},\R)$. 
As in Section~\ref{sec:discline}, we obtain using the matrix representations in~\eqref{eq:lagrangian-circle} that~$\phi\simeq^0_{\tilde{M}}0$ if and only if
\begin{equation*}
    0=\int_M\L(x,y)\: \big[\phi^0+\phi^1\:e_1 \big](y)\:dy=(\L\boxplus D\L) \phi.
\end{equation*}
Thus, we turn to to calculating~$\text{Ker}(\L\boxplus D\L)$. Again this devolves into several cases. 

Firstly, we assume that~$c\neq\pm\frac{1}{2}$ or~$a\neq0,\pm2b$. Note that for these values~$\L$ and~$D\L$ is a surjective operator as no non-zero eigenvalues occur in~\eqref{eq:char-pol-L-S} and~\eqref{eq:char-pol-DL-s}. As such, we conclude that the matrix~$\L\boxplus D\L$ has rank 4. Per the rank nullity theorem, this provides that 
\begin{equation*}
    \text{dim} \big( \text{Ker}(\L\boxplus D\L) \big)=\text{dim}(\R^8)-\text{dim} \big(\text{Im}(\L\boxplus D\L) \big)=8-4=4 \:.
\end{equation*}

The remaining cases are treated using a direct row reduction to determine the rank, and thus the nullity, of~$\L\boxplus D\L$. This leads to the following
\begin{equation*}
    \text{dim} \big( \text{Ker}(\L\boxplus D\L) \big)=
    \begin{cases}
        5 & \displaystyle \text{if } c=\pm\frac{1}{2}\;\wedge\;a=\pm2b \\[0.7em]
        4 & \displaystyle \text{if } c=\pm\frac{1}{2}\;\wedge\;a=\mp2b \\[0.7em]
        4 & \displaystyle \text{if } c=\pm\frac{1}{2}\;\wedge\;a=0 \:.
    \end{cases}
\end{equation*}
Definition~\ref{omegaequi} now provides~$\Omega^{0,0,1}(\tilde{M})$ up to isomorphism, as summarized in the second column of Table~\ref{tab:1-forms-square}. 

Finally, we compute the cohomology groups. Given~\eqref{eq:c-whole} and~\eqref{H0}, we conclude that
\begin{equation*}
    H^0(\tilde{M})=\frac{\text{Ker}(D\L)}{\text{Ker}(\L)\cap \text{Ker}(D\L)}\cong
    \begin{cases}
        \R^2 & \displaystyle \text{if } c=\frac{1}{2}\wedge a=0 \\[0.7em]
        \R & \displaystyle \text{if } c=\pm\frac{1}{2}\wedge a=\mp 2b \\[0.7em]
        0 & \text{otherwise} \:.
    \end{cases}
\end{equation*}
Given these cohomology groups, we can use the following exact sequence of vector spaces to compute~$H^1(\tilde{M})$,
\begin{equation}\label{eq:exact-seq-H1}
    \begin{tikzcd}
        0 \arrow[r] &
        H^0(\tilde{M}) \arrow[r, "\iota"] &
        \Omega^{0,0,1}(\tilde{M}) \arrow[r,"d"] &
        \Omega^{1,1,0}(\tilde{M}) \arrow[r,"\pi"] & 
        H^1(\tilde{M}) \arrow[r] &
        0
    \end{tikzcd}
\end{equation}
As an example, for~$c=\frac{1}{2}$ and~$a=0$ we find the sequence~\eqref{eq:exact-seq-H1} becomes
\[ 
    \begin{tikzcd}
        0 \arrow[r] &
        \R^2 \arrow[r, "\iota"] &
        \R^4 \arrow[r,"d"] &
        \R^4 \arrow[r,"\pi"] & 
        H^1(\tilde{M}) \arrow[r] & 0 \:,
    \end{tikzcd} \]
showing that~$H^1(\tilde{M})\cong\R^2$. The exact same computation for all different values of~$c$ and~$a$ is shown in the fourth column of Table~\ref{tab:1-forms-square}.

\subsection{The $k$-Dimensional Lattice} \label{seclattice}
In this section the cohomology of a $k$-dimensional lattice
will be computed using several methods. To this end, we choose~$\tilde{M} = (\varepsilon \Z)^k \subset \F = \R^k$ as a square lattice of
spacing~$\varepsilon>0$. Moreover, we choose~$\tilde{\rho}$ as the counting
measure on~$\tilde{M}$, i.e.\
\[ \int_{\tilde{M}} f(x)\: d\tilde{\rho}(x) = \sum_{x \in \tilde{M}} f(x) \:. \]
We assume that the Lagrangian~$\L : \F \times \F \rightarrow \R^+_0$
is translation invariant, i.e.\
\beq \label{LL}
\L(x,y) = L(x-y) \:.
\eeq
We again choose the osculations as
\[ M_x = \F \qquad \text{for all~$x \in \tilde{M}$}\:. \]
Two main methods of computation will be applied, those being the Künneth formula for compactly supported differential forms, and a direct calculation
using the discrete Fourier transform.

\subsubsection{Products of Discrete Lines}\label{subsec:lines-Kunneth}
Clearly, the $k$-dimensional lattice can be regarded as
the $k$-fold product of the discrete line as considered in
Section~\ref{sec:discline} (the lattice spacing~$\varepsilon$ is irrelevant
and can be set to one). The cohomology of the resulting lattice system
can be computed using the Künneth formula of Section~\ref{sec:Kunneth}.

More precisely, we want to apply Theorem~\ref{lemma:kunneth-injection}.
Therefore, we must restrict attention to compactly supported differential forms
as considered in Proposition~\ref{prpdisccmp}. We denote the product time space and Lagrangian by 
\begin{equation*}
    \tilde{M}=\prod_{i=1}^k \:,\qquad \L=\prod_{i=1}^k\L_i \:,
\end{equation*}
Then the Künneth formula in Theorem~\ref{lemma:kunneth-injection} provides the following isomorphism of graded groups,
\[ 
    \Phi:\bigotimes_{i=1}^k H^*_c(\tilde{M}_i) \rightarrow H^*_c(\tilde{M}) \:. \]
Now the individual cohomology groups are given in Proposition~\ref{prpdisccmp}.
For example, in the case that~$\beta \neq 0$ for all discrete lines, we obtain
\[ H^0_c(\tilde{M}) = 0 \qquad \text{and} \qquad
H^1_c(\tilde{M}) = C_c(\Z, \R)^k\:. \]

\subsubsection{Discrete Fourier Transform} \label{subsec:Lattic-fourier}
As an alternative to applying the Künneth formula, we now want to use
the discrete Fourier transform to compute the cohomology of the lattice system.

We first recall the basics on the discrete Fourier transform.
For simplicity, we begin with a function~$\phi : \tilde{M} \rightarrow \C$
with compact support.
We introduce its Fourier transform, denoted by~$\hat{\phi}$ by
\beq \label{hatphi}
\hat{\phi}(p) = \varepsilon^k \sum_{x \in \tilde{M}} \phi(x)\: e^{i p x} \:.
\eeq
This function is periodic, which leads us to regard the
momenta~$p$ on a torus of side length~$2 \pi/\varepsilon$,
\[ p \in \hat{M} := \R^k \Big/ \Big( \frac{2 \pi}{\varepsilon}\: \Z \Big)^k \:. \]
It is often convenient to restrict attention to the fundamental domain
of the torus, which as in in solid state physics we refer to as the
{\em{first Brillouin zone}} denoted by~$\hat{M}_0$,
\[ p \in \hat{M}_0 := \Big( -\frac{\pi}{\varepsilon}, \frac{\pi}{\varepsilon} \Big)^k \:. \]
Then the inverse Fourier transform takes the form
\beq \label{phix}
\phi(x) = \int_{\hat{M}_0} \hat{\phi}(p)\: e^{-i p x}\: \frac{d^kp}{(2 \pi)^k} \qquad
\text{with~$x \in \tilde{M}$}\:.
\eeq
The normalization constants are verified by the computation
\[ \int_{\hat{M}_0} \hat{\phi}(p)\: e^{-i p x}\: d^kp = 
\sum_{x \in \tilde{M}} \int_{\hat{M}_0} \phi(y)\: e^{i p (y-x)}\:d^kp
= \delta_{xy}\:\phi(x)\: \Big( \frac{2 \pi}{\varepsilon} \Big)^k \:. \]

According to~\eqref{LL}, the Lagrangian is described by a function~$L$
on~$\F=\R^k$. 
For technical simplicity, we will assume that this function is of the Schwartz class.
We denote its (ordinary continuous) the Fourier transform also with
an additional hat, i.e.\
\beq \label{Lxi}
L(\xi) = \int_{\hat{\F}} \hat{L}(p)\: e^{-i p \xi}\:  \frac{d^kp}{(2 \pi)^k} \qquad
\text{with~$\hat{\F} := \R^k$ and~$\xi \in \F$}\:.
\eeq

We are interested in the situation when the discretization length~$\varepsilon$ is very small.
If we assume that the function~$\hat{L}$ is compactly supported, we can arrange that
its support lies in the first Brillouin zone. This assumption simplifies the analysis as follows.
\begin{Lemma}\label{lemma:Fourier-vanish-support}
Assuming that
\beq \label{hatLcompact}
\supp \hat{L} \subset \hat{M}_0 \:,
\eeq
the following implication holds,
\beq \label{Limply}
(\L_{\tilde{\rho}} \, f)|_{\tilde{M}}=0 \qquad \Longrightarrow \qquad (\L_{\tilde{\rho}} \, f)=0 \text{ on~$\F$} \:.
\eeq
\end{Lemma}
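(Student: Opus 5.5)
We pass to Fourier space. For a (compactly supported, or more generally tempered) function $f$ on the lattice $\tilde{M}=(\varepsilon\Z)^k$, the mollification
\[ (\L_{\tilde{\rho}} f)(x) = \frac{1}{\s} \sum_{y \in \tilde{M}} L(x-y)\, f(y) \]
is defined for every $x\in\F=\R^k$. Inserting the Fourier representation \eqref{Lxi} of $L$ and the definition \eqref{hatphi} of $\hat f$, we obtain
\[ (\L_{\tilde{\rho}} f)(x) = \frac{1}{\s\,\varepsilon^k} \int_{\R^k} \hat{L}(p)\, \hat{f}(-p)\, e^{-ipx}\, \frac{d^kp}{(2\pi)^k} \qquad \text{for all } x \in \R^k . \]
The interchange of sum and integral is justified because $L$ is Schwartz and $f$ has compact support. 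Thus $\L_{\tilde{\rho}} f$ is, as a function on $\R^k$, the continuous inverse Fourier transform of $g(p):=\hat L(p)\hat f(-p)/(\s\varepsilon^k)$. By \eqref{hatLcompact}, $g$ is supported in the first Brillouin zone $\hat M_0$.

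Next we restrict to the lattice. For $x\in\tilde M$, the characters $e^{-ipx}$ are $\frac{2\pi}{\varepsilon}\Z^k$-periodic in $p$. So $(\L_{\tilde{\rho}} f)(x)$ equals the discrete inverse transform \eqref{phix} of the periodization $\sum_{n} g(p+\frac{2\pi}{\varepsilon}n)$. Because $\supp g\subset \hat M_0$ and the translates of $\hat M_0$ are disjoint, this periodization coincides with $g$ on $\hat M_0$. Hence $(\L_{\tilde{\rho}}f)|_{\tilde M}$ is the lattice function whose discrete Fourier transform (up to the normalization $\varepsilon^k$) is $g$ restricted to $\hat M_0$.

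Now assume $(\L_{\tilde{\rho}}f)|_{\tilde M}=0$. Injectivity of the discrete Fourier transform, by Plancherel on the torus or the inversion formula \eqref{phix}, gives $g=0$ almost everywhere on $\hat M_0$. Since $g$ vanishes outside $\hat M_0$ anyway, $g=0$ a.e. on $\R^k$. The continuous inverse Fourier transform then gives $\L_{\tilde{\rho}} f\equiv 0$ on all of $\F$, which is \eqref{Limply}. The same argument applies verbatim to the weak derivatives $D_2^\kappa\L$: they correspond to multiplying $\hat L$ by the polynomial $(ip)^\kappa$, which preserves the support condition. The implication therefore also holds for the mollified evaluations appearing in Definition~\ref{defvanish}.

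The main obstacle is the class of $f$. For compactly supported $f$, $\hat f$ is a trigonometric polynomial and every step above is elementary. For merely $L^\infty_\loc$ sections without decay, $\hat f$ exists only as a periodic distribution. I would handle this case by pairing with test functions: the product $\hat L\,\hat f(-\cdot)$ is still well-defined as a distribution supported in $\overline{\supp\hat L}$, which is compact inside the open zone $\hat M_0$. The periodization argument then goes through distributionally. This requires a polynomial growth bound on $f$, which I would impose as a temperedness assumption if needed.
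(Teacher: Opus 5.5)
Your proposal is correct and follows essentially the same route as the paper: you write $\L_{\tilde\rho} f$ as the continuous inverse Fourier transform of $\hat L\,\hat f$, note that by \eqref{hatLcompact} its restriction to the lattice is the discrete inverse transform of the same function on $\hat M_0$, and conclude from injectivity of the discrete transform that the integrand, and hence $\L_{\tilde\rho} f$ on all of $\F$, vanishes. Your periodization step and your remarks on the weak derivatives $D_2^\kappa \L$ and on tempered $f$ only make explicit what the paper leaves implicit; also, with the convention \eqref{hatphi} the integrand is $\hat L(p)\,\hat f(p)$ rather than $\hat L(p)\,\hat f(-p)$, which does not affect the support argument.
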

\Proof Note that, as a consequence of~\eqref{hatLcompact},
the formulas for the inverse Fourier transforms~\eqref{Lxi}
and~\eqref{phix} coincide. In other words, the discrete inverse Fourier transform 
coincides with the ordinary Fourier transform restricted to the lattice. Moreover,
\begin{align}
(\L_{\tilde{\rho}} \, \phi)(x) &= \sum_{y \in \tilde{M}} \L(x,y)\: \phi(y) \notag \\
&= \sum_{y \in \tilde{M}} 
\bigg( \int_{\hat{M}_0} \hat{L}(p)\: e^{-i p (x-y)}\:  \frac{d^kp}{(2 \pi)^k} \bigg)
\bigg( \int_{\hat{M}_0} \hat{\phi}(q)\: e^{-i q y}\:  \frac{d^kq}{(2 \pi)^k} \bigg) \notag \\
&= \frac{1}{\varepsilon^k} \int_{\hat{M}_0} \hat{L}(q)\: \hat{\phi}(q)\: e^{-i q x}\:
\frac{d^kq}{(2 \pi)^k} \:, \label{back}
\end{align}
where in the last step we combined~\eqref{phix} and~\eqref{hatphi}.

Assume that the function~$(\L_{\tilde{\rho}} \, f)|_{\tilde{M}}$ vanishes.
Then its discrete Fourier transform
(as defined by~\eqref{hatphi}) also vanishes. In particular, its restriction to the first Brillouin
zone is zero. Therefore, the integrand in~\eqref{back} is zero, giving the result.
\QED

This result simplifies the equivalence classes of sections as follows.
\begin{Corollary} \label{corsim}
Under the assumption~\eqref{hatLcompact}, for all sections~$\eta \in \Gamma^\infty_\loc(\tilde{M}, T^{s,r}\tilde{M})$
\[ \eta \simeq^0_{\tilde{M}} 0 \qquad \Longrightarrow \qquad
\eta \simeq^\ell_{\tilde{M}} 0 \quad \text{for all~$\ell \in \N_0$}\:. \]
\end{Corollary}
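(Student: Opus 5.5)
The plan is to reduce everything to the implication~\eqref{Limply} of Lemma~\ref{lemma:Fourier-vanish-support}, applied componentwise. Assume $\eta \simeq^0_{\tilde{M}} 0$. Writing $\eta$ in the parallel frame as in~\eqref{eta0}, with all osculations equal to $M_x = \F = \R^k$, the parallel frame is simply the constant standard basis. This is because translation invariance~\eqref{LL} makes $\nabla^\L$ independent of the base points; after normalization it is the identity. Hence each component $\eta_{i_1\cdots i_s}$ is a finite sum of terms $h^\kappa\, e_\kappa$ with scalar coefficient functions $h^\kappa$ on $\tilde{M}$. By~\eqref{weval}, the zeroth-order mollified evaluation of such a component at $x$ is
\[ F(x) := \sum_{\kappa} (-1)^{|\kappa|} \sum_{y \in \tilde{M}} \big(D_2^\kappa \L(x,y)\big)\, h^\kappa(y) \:. \]

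The hypothesis says that $F|_{\tilde{M}} = 0$. We then show $F \equiv 0$ on all of $\F$. For this we generalize~\eqref{back}: since $D_2^\kappa \L(x,y) = (-\partial)^\kappa L(x-y)$ has Fourier transform $(\pm i p)^\kappa \hat{L}(p)$, which is still supported in $\hat{M}_0$, the same computation gives
\[ F(x) = \frac{1}{\varepsilon^k}\int_{\hat{M}_0} \hat{G}(q)\, e^{-iqx}\, \frac{d^kq}{(2\pi)^k}\:, \qquad \hat{G}(q) := \hat{L}(q)\sum_\kappa c_\kappa\, q^\kappa\, \hat{h}^\kappa(q)\:, \]
with suitable constants $c_\kappa$. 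Since $\hat{G}$ is supported in $\hat{M}_0$, the function $F$ restricted to the lattice determines $\hat{G}$ on the Brillouin zone. Thus $F|_{\tilde{M}}=0$ forces $\hat{G}=0$, and hence $F\equiv 0$ on $\F$. This is the argument of Lemma~\ref{lemma:Fourier-vanish-support}, with $\hat L$ replaced by a finite sum of polynomial multiples of $\hat{L}$.

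Next we treat higher orders. The order-$\ell$ conditions in Definition~\ref{defvanish} require the vanishing of $\L_{\tilde\rho}(D_\lambda \eta)(x)$ for $|\lambda|\le \ell$. Because everything is translation invariant and the frame is constant, $D_{2,\lambda}$ acting on $\L(x,y)=L(x-y)$ equals $(-1)^{|\lambda|} D_{1,\lambda}$. Therefore $\L_{\tilde\rho}(D_\lambda\eta)(x) = \pm\, \partial_x^\lambda F(x)$. Since $F$ vanishes identically on $\F$, all its $x$-derivatives vanish, in particular at points of $\tilde{M}$. This gives $\eta \simeq^\ell_{\tilde{M}} 0$ for every $\ell$.

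The main obstacle is justifying the Fourier manipulations for sections that are merely in $\Gamma^\infty_\loc$ and need not have compact support. For such sections, $\hat{h}^\kappa$ exists only as a periodic distribution, and the interchange of sum and integral in~\eqref{back} needs care. My first attempt would be to interpret $\hat h^\kappa$ as a tempered distribution on the torus $\hat M$, which is legitimate since bounded lattice functions have polynomial growth. Because $\hat L$ and its polynomial multiples are smooth and compactly supported in the open set $\hat{M}_0$, the pairing is well defined. The identity $F|_{\tilde M}=0 \Rightarrow \hat G=0$ on $\hat M_0$ then follows by duality: the Fourier coefficients of a distribution on the torus vanish if and only if the distribution vanishes. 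If this becomes delicate, a fallback is to use that $L$ is Schwartz, so the sums converge absolutely and $F$ is a bounded smooth function. $F$ is then band-limited, since its Fourier transform is supported in $\hat M_0$, and a Shannon-type sampling argument shows that it is determined by its lattice samples.
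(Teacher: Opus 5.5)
Your proposal is correct and follows the paper's proof. The chain of reasoning is the same: from $\eta\simeq^0_{\tilde M}0$ the mollification vanishes on the lattice, hence on all of $\F$ by \eqref{Limply}, hence all its derivatives vanish on $\tilde M$. Your write-up is also more careful than the paper's on two points the paper leaves implicit. You extend Lemma~\ref{lemma:Fourier-vanish-support} to components carrying weak derivatives, noting that polynomial multiples of $\hat L$ stay supported in $\hat M_0$. You also use translation invariance to identify the higher-order conditions, which involve $D_2$-derivatives, with $x$-derivatives of the mollified function.
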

\Proof The relation~$\eta \simeq^0_{\tilde{M}} 0$ means by definition
that the function~$\L_{\tilde{\rho}} \, \eta$ vanishes on~$\tilde{M}$. In view of~\eqref{Limply},
this function vanishes identically. Therefore, also all its derivatives vanish on~$\tilde{M}$,
giving the result.
\QED

Our goal is to show that the sequence~\eqref{simplecomplex} is exact.
Thus let~$\omega \in \Omega^{s,s,k-s}$ with~$s \in \{1, \ldots, k\}$ be closed.
Our task is to show that~$\omega$ is exact, i.e.\ that there
is~$\eta \in \Omega^{s-1,s-1,k-s+1}$ with~$d\eta = \omega$.
First, we need to take into account that~$\omega$ may involve weak derivatives
to the order at most~$s$. We introduce the short notation
\[ \L_{\tilde{\rho}}\, \omega = h \qquad \text{with} \qquad
h(x) := \sum_{|\kappa| \leq s} \big( L* D^\kappa \omega^\kappa \big)(x)\:, \]
and the derivatives act on~$L$ after formal integration by parts.
By definition of the exterior derivative, the closedness of~$\omega$ implies
that the mollification~$(\L_{\tilde{\rho}} \, \omega)$ (being a smooth form on~$\F$)
is closed in the sense that~$d (\L_{\tilde{\rho}} \, \omega)$ vanishes on~$\tilde{M}$
(where~$d$ is the standard exterior derivative in~$\R^k$).
By Lemma~\ref{lemma:Fourier-vanish-support}, this function vanishes on all of~$\F$,
\[ d (\L_{\tilde{\rho}} \, \omega) = 0 \quad \text{on~$\F$} \:. \]
Therefore, taking the line integrals as in the usual proof of Poincar{\'e}'s lemma~\eqref{Pomega}, we obtain a form whose exterior derivative given by
\beq \label{dPo}
d P(\omega) = \L_{\tilde{\rho}} \, \omega \:.
\eeq
This procedure looks promising, but there are still two issues:
\bitem
\item[(i)] The differential forms obtained by taking line integrals~\eqref{Pomega}
are in general no longer compactly supported.
\item[(ii)] The right side in~\eqref{dPo} does not give~$\omega$ as desired,
but we get~$\L_{\tilde{\rho}} \, \omega$ instead.
\eitem
The first issue can be addressed by working with {\em{distributional Fourier transforms}}.
More precisely, on the lattice we consider {\em{bounded functions}},
$\phi \in L^\infty(\tilde{M}, d\rho)$. We can work consistently within this class
of functions, because taking convolutions and derivatives and line integrals
thereof, we again get bounded functions.
Consequently, the Fourier transform~$\hat{\phi}$ in~\eqref{hatphi}
is defined in the distributional sense, giving a subclass of tempered distributions
on~$\hat{M}$ (which we do not need to specify in detail in momentum space).
This resolves issue~(i).

The second issue is more serious. Using that the exterior derivative commutes
with convolutions, the desired differential form~$\eta$ with~$d\eta = \omega$
can be obtained on the formal level by
\beq \label{etaformal}
\eta = \L_{\tilde{\rho}}^{-1} \big( dP(\omega) \big) \:.
\eeq
However, it is far from obvious if or under which assumptions the inverse exists.
In order to work this out in detail, we need to delve deeper into the description
in momentum space. In preparation, we rewrite the operator~$P$ in~\eqref{Pomega}
in momentum space. Interestingly, we again get line integrals, but they 
extend along a ray from~$p$ to infinity.
\begin{Lemma} \label{lemmaPwmom}
Transforming the operator in~\eqref{Pomega} to momentum
space, we obtain
\begin{align}
\widehat{P(\omega)}(p)
= -i \sum_{|\kappa|\leq m}\sum^s_{\alpha=1}(-1)^{\alpha-1}
\frac{\partial}{\partial p^{i_\alpha}} & \int_0^1 t^{s-1-k} \:
\hat{h}_{i_1 \cdots i_s}
\Big( \frac{p}{t} \Big) \:dt \notag \\
&\times dx^{i_1} \wedge \cdots \wedge \widehat{dx^{i_\alpha}} \wedge
\cdots \wedge dx^{i_s} \:. \label{Pwmom}
\end{align}
\end{Lemma}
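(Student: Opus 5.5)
The plan is to compute the Fourier transform of each summand in~\eqref{Pomega} separately; all sums and wedge factors are constants for the transform. Write $h_{i_1\cdots i_s}(x) := \sum_{|\kappa|\leq m} \big(\L_{\tilde{\rho}}(D^\kappa\omega^\kappa_{i_1\cdots i_s})\big)(x)$, a smooth bounded function on~$\F=\R^k$. Each term then has the form
\[ G(x) := x^{i_\alpha}\, g(x) \qquad\text{with}\qquad g(x) := \int_0^1 t^{s-1}\, h(tx)\,dt \:. \]
Hence it suffices to know how the Fourier transform acts on two operations: dilation $x\mapsto tx$ and multiplication by the coordinate~$x^{i_\alpha}$. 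We use the continuous transform with the sign convention of~\eqref{Lxi}, namely $\hat{f}(p)=\int f(x)\,e^{ipx}\,d^kx$. By~\eqref{back}, under assumption~\eqref{hatLcompact} the lattice and continuum transforms of the mollified function agree up to the factor~$\varepsilon^{-k}$, which is absorbed into~$\hat{h}$.

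First, the dilation. For fixed $t\in(0,1]$ the substitution $y=tx$ gives
\[ \widehat{h(t\,\cdot)}(p) = t^{-k}\,\hat{h}(p/t) \:. \]
Integrating in~$t$ against~$t^{s-1}$ and interchanging the $t$-integral with the Fourier transform yields
\[ \hat g(p)=\int_0^1 t^{s-1-k}\,\hat h(p/t)\,dt \:. \]
After the substitution $u=1/t$ this becomes an integral along the ray from~$p$ to infinity, as announced before the lemma.

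Second, the multiplication. With the convention $e^{ipx}$ we have $\partial_{p^j}\hat g = i\,\widehat{x^j g}$, so that
\[ \widehat{x^j g} = -i\,\partial_{p^j}\hat g \:. \]
Combining the two steps, and reinserting the signs $(-1)^{\alpha-1}$ and the wedge products $dx^{i_1}\wedge\cdots\wedge\widehat{dx^{i_\alpha}}\wedge\cdots\wedge dx^{i_s}$, gives exactly~\eqref{Pwmom}.

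The main obstacle is justification rather than algebra. Since $\omega$ is only bounded, $\hat h$ is a tempered distribution, so both the interchange of the $t$-integral with the Fourier transform and the convergence of $\int_0^1 t^{s-1-k}\hat h(p/t)\,dt$ near $t=0$ must be understood distributionally. The plan is to verify the identity weakly by pairing both sides with a Schwartz test function~$\chi(p)$. On the transform side, $\hat h(p/t)$ is then tested against $t^{k}\chi(tp)$. On the position side, the pairing reduces to $\int_0^1 t^{s-1}\int h(tx)\,\check\chi(x)\,d^kx\,dt$, which converges absolutely because $h$ is bounded and $\check\chi$ is Schwartz. Fubini's theorem then justifies the interchange. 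The prefactor $t^{s-1-k}$ is harmless, because it only arises after the change of variables and is compensated by the test-function scaling. For $s\geq1$ no boundary terms appear.

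Finally, the domain of~$p$ needs a remark. Because of~\eqref{hatLcompact}, $\hat h$ is supported in $\supp\hat L\subset\hat M_0$. For $t<1$ the argument $p/t$ lies outside this support unless $p$ is small, so the formula is consistent on all of~$\hat{\F}$ and no periodization issue arises.
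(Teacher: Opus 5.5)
Your proposal is correct and follows essentially the paper's computation. The paper inserts the inverse Fourier representation of $h$ into \eqref{Pomega}, turns $x^{i_\alpha}$ into a $p$-derivative of the plane wave $e^{-iptx}$ by integrating by parts, and then substitutes $p'=pt$; this is exactly your multiplication rule $\widehat{x^j g}=-i\,\partial_{p^j}\hat g$ and dilation rule $\widehat{h(t\,\cdot)}(p)=t^{-k}\hat h(p/t)$, read in the inverse direction. Your weak pairing with Schwartz test functions, which uses the boundedness of $h$ and integrability of $t^{s-1}$ for $s\geq 1$ to justify interchanging the $t$-integral and the transform, is a useful addition, since the paper carries out these manipulations only formally.
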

\Proof We write the differential form~$\omega$ as a Fourier integral,
\[ \big( \L_{\tilde{\rho}}(\hat{\omega} \big)(x) = \int_{M_0}
\hat{h}(p)\: e^{-i p x}\: \frac{d^kp}{(2 \pi)^k} \:. \]
We now apply the line integrals in~\eqref{Pomega} to obtain
\begin{align*}
P(\omega)(x) &=
\sum_{|\kappa|\leq m}\sum^s_{\alpha=1}(-1)^{\alpha-1}
\int_0^1 t^{s-1}
\bigg( \int_{M_0}
\hat{h}_{i_1 \cdots i_s}(p)\: e^{-i p t x}\: \frac{d^kp}{(2 \pi)^k} \bigg)
\:x^{i_\alpha}\,dt \\
&\qquad\qquad\qquad\qquad\qquad\qquad\qquad\qquad\qquad
\times dx^{i_1} \wedge \cdots \wedge \widehat{dx^{i_\alpha}} \wedge \cdots \wedge dx^{i_s} \:.
\end{align*}
The factor~$x^{i_\alpha}$ can be rewritten as a $p$-derivative acting on the
plane wave. Integration by parts gives
\begin{align*}
P(\omega)(x) &=
\sum_{|\kappa|\leq m}\sum^s_{\alpha=1}(-1)^{\alpha-1}
\int_0^1 \frac{t^{s-1}}{i t}
\bigg( \int_{M_0}
\bigg( \frac{\partial}{\partial p^{i_k}} \hat{h}_{i_1 \cdots i_s}(p) \bigg)\: e^{-i p t x}\: \frac{d^kp}{(2 \pi)^k} \bigg)\,dt \\
&\qquad\qquad\qquad\qquad\qquad\qquad\qquad\qquad\qquad
\qquad
\times dx^{i_1} \wedge \cdots \wedge \widehat{dx^{i_\alpha}} \wedge \cdots \wedge dx^{i_s} \:.
\end{align*}
Interchanging the integrals and transforming to the new
integration variable~$p'=pt$, we obtain
\[ P(\omega)(x) = \int_{M_0} \widehat{P(\omega)}(p')\: e^{-i p' x}\:
\frac{d^kp'}{(2 \pi)^k} \]
with~$\widehat{P(\omega)}$ according to~\eqref{Pwmom}.
\QED

Now we can formulate sufficient conditions on the Lagrangian which ensure
that the cohomology is trivial. 
\begin{Prp} \label{prppl} Assume that the Fourier transform of the Lagrangian
(see~\eqref{LL} and~\eqref{Lxi}) is supported in the first Brillouin zone~\eqref{hatLcompact}, 
Moreover, assume that there is a constant~$C>0$ such that for all~$p \in \hat{M}_0$,
the following inequalities hold,
\begin{align}
\Big| \hat{L} \Big( \frac{p}{t} \Big) \Big| &\leq \frac{C}{1+\|p\|^s}\: \big| \hat{L}(p) \big| 
&& \hspace*{-1.5cm} \text{for all~$t \in (0,1)$} \label{Lc1} \\
\big| v^j \partial_j \hat{L}(p) \big| &\leq C\: \|v\|\: \big|\hat{L}(p) \big| &&
\hspace*{-1.5cm} \text{for all~$v \perp p$}\:. \label{Lc2}
\end{align}
Then the cohomology of the differential complex~\eqref{simplecomplex} is trivial,
\begin{equation*}
    H^*(\tilde{M})=0 \:.
\end{equation*}
\end{Prp}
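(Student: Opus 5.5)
Given a closed form $\omega \in \Omega^{s,s,k-s}(\tilde{M})$ with $s \geq 1$, the plan is to make the formal construction~\eqref{etaformal}, $\eta = \L_{\tilde{\rho}}^{-1}\big(dP(\omega)\big)$, rigorous in momentum space. For $s=0$ we must also show that $\ker d$ on $\Omega^{0,0,k}$ vanishes.

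First, by Corollary~\ref{corsim}, the equivalence relations $\simeq^\ell_{\tilde{M}}$ do not depend on $\ell$. A representative is therefore determined by its mollification $h = \L_{\tilde{\rho}}\,\omega$. By Lemma~\ref{lemma:Fourier-vanish-support}, $h$ is a smooth form on $\F$ whose Fourier transform is $\hat h = \varepsilon^{-k}\hat L\,\hat\omega$, supported in $\supp \hat L \subset \hat M_0$. As explained before~\eqref{dPo}, closedness of $\omega$ gives $dh=0$ on all of $\F$. The standard Poincar\'e lemma then gives $dP(\omega)=h$.

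The task is now to find a lattice form $\eta$ with $\L_{\tilde{\rho}}\,\eta = P(\omega)$ up to equivalence. Since $\widehat{\L_{\tilde{\rho}}\eta} = \varepsilon^{-k}\hat L\,\hat\eta$, I would define
\[ \hat\eta := \varepsilon^k\, \widehat{P(\omega)} / \hat L \qquad \text{on } \supp \hat L \]
and set $\hat\eta = 0$ elsewhere in $\hat M_0$. Lemma~\ref{lemmaPwmom} expresses $\widehat{P(\omega)}(p)$ through $\partial_{p^{i_\alpha}}$ of the ray integral $\int_0^1 t^{s-1-k}\,\hat h(p/t)\,dt$, where $\hat h(p/t) = \varepsilon^{-k}\hat L(p/t)\,\hat\omega(p/t)$. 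The main obstacle is to show that dividing by $\hat L(p)$ gives a well-defined, tempered and bounded object.

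Here I would use the two hypotheses. First, \eqref{Lc1} gives
\[ |\hat L(p/t)| \leq C(1+\|p\|^s)^{-1}|\hat L(p)| , \]
so the undifferentiated ray integral divided by $\hat L(p)$ is controlled. The factor $t^{s-1-k}$ should be absorbed because $\hat L(p/t)$ vanishes for small $t$, since $p/t$ leaves the compact support. Second, when $\partial_{p^{i_\alpha}}$ hits $\hat L(p/t)$, the contraction of $\partial_p$ with the form index, after antisymmetrization against $dh=0$, should leave only components of the derivative orthogonal to $p$. The radial part should cancel via the homotopy identity. This is where \eqref{Lc2} gives $|v^j\partial_j\hat L| \leq C\|v\|\,|\hat L|$, so that this term divided by $\hat L(p)$ stays bounded too. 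I expect this bookkeeping, separating radial from tangential derivatives and verifying the radial cancellation, to be the delicate step. If it fails, I would fall back to the quantitative route of Lemma~\ref{lemmapoincare}: use \eqref{Lc1}--\eqref{Lc2} to show $\|\omega - \hat d P\omega\| < \|\omega\|$ in a sup-norm on Fourier multipliers.

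With $\hat\eta$ in hand, $\eta$ is a bounded lattice form of the correct class. One checks $\L_{\tilde{\rho}}(d\eta) = dP(\omega) = \L_{\tilde{\rho}}\,\omega$, using that $d$ commutes with convolution. Hence $d\eta \simeq^0_{\tilde{M}} \omega$, and by Corollary~\ref{corsim} this equivalence holds to all orders, giving exactness in degrees $s\geq 1$.

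For $H^0$, suppose $df=0$. Then $d(\L_{\tilde{\rho}} f)=0$ on $\F$, so $\L_{\tilde{\rho}} f$ is constant, and its Fourier transform is concentrated at $p=0$. Within the class considered, namely bounded functions with the top form reduced as in~\eqref{simplecomplex}, I would argue that this constant is killed in the quotient, or else that \eqref{Lc1} at $p=0$ forces it to vanish. I am least certain about this degree-zero step and would check it against the normalization of the complex.
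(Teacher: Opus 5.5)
Your route is the paper's. You mollify, use $dP(\omega)=\L_{\tilde{\rho}}\,\omega$ on $\F$, and define $\eta$ by dividing $\widehat{P(\omega)}$ from Lemma~\ref{lemmaPwmom} by $\hat{L}$. You then use \eqref{Lc1} for the undifferentiated ray integrals and \eqref{Lc2} for tangential derivatives of $\hat{L}$. (Your remark that the weight $t^{s-1-k}$ is harmless, because $\hat{L}(p/t)=0$ for small $t$ when $p\neq 0$, is used tacitly in the paper.)

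However, the mechanism you propose for the radial part, which you rightly single out as the crux, is wrong. Closedness does not remove radial components; it does the opposite. For $s=1$, $dh=0$ means $\hat{h}_i=p_i\,\hat{g}$, so the contracted derivative in \eqref{Pwmom} is the divergence of a radial field, and \emph{only} the radial part survives. Nor does the radial part cancel.

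The missing idea is the Euler identity $p^j\partial_j\big[\hat{h}(p/t)\big]=-t\,\tfrac{d}{dt}\,\hat{h}(p/t)$, combined with an integration by parts in $t$. Concretely, set $F(p):=\int_0^1 t^{s-1-k}\,\hat{h}(p/t)\,dt$. The paper writes $\hat{L}^{-1}\partial_{i_\alpha}F=\partial_{i_\alpha}(F/\hat{L})+(\partial_{i_\alpha}\hat{L})\,F/\hat{L}^2$. It splits $\partial_{i_\alpha}\hat{L}$ into a tangential part, controlled by \eqref{Lc2}, and a radial part $p_{i_\alpha}p^j\partial_j\hat{L}/p^2$. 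In the radial part it uses $\partial_j\hat{L}/\hat{L}^2=-\partial_j(1/\hat{L})$ and the product rule to move the derivative onto $p_{i_\alpha}p^jF/p^2$. The only term not directly covered by \eqref{Lc1} is then $\hat{L}(p)^{-1}\,p^j\partial_jF(p)$, and
\[ p^j\partial_j F(p) = -\int_0^1 t^{s-k}\,\frac{d}{dt}\,\hat{h}\Big(\frac{p}{t}\Big)\,dt = -\hat{h}(p)+(s-k)\,F(p) \:, \]
where the boundary term at $t=0$ vanishes by compact support. After division by $\hat{L}(p)$, the first term is just $-\varepsilon^{-k}\hat{\omega}(p)$, and the second is again controlled by \eqref{Lc1}. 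So the radial contribution yields a harmless boundary term at $t=1$. Closedness enters only through $dP(\omega)=\L_{\tilde{\rho}}\,\omega$.

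Your degree-zero step cannot be completed, because it is false. The constant function $\tilde{1}$ satisfies $(\L_{\tilde{\rho}}\tilde{1})(x)=\s^{-1}\sum_y L(x-y)>0$ (since $L\geq 0$ and $L(0)>0$), so $\tilde{1}\not\simeq^k_{\tilde{M}}0$. At lattice points, $\nabla(\L_{\tilde{\rho}}\tilde{1})(x)=\s^{-1}\sum_{z\in(\varepsilon\Z)^k}\nabla L(z)=0$ because $\nabla L$ is odd. Hence $d\tilde{1}\simeq^0_{\tilde{M}}0$, and by Corollary~\ref{corsim} also $d\tilde{1}\simeq^{k-1}_{\tilde{M}}0$. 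Combined with your own observation that $\L_{\tilde{\rho}}f$ is constant whenever $df=0$, this gives $H^0(\tilde{M})\cong\R$, exactly as for $\R^k$. Note that \eqref{Lc1} at $p=0$ is the vacuous $|\hat{L}(0)|\leq C|\hat{L}(0)|$. The paper's proof treats only closed forms of degree $s\in\{1,\ldots,k\}$, so the proposition has to be read as $H^s(\tilde{M})=0$ for $s\geq 1$.

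Finally, $\eta$ is not bounded in general. For $k=1$ and $\omega=e^1$, the condition $(\L_{\tilde{\rho}}\eta)'=\L_{\tilde{\rho}}\,\omega=\text{const}\neq 0$ forces $\L_{\tilde{\rho}}\eta$ to be affine. Ray integrals grow linearly, so you must work with polynomially bounded (tempered) lattice sections, which $\Gamma^\infty_\loc$ permits. The paper's remark that line integrals preserve boundedness has the same defect.
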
 \noindent
Before coming to the proof, we point out that the conditions~\eqref{Lc1} and~\eqref{Lc2}
are quite strong. We expect that they could be weakened with more technical effort,
but for brevity we shall not go into this here.
But we note that the above conditions are satisfied if~$\L$ is chosen as a smooth and
compactly supported function which depends only on the radial variable~$\|p\|$
and is monotone decreasing.
Namely, in this case, the left side of~\eqref{Lc2} vanishes.
Moreover, the inequality~\eqref{Lc1} is satisfied with~$C=1$.
In this way, one sees that our proposition allows for non-trivial examples.

\Proof[Proof of Proposition~\ref{prppl}.] Our task is to give the formal
expression~\eqref{etaformal} a precise mathematical meaning.
Since the operator~$\L_{\tilde{\rho}}$ corresponds in momentum space to multiplying by the
function~$\hat{L}$, its inverse in~\eqref{etaformal} corresponds to dividing by this function.
Using the formula in Lemma~\ref{lemmaPwmom}, we are led to analyzing the function
\[ A(p) := \frac{1}{\hat{L}(p)} \frac{\partial}{\partial p^{i_\alpha}} \int_0^1 t^{s-1-k} \:
 \big( \hat{h}_{i_1 \cdots i_s} \big)
\Big( \frac{p}{t} \Big) \:dt \:. \]
We first pull the $p$-derivative outside,
\begin{align}
A &= \frac{\partial}{\partial p^{i_\alpha}} \bigg( \frac{1}{\hat{L}(p)} \int_0^1 t^{s-1-k} \:
\hat{h}_{i_1 \cdots i_s}
\Big( \frac{p}{t} \Big) \:dt \bigg) \label{p1} \\
&\quad- \frac{\partial_{i^\alpha} \hat{L}(p)}{\hat{L}(p)^2}
\int_0^1 t^{s-1-k} \:
\hat{h}_{i_1 \cdots i_s} \Big( \frac{p}{t} \Big) \:dt \:. \label{p2}
\end{align}
In~\eqref{p1} we can apply~\eqref{Lc1} to obtain a well-defined distribution.
In~\eqref{p2}, on the other hand, we decompose the derivative into its
radial and angular parts,
\begin{align*}
\partial_{i^\alpha} \hat{L}(p) &= 
\frac{p_{i \alpha} p^j}{p^2}\: \partial_j \hat{L}(p) + \Big( \delta^j_{i_\alpha}
- \frac{p_{i \alpha} p^j}{p^2} \Big)\: \partial_j \hat{L}(p) \:.
\end{align*}
Estimating the angular derivative with the help of~\eqref{Lc2} again gives
a well-defined distribution. Therefore, it remains to consider the radial
derivative in~\eqref{p2}. We write it as follows,
\begin{align}
&-\frac{p_{i \alpha} p^j}{p^2}\:  \frac{\partial_j \hat{L}(p)}{\hat{L}(p)^2}
\int_0^1 t^{s-1-k} \: \hat{h}_{i_1 \cdots i_s}
\Big( \frac{p}{t} \Big) \:dt \notag \\
&= \frac{p^{i \alpha} p^j}{p^2} \: \bigg( \frac{\partial}{\partial p^j} \frac{1}{\hat{L}(p)} 
\bigg) \int_0^1 t^{s-1-k} \:
\big( \hat{L}\:\hat{\omega}_{i_1 \cdots i_s} \big) \Big( \frac{p}{t} \Big) \:dt \notag \\
&= \frac{\partial}{\partial p^j}  \bigg( \frac{p^{i \alpha} p^j}{p^2} \:\frac{1}{\hat{L}(p)} 
\int_0^1 t^{s-1-k} \:
\hat{h}_{i_1 \cdots i_s} \Big( \frac{p}{t} \Big) \:dt \bigg) \label{p3} \\
&\quad\:- \bigg( \frac{\partial}{\partial p^j} \frac{p^{i \alpha} p^j}{p^2} \bigg)\:
\frac{1}{\hat{L}(p)} \int_0^1 t^{s-1-k} \:
\big( \hat{L}\:\hat{\omega}_{i_1 \cdots i_s} \big) \Big( \frac{p}{t} \Big) \:dt \label{p4} \\
&\quad\:- \frac{p^{i \alpha} p^j}{p^2} \: \frac{1}{\hat{L}(p)} 
\frac{\partial}{\partial p^j}  \int_0^1 t^{s-1-k} \:
\hat{h}_{i_1 \cdots i_s} \Big( \frac{p}{t} \Big) \:dt \label{p5}
\end{align}
The summands~\eqref{p3} and~\eqref{p4} can again be estimated with the
help of~\eqref{Lc1}. The last summand~\eqref{p5}, on the other hand, can be rewritten
using integration-by-parts as follows,
\begin{align*}
&\frac{1}{\hat{L}(p)} \: p^j
\frac{\partial}{\partial p^j}  \int_0^1 t^{s-1-k} \:
\big( \hat{L}\:\hat{\omega}_{i_1 \cdots i_s} \big) \Big( \frac{p}{t} \Big) \:dt \\
&= -\frac{1}{\hat{L}(p)} 
\int_0^1 t^{s-k} \: \frac{d}{dt}
\hat{h}_{i_1 \cdots i_s} \Big( \frac{p}{t} \Big) \:dt \\
&= -\frac{1}{\hat{L}(p)} \: \big( \hat{L}\:\hat{\omega}_{i_1 \cdots i_s} \big)(p)
+ \frac{s-k}{\hat{L}(p)}
\int_0^1 t^{s-1-k}\:\hat{h}_{i_1 \cdots i_s} \Big( \frac{p}{t} \Big) \:dt \:.
\end{align*}
Now we can again use~\eqref{Lc1}, concluding the proof.
\QED

\subsection{$\L$-Star-Shaped Subsets of the $k$-Dimensional Lattice}
\label{sec-lattice-subset}
Next, we consider the Poincar{\'e}-type lemma for a subset~$U \subset \tilde{M}$
of the $k$-dimensional lattice of the previous section.
In order to keep the setting as simple as possible, we restrict attention to
one-forms and the simplest non-trivial case~$m=1$ and~$\ell=0$. 
Begin with the case~$U=\tilde{M}$. Thus we let~$\omega \in \Omega^{1,1,0}(\tilde{M})$
be a globally defined one-form. Then the formula~\eqref{Pomega} for the integration
along rays simplifies to
\[ 
P(\omega)(x) := \sum_{|\kappa| \leq 1} \Big( \int_0^1
L * D^\kappa \omega^\kappa_i(tx) \:dt \Big) \:x^i\:. \]
For the function in the integrand we use the short notation
\beq \label{hidef}
h_i(x) := \sum_{|\kappa| \leq 1} \big( L * D^\kappa \omega^\kappa_i\big)(x) \:.
\eeq
The Fourier transform of~$h_i$ has the simple form
\beq \label{hath}
\hat{h}_i(p) = \sum_{|\kappa| \leq 1} (-i p)^\kappa\:
\hat{L}(p)\:  \hat{\omega}^\kappa_i(p) \:.
\eeq
We note for clarity that the functions~$\hat{\omega}^\kappa_i$ are periodic
(being the Fourier transforms of functions defined on the lattice). The function~$\hat{L}$,
however, is not periodic (being the Fourier transform of the test function on~$\F$).
Consequently, the product in~\eqref{hath} is {\em{not}} periodic, corresponding to the
fact that the mollification in position space is defined on all of~$\F$, not only on the lattice.

Having a bounded domain in mind, we assume that the set~$U$ is finite.
Using the embedding~$\tilde{M} \subset M$, the set~$U$ is a finite
subset of~$M$.
By a translation we may assume that the origin~$\0 \in M$ lies in~$U$.
We want to analyze what the definition of~$U$ being~$\L$-star-shaped
(see Definition~\ref{Lstarshaped}) means for the set~$U$.
In preparation we introduce the star-shaped region generated by~$U$,
\[ \sshape(U) := \big\{ \lambda x \:\big|\: x \in U, \lambda \in [0,1] \big\} \:. \]
An expected minimal requirement is that every point of~$\sshape(U)$
should be~$\varepsilon$-close to a point of~$U$,
\[ \dist(x, U)< \varepsilon \qquad \text{for all~$x \in \sshape(U)$} \:.\]
For the following construction, we need that this
property holds even in the~$\L$-neighbor\-hood of~$U$:
\begin{Def} \label{defepsstarshaped}
The region~$U \subset \tilde{M}$ is called {\bf{$\varepsilon$-star-shaped}} if
\beq \label{epsclose}
\dist \big(x, B_\L(U) \big)) < \varepsilon \qquad \text{for all~$x \in \sshape
\big(B_\L(U) \big)$} \:,
\eeq
\end{Def}

We shall assume in what follows that this condition
holds. As we will see, more conditions will be needed in order to ensure that~$U$
is~$\L$-star-shaped; we will collect all the additional conditions on the way.

We let~$\omega \in \Omega^{1,1,\ell}(U \subset \B_\L(U))$ be closed. This
means that
\[ 
\big(\L_{\tilde{\rho}}(d \omega)\big)(x) =0 \qquad \text{for all~$x \in U$}\:. \]
Note that, at this stage, this equation only holds on the finite set~$U$.
We want to extend this equation to the boundary layer~$B_\L(U) \setminus U$.
The idea is to arrange this by a suitable choice of representatives.
\begin{Def} \label{defbep}
The set~$U \subset \tilde{M}$ has the
{\bf{boundary extension property}} if every 
closed differential form~$\omega \in \Omega^{1,1,\ell}(U \subset \B_\L(U))$
can be represented by a
section~$\omega \in \Gamma^\infty_\loc(\tilde{M}, T^{1,1}(\tilde{M}))$
with the property
\beq \label{bextend}
x \,\llcorner \big(\L_{\tilde{\rho}}(d \omega)\big)(x) =0 \qquad \text{for all~$x \in B_\L(U)$}\:.
\eeq
\end{Def} \noindent
Let us discuss how to construct boundary extensions. We first note that
we are free to choose~$\omega$ arbitrarily on the set~$\tilde{M}
\setminus B_\L(U)$. In order to determine~$\omega$ on this set, we consider
the mapping
\beq \label{Phimap}
\begin{split}
&\Phi \::\: \Gamma^\infty_\loc(\tilde{M}, T^{1,0}\big( \tilde{M} \setminus B_\L(U) \big)
\rightarrow C^\infty \big(B_\L(U) \setminus U, \R^k \big) \\
&\big( \Phi(\alpha) \big)_j(x) := \int_{\tilde{M} \setminus B_\L(U)}
x^i D_{1,[i} \L(x,y)\: \omega_{j]}(y)\: d\tilde{\rho}(y)
\end{split}
\eeq
This is a linear mapping between finite-dimensional vector spaces.
Due to the anti-symmetrization, we know that~$x^j ( \Phi(\alpha))_j(x)=0$,
which reduces the maximal dimension of the image by a factor~$(k-1)/k$.
Therefore, it is a sensible to assume that~$\Phi$ is surjective.
For brevity, we will not verify in detail which subsets~$U \subset \tilde{M}$
of the lattice have the boundary extension property. Instead, we simply
take it as an additional technical assumption which ensures that~$U$ is
$\L$-star-shaped. The surjectivity of the map~$\Phi$ in~\eqref{Phimap}
implies that there is a constant~$C$ such that for all boundary extensions,
\beq \label{qext}
\|\omega\|_{L^2 \big( \tilde{M} \setminus B_\L(U), d\tilde{\rho} \big)} \leq C\:
\|\omega\|_{L^2 \big( \tilde{M}, d\tilde{\rho} \big)} \:.
\eeq

It is important to note that the equation~\eqref{bextend} holds only
at lattice points. Next, we show that
this relation also holds on the star-shaped hull~$\sshape(B_\L(U))$, up to an error
term which is linear in~$\varepsilon$.
\begin{Lemma} \label{lemmaerror}
For any~$x \in \sshape(B_\L(U))$,
\beq \label{dsmall}
\big\| x \,\llcorner \big(\L_{\tilde{\rho}}(d \omega)\big)(x) \big\| \leq c \varepsilon\,\|x\|
\: \|\omega\|_{L^2(\tilde{M}, d\tilde{\rho})}
\eeq
with the constant~$c$ given by
\[ c = \|L\|_{L^2(M)} + \|D L\|_{L^2(M)} +\|D^2L\|_{L^2(M)} \:. \]
\end{Lemma}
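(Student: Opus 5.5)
We propose to prove Lemma~\ref{lemmaerror} by a first-order Taylor (mean value) argument that transports the vanishing relation~\eqref{bextend} from lattice points to nearby points of the star-shaped hull. Set
\[ F(x) := x \,\llcorner \big(\L_{\tilde{\rho}}(d \omega)\big)(x) \:, \]
which is a smooth function on all of~$\F = \R^k$. The point is that the mollification~$\L_{\tilde{\rho}}(d\omega)$ is defined everywhere on~$\F$, even though the data~$\omega$ live only on the lattice. Fix~$x \in \sshape(B_\L(U))$. By the $\varepsilon$-star-shapedness~\eqref{epsclose}, there is a lattice point~$z \in B_\L(U)$ with~$\|x-z\| < \varepsilon$. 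The boundary extension property~\eqref{bextend} gives~$F(z)=0$. Hence, by the mean value theorem along the segment from~$z$ to~$x$,
\[ \|F(x)\| = \|F(x) - F(z)\| \leq \varepsilon \sup_{\xi \in [z,x]} \|DF(\xi)\| \:. \]

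The main step is to estimate~$DF(\xi)$. Differentiating the product gives
\[ DF(\xi) = \L_{\tilde{\rho}}(d\omega)(\xi) + \xi \,\llcorner D\big(\L_{\tilde{\rho}}(d\omega)\big)(\xi) \:. \]
Both terms are explicit lattice sums of~$\omega$ against derivatives of~$L(\xi - y)$. Since~$\omega \in \Omega^{1,1,\ell}$ carries at most one weak derivative, and~$d$ adds one more, the first term involves derivatives of~$L$ of order up to two. The second term then involves one further derivative of~$L$. Applying Cauchy--Schwarz to each sum bounds~$|\sum_y D^\kappa L(\xi-y)\,\omega(y)|$ by~$\|D^\kappa L(\xi-\cdot)\|_{\ell^2(\tilde{M})}\,\|\omega\|_{L^2(\tilde{M},d\tilde{\rho})}$. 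We then compare the lattice~$\ell^2$-norm of the translated derivatives of~$L$ with~$\|D^\kappa L\|_{L^2(M)}$. With the counting measure and the Schwartz decay of~$L$, this comparison holds up to a constant. The freedom to rescale~$\s$ or~$\varepsilon$ should absorb that constant, which yields the combination~$c$.

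For the prefactor~$\|x\|$: since~$\xi$ lies on a segment from~$x$ to a point within distance~$\varepsilon$ of~$x$, we have~$\|\xi\| \leq \|x\| + \varepsilon$. The factor~$\xi$ in the second term therefore produces~$\|x\|$ to leading order. The first term, which lacks the factor~$\xi$, is the hardest step. I would try to show that it vanishes by the closedness of~$\omega$ on~$B_\L(U)$, that is, that~$\L_{\tilde{\rho}}(d\omega)$ is small there, or else absorb it by restricting to~$\|x\| \gtrsim \varepsilon$. Alternatively, one can write~$F(x) = \int_0^1 \frac{d}{dt} F(\cdot)\,dt$ using radial homogeneity, so that every term carries a factor of~$x$.

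The main obstacle is controlling the lattice sums of~$D^\kappa L$ in terms of the continuum~$L^2$-norms in~$c$ uniformly in~$\varepsilon$, while keeping track of the exact derivative orders (zeroth through second) that appear. I expect that the precise form of~$c$ stated in the lemma emerges from exactly this bookkeeping.
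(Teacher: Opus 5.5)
Your route is the paper's. Its proof consists exactly of three steps: choose an $\varepsilon$-close lattice point $z \in B_\L(U)$ via \eqref{epsclose}, write $F(x)=F(x)-F(z)$ using \eqref{bextend}, and cite the mean value and Schwarz inequalities. The paper does no further bookkeeping, so it also skips the points you leave open. However, the remedies you sketch for those points do not work.

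Write $G:=\L_{\tilde{\rho}}(d\omega)$. The mean value step gives $\|F(x)\| \leq \varepsilon\big(\sup\|G\| + (\|x\|+\varepsilon)\sup\|DG\|\big)$, and the term $\varepsilon\sup\|G\|$ carries no factor $\|x\|$. Your three suggestions each fail:
\begin{itemize}
\item Closedness gives $G=0$ only at lattice points of $U$. On $B_\L(U)\setminus U$ you only have $z\llcorner G(z)=0$, and you have nothing at the intermediate points $\xi$.
\item The radial formula $F(x)=\int_0^1 \frac{d}{dt}F(tx)\,dt$ produces the term $x\llcorner G(tx)$. This gains the factor $\|x\|$ but loses the factor $\varepsilon$.
\item Restricting to $\|x\|\gtrsim\varepsilon$ only turns $\varepsilon\sup\|G\|$ into $\|x\|\sup\|G\|$, again without $\varepsilon$.
\end{itemize}
What does work is closedness near the origin. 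Choose $r_0>0$ such that all lattice points at distance less than $r_0+\varepsilon$ from $\0$ lie in $U$.
\begin{itemize}
\item For $\|x\|<r_0$, the $\varepsilon$-close lattice point lies in $U$. Hence $\|G(x)\|\leq\varepsilon\sup\|DG\|$, and so $\|F(x)\|\leq\|x\|\,\|G(x)\|\leq\varepsilon\|x\|\sup\|DG\|$.
\item For $\|x\|\geq r_0$, bound $\varepsilon\sup\|G\|\leq\varepsilon\|x\|\sup\|G\|/r_0$.
\end{itemize}
This gives the estimate with a constant depending on $r_0$. Alternatively, the weaker bound with $1+\|x\|$ in place of $\|x\|$ already suffices for \eqref{Delomega}.

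Your derivative count is correct, and it contradicts your closing expectation. Since $\omega$ carries one weak derivative, $G$ involves $DL$ and $D^2L$, and $DG$ involves $D^3L$. So this argument yields a constant built from $\|DL\|$, $\|D^2L\|$ and $\|D^3L\|$. No bookkeeping will produce the stated $c$, and the paper's sketch does not address this either.

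Finally, the lattice sums are not comparable to $\|D^\kappa L\|_{L^2(M)}$ up to an $\varepsilon$-independent constant. Nor can $\varepsilon$ be rescaled away, since it is the small parameter. For the counting measure, $\sum_{y}|D^\kappa L(\xi-y)|^2=\varepsilon^{-k}\|D^\kappa L\|_{L^2}^2$; under \eqref{hatLcompact} this holds exactly, by Poisson summation. This is compensated by the normalization $1/\s$ in $\fint$, where the EL equations give $\s=\varepsilon^{-k}\hat{L}(0)$. Each Cauchy--Schwarz bound therefore carries the explicit factor $\varepsilon^{k/2}/\hat{L}(0)$, which must be tracked rather than absorbed.
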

\Proof Given~$y \in \sshape(B_\L(U))$, we know from~\eqref{epsclose} that
there is an~$\varepsilon$-close lattice point~$z \in B_\L(U)$.
Therefore, using~\eqref{bextend}, we obtain
\[ \big\| y \,\llcorner \big(\L_{\tilde{\rho}}(d \omega)\big)(y) \big\|
= \big\| y \,\llcorner \big(\L_{\tilde{\rho}}(d \omega)\big)(y) 
- x \,\llcorner \big(\L_{\tilde{\rho}}(d \omega)\big)(x) \big\| \:. \]
The last difference can be estimated with the mean value inequality.
Using the Schwarz inequality, we obtain
\[ \big\| y \,\llcorner \big(\L_{\tilde{\rho}}(d \omega)\big)(y) \big\|
\leq \varepsilon \|x\|\, \Big( \|L\|_{L^2(M)} + \|D L\|_{L^2(M)} +\|D^2L\|_{L^2(M)} \Big)
\|\omega\|_{L^2(\tilde{M}, d\tilde{\rho})} \:, \]
concluding the proof.
\QED

Again using the notation~\eqref{hidef},
\begin{align*}
\big( \partial_j P(\omega) \big)(x)
&= \partial_j \bigg( \int_0^1 x^k h_k(tx)\: dt \bigg)
= \int_0^1 h_j(tx)\: dt + \int_0^1 t x^k \:\partial_k h_j(tx)\: dt \\
&= \int_0^1 h_j(tx)\: dt + \int_0^1 t x^k \:\partial_{[k} h_{j]}(tx)\: dt
+ \int_0^1 t \: \bigg( \frac{d}{dt} h_j(tx) \bigg)\: dt  \\
&=\int_0^1 t x^k \:\partial_{[k} h_{j]}(tx)\: dt + h_j(x) \:,
\end{align*}
where in the last line we integrated by parts.
From Lemma~\ref{lemmaerror} we know that the obtained integrand
vanishes up to small errors. More precisely,
\[ d P(\omega) = \L_{\tilde{\rho}}(\omega) + \Delta \omega \qquad
\text{in~$\sshape(B_\L(U))$} \]
with an error term~$\Delta \omega$, which according to~\eqref{dsmall} and~\eqref{qext}
can be estimated by
\begin{align}
\|\Delta \omega(x)\| &\leq \varepsilon\: c\, \text{diam}(B_\L(U))\: \|\omega\|_{L^2(\tilde{M}, d\tilde{\rho})} \notag \\
&\leq \varepsilon\: c\, \text{diam}(B_\L(U))\: (1+C)\:\|\omega\|_{L^2(B_\L(U), d\tilde{\rho})} \:.
\label{Delomega}
\end{align}
Restricting to the lattice points, we
conclude that
\beq \label{Lrho2}
d P(\omega) = \L_{\tilde{\rho}}(\omega) + \Delta \omega \qquad
\text{on~$B_\L(U)$} \:.
\eeq

It remains to show that this equation can be treated iteratively.
It is most convenient to work with the individual tensor components,
being complex-valued functions in the Hilbert space~$\H_U :=L^2(B_\L(U), d\tilde{\rho})$.
Note that this is a
finite-dimensional vector space,
\[ \dim \H = \# \big( B_\L(U) \big) < \infty \:. \]
Next, similar to~\eqref{Lrhodef} we introduce the operator~$\L_{\tilde{\rho}}|_U$ on~$\H$ by
\[ \L_{\tilde{\rho}}|_U : \H_U \rightarrow \H_U \:, \qquad
(\L_{\tilde{\rho}}|_U \psi)(x) := \frac{1}{\s} \sum_{y \in B_\L(U)} \L(x,y)\, \psi(y) \:. \]
Using that~$\tilde{\rho}$ is a minimizer of the causal variational principle,
this operator has the following useful properties
\begin{Lemma} The operator~$\L_{\tilde{\rho}}|_U$ is symmetric. Moreover, it is
positive and bounded from above by one,
\[ 0 \leq \L_{\tilde{\rho}}|_U \leq 1 \:. \]
\end{Lemma}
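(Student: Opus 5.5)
Symmetry is the easy part. The Lagrangian is symmetric, $\L(x,y)=\L(y,x)$, and real-valued. Hence the matrix of $\L_{\tilde{\rho}}|_U$ in the orthonormal basis of point masses of $\H_U$ (where $\tilde{\rho}$ is the counting measure) has entries $\L(x,y)/\s$ with $x,y\in B_\L(U)$. This matrix is real and symmetric, so the operator is symmetric on the finite-dimensional space $\H_U$.

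For the upper bound I would use a Schur test built on the EL equations~\eqref{EL1} for $\tilde{\rho}$. For every $x\in\tilde{M}$ these give $\sum_{y\in\tilde{M}}\L(x,y)=\s$. Since $\L\geq 0$, the row sums of the truncated matrix satisfy
\[ \frac{1}{\s}\sum_{y\in B_\L(U)}\L(x,y)\leq 1 . \]
By symmetry the column sums obey the same bound. The Schur test then gives $\|\L_{\tilde{\rho}}|_U\|\leq 1$, and hence $\L_{\tilde{\rho}}|_U\leq 1$. Explicitly, I would write
\[ \la\psi,\L_{\tilde{\rho}}|_U\psi\ra\leq\frac{1}{2\s}\sum_{x,y}\L(x,y)\big(|\psi(x)|^2+|\psi(y)|^2\big)\leq\|\psi\|^2 . \]

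The positivity $\L_{\tilde{\rho}}|_U\geq 0$ is the main obstacle, because a nonnegative kernel need not be a positive operator. My first attempt would use the translation-invariant lattice setting~\eqref{LL} and~\eqref{hatLcompact}, together with the Fourier representation~\eqref{back}. For $\psi$ extended by zero outside $B_\L(U)$, this gives
\[ \la\psi,\L_{\tilde{\rho}}|_U\psi\ra=\frac{1}{\s\,\varepsilon^{k}}\int_{\hat{M}_0}\hat{L}(q)\,|\hat{\psi}(q)|^2\,\frac{d^kq}{(2\pi)^k} \]
(up to the normalization constants of~\eqref{hatphi}). So positivity holds precisely if $\hat{L}\geq 0$ on the Brillouin zone. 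I would show $\hat{L}\geq 0$ by invoking the minimizing property of $\tilde{\rho}$, since the lemma stresses that $\tilde{\rho}$ is a minimizer. Consider the second variation of the action $\Sact$ under volume-preserving perturbations $\tilde{\rho}\to\tilde{\rho}+\tau\,\psi\,\tilde{\rho}$ with $\sum\psi=0$ and $|\tau|$ small. The first variation vanishes by~\eqref{EL1}. The second variation equals $\sum_{x,y}\L(x,y)\psi(x)\psi(y)$, which must therefore be nonnegative for all such $\psi$, i.e.\ for all $\psi$ with $\hat{\psi}(0)=0$. Since $\hat{L}$ is continuous, this forces $\hat{L}\geq 0$ on $\hat{M}_0$. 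That in turn gives positivity for every finitely supported $\psi$, in particular for every $\psi\in\H_U$.

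The delicate point is making the perturbation admissible while keeping the total volume fixed. For weights that are not positive one has to take $\tau$ small, so that $1+\tau\psi\geq 0$, which is possible because $\psi$ is finitely supported. In the noncompact lattice setting, minimality should be understood as minimality under compactly supported variations. If this argument fails, the fallback is to add $\hat{L}\geq 0$ as an explicit assumption, which is consistent with the radial, monotone decreasing examples discussed after Proposition~\ref{prppl}.
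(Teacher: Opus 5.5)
Your argument is correct in substance, and for positivity it uses the same idea as the paper. The paper's proof simply invokes second-variation results from the literature for $\L_{\tilde\rho}|_U\geq0$, and then uses Perron--Frobenius for $\L_{\tilde\rho}|_U\leq1$.

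You differ from the paper in two places.
\begin{itemize}
\item \emph{Upper bound.} Your Schur test and the paper's Perron--Frobenius argument use exactly the same inputs: $\L\geq0$ and the row sums $\sum_y\L(x,y)=\s$ from \eqref{EL1}. Perron--Frobenius amounts to evaluating the eigenvalue equation at the maximum of the positive top eigenvector. Your route is more elementary and gives $\|\L_{\tilde\rho}|_U\|\leq1$ directly.
\item \emph{Positivity.} You carry out the second variation yourself and correctly note that volume-preserving variations only control $\psi$ with $\sum\psi=0$. You then remove this constraint by Fourier analysis, which also gives an explicit criterion (nonnegativity of a symbol) that the citation does not.
\end{itemize}

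One step needs repair. You invoke \eqref{hatLcompact} and \eqref{back}, but \eqref{hatLcompact} is only a hypothesis of Lemma~\ref{lemma:Fourier-vanish-support} and Proposition~\ref{prppl}. It is not in force here, and it is in fact incompatible with $\dim\H_U=\#B_\L(U)<\infty$. Indeed, if $\hat L$ has compact support then $L$ is real-analytic, so some derivative of $L(x-\cdot)$ is nonzero at every lattice point, and hence $B_\L(x)=\tilde M$.

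The fix is to replace $\hat L$ by the symbol of the lattice convolution, $\sigma(p):=\sum_{z\in\tilde M}L(z)\,e^{ipz}$. This function is real, even and continuous, and it satisfies $\sigma(0)=\s$. For every finitely supported $\psi$ one has $\sum_{x,y}L(x-y)\,\overline{\psi(x)}\,\psi(y)=\varepsilon^{-k}\int_{\hat M_0}\sigma(p)\,|\hat\psi(p)|^2\,\frac{d^kp}{(2\pi)^k}$, with no condition on $\supp\hat L$. Your argument then goes through with $\sigma$ in place of $\hat L$.

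The implication ``nonnegative whenever $\hat\psi(0)=0$, hence $\sigma\geq0$'' also deserves a line. For example, test with $\psi(x)=\phi(x)-\phi(x-\varepsilon e_j)$, so that $\hat\psi=(1-e^{i\varepsilon p_j})\,\hat\phi$. Let $|\hat\phi|^2$ concentrate at a given $p_0\neq0$ via Fejér kernels, choosing $j$ with $(p_0)_j\neq0$.

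Alternatively, you can avoid Fourier analysis altogether. For real finitely supported $\psi$, set $\psi_R:=\psi-\frac{\sum\psi}{\#B_R}\,\chi_{B_R}$, where $B_R$ is a large lattice ball. Then $\sum\psi_R=0$, so minimality gives $\sum_{x,y}\L(x,y)\,\psi_R(x)\,\psi_R(y)\geq0$. By the row-sum identity, both correction terms relative to $\sum_{x,y}\L(x,y)\,\psi(x)\,\psi(y)$ are $O(1/\#B_R)$. Letting $R\to\infty$ proves positivity for any minimizer with infinite support, whether or not it is translation invariant. This is closer to the generality of the results the paper cites.
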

\Proof The positivity of~$\L_{\tilde{\rho}}|_U$ follows by considering second variations;
for details see~\cite[Lemma~3.5]{support} or~\cite{positive}. Next, using that the Lagrangian
is non-negative, $\L(x,y) \geq 0$, the Perron-Frobenius theorem shows that the
eigenvector corresponding to the largest eigenvalue can be chosen to be everywhere
positive. This also implies that the largest eigenvalue is bounded from above by one
(for details see~\cite[Section~1.5]{continuum}).
\QED

This lemma does not rule out that~$\L_{\tilde{\rho}}|_U$ has a kernel, But we can use the fact the
the kernel of~$\L$ is modded out when taking equivalence classes with respect to~$\simeq^0_U$ (at least, provided that similar as shown on the whole lattice
in Corollary~\ref{corsim}, the equivalence relation~$\simeq^0_U$
is equivalent to~$\simeq^\ell_U$).
With this in mind, we may restrict attention to the orthogonal complement of the kernel
of~$\L_{\tilde{\rho}}|_U$. Using that we are in finite dimensions, there is~$\delta$ with
\beq \label{gap}
0 < \delta \leq \L_{\tilde{\rho}}|_U \leq 1\:.
\eeq
Using this estimate in~\eqref{Lrho2}, we obtain
\begin{align*}
\big\| \omega - d P(\omega) \big\| &= \big\| (1- \L_{\tilde{\rho}}|_U)\: \omega - \Delta \omega \big\|
\leq (1-\delta)\: \|\omega\| + \| \Delta \omega \| \\
&\leq \bigg( 1-\delta+
\varepsilon\: c\, \text{diam}(B_\L(U))\: (1+C) \:\sqrt{\tilde{\rho} \big( B_\L(U) \big)} \bigg)\: \|\omega\| \:,
\end{align*}
where in the last step we applied~\eqref{Delomega} and the H\"older inequality. 
By choosing~$\varepsilon$
sufficiently small, we can arrange that the condition~\eqref{iterin} is satisfied,
so that Lemma~\ref{lemmapoincare} applies.

Our findings are summarized as follows.
\begin{Prp} Assume that the region~$U \subset \tilde{M}$ is~$\varepsilon$-star-shaped
(see Definition~\ref{defepsstarshaped}). Moreover, assume that the boundary
extension property (see Definition~\ref{defbep}) holds in the quantitative
form~\eqref{qext}. Finally, assume that the operator~$\L_{\tilde{\rho}}|_U$ 
on the Hilbert space~$\H_U:=L^2(B_\L(U), d\tilde{\rho})$ has a spectral gap~\eqref{gap}.
Then, for a sufficiently small lattice spacing~$\varepsilon$, every closed
one form~$\omega \in \Omega^{1,1,\ell}(U \subset \tilde{M})$ is exact.
\end{Prp}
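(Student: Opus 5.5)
The plan is to verify that $U$ is $\L$-star-shaped in the sense of Definition~\ref{Lstarshaped}. The conclusion will then follow directly from the Poincaré-type lemma (Lemma~\ref{lemmapoincare}). I will work in the Hilbert space $\H_U = L^2(B_\L(U), d\tilde{\rho})$, which is finite-dimensional, and take its norm as the norm $\|.\|_{\hat{U}}$. Given a closed $\omega \in \Omega^{1,1,\ell}(U \subset \tilde{M})$, the first step is to use the boundary extension property (Definition~\ref{defbep}). It supplies a representative $\check{\omega}$ satisfying~\eqref{bextend} on all of $B_\L(U)$, with the control~\eqref{qext} on its part outside $B_\L(U)$. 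This step fixes the representative over which the infimum in~\eqref{iterin} is to be bounded.

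Next I will compute $\partial_j P(\check{\omega})$ along rays, as in the computation preceding the statement. This splits it into $h_j(x)$, which equals $\L_{\tilde{\rho}}(\check{\omega})$, plus a ray integral of $x^k \partial_{[k} h_{j]}(tx)$. The antisymmetric part is exactly $x \llcorner \L_{\tilde{\rho}}(d\check{\omega})$, evaluated along the segment $[0,x] \subset \sshape(B_\L(U))$. Two ingredients control it. The $\varepsilon$-star-shapedness~\eqref{epsclose} provides a lattice point $z$ of $B_\L(U)$ within distance $\varepsilon$ of every point of the segment, and at $z$ the quantity vanishes by~\eqref{bextend}. Lemma~\ref{lemmaerror}, which rests on the mean value inequality, then gives a pointwise bound linear in $\varepsilon$. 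Combining this with~\eqref{qext} yields~\eqref{Delomega} and hence the identity~\eqref{Lrho2} on the lattice points of $B_\L(U)$, where the error satisfies $\|\Delta\omega\|_{\H_U} \le \varepsilon\, c\,\mathrm{diam}(B_\L(U))(1+C)\sqrt{\tilde{\rho}(B_\L(U))}\,\|\check{\omega}\|_{\H_U}$. The $\sqrt{\tilde{\rho}(B_\L(U))}$ factor comes from passing from the sup bound to the $L^2$ norm.

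The third step handles the factor $\L_{\tilde{\rho}}|_U$. By the preceding lemma it satisfies $0 \le \L_{\tilde{\rho}}|_U \le 1$. I will pass to the quotient by its kernel, which is allowed because kernel elements are $\simeq^0_U$-trivial. Here one also needs $\simeq^0_U$ to agree with $\simeq^\ell_U$, as in Corollary~\ref{corsim}, and I will treat this as implicit in the setting. On the orthogonal complement the spectral gap~\eqref{gap} gives $\|(1-\L_{\tilde{\rho}}|_U)\check{\omega}\| \le (1-\delta)\|\check{\omega}\|$. Hence
\[ \|\check{\omega} - \hat{d}P(\check{\omega})\| \le \Big(1-\delta+\varepsilon\, c\,\mathrm{diam}(B_\L(U))(1+C)\sqrt{\tilde{\rho}(B_\L(U))}\Big)\|\check{\omega}\| \:. \]
Choosing $\varepsilon$ small makes the bracket less than one, which is~\eqref{iterin}, and Lemma~\ref{lemmapoincare} then concludes the argument.

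I expect the main obstacle to be the uniformity of the constants as $\varepsilon$ shrinks. Shrinking the lattice spacing changes $B_\L(U)$ and its measure, and may also change $\delta$, $C$ and $\mathrm{diam}$. I would try to handle this by reading the hypotheses as holding with $\varepsilon$-independent constants $\delta$ and $C$, and with $\varepsilon\sqrt{\tilde{\rho}(B_\L(U))}\,\mathrm{diam}(B_\L(U))$ tending to zero. Failing that, I would state the smallness condition on $\varepsilon$ explicitly in terms of these quantities.
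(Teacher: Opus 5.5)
Your proposal is correct and follows essentially the same route as the paper: choose the boundary-extension representative, split $\partial_j P(\omega)$ into $h_j$ plus the ray integral of $x^k\partial_{[k}h_{j]}$, and bound that ray integral via $\varepsilon$-star-shapedness, Lemma~\ref{lemmaerror} and~\eqref{qext} to get~\eqref{Lrho2}; then use the spectral gap~\eqref{gap} on the complement of the kernel of $\L_{\tilde{\rho}}|_U$ to verify~\eqref{iterin} for small $\varepsilon$, and invoke Lemma~\ref{lemmapoincare}. The two caveats you flag are precisely the provisos the paper itself leaves as assumptions: the identification of $\simeq^0_U$ with $\simeq^\ell_U$, and the uniformity in $\varepsilon$ of $C$, $\delta$ and the geometric factors.
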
 \noindent
We finally point out that this result relies crucially on the assumptions that
the constants~$C$ and~$\delta$ in~\eqref{qext} and~\eqref{gap} can be chosen
uniformly in the lattice spacing. These assumptions could be
verified for specific choices of the Lagrangian. For brevity, we do not enter the details
of the resulting analysis.

\Thanks{{{\em{Acknowledgments:}}
F.v.d.T.\ would like to thank the ALGANT consortium for support and the opportunity
to benefit from their study program.
We are grateful to the ``Universit\"atsstiftung Hans Vielberth'' for support.
N.K.'s research was also supported by the NSERC grant RGPIN~105490-2025.


\bibliographystyle{amsplain}
\providecommand{\bysame}{\leavevmode\hbox to3em{\hrulefill}\thinspace}
\providecommand{\MR}{\relax\ifhmode\unskip\space\fi MR }
\providecommand{\MRhref}[2]{%
  \href{http://www.ams.org/mathscinet-getitem?mr=#1}{#2}
}
\providecommand{\href}[2]{#2}

\end{document}